\renewcommand\arraystretch{0.8}
\numberwithin{equation}{section}
\newtheorem{theorem}{Theorem}[section]
\newtheorem{lemma}[theorem]{Lemma}
\theoremstyle{definition}
\newcommand{\Ind}{\mathcal{I}} 
\newcommand{\Jnd}{\mathcal{J}}
\newcommand{\Esp}{\mathbb{E}}
\newcommand{\bX}{\boldsymbol{X}}
\newcommand{\bZ}{\boldsymbol{Z}}
\newcommand{\bW}{\boldsymbol{W}}
\newcommand{\bV}{\boldsymbol{V}}
\newcommand{\bPsi}{\bm{\hat{\psi}}}
\newcommand{\trans}{T}
\theoremstyle{remark}
\DeclareMathOperator{\tr}{\mathop{tr}}
\DeclareMathOperator{\Var}{\mathop{Var}}
\DeclareMathOperator{\argmin}{\mathop{\arg\min}}
\title{Adaptive Test for High Dimensional Quantile Regression}
\author{Ping Zhao$^1$, Zhenyu Liu$^2$ and Dan Zhuang$^3$\\Tiangong University$^1$, Nankai University$^2$, Fujian Normal University$^3$}
\date{\today}
\begin{document}
	
\maketitle

\begin{abstract}
Testing high-dimensional quantile regression coefficients is crucial, as tail quantiles often reveal more than the mean in many practical applications.
Nevertheless, the sparsity pattern of the alternative hypothesis is typically unknown in practice, posing a major challenge. To address this, we propose an adaptive test that remains powerful across both sparse and dense alternatives.
We first establish the asymptotic independence between the max-type test statistic proposed by \citet{tang2022conditional} and the sum-type test statistic introduced by \citet{chen2024hypothesis}. Building on this result, we propose a Cauchy combination test that effectively integrates the strengths of both statistics and achieves robust performance across a wide range of sparsity levels. Simulation studies and real data applications demonstrate that our proposed procedure outperforms existing methods in terms of both size control and power.
\end{abstract}

\section{Introduction}

Testing regression coefficients in linear models is one of the fundamental problems in statistics. 
The classical F-test provides an efficient and powerful solution under fixed dimension and sufficiently large sample size.
However, in modern high-dimensional settings where the number of covariates exceeds the sample size, the sample covariance matrix becomes singular, making the traditional F-test infeasible. To address this challenge, \citet{zhong2011tests} proposed a modification that replaces the sample covariance matrix in the F-statistic with the identity matrix, thereby enabling valid inference under high dimensionality. This idea was further extended by \citet{CuiGuoZhong2018} and \citet{ZhangYaoShao2018}, who developed improved procedures with enhanced power and theoretical guarantees. 

Despite their success, these methods rely on the least-squares framework, which is sensitive to heavy-tailed errors and outliers—common features in real-world data. Quantile regression \citep{KoenkerBassett1978} offers a robust alternative by modeling conditional quantiles of the response, thereby capturing heterogeneous effects across the distribution and maintaining robustness to outliers and non-Gaussian noise. Motivated by these advantages, high-dimensional inference based on quantile regression has attracted increasing attention in recent years, leading to active research on hypothesis testing, variable selection, and confidence interval construction under flexible error distributions. 

For the fixed-dimensional setting, numerous test procedures based on quantile regression have been developed (see, e.g., \cite{KoenkerMachado1999,Koenker2000,kocherginsky2005practical,WangHe2007,WangZhuZhou2009,feng2011wild,wang2018wild,wang2018testing}). These include Wald-type test, quantile likelihood ratio test, rank score test, and various resampling-based approaches. Such methods have been shown to be powerful and asymptotically valid under classical low-dimensional regimes. However, when the dimensionality of the covariates exceeds the sample size, the performance of these traditional procedures deteriorates substantially due to the challenges in parameter estimation and the breakdown of asymptotic approximations.

To address these challenges in the high-dimensional context, \cite{tang2022conditional} proposed a \textit{conditional marginal score-type test}, which adopts a max-type statistic to detect sparse alternatives effectively. This test is particularly powerful when only a small subset of the regression coefficients deviate from the null hypothesis. In contrast, \cite{chen2024hypothesis} developed a \textit{U-statistic--based score test}, which aggregates information across all coordinates and thus achieves higher power under dense alternatives, where many small effects are present simultaneously.

Nevertheless, in practical applications, the underlying sparsity structure of the alternative hypothesis is typically unknown. Relying solely on a sparse-oriented or dense-oriented procedure may therefore lead to substantial power loss. This motivates the development of {adaptive testing procedures} that can automatically adjust to the unknown sparsity level. Such adaptive methods aim to combine the advantages of both max-type and sum-type tests, maintaining robust performance across a wide spectrum of alternatives---from extremely sparse to fully dense scenarios. 

Recently, a growing body of literature has focused on combining sum-type and max-type test procedures to achieve adaptivity in various high-dimensional testing problems (see, e.g., \cite{feng2022maxsum,wang2023change,feng2022alpha,wang2024rank,chen2024hdreg,ma2024alpha,feng2024asyIndep,feng2024whitenoise,wang2024new,wang2024fisher,liu2025spatial}). The central idea underlying these studies is to first establish the asymptotic independence between the sum-type and max-type test statistics, and then to combine their respective $p$-values through a suitable aggregation method, such as the Cauchy combination test, Fisher’s method, or the minimum $p$-value approach. These hybrid strategies effectively leverage the strengths of both tests, leading to procedures that maintain good power against both sparse and dense alternatives.

Motivated by this line of research, we extend the idea to the high-dimensional quantile regression framework. Specifically, we show that the sum-type test statistic proposed by \cite{chen2024hdreg} is asymptotically independent of the max-type test statistic developed by \cite{tang2022conditional}. Building upon this asymptotic independence property, we construct a Cauchy combination test that adaptively integrates information from both types of statistics. Simulation studies and empirical analyses demonstrate that the proposed adaptive procedure performs robustly across a wide range of sparsity structures, exhibiting superior power regardless of whether the alternative hypothesis is sparse or dense.

The remainder of this paper is organized as follows. Section~\ref{sec:method} establishes the asymptotic independence between the sum-type and max-type test statistics and introduces the proposed adaptive testing procedure. Section~\ref{sec:simulation} presents extensive simulation studies, and Section~\ref{sec:realdata} illustrates the practical performance of the proposed method through a real data application. Section~\ref{sec:discussion} provides concluding remarks and discussions. All technical proofs are deferred to the Appendix.

\section{Method}\label{sec:method}
Let $\{(Y_i, \boldsymbol{Z}_i, \boldsymbol{X}_i)\}_{i=1}^n$ be an independent and identically distributed (i.i.d.) sample from the following linear model:
\begin{align}
    Y_i = \boldsymbol{Z}_i^{\top}\boldsymbol{\alpha}_{\tau} + \boldsymbol{X}_i^{\top}\boldsymbol{\beta}_{\tau} + \varepsilon_i,
\end{align}
where $\varepsilon_i$ is independent of $(\boldsymbol{Z}_i, \boldsymbol{X}_i)$ and has density $f_{\varepsilon}$ satisfying $P(\varepsilon_i > 0) = 1 - \tau$. Here, $\boldsymbol{Z}_i \in \mathbb{R}^q$ and $\boldsymbol{X}_i \in \mathbb{R}^p$ denote the covariate vectors of dimensions $q$ and $p$, respectively. Without loss of generaltiy, similar to \cite{tang2022conditional}, we assume that $E(\boldsymbol{Z}_i)=(1,\boldsymbol{0}_{q-1})^{\top}$ 
and $E(\boldsymbol{X}_i)=\boldsymbol{0}$. Let $\boldsymbol{\alpha}_{\tau} = \left(\alpha_{1,\tau}, \ldots, \alpha_{q,\tau}\right)^{\top}$ and $\boldsymbol{\beta}_{\tau} = \left(\beta_{1,\tau}, \ldots, \beta_{p,\tau}\right)^{\top}$ be the true quantile-specific coefficient vectors associated with $\boldsymbol{Z}_i$ and $\boldsymbol{X}_i$, respectively. We are interested in testing the existence of an association between $\boldsymbol{X}_i$ and the $\tau$th conditional quantile of $Y_i$, after adjusting for the effect of $\boldsymbol{Z}_i$. Formally, we consider the hypothesis test
\begin{align}
    H_0: \boldsymbol{\beta}_{\tau} = \boldsymbol{0}
    \quad \text{versus} \quad
    H_1: \boldsymbol{\beta}_{\tau} \neq \boldsymbol{0}.
\end{align}

First, we estimate the marginal effect of $\boldsymbol{Z}$ as
$$
\widehat{\boldsymbol{\alpha}}_{\tau}=\underset{\boldsymbol{\alpha} \in \mathbb{R}^q}{\operatorname{argmin}} \sum_{i=1}^n \rho_\tau\left(Y_i-\boldsymbol{Z}_i^{\top} \boldsymbol{\alpha}\right),
$$
where $\rho_\tau(t)=t\{\tau-I(t<0)\}$ is the quantile check loss function. To test the null hypothesis $H_0: \boldsymbol{\beta}_{\tau} = \boldsymbol{0}$, we first define
\begin{align}
    \hat{\psi}_i = I\!\left(Y_i - \boldsymbol{Z}_i^{\top} \hat{\boldsymbol{\alpha}}_{\tau} \le 0 \right) - \tau,
\end{align}
where $\hat{\boldsymbol{\alpha}}_{\tau}$ denotes a consistent estimator of $\boldsymbol{\alpha}_{\tau}$ under $H_0$.  
Intuitively, $\hat{\psi}_i$ represents the quantile score function evaluated at the estimated conditional $\tau$th quantile of $Y_i$ given $\boldsymbol{Z}_i$. Based on $\{\hat{\psi}_i\}_{i=1}^n$, \citet{chen2024hdreg} proposed a sum-type test statistic to detect dense alternatives:
\begin{align}
    T_{\mathrm{SUM}}
    = \frac{2}{n(n-1)} \sum_{i \ne j} 
      \boldsymbol{X}_i^{\top} \boldsymbol{X}_j \,
      \hat{\psi}_i \hat{\psi}_j.
\end{align}
Let $\boldsymbol{\Sigma}_x = \mathrm{Cov}(\boldsymbol{X}_i)$ denote the covariance matrix of $\boldsymbol{X}_i$. 
Under some mild regularity conditions, it can be shown that
\begin{align}
    \frac{n T_{\mathrm{SUM}}}
         {\tau(1-\tau)\sqrt{2\,\mathrm{tr}(\boldsymbol{\Sigma}_x^2)}}
    \xrightarrow{d} N(0,1),
\end{align}
under the null hypothesis $H_0$. 
Hence, $T_{\mathrm{SUM}}$ provides an asymptotically valid test for dense alternatives where many components of $\boldsymbol{\beta}_{\tau}$ deviate slightly from zero. In contrast, for sparse alternatives where only a few components of $\boldsymbol{\beta}_{\tau}$ are nonzero, 
\citet{tang2022conditional} proposed a max-type test statistic defined as
\begin{align}
    T_{\mathrm{MAX}} 
    = \max_{1 \le j \le p} S_{j,\tau}^2, 
    \quad 
    S_{j,\tau}
    = \frac{1}{\sqrt{n}} 
      \sum_{i=1}^n 
      \frac{W_{ij}\hat{\psi}_i}
           {\sqrt{\tau(1-\tau)\,\| \boldsymbol{W}_{\cdot j} \|^2 / n}},
\end{align}
where $\boldsymbol{W}_{\cdot j} = (W_{1j}, \ldots, W_{nj})^{\top}$ is the $j$th column of the projected design matrix
\begin{align}
    \boldsymbol{W}
    = \left(\mathbf{I}_n - \mathbf{Z} (\mathbf{Z}^{\top} \mathbf{Z})^{-1} \mathbf{Z}^{\top}\right) \mathbf{X},
\end{align}
with $\mathbf{Z} = (\boldsymbol{Z}_1^{\top}, \ldots, \boldsymbol{Z}_n^{\top})^{\top}$ 
and $\mathbf{X} = (\boldsymbol{X}_1^{\top}, \ldots, \boldsymbol{X}_n^{\top})^{\top}$.
Under suitable regularity conditions, they established that under $H_0$,
\begin{align}
    P\!\left[
        T_{\mathrm{MAX}} 
        - 2\log(p)
        + \log\{\log(p)\} 
        \le x
        \,\middle|\, H_0
    \right]
    \to 
    \exp\!\left\{-\pi^{-1/2} \exp\!\left(-\frac{x}{2}\right)\right\},
\end{align}
which implies that the null distribution of $T_{\mathrm{MAX}}$ asymptotically follows a type-I extreme value (Gumbel) distribution.

The two test statistics $T_{\mathrm{SUM}}$ and $T_{\mathrm{MAX}}$ are designed for different alternative regimes: the sum-type statistic is powerful against dense but weak signals, whereas the max-type statistic performs better when the signal is sparse and strong. In practice, the underlying alternative structure—whether dense or sparse—is typically unknown.  A test that is powerful against one type of alternative may lose power under the other. Therefore, it is desirable to develop an {adaptive test procedure} that can automatically adjust to both scenarios and maintain high power across a broad range of alternatives.

To this end, we aim to construct an adaptive test by combining the sum-type and max-type statistics introduced above. 
A key step toward this goal is to establish the {asymptotic independence} between $T_{\mathrm{SUM}}$ and $T_{\mathrm{MAX}}$ under the null hypothesis. 
Intuitively, $T_{\mathrm{SUM}}$ aggregates information from all coordinates of $\boldsymbol{X}_i$ through an average correlation structure, while $T_{\mathrm{MAX}}$ captures the most extreme signal among individual coordinates. 
Since these two statistics summarize the data through fundamentally different mechanisms, their asymptotic dependence tends to vanish as both $n$ and $p$ grow.

We need the following assumptions:
\begin{itemize}
    \item[(A1)]  The density $f_{\varepsilon}(t)$ of $\varepsilon$ is positive and away from zero. In addition, $f_{\varepsilon}(t)$ is differentiable and has bounded derivative.
    \item[(A2)] Let $\boldsymbol{Z}=(1,\tilde{\boldsymbol{Z}}^\top)^\top \in \mathbb{R}^q,\tilde{\boldsymbol{Z}}\in \mathbb{R}^{q-1}$. We assume $U=(\tilde{\boldsymbol{Z}}^\top,\boldsymbol{X}^\top)^\top \sim N(\boldsymbol{0},\boldsymbol{\Sigma})$ where 
    \begin{align*}
    \boldsymbol{\Sigma}=\left(
    \begin{array}{cc}
    \boldsymbol{\Sigma}_z&\boldsymbol{\Sigma}_{zx}\\
    \boldsymbol{\Sigma}_{xz} &\boldsymbol{\Sigma}_{x}
    \end{array}
    \right).
    \end{align*}
\item[(A3)] The eigenvalues of $\boldsymbol{\Sigma}$ are all bounded. Let $\boldsymbol{\Sigma}_{x|z}=\boldsymbol{\Sigma}_x-\boldsymbol{\Sigma}_{xz}\boldsymbol{\Sigma}_{z}^{-1}\boldsymbol{\Sigma}_{zx}$ and $\mathbf{R}_{x|z}=(r_{ij})_{1\le i,j\le p}$ is the corresponding correlation matrix. We assume that $\max_{1\le i,j\le p}|r_{ij}|\le r_0<1$.
\item[(A4)] The dimension $q$ is fixed and $\log p=o(n^{1/4}/\log (n)^{3/4})$.
\end{itemize}

Condition (A1) coincides with condition (C1) in \cite{chen2024hypothesis} and condition (A3) in \cite{tang2022conditional}, and is commonly assumed in the literature. To meet condition (A1) in \cite{tang2022conditional} and condition (C4) in \cite{chen2024hypothesis} regarding the distributional assumption of $\boldsymbol{U}$, we impose a multivariate normal model. Extending these assumptions to more general non-Gaussian settings is an interesting direction for future research. In addition, condition (A3) guarantees that condition (A2) in \cite{tang2022conditional} and condition (C2) in \cite{chen2024hypothesis} are satisfied.

\begin{theorem}\label{th0}
    Under conditons (A1)-(A4), under the null hypothesis, we have
\begin{align*}
P\left[\frac{n T_{\mathrm{SUM}}}
         {\tau(1-\tau)\sqrt{2\,\mathrm{tr}(\boldsymbol{\Sigma}_x^2)}} \le x,T_{\mathrm{MAX}} 
        - 2\log(p)
        + \log\{\log(p)\} 
        \le y\right]\to \Phi(x)G(y),
\end{align*}
where $\Phi(x)$ is the cumulative distribution function of $N(0,1)$ and $G(y)=\left\{-\pi^{-1/2} \exp\!\left(-\frac{y}{2}\right)\right\}$.
\end{theorem}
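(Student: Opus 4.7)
My strategy follows the two-stage decoupling argument now standard in the sum-max asymptotic independence literature (see, e.g., \cite{feng2022maxsum,feng2024asyIndep}). Stage one replaces the plug-in scores $\hat{\psi}_i$ with the oracle scores $\psi_i = I(\varepsilon_i \le 0) - \tau$ in both statistics; the Bahadur-type linearization of $\hat{\boldsymbol{\alpha}}_\tau - \boldsymbol{\alpha}_\tau$ used in the marginal proofs of \cite{chen2024hypothesis} and \cite{tang2022conditional} can be applied once and fed into both expansions, so under (A1)--(A4) the substitution remainders are joint $o_P(1)$. Writing $T_{\mathrm{SUM}}^*$ and $S_{j,\tau}^*$ for the oracle versions, it therefore suffices to prove independence between $T_{\mathrm{SUM}}^*$ and $T_{\mathrm{MAX}}^* = \max_j (S_{j,\tau}^*)^2$.

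\textbf{Finite-dimensional independence.} Let $T_n = nT_{\mathrm{SUM}}^*/[\tau(1-\tau)\sqrt{2\,\tr(\boldsymbol{\Sigma}_x^2)}]$. For any fixed $d \ge 1$ and fixed indices $j_1,\ldots,j_d \in \{1,\ldots,p\}$, I will establish
\[
(T_n, S_{j_1,\tau}^*,\ldots,S_{j_d,\tau}^*) \cd N(\boldsymbol{0}, I_{d+1}).
\]
The key input is an exact cross-covariance computation: since, under $H_0$, $\psi_i$ is independent of $(\boldsymbol{X}, \boldsymbol{Z})$ with $\Esp[\psi_i] = 0$, conditioning on $(\boldsymbol{X}, \boldsymbol{Z})$ in $\Esp[T_{\mathrm{SUM}}^* S_{j,\tau}^*]$ produces sums of $\psi_i\psi_{i'}\psi_{i''}$ with $i\ne i'$, and every such term has vanishing conditional expectation. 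Hence $\Cov(T_n, S_{j,\tau}^*) = 0$ exactly. Joint normality then follows from a martingale CLT via Cram\'er-Wold: conditional on $(\boldsymbol{X}, \boldsymbol{Z})$, each $S_{j,\tau}^*$ is a centered sum of row-independent bounded increments, and $T_{\mathrm{SUM}}^*$ admits a Hoeffding-type martingale decomposition $M_k = \frac{4}{n(n-1)}\sum_{i<j\le k} \boldsymbol{X}_i^\top \boldsymbol{X}_j \psi_i \psi_j$ whose conditional quadratic variation converges in probability to the limit variance. The zero cross-covariance forces the limiting joint covariance matrix to be the identity.

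\textbf{Extension to the maximum.} To lift finite-dimensional independence to independence of $T_n$ from $T_{\mathrm{MAX}}^*$, I employ a Poisson approximation mirroring the proof of Theorem 1 in \cite{tang2022conditional}. For $u_p(y) = \sqrt{2\log p - \log\log p + y}$, let $N_p = \#\{j : |S_{j,\tau}^*| > u_p(y)\}$, so that $\{T_{\mathrm{MAX}}^* - 2\log p + \log\log p \le y\} = \{N_p = 0\}$. Under (A3), $\max_{i\ne j}|r_{ij}| \le r_0 < 1$, and the Chen-Stein method gives $N_p \cd \mathrm{Poisson}(\lambda_y)$ with $\lambda_y = \pi^{-1/2} e^{-y/2}$. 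The Chen-Stein bound only invokes joint tail probabilities over bounded-cardinality subsets of coordinates, and by the finite-dimensional step these tail probabilities are asymptotically independent of $\{T_n \le x\}$. Consequently,
\[
P(T_n \le x, N_p = 0) = P(T_n \le x)\,P(N_p = 0) + o(1) \to \Phi(x)\,e^{-\lambda_y},
\]
which is precisely the claimed limit.

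\textbf{Main obstacle.} The most delicate part is the quantitative form of the finite-dimensional step needed to drive the Poisson step: I must show that the joint tail probabilities
\[
P\bigl(T_n \le x,\ |S_{j_1,\tau}^*| > u_p,\ldots,|S_{j_k,\tau}^*| > u_p\bigr)
\]
factorize as $P(T_n \le x)\prod_{\ell=1}^k P(|S_{j_\ell,\tau}^*| > u_p)$ with error $o(p^{-k})$, for $k = 1, 2$ and uniformly in $x$ and in the choice of indices. This demands a high-dimensional Gaussian-approximation or Berry-Esseen bound for which the growth condition $\log p = o(n^{1/4}/(\log n)^{3/4})$ in (A4) is precisely calibrated, together with sub-Gaussian control of $\psi_i W_{ij}$ coming from the normality imposed in (A2).
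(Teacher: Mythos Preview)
Your overall architecture—establish a local (finite-index-set) independence, then lift to the maximum via Bonferroni/inclusion--exclusion—matches the paper's. The paper also writes $L_p(y)=\cup_j D_j$ with $D_j=\{S_{j,\tau}^2>y_p\}$, truncates inclusion--exclusion at a fixed depth $k$, and shows that for each fixed $d$ the discrepancy $\zeta(p,d)=\sum_{|\mathcal{I}|=d}\bigl[P(H_p\cap D_{\mathcal{I}})-P(H_p)P(D_{\mathcal{I}})\bigr]\to 0$.

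Where you genuinely diverge is in how the local independence is obtained. You condition on $(\mathbf{X},\mathbf{Z})$, treat the $\psi_i$ as the only randomness, and use $E[\psi_i\psi_{i'}\psi_{i''}]=0$ to get $\Cov(T_n,S_{j}^*)=0$, then appeal to a joint CLT plus a quantitative Berry--Esseen/Gaussian-approximation rate. The paper runs the conditioning the other way: it conditions on $\mathcal{F}=\sigma(\mathbf{Z},\boldsymbol{\varepsilon})$, so that $\hat{\boldsymbol{\Psi}}$ is already fixed (no oracle reduction is needed), and then exploits (A2)'s Gaussianity of $\mathbf{X}\mid\mathbf{Z}$ to regress each column $\mathbf{W}_{\cdot j}$, $j\notin\mathcal{I}$, on $\{\mathbf{W}_{\cdot k}\}_{k\in\mathcal{I}}$, producing an independent Gaussian residual $\mathbf{W}_{\cdot j}^*$. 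This yields a proxy $\widetilde U_{p,\mathcal{I}}^\perp$ for the sum statistic that is \emph{exactly} conditionally independent of $B_{\mathcal{I}}=\cap_{k\in\mathcal{I}}\{S_{k,\tau}^2>y_p\}$; the only analytic work is bounding the remainder $\Theta_{\mathcal{I}}$, and Gaussian tails give $\sup_{\mathcal{I}}P(|\Theta_{\mathcal{I}}|>\epsilon_p v_p)\le p^{-t}$ for any $t$. One then obtains $|\Delta_p(x;\mathcal{I})|\le o(1)\cdot P(B_{\mathcal{I}})+O(p^{-t})$ uniformly, which sums over $\binom{p}{d}$ sets trivially.

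This comparison exposes a real gap in your plan. Your ``main obstacle''—uniform factorization with error $o(p^{-k})$—cannot be delivered by a Berry--Esseen bound: under (A4), $p$ may be as large as $\exp\{c\,n^{1/4}/(\log n)^{3/4}\}$, so $p^{-k}$ is stretched-exponentially small in $n$, whereas Berry--Esseen or CCK-type rates are at best polynomial in $n^{-1}$. What you actually need is a \emph{relative} (Cram\'er-type moderate-deviation) statement of the form $P(T_n\le x,\,D_{\mathcal{I}})=P(T_n\le x)P(D_{\mathcal{I}})(1+o(1))$ uniformly in $\mathcal{I}$; your zero-covariance observation is necessary for this but far from sufficient, and proving a joint moderate-deviation result coupling a degenerate $U$-statistic with $d$ correlated linear statistics at level $u_p\sim\sqrt{2\log p}$ is substantially harder than you indicate. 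Two smaller corrections: the limiting covariance of $(T_n,S_{j_1}^*,\ldots,S_{j_d}^*)$ is not $I_{d+1}$, since the $S_{j_\ell}^*$ are mutually correlated through $\mathbf{R}_{x|z}$ (the limit is block-diagonal with a $1\times1$ block and a $d\times d$ correlation block); and the target in the lifting step is $P(T_n\le x)\,P(D_{\mathcal{I}})$, not the full product $\prod_\ell P(|S_{j_\ell}^*|>u_p)$. The paper's device—flipping the conditioning to $\sigma(\mathbf{Z},\boldsymbol{\varepsilon})$ and using the Gaussian column regression—is precisely what removes the need for any such rate.
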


Based on Theorem~\ref{th0}, which establishes the asymptotic independence between the sum-type and max-type test statistics, we are able to construct a combined test that leverages the strengths of both procedures. In particular, we adopt a Cauchy combination approach \citep{liu2020cauchy} to integrate the evidences from the two statistics. Other combination strategies can also be employed, such as Fisher's method  or the minimum $p$-value approach, among others.

Recall that the sum-type test is powerful against dense alternatives, while the max-type test is particularly sensitive to sparse strong signals. Since neither statistic uniformly dominates the other across all scenarios, a natural strategy is to combine their corresponding $p$-values, denoted by $p_{\mathrm{sum}}$ and $p_{\mathrm{max}}$, respectively.

Motivated by the asymptotic independence result, we consider the following Cauchy combination statistic:
\[
T_{\mathrm{CC}} 
    = \tan\Big\{ \big(0.5 - p_{\mathrm{sum}}\big)\pi \Big\}
      + 
      \tan\Big\{ \big(0.5 - p_{\mathrm{max}}\big)\pi \Big\}.
\]
Under the null hypothesis, the asymptotic independence of the two $p$-values implies that $T_{\mathrm{CC}}$ follows a standard Cauchy distribution in the limit. Consequently, the combined $p$-value can be calculated in closed form as
\[
p_{\mathrm{CC}}
    = \frac{1}{2} - \frac{1}{\pi}\arctan\!\left( T_{\mathrm{CC}} \right).
\]

The resulting test rejects the null when $p_{\mathrm{CC}} \le \alpha$. This combination procedure inherits the robustness and adaptiveness of the Cauchy method: it retains high power under dense alternatives, due to the contribution from the sum-type statistic, while simultaneously maintaining sensitivity to sparse alternatives through the max-type statistic. Therefore, the proposed Cauchy combination test provides a unified and adaptive framework capable of handling a wide spectrum of high-dimensional alternatives.

Additionally, we consider the following alternative hypothesis:
\begin{itemize}
\item[(A5)] $\|\boldsymbol{\beta}_{\tau}\|_0=o(p^{1/2})$, $\boldsymbol{\beta}_{\tau}^\top \boldsymbol{\Sigma}_x  \boldsymbol{\beta}_{\tau}=o(1)$, $\boldsymbol{\beta}_{\tau}^\top \boldsymbol{\Sigma}_x^3  \boldsymbol{\beta}_{\tau}=o(n^{-1}\tr(\boldsymbol{\Sigma}_x^2))$, $E(\boldsymbol{X}_i^\top \varsigma(\boldsymbol{X}_i))\boldsymbol{\Sigma}_xE(\varsigma(\boldsymbol{X}_i)\boldsymbol{X}_i )=o(n^{-1}\tr(\boldsymbol{\Sigma}_x^2))$ and $E(\boldsymbol{X}_i^\top \tilde{R}_i(\boldsymbol{t}))\boldsymbol{\Sigma}_xE( \tilde{R}_i(\boldsymbol{t})\boldsymbol{X}_i)=o(n^{-2}H_z(\boldsymbol{t}_1)\tr(\boldsymbol{\Sigma}_x^2))$, where $\varsigma(\boldsymbol{X}_i)=E(I(Y_i<\boldsymbol{Z}_i^\top \boldsymbol{\alpha}_\tau + \boldsymbol{X}_i ^\top\boldsymbol{\beta}_{\tau} )| \boldsymbol{X}_i)-\tau$, $h_{\tau}(\boldsymbol{\alpha}_{\tau},\boldsymbol{\beta}_{\tau})=I(Y-\boldsymbol{Z}^\top \boldsymbol{\alpha}_{\tau}-\boldsymbol{X}^\top \boldsymbol{\beta}_{\tau}>0)-\tau$, $\tilde{R}_i(\boldsymbol{t})=h_\tau(\boldsymbol{\alpha}_{\tau}+n^{-1}\boldsymbol{t}_1,\boldsymbol{\beta}_{\tau}+n^{-1}\boldsymbol{t}_2)-h_{\tau}(\boldsymbol{\alpha}_{\tau},\boldsymbol{\beta}_{\tau})$, $H_z(\boldsymbol{t}_1)=\boldsymbol{t}_1^\top \boldsymbol{\Sigma}_z\boldsymbol{t}_1$, $\boldsymbol{t}=(\boldsymbol{t}_1^\top,\boldsymbol{t}_2^\top)\in \mathbb{R}^{p+q}$, $\boldsymbol{t}_1\in \mathbb{R}^q, \boldsymbol{t}_2\in \mathbb{R}^p$.
\end{itemize}
Under the above alternative hypothesis, the following theorem further establishes the asymptotic independence of the two test statistics.

\begin{theorem}\label{th1}
    Under conditons (A1)-(A5), we have
\begin{align*}
&P\left[\frac{n T_{\mathrm{SUM}}}
         {\tau(1-\tau)\sqrt{2\mathrm{tr}(\boldsymbol{\Sigma}_x^2)}} \le x,T_{\mathrm{MAX}} 
        - 2\log(p)
        + \log\{\log(p)\} 
        \le y\right]\to \\
    &P\left[\frac{n T_{\mathrm{SUM}}}
         {\tau(1-\tau)\sqrt{2\mathrm{tr}(\boldsymbol{\Sigma}_x^2)}} \le x\right]P\left[T_{\mathrm{MAX}} 
        - 2\log(p)
        + \log\{\log(p)\} 
        \le y\right].
\end{align*}
\end{theorem}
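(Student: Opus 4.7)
The plan is to reduce Theorem \ref{th1} to Theorem \ref{th0} via a coupling argument. I would introduce ``oracle'' surrogate statistics $\widetilde{T}_{\mathrm{SUM}}$ and $\widetilde{T}_{\mathrm{MAX}}$ built from the idealized score $\psi_i = I(\varepsilon_i \le 0) - \tau$ in place of $\hat\psi_i$. Because $\psi_i$ is independent of $(\boldsymbol{Z}_i, \boldsymbol{X}_i)$ and has mean zero, the joint limit of $(\widetilde{T}_{\mathrm{SUM}}, \widetilde{T}_{\mathrm{MAX}})$ is the Gaussian--Gumbel product, by the same argument that underlies Theorem \ref{th0}. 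It then suffices to show that the coupling errors $T_{\mathrm{SUM}} - \widetilde{T}_{\mathrm{SUM}}$ and $T_{\mathrm{MAX}} - \widetilde{T}_{\mathrm{MAX}}$ vanish in probability after the standardizations stated in the theorem; the desired conclusion then follows by Slutsky's theorem.

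To quantify the coupling error, I would expand $\hat\psi_i$ by writing $Y_i - \boldsymbol{Z}_i^\top \hat{\boldsymbol{\alpha}}_\tau = \varepsilon_i + \boldsymbol{X}_i^\top \boldsymbol{\beta}_\tau - \boldsymbol{Z}_i^\top(\hat{\boldsymbol{\alpha}}_\tau - \boldsymbol{\alpha}_\tau)$ and invoking a Bahadur-type representation for $\hat{\boldsymbol{\alpha}}_\tau - \boldsymbol{\alpha}_\tau = O_p(n^{-1/2})$ (valid because $q$ is fixed). This decomposes $\hat\psi_i - \psi_i$ into three pieces: a conditional-expectation bias $\varsigma(\boldsymbol{X}_i)$ produced by the signal $\boldsymbol{X}_i^\top \boldsymbol{\beta}_\tau$, a remainder $\tilde R_i(\boldsymbol{t})$ tied to nuisance estimation of $\boldsymbol{\alpha}_\tau$, and higher-order stochastic terms. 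Substituting into $T_{\mathrm{SUM}}$ produces bilinear cross-products $\boldsymbol{X}_i^\top \boldsymbol{X}_j \hat\psi_i \hat\psi_j$ that split into several remainders; the rate conditions $\boldsymbol{\beta}_\tau^\top \boldsymbol{\Sigma}_x^3 \boldsymbol{\beta}_\tau = o(n^{-1}\tr(\boldsymbol{\Sigma}_x^2))$, $E(\boldsymbol{X}_i^\top \varsigma(\boldsymbol{X}_i))\boldsymbol{\Sigma}_x E(\varsigma(\boldsymbol{X}_i)\boldsymbol{X}_i) = o(n^{-1}\tr(\boldsymbol{\Sigma}_x^2))$, and $E(\boldsymbol{X}_i^\top \tilde R_i(\boldsymbol{t}))\boldsymbol{\Sigma}_x E(\tilde R_i(\boldsymbol{t})\boldsymbol{X}_i) = o(n^{-2} H_z(\boldsymbol{t}_1)\tr(\boldsymbol{\Sigma}_x^2))$ of (A5) are precisely calibrated to bound each of these variances by $o(n^{-2}\tr(\boldsymbol{\Sigma}_x^2))$, so that the standardized error is $o_p(1)$ by Chebyshev.

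For $T_{\mathrm{MAX}}$, I would expand each coordinate score $S_{j,\tau}$ analogously. The sparsity condition $\|\boldsymbol{\beta}_\tau\|_0 = o(p^{1/2})$ paired with $\boldsymbol{\beta}_\tau^\top \boldsymbol{\Sigma}_x \boldsymbol{\beta}_\tau = o(1)$ ensures $\max_{j\le p}|S_{j,\tau}^2 - \widetilde{S}_{j,\tau}^2| = o_p(1)$, so that the Gumbel limit established in \cite{tang2022conditional} transfers from $\widetilde{T}_{\mathrm{MAX}}$ to $T_{\mathrm{MAX}}$ under $H_1$. The main obstacle is handling the mixed cross terms of the form $\boldsymbol{X}_i^\top \boldsymbol{X}_j \psi_i \varsigma(\boldsymbol{X}_j)$ that arise in the Hoeffding decomposition of $T_{\mathrm{SUM}} - \widetilde{T}_{\mathrm{SUM}}$: these are non-degenerate under projection onto a single index and could inflate the standardized variance, so the variance-type conditions in (A5) must be applied with careful second-moment bookkeeping against the normalizing constant $n^{-1}\sqrt{\tr(\boldsymbol{\Sigma}_x^2)}$. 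Once this is done, the joint convergence of the standardized pair $(T_{\mathrm{SUM}}, T_{\mathrm{MAX}})$ to the product Gaussian--Gumbel law is immediate.
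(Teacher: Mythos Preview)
Your proposal has a genuine conceptual gap: you are trying to prove a strictly stronger statement than Theorem~\ref{th1}, namely that the standardized pair $(T_{\mathrm{SUM}},T_{\mathrm{MAX}})$ converges jointly to the Gaussian--Gumbel product law. Theorem~\ref{th1} only asserts that the joint distribution \emph{factorizes} into the product of its (unspecified) marginals. Under (A5) the marginals need not be $\Phi$ and $G$; the whole point of stating the result under the local alternative is that the tests may have non-trivial power, in which case the marginals are shifted. Concretely, take $\|\boldsymbol{\beta}_\tau\|_0=1$ with $\beta_{1,\tau}\asymp\sqrt{\log p/n}$: then $\boldsymbol{\beta}_\tau^\top\boldsymbol{\Sigma}_x\boldsymbol{\beta}_\tau=o(1)$ and all of (A5) can hold, yet the signal-induced shift in $S_{1,\tau}$ is of order $\sqrt{\log p}$, so your claim $\max_{j}|S_{j,\tau}^2-\widetilde S_{j,\tau}^2|=o_p(1)$ fails. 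Likewise, the standardized difference $T_{\mathrm{SUM}}-\widetilde T_{\mathrm{SUM}}$ carries a mean of order $n\,\boldsymbol{\beta}_\tau^\top\boldsymbol{\Sigma}_x^2\boldsymbol{\beta}_\tau/\sqrt{\tr(\boldsymbol{\Sigma}_x^2)}$, and nothing in (A5) forces this to be $o(1)$; the variance-type conditions you invoke control the fluctuations of the cross terms, not this drift. So the Slutsky reduction to the oracle pair does not go through.

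The paper avoids this by \emph{not} replacing $\hat\psi_i$ with $\psi_i$. Instead it enlarges the conditioning $\sigma$-field from $\sigma(\mathbf{Z},\boldsymbol\varepsilon)$ to $\mathcal{F}_{\mathcal{M}}=\sigma(\mathbf{Z},\boldsymbol\varepsilon,\mathbf{X}_{\mathcal{M}})$, where $\mathcal{M}=\{j:\beta_{j,\tau}\neq 0\}$. Under the alternative $\boldsymbol\Psi$ is $\mathcal{F}_{\mathcal{M}}$-measurable (it depends on $\mathbf{X}$ only through $\mathbf{X}_{\mathcal{M}}$), and the columns $\mathbf{X}_{\mathcal{M}^c}$ remain conditionally Gaussian given $\mathcal{F}_{\mathcal{M}}$. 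One then shows that the $\mathcal{M}$-block of the sum statistic contributes $o_p(v_p)$ solely because $|\mathcal{M}|=o(\sqrt p)$, so $\widetilde U_p=\widetilde U_{\mathcal{M}^c}+o_p(1)$; crucially $\boldsymbol\Psi$ still carries the full signal, so nothing is said about the marginal limits. With this reduction in hand, the local-independence and inclusion--exclusion argument from Theorem~\ref{th0} runs verbatim with $\mathcal{F}_{\mathcal{M}}$ in place of $\sigma(\mathbf{Z},\boldsymbol\varepsilon)$, yielding $P(H_p(x),L_p(y))=P(H_p(x))P(L_p(y))+o(1)$ without ever identifying $P(H_p(x))$ or $P(L_p(y))$.
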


\cite{li2023} show that the Cauchy combination test can achieve substantially higher power than the minimal $p$-value method, that is, the test based on $\min\{p_{max},p_{{sum}}\}$ (referred to as the minimal $p$-value combination). Denote the corresponding power function by 
\[
\beta_{max\wedge {sum},\alpha}
    = P\left(\min\{{p}_{max},{p}_{sum}\}\leq 1-\sqrt{1-\alpha}\right).
\]
It is straightforward to see that
\begin{align}\label{power_H1}
\beta_{{max}\wedge {sum}, \alpha} 
&\ge P(\min\{{p}_{max},{p}_{sum}\}\leq \alpha/2)\nonumber\\
&= \beta_{{max},\alpha/2}+\beta_{{sum},\alpha/2}
   -P({p}_{max}\leq \alpha/2,\,{p}_{sum}\leq \alpha/2)\nonumber\\
&\ge \max\{\beta_{{max},\alpha/2},\,\beta_{{sum},\alpha/2}\}.
\end{align}
Moreover, under $H_1$ in condition (A5), the asymptotic independence established in Theorem~\ref{th1} yields
\begin{align}\label{power_H1np}
\beta_{{max}\wedge {sum}, \alpha}
    \ge \beta_{{max},\alpha/2}+\beta_{{sum},\alpha/2}
      -\beta_{{max},\alpha/2}\beta_{{sum},\alpha/2} + o(1).
\end{align}

For small $\alpha$, the values of $\beta_{M,\alpha/2}$ and $\beta_{S,\alpha/2}$ are close to $\beta_{M,\alpha}$ and $\beta_{S,\alpha}$, respectively. Consequently, combining \eqref{power_H1} and \eqref{power_H1np} implies that the minimal $p$-value combination test retains at least the power of the better-performing individual test, and often improves upon both when the two tests capture complementary features of the alternative.

\section{Simulation}\label{sec:simulation}

We conduct extensive simulation experiments to examine the finite-sample performance of the proposed Cauchy-type combination test $T_{\mathrm{CC}}$. 
The proposed test is compared with the sum-type test $T_{\mathrm{SUM}}$ \citep{chen2024hypothesis} and the max-type test $T_{\mathrm{MAX}}$ \citep{tang2022conditional}.

Covariates are generated from the independent component model
\begin{align*}
\boldsymbol{U} = \boldsymbol{\Sigma}^{1/2} \boldsymbol{u},
\end{align*}
where $\boldsymbol{u}=(u_1,\ldots,u_{p+q-1})^\top$ consists of i.i.d.\ entries drawn from one of the following centered distributions:  
(i) Normal, (ii) Laplace, (iii) Logistic, and (iv) the $t_2$ distribution.  
This setup allows us to evaluate robustness under light- to heavy-tailed distributions.

We consider three covariance structures for $\boldsymbol{\Sigma}$:
\begin{itemize}
    \item[(I)] $\boldsymbol{\Sigma}=\mathbf{I}_{p+q-1}$;
    \item[(II)] $\boldsymbol{\Sigma}_{ij}=0.5^{|i-j|}$ for $1\le i,j\le p+q-1$;
    \item[(III)] $\boldsymbol{\Sigma}=\mathbf{I}_{p+q-1}+ \boldsymbol{b}\boldsymbol{b}^\top - \mathbf{B}$,  
    where $\mathbf{B}=\operatorname{diag}(b_1^2,\dots,b_l^2)$ and the first $\lfloor p^{0.3}\rfloor$ components of $\boldsymbol{b}$ are independently sampled from $\operatorname{Uniform}(0.7,0.9)$ while the remaining components are zero.
\end{itemize}


We consider three quantile levels, $\tau \in \{0.25, 0.5, 0.75\}$, with sample sizes $n \in \{100,150\}$ and dimensions $p \in \{120,240\}$.
Tables~\ref{tab1}--\ref{tab3} summarize the empirical size performance across all experimental settings, including Cases~1--3, varying $(n,p)$ combinations, and multiple underlying distributions. We observe that most of the proposed tests are able to control the empirical size well. In particular, across different quantile levels and a wide range of underlying distributions, the empirical type I error rates remain close to the nominal level, indicating stable finite-sample performance.

For power evaluation, we generate data under $H_1$ by setting $\boldsymbol{\beta}=(\beta_1,\dots,\beta_q)^\top$ to be a sparse vector. The first $s$ entries of $\beta$ are nonzero and satisfy
\[
\|\boldsymbol{\beta}\|^2 = 0.5, 
\qquad \beta_i \sim \mathcal{N}(0,1), \quad i=1,\dots,s.
\]

\begin{table}[htbp] 
\centering 
\caption{Empirical size across different $(n,p)$ configurations and methods for Cases~1--3 ($\tau = 0.5$, $s = 9$; 2000 replications).} 
\label{tab1}
\scriptsize \setlength{\tabcolsep}{3.5pt} \renewcommand{\arraystretch}{1.2} 
\begin{tabular}{c c c | ccc ccc ccc ccc} 
\toprule Case & $p$ & $n$ 
& \multicolumn{3}{c}{Normal} 
& \multicolumn{3}{c}{Laplace} 
& \multicolumn{3}{c}{Logistic} 
& \multicolumn{3}{c}{$t_2$} \\ 
\cmidrule(lr){4-6} \cmidrule(lr){7-9} \cmidrule(lr){10-12} \cmidrule(lr){13-15} 
& &
& $T_{\mathrm{CC}}$ & $T_{\mathrm{MAX}}$ & $T_{\mathrm{SUM}}$ 
& $T_{\mathrm{CC}}$ & $T_{\mathrm{MAX}}$ & $T_{\mathrm{SUM}}$
& $T_{\mathrm{CC}}$ & $T_{\mathrm{MAX}}$ & $T_{\mathrm{SUM}}$
& $T_{\mathrm{CC}}$ & $T_{\mathrm{MAX}}$ & $T_{\mathrm{SUM}}$ \\
\midrule
\multirow{4}{*}{Case 1}
& \multirow{2}{*}{120} & $100$   & 5.75 & 5.20 & 5.45 &4.05  & 3.15 & 4.70 & 5.00 & 3.40 & 5.40 & 3.90 &1.10  & 5.10 \\
& & $150$   & 5.80 & 3.95 & 5.90 & 5.40 & 4.00 &5.65  &  4.85 & 2.90 & 5.50  & 3.35 &0.09 & 5.25 \\
\cmidrule(lr){2-15}
& \multirow{2}{*}{240} & $100$   & 5.35 & 4.35 & 5.75 & 5.75 & 3.75 & 6.25 & 5.55 &4.60  & 5.15 & 2.90 & 0.80 & 5.35 \\
& & $150$   &5.95  & 4.55 & 5.75 & 5.50 & 3.85 & 5.40 & 4.90 & 4.90 & 4.65 &3.45& 1.00 & 6.40\\
\midrule
\multirow{4}{*}{Case 2}
& \multirow{2}{*}{120} & $100$   & 5.60 & 4.30 & 5.80 & 5.55 & 2.40 & 6.25 & 5.40 & 3.85 & 6.30 & 3.20 & 1.05 & 5.25 \\
& & $150$   & 5.60 & 4.60 & 5.15 & 5.35 & 4.20 & 5.50 & 5.35 & 3.70 & 5.55 & 4.25 & 1.15 & 6.20 \\
\cmidrule(lr){2-15}
& \multirow{2}{*}{240} & $100$   & 5.20 & 3.70 & 5.45 &5.50  & 3.95 & 6.30 &4.85  & 3.20 & 5.35 & 3.55 & 1.25 & 5.40 \\
& & $150$   & 5.00 & 3.75 & 5.40 &5.40& 3.35 & 6.50 & 5.30 & 4.20 & 5.75 & 4.05 &1.95  & 5.70 \\
\midrule
\multirow{4}{*}{Case 3}
& \multirow{2}{*}{120} & $100$   & 5.15 & 5.00 & 4.75 & 5.00 & 3.40 & 5.85 & 4.95 & 3.90 & 5.55 & 3.15 & 1.20 & 5.20 \\
& &$150$   & 5.05 & 3.15 & 5.30 &6.70& 3.85 & 6.60 & 4.55 & 4.00 & 4.60 & 3.35 & 1.00 & 5.50 \\
\cmidrule(lr){2-15}
& \multirow{2}{*}{240} & $100$   & 5.95 & 4.30 & 5.80 & 5.35 & 3.65 & 6.15 &5.20 & 4.15 & 5.20 & 3.00 & 1.15 & 5.25 \\
& & $150$   & 4.15 & 3.70 & 5.15 & 3.85 & 3.20 & 3.95 & 5.00 &4.25&5.15  &3.25 & 0.65 & 5.25 \\
\bottomrule
\end{tabular}
\end{table}

\begin{table}[htbp] 
\centering 
\caption{Empirical size across different $(n,p)$ configurations and methods for Cases~1--3 ($\tau = 0.25$, $s = 9$; 2000 replications).} 
\label{tab2}
\scriptsize \setlength{\tabcolsep}{3.5pt} \renewcommand{\arraystretch}{1.2} 
\begin{tabular}{c c c | ccc ccc ccc ccc} 
\toprule Case & $p$ & $n$ 
& \multicolumn{3}{c}{Normal} 
& \multicolumn{3}{c}{Laplace} 
& \multicolumn{3}{c}{Logistic} 
& \multicolumn{3}{c}{$t_2$} \\ 
\cmidrule(lr){4-6} \cmidrule(lr){7-9} \cmidrule(lr){10-12} \cmidrule(lr){13-15} 
& &
& $T_{\mathrm{CC}}$ & $T_{\mathrm{MAX}}$ & $T_{\mathrm{SUM}}$ 
& $T_{\mathrm{CC}}$ & $T_{\mathrm{MAX}}$ & $T_{\mathrm{SUM}}$ 
& $T_{\mathrm{CC}}$ & $T_{\mathrm{MAX}}$ & $T_{\mathrm{SUM}}$
& $T_{\mathrm{CC}}$ & $T_{\mathrm{MAX}}$ & $T_{\mathrm{SUM}}$ \\
\midrule
\multirow{4}{*}{Case 1}
& \multirow{2}{*}{120} & $100$   & 6.30 & 6.25 & 7.05 &  6.25 & 5.85 & 6.35 & 6.90 & 5.70 & 6.25 & 4.65 & 2.60 & 5.75 \\
& &  $150$   & 6.70& 5.30 &7.20 & 5.70 & 4.90 & 5.95 & 6.05 & 5.40 & 5.85 &  4.35 & 2.10 & 5.80  \\
\cmidrule(lr){2-15}
& \multirow{2}{*}{240} & $100$   &6.25& 5.20 & 6.05 & 4.55 &4.60  & 4.75 & 6.10 & 5.05 & 6.20 & 4.50 & 2.05 & 5.70 \\
& & $150$   & 5.55 & 4.95 &5.60 &5.05& 4.05 & 5.10 & 5.85 & 5.25 & 5.15 &4.30 & 3.20  & 5.65 \\
\midrule
\multirow{4}{*}{Case 2}
& \multirow{2}{*}{120} & $100$   & 6.50 & 5.70 & 5.65 &5.45  &4.95& 5.80 & 5.20 & 5.00 & 5.00 &5.60& 3.05& 6.60 \\
& &  $150$   & 5.30 & 4.85 & 5.35 & 5.15 & 4.05 & 5.35 & 6.01 & 4.80 & 5.95 & 5.35 & 3.30 & 5.85 \\
\cmidrule(lr){2-15}
& \multirow{2}{*}{240} & $100$   & 5.95 & 5.50 &5.85  &6.60  &5.65  & 6.00 & 6.20 & 4.95 & 5.65 &4.70  & 2.55 & 5.70 \\
& & $150$   & 6.00 & 4.85 & 6.00 & 6.15 & 5.05 & 7.20 & 5.95 & 4.30 & 6.05 & 4.35 & 2.85 & 4.70 \\
\midrule
\multirow{4}{*}{Case 3}
& \multirow{2}{*}{120} & $100$   & 6.55 & 5.55 &6.15  &5.40  &5.25  & 5.40 &5.70  & 5.75 & 5.65 & 4.90 & 2.75 & 5.55 \\
& & $150$   & 5.35 &4.70  & 5.70 & 6.10 & 5.10 & 5.85 & 6.10 & 5.65 & 6.15 & 4.60 & 3.15 & 5.75 \\
\cmidrule(lr){2-15}
& \multirow{2}{*}{240} & $100$   & 6.05 & 5.40 & 6.30 & 6.35 & 4.75 & 5.55 & 6.50 &5.45  & 6.45 &4.40  &3.30  &6.30  \\
& & $150$   & 6.15 &5.50& 6.05 &6.50& 4.40 & 6.50 &6.20& 4.55 & 6.65 &3.25  & 2.45 & 4.85 \\
\bottomrule
\end{tabular}
\end{table}

\begin{table}[htbp] 
\centering 
\caption{Empirical size across different $(n,p)$ configurations and methods for Cases~1--3 ($\tau = 0.75$, $s = 9$; 2000 replications).}
\label{tab3}
\scriptsize \setlength{\tabcolsep}{3.5pt} \renewcommand{\arraystretch}{1.2} 
\begin{tabular}{c c c | ccc ccc ccc ccc} 
\toprule Case & $p$ & $n$ 
& \multicolumn{3}{c}{Normal} 
& \multicolumn{3}{c}{Laplace} 
& \multicolumn{3}{c}{Logistic} 
& \multicolumn{3}{c}{$t_2$} \\ 
\cmidrule(lr){4-6} \cmidrule(lr){7-9} \cmidrule(lr){10-12} \cmidrule(lr){13-15} 
& &
& $T_{\mathrm{CC}}$ & $T_{\mathrm{MAX}}$ & $T_{\mathrm{SUM}}$ 
& $T_{\mathrm{CC}}$ & $T_{\mathrm{MAX}}$ & $T_{\mathrm{SUM}}$ 
& $T_{\mathrm{CC}}$ & $T_{\mathrm{MAX}}$ & $T_{\mathrm{SUM}}$ 
& $T_{\mathrm{CC}}$ & $T_{\mathrm{MAX}}$ & $T_{\mathrm{SUM}}$ \\
\midrule
\multirow{4}{*}{Case 1}
& \multirow{2}{*}{120} & $100$   & 6.15 & 5.55 &7.15 & 6.40 & 5.45 & 6.15 & 5.95 & 5.00 & 5.50 & 5.10 &3.10 & 5.40 \\
& & $150$   & 6.90 & 5.20 &6.30 & 5.00 & 4.70 & 5.15 & 6.40 &5.10  & 5.75 & 4.75 & 3.40 & 5.40 \\
\cmidrule(lr){2-15}
& \multirow{2}{*}{240} & $100$   & 6.30 & 5.70 & 5.90& 6.25 &4.75  & 6.65 & 5.40 & 5.15 & 5.65 & 4.80 & 3.40 & 5.70 \\
& & $150$   &6.30 & 4.95 & 5.35& 6.00 &5.00& 5.65 & 5.70 & 4.50 & 5.85 & 4.45 & 3.35 & 5.20 \\
\midrule
\multirow{4}{*}{Case 2}
& \multirow{2}{*}{120} & $100$   & 6.90 & 5.10 &6.25 & 6.15 & 5.15 & 5.85 & 6.00 & 4.05 & 6.35 & 4.65 & 3.20 & 5.65 \\
& & $150$   & 5.50 &4.20  &5.55 & 6.60 & 4.80 & 6.95 & 6.00 & 4.65 & 5.95 & 5.15 & 3.10 & 5.85 \\
\cmidrule(lr){2-15}
& \multirow{2}{*}{240} & $100$   &6.80& 5.70 &6.15 & 6.20 & 5.00 & 6.75 & 6.15 & 5.15 & 5.65 & 3.95 & 2.65 & 5.20 \\
& & $150$   & 4.95 & 4.25 &5.20 & 6.35 & 4.55 & 6.05 & 5.55 & 4.80 & 5.45 & 4.15 & 3.35 & 4.70 \\
\midrule
\multirow{4}{*}{Case 3}
& \multirow{2}{*}{120} & $100$   & 6.00 & 4.95 &5.70 & 6.35 & 5.15 & 7.05 & 5.90 & 5.05 & 6.15  & 4.45 & 3.45 & 5.50 \\
& &$150$   & 6.35 & 5.30 &5.65 & 6.30 & 4.65 & 5.80 & 5.30 & 4.55 & 5.60 &4.65& 2.60 & 5.50 \\
\cmidrule(lr){2-15}
& \multirow{2}{*}{240} & $100$   & 5.70 & 5.70 &4.90  & 6.10 & 5.60 & 6.40 & 5.70 & 5.35 & 5.55 & 5.05 & 2.65 & 6.80 \\
& & $150$   & 6.50 &5.30  &6.40 & 5.40 & 4.95 & 5.85 & 6.15 & 5.40 & 6.25 & 4.35 & 2.70 & 5.00 \\
\bottomrule
\end{tabular}
\end{table}

We compute empirical power over a range of sparsity levels $s$, covering extremely sparse to moderately dense alternatives. To visualize the comparative performance, we include power curves plotting empirical power against $s$ for each test under all covariance structures and distributions of $\boldsymbol{u}$ 
(see Figures~\ref{fig:power_normal}--\ref{fig:power_t2}). Due to space limitations, we present only the power results for the median quantile ($\tau = 0.5$) in the figures, while the complete results for the other quantiles are provided in Appendix~\ref{sup_fig}.

\begin{figure}[htbp]
\centering
\begin{subfigure}{0.23\textwidth}
    \centering
    \includegraphics[width=\linewidth]{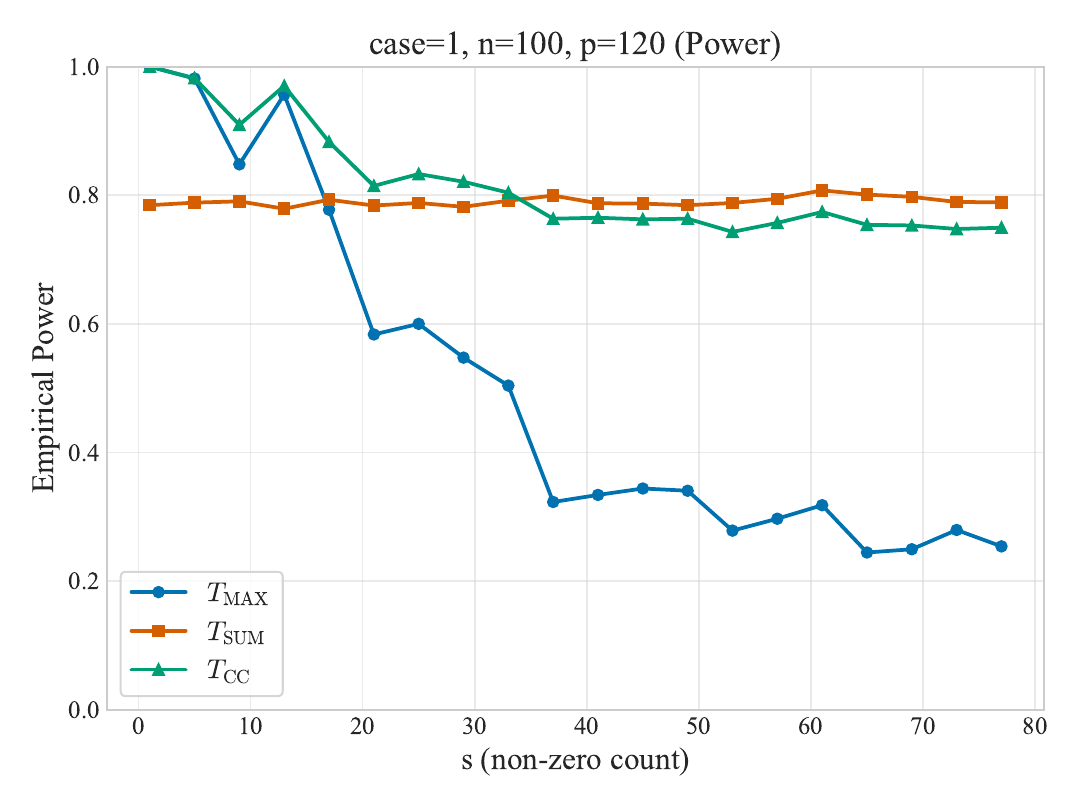}
\end{subfigure}
\hfill
\begin{subfigure}{0.23\textwidth}
    \centering
    \includegraphics[width=\linewidth]{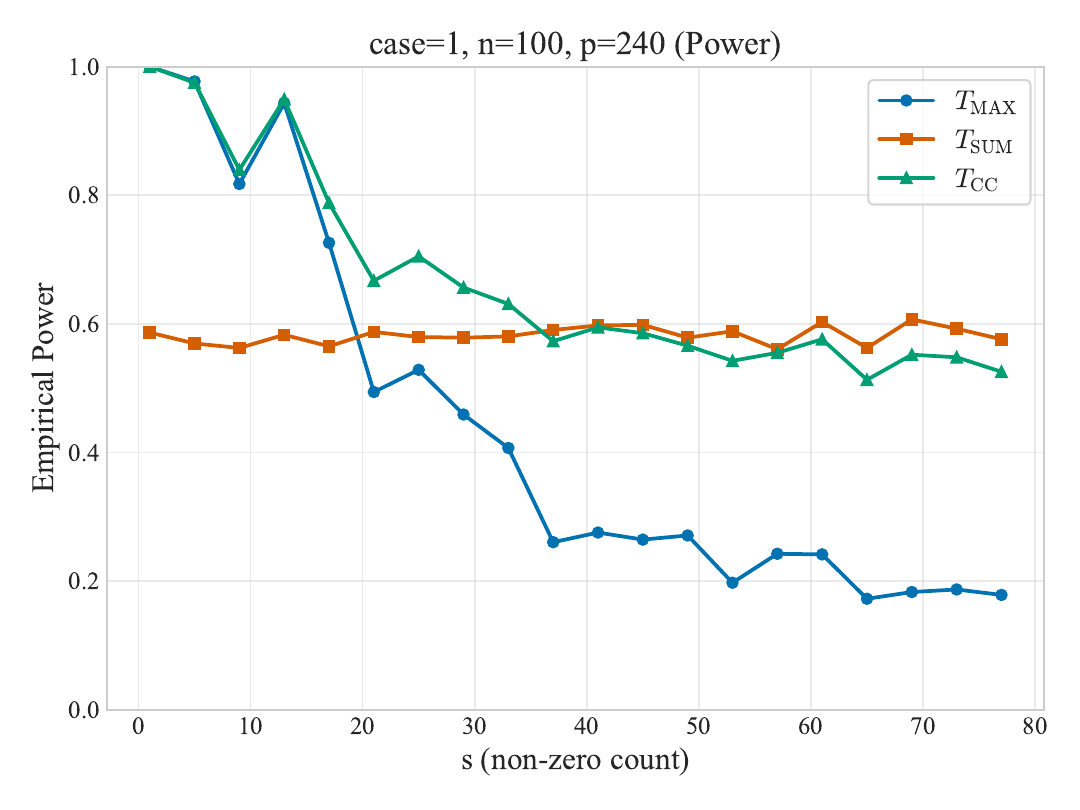}
\end{subfigure}
\hfill
\begin{subfigure}{0.23\textwidth}
    \centering
    \includegraphics[width=\linewidth]{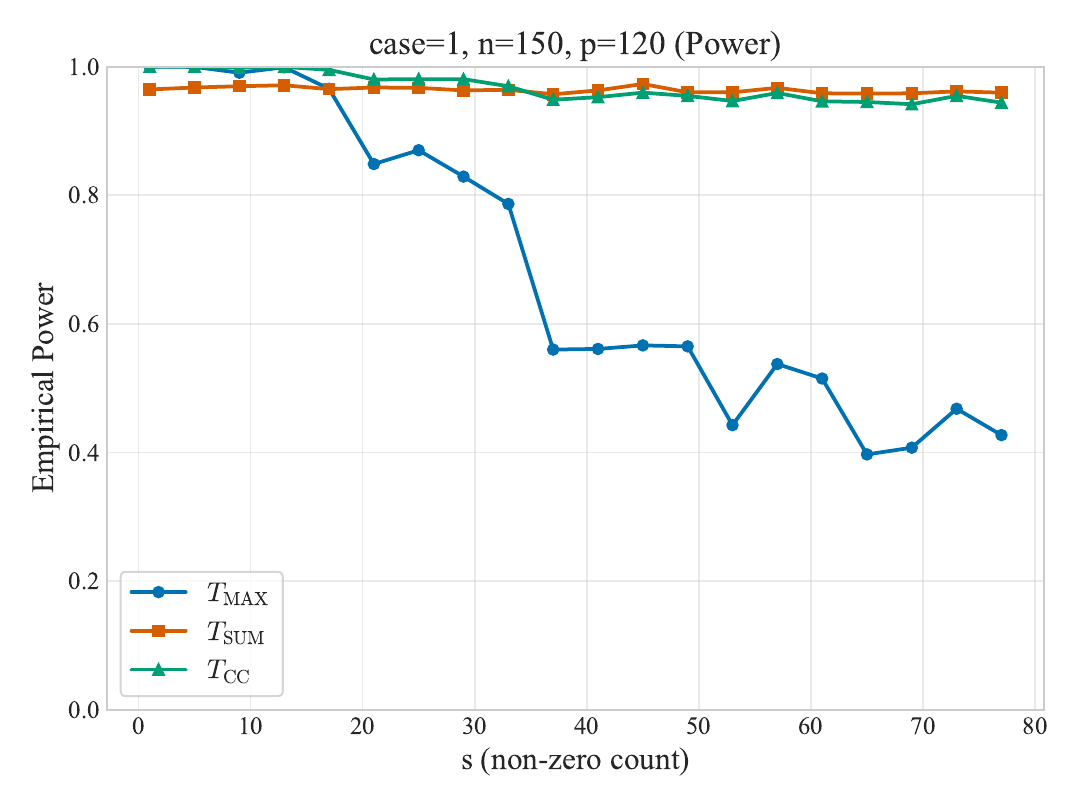}
\end{subfigure}
\hfill
\begin{subfigure}{0.23\textwidth}
    \centering
    \includegraphics[width=\linewidth]{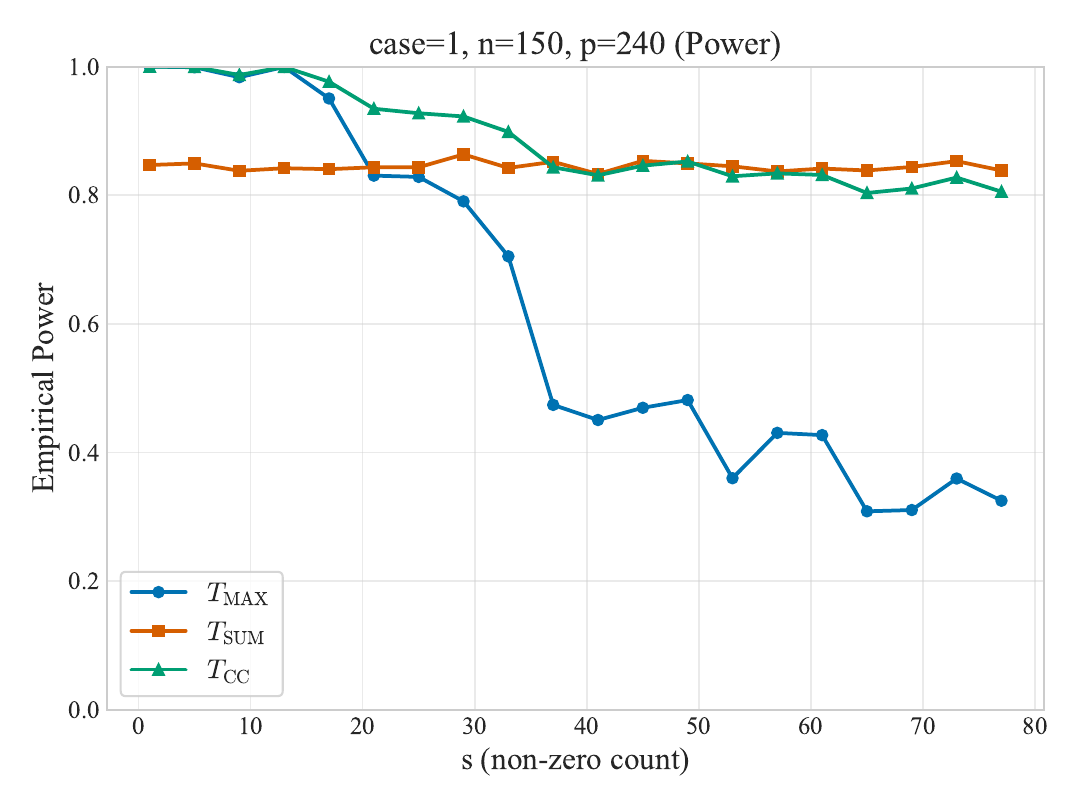}
\end{subfigure}

\vspace{0.3cm}
\begin{subfigure}{0.23\textwidth}
    \centering
    \includegraphics[width=\linewidth]{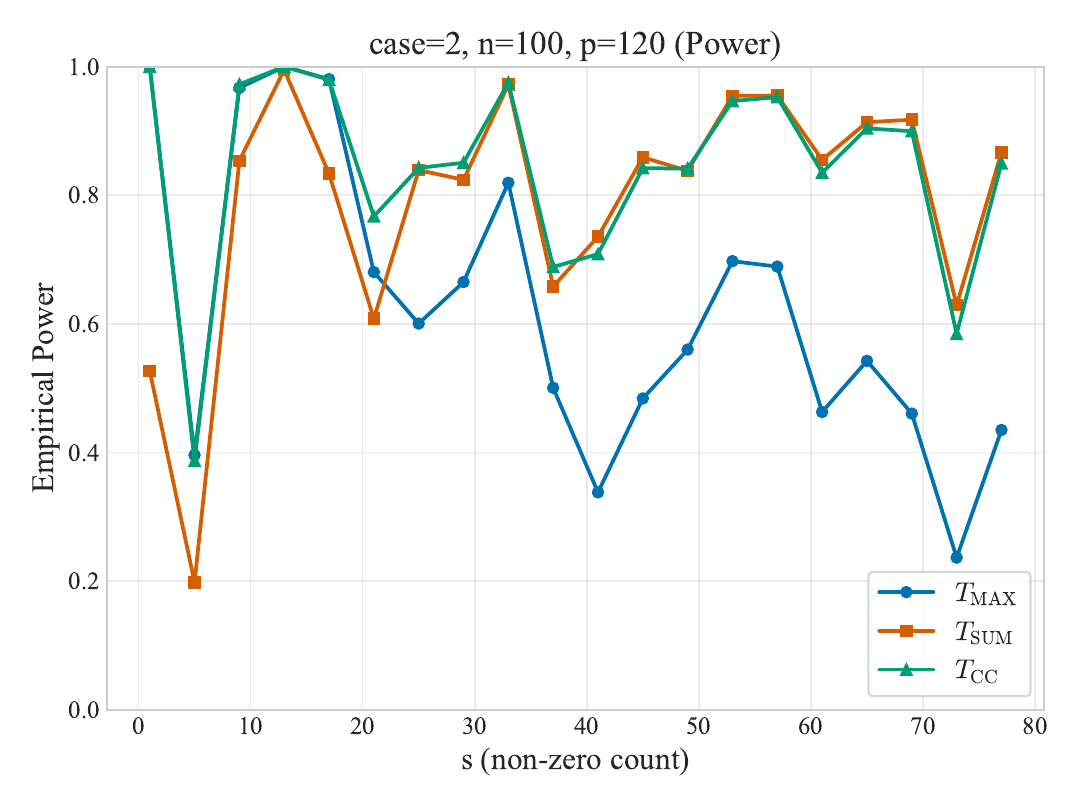}
\end{subfigure}
\hfill
\begin{subfigure}{0.23\textwidth}
    \centering
    \includegraphics[width=\linewidth]{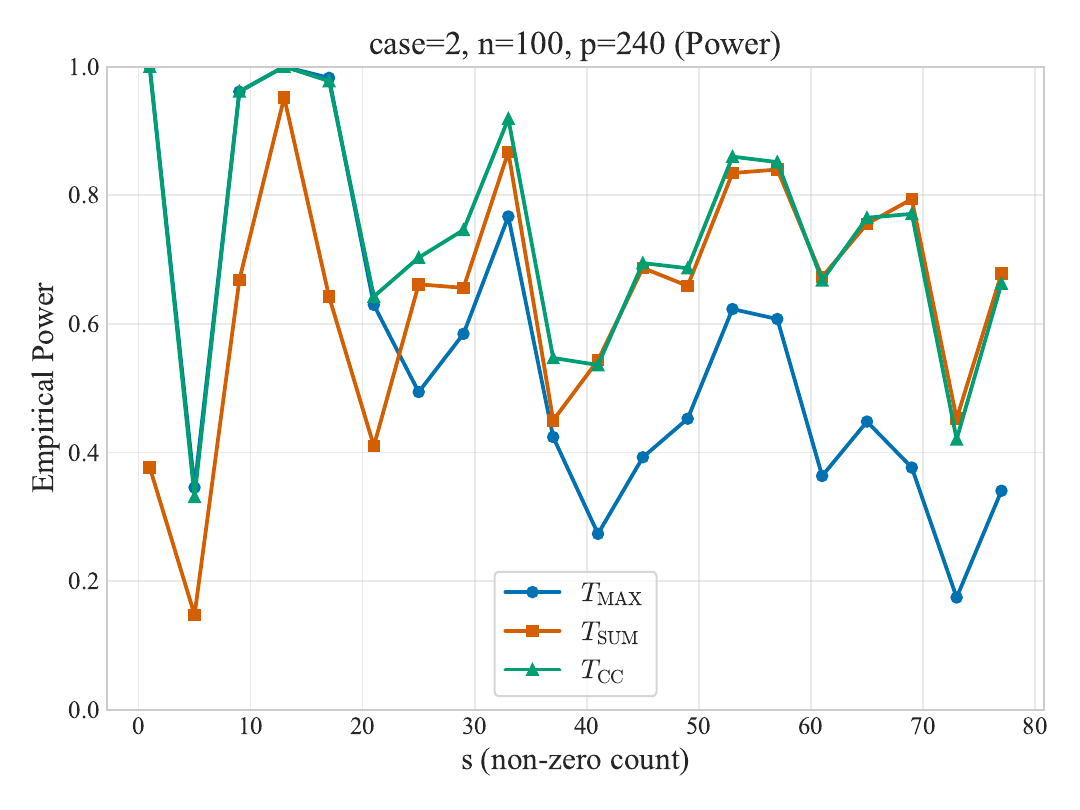}
\end{subfigure}
\hfill
\begin{subfigure}{0.23\textwidth}
    \centering
    \includegraphics[width=\linewidth]{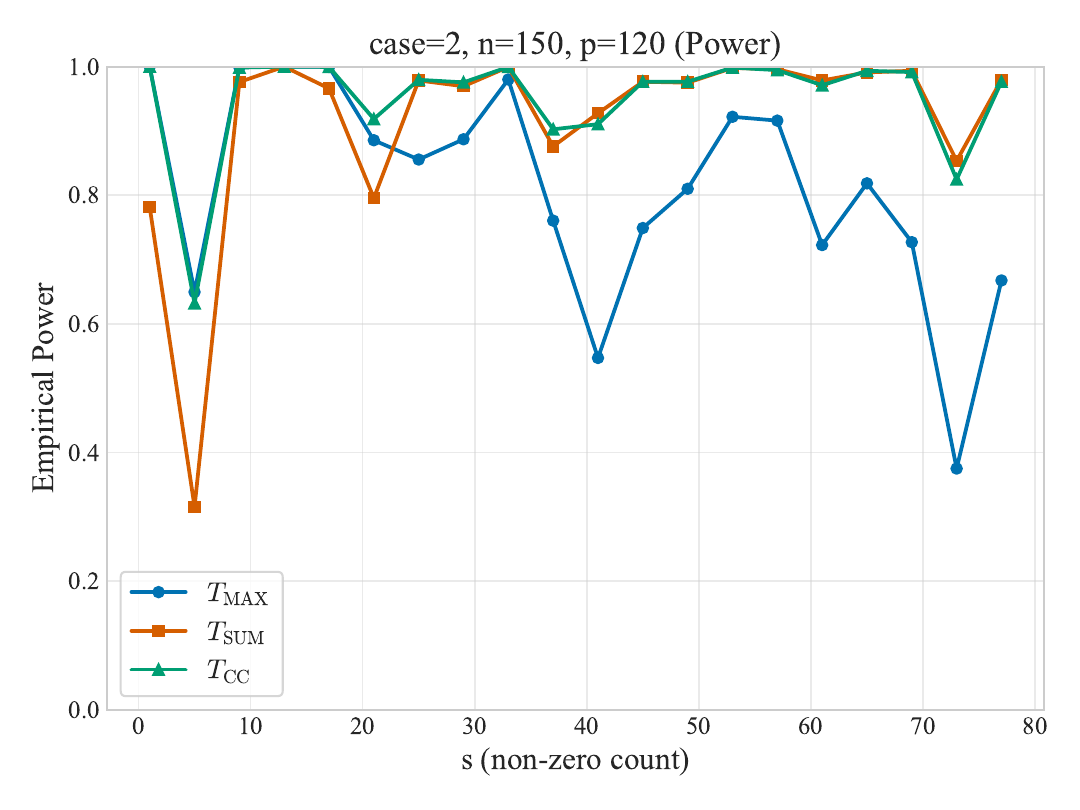}
\end{subfigure}
\hfill
\begin{subfigure}{0.23\textwidth}
    \centering
    \includegraphics[width=\linewidth]{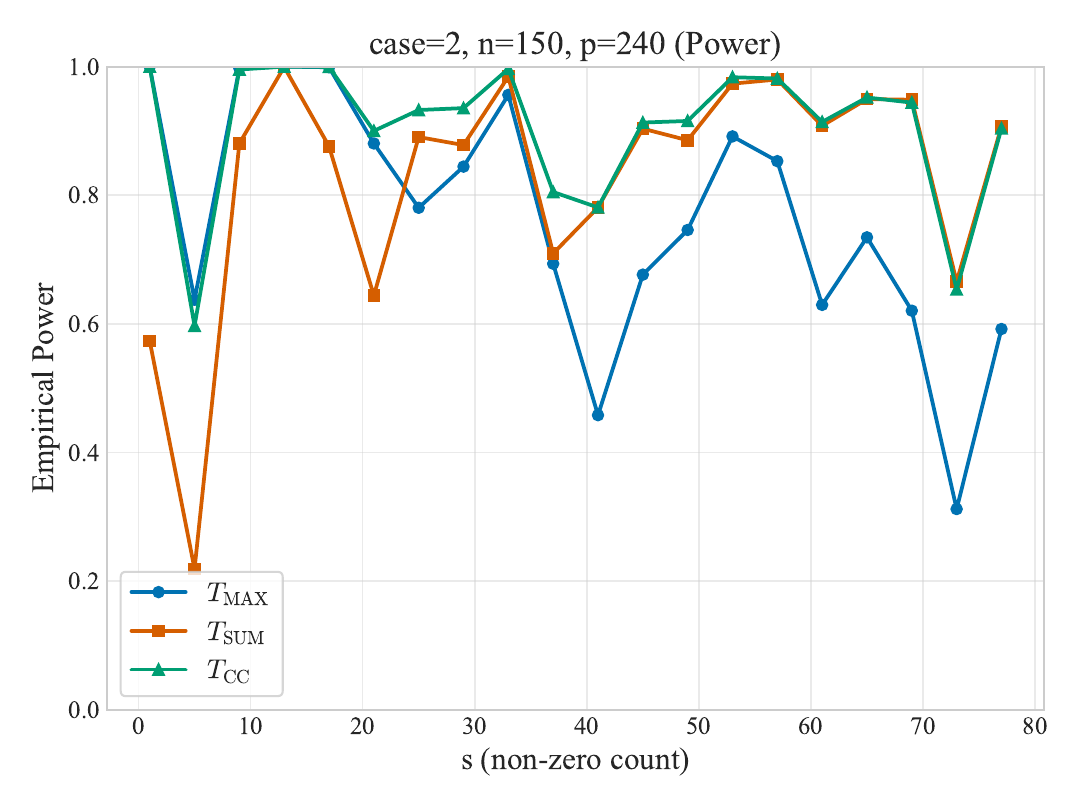}
\end{subfigure}

\vspace{0.3cm}
\begin{subfigure}{0.23\textwidth}
    \centering
    \includegraphics[width=\linewidth]{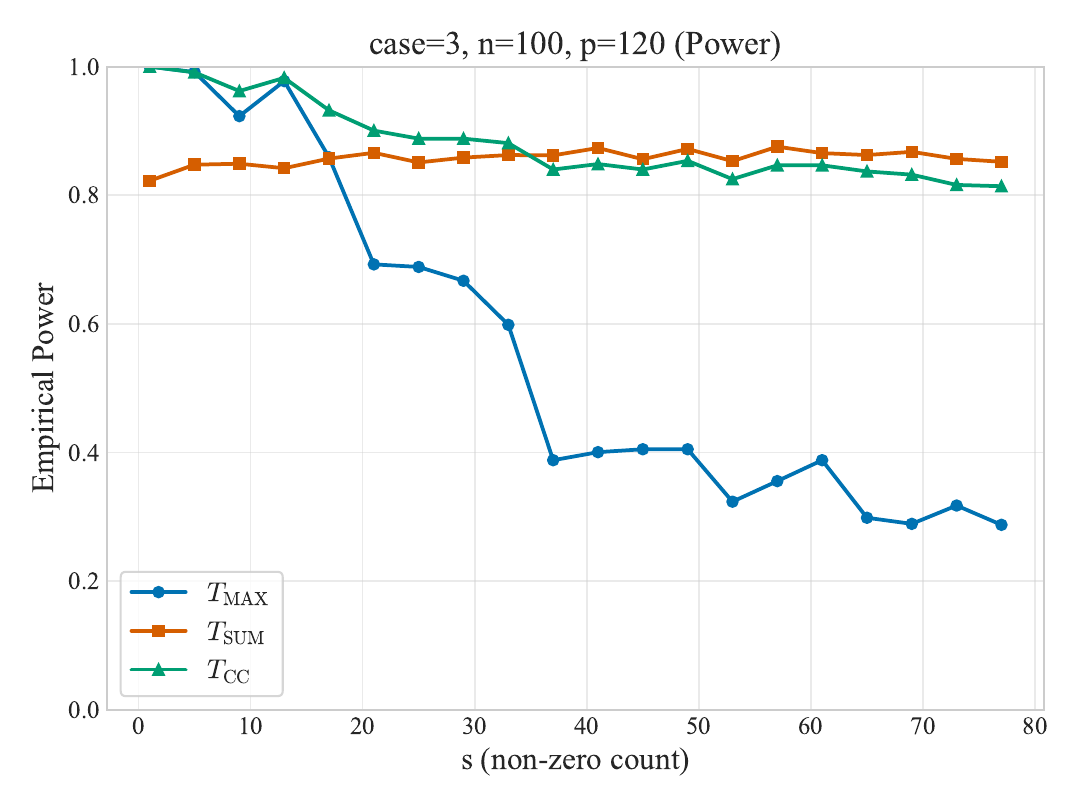}
\end{subfigure}
\hfill
\begin{subfigure}{0.23\textwidth}
    \centering
    \includegraphics[width=\linewidth]{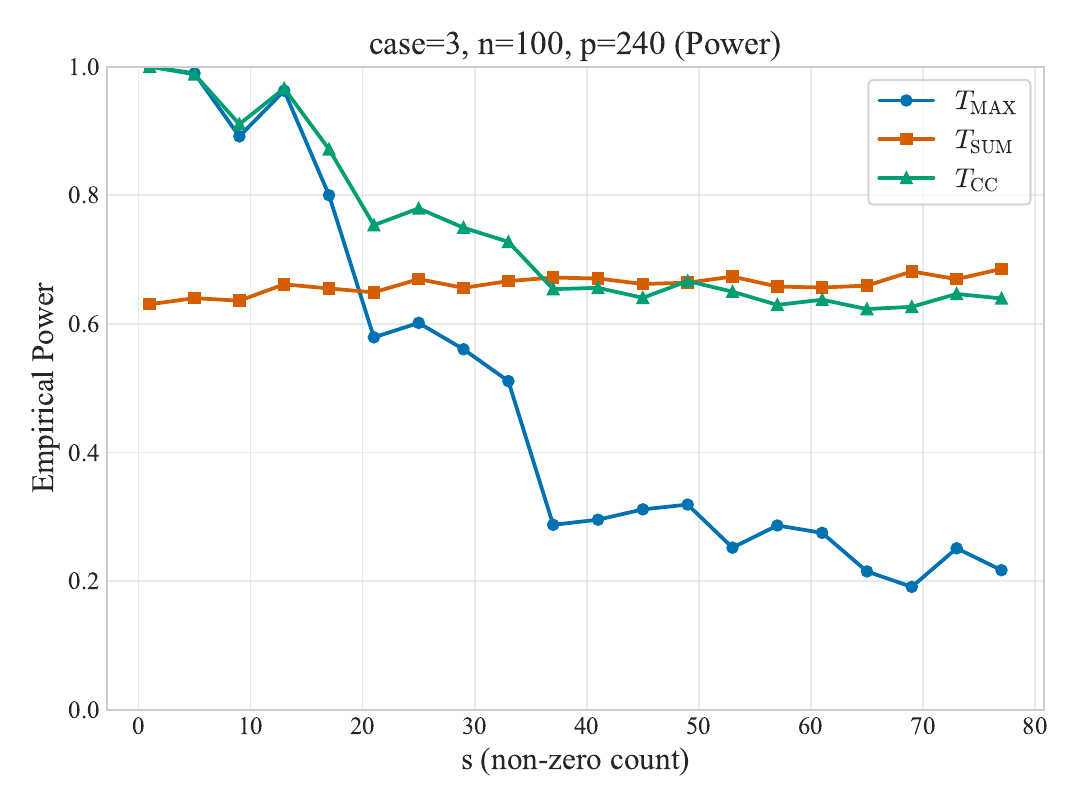}
\end{subfigure}
\hfill
\begin{subfigure}{0.23\textwidth}
    \centering
    \includegraphics[width=\linewidth]{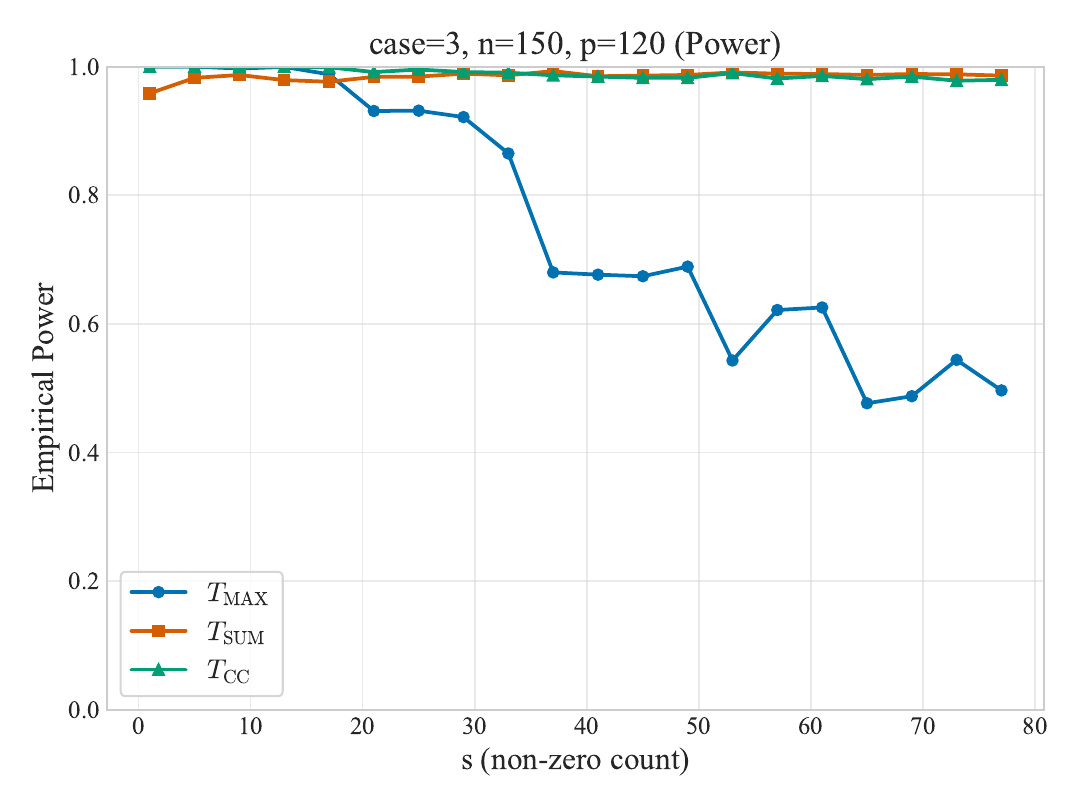}
\end{subfigure}
\hfill
\begin{subfigure}{0.23\textwidth}
    \centering
    \includegraphics[width=\linewidth]{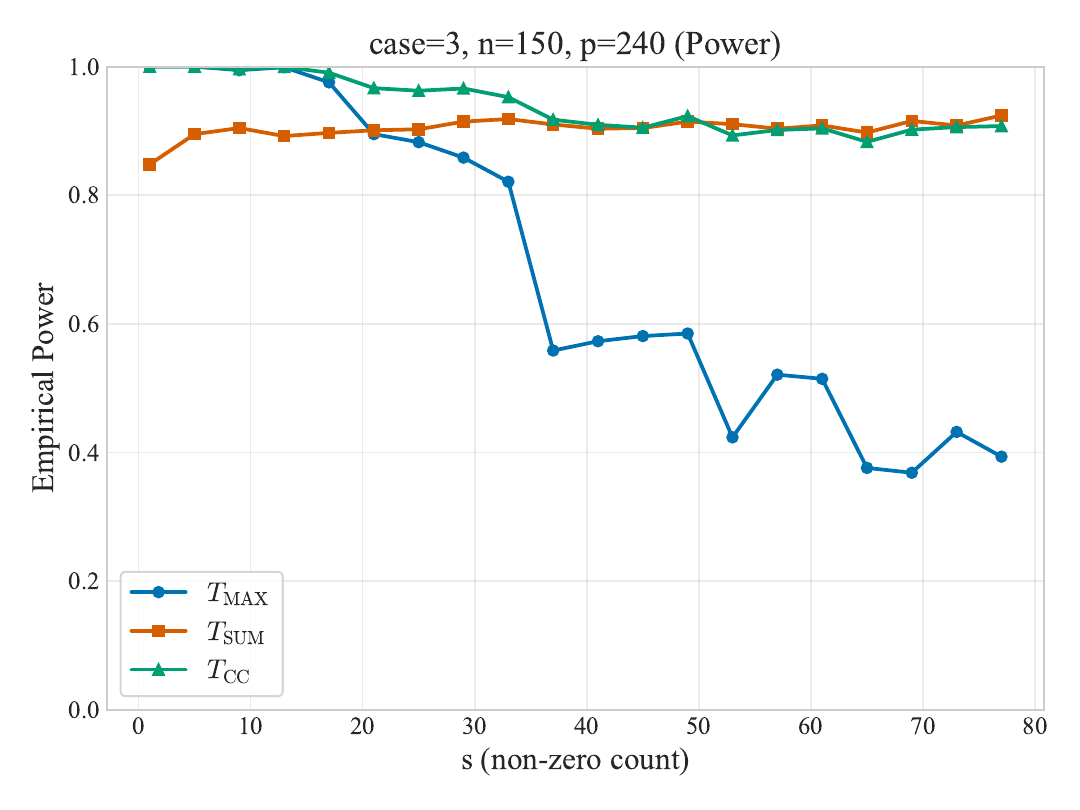}
\end{subfigure}

\caption{Empirical power as a function of $s$ for Cases~1--3 across varying $(n,p)$ 
settings under Normal distribution ($\tau = 0.5$; 2000 replications).
}
\label{fig:power_normal}
\end{figure}
\begin{figure}[htbp]
\centering
\begin{subfigure}{0.23\textwidth}
    \centering
    \includegraphics[width=\linewidth]{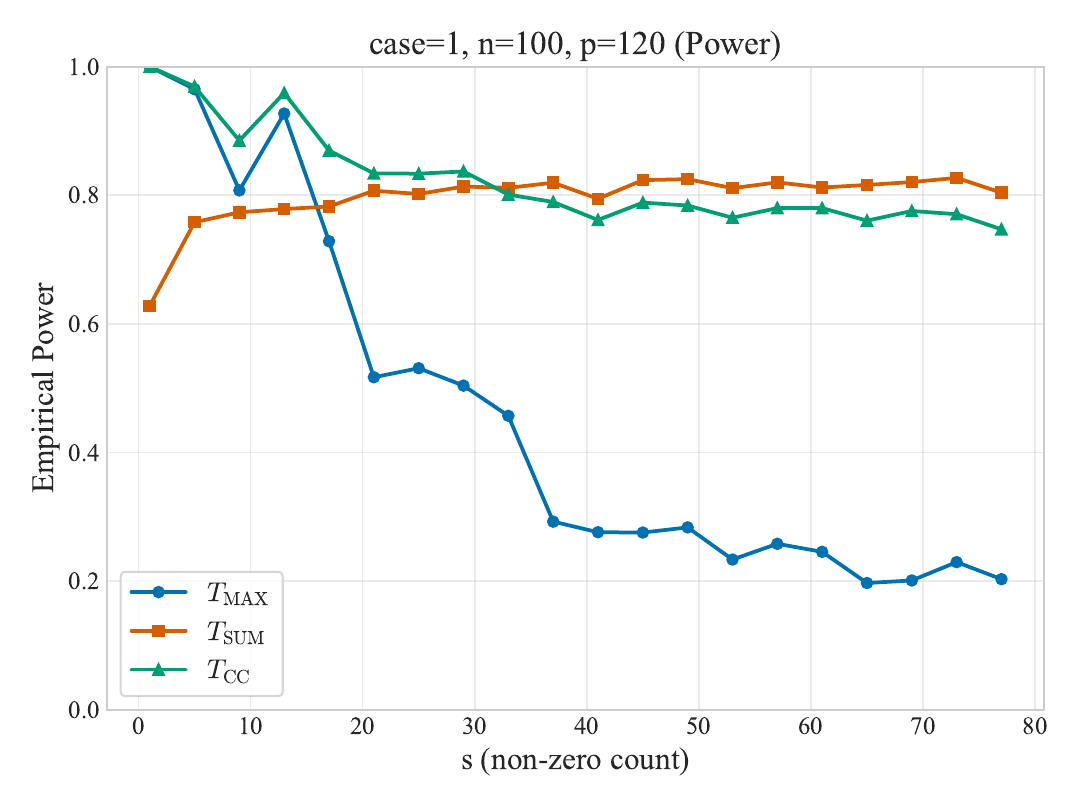}
\end{subfigure}
\hfill
\begin{subfigure}{0.23\textwidth}
    \centering
    \includegraphics[width=\linewidth]{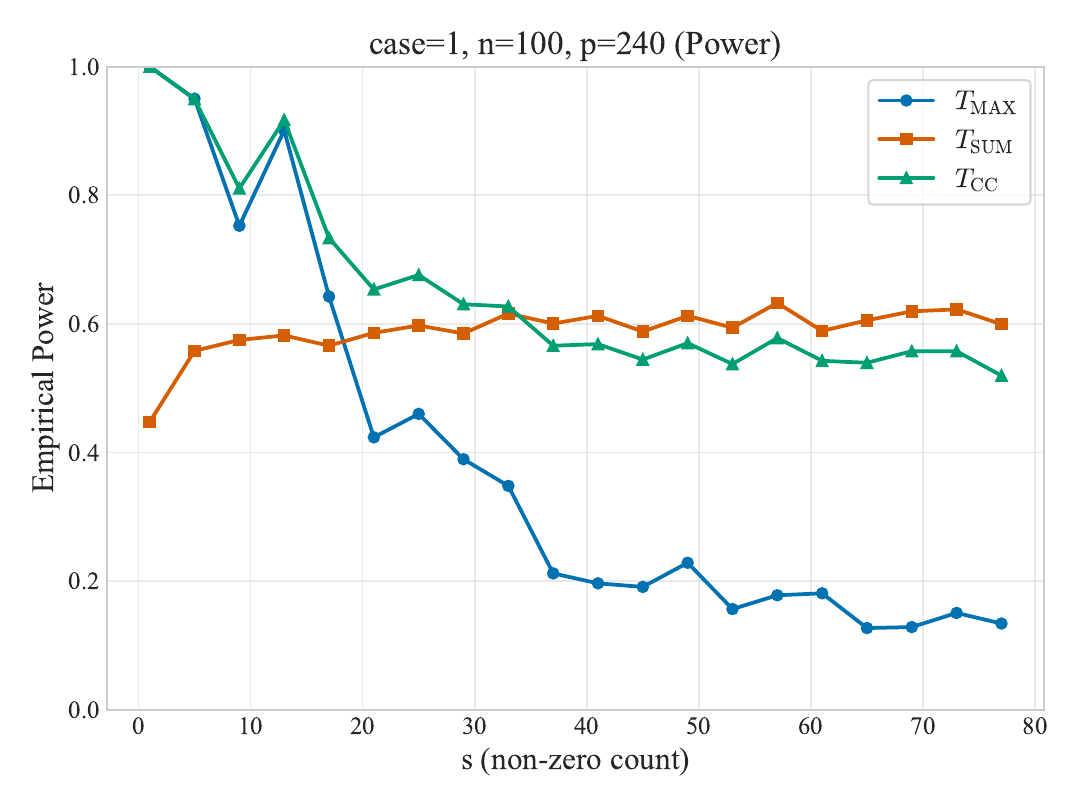}
\end{subfigure}
\hfill
\begin{subfigure}{0.23\textwidth}
    \centering
    \includegraphics[width=\linewidth]{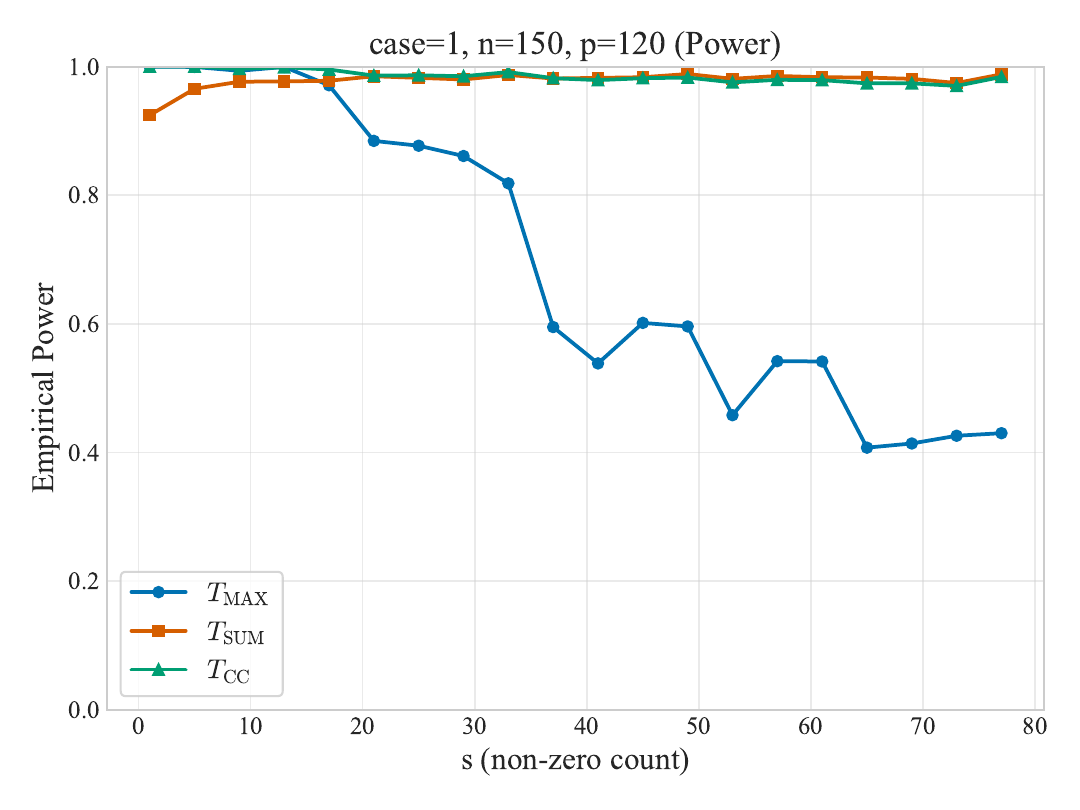}
\end{subfigure}
\hfill
\begin{subfigure}{0.23\textwidth}
    \centering
    \includegraphics[width=\linewidth]{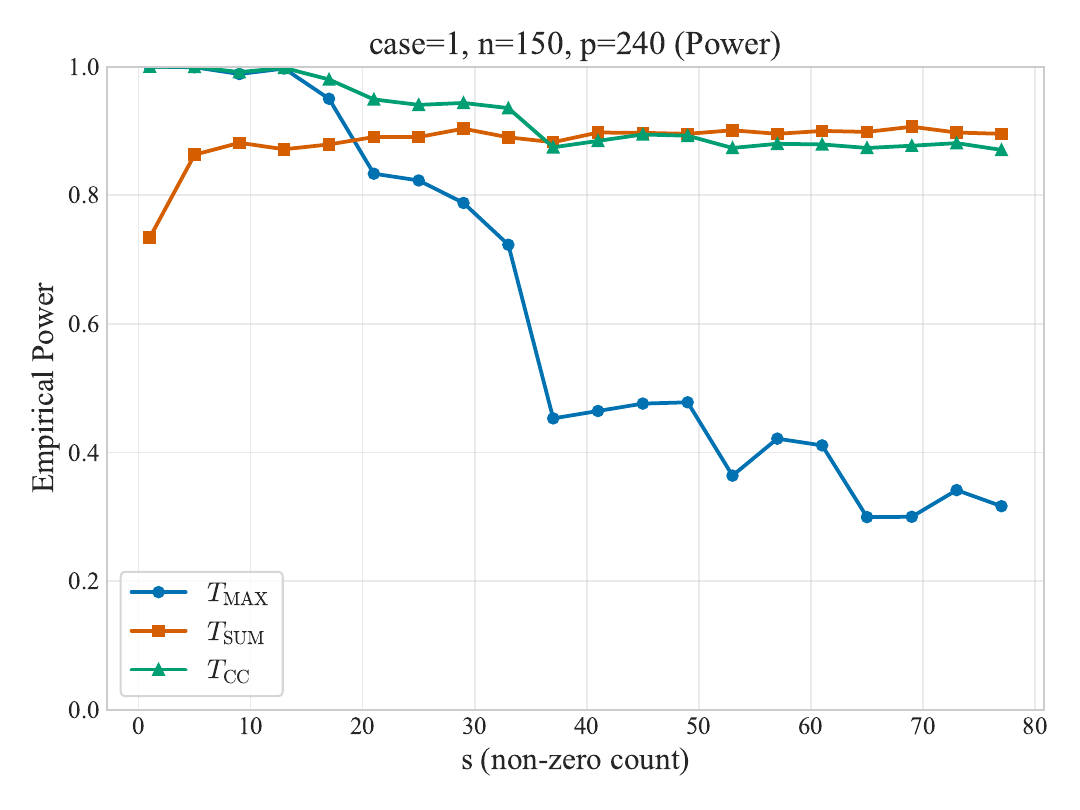}
\end{subfigure}

\vspace{0.3cm}
\begin{subfigure}{0.23\textwidth}
    \centering
    \includegraphics[width=\linewidth]{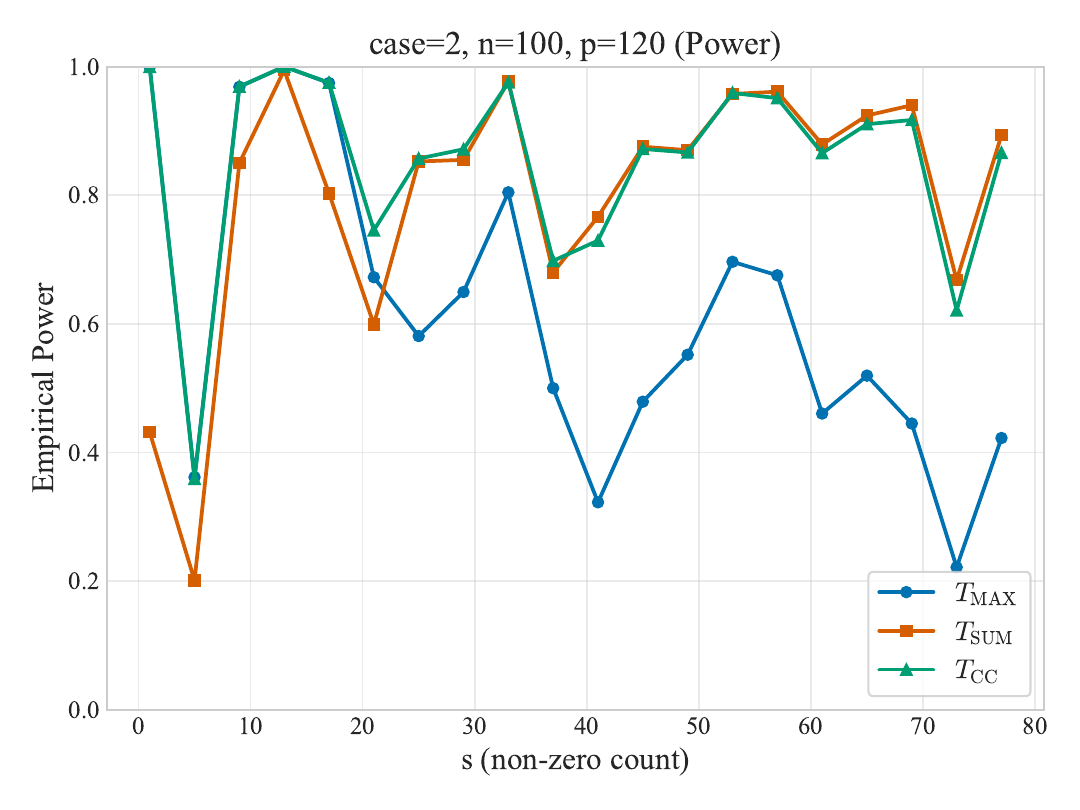}
\end{subfigure}
\hfill
\begin{subfigure}{0.23\textwidth}
    \centering
    \includegraphics[width=\linewidth]{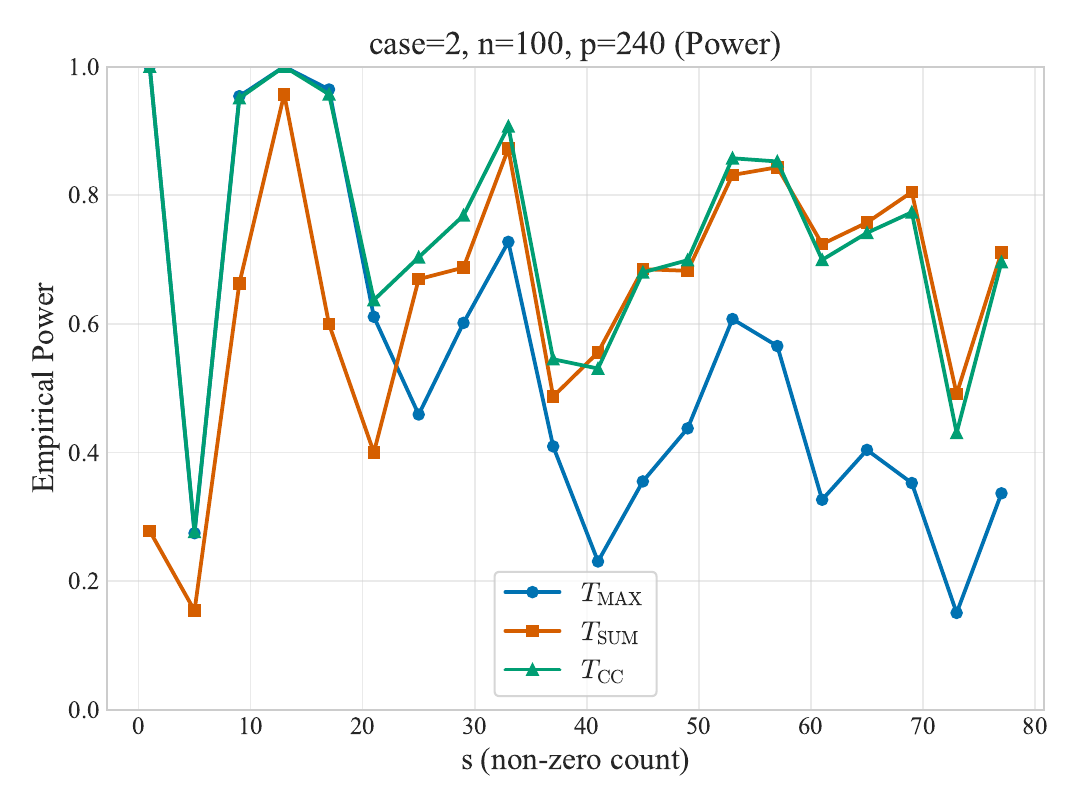}
\end{subfigure}
\hfill
\begin{subfigure}{0.23\textwidth}
    \centering
    \includegraphics[width=\linewidth]{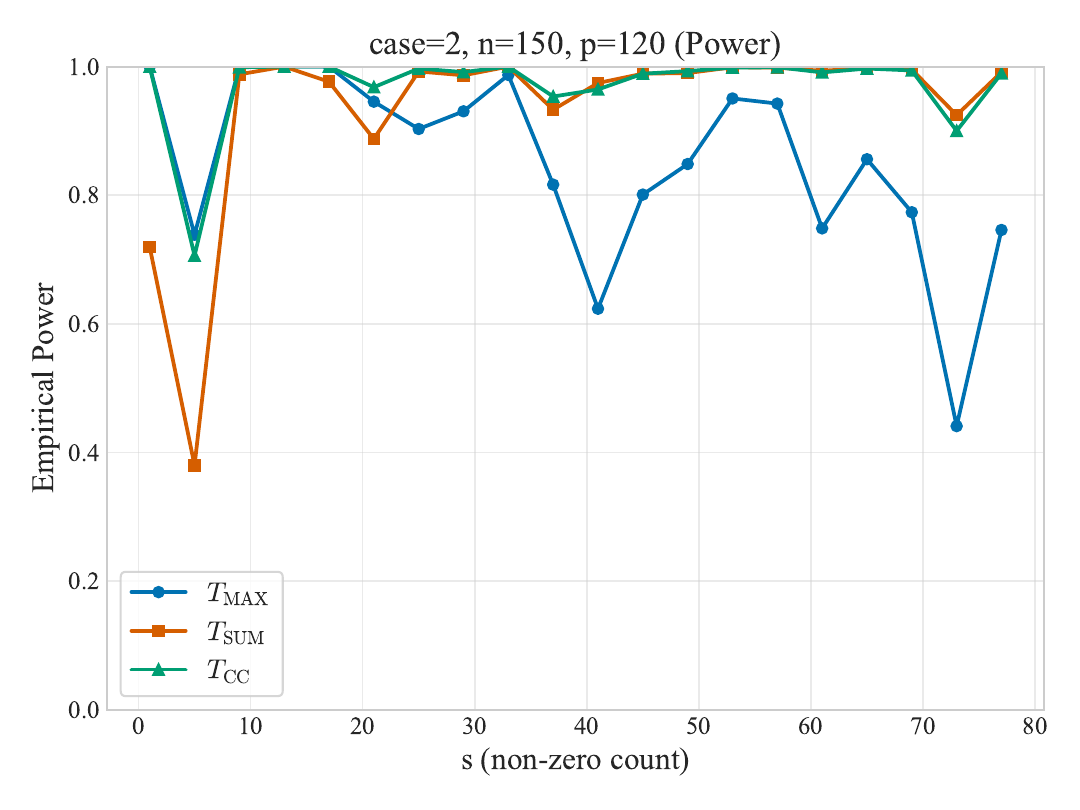}
\end{subfigure}
\hfill
\begin{subfigure}{0.23\textwidth}
    \centering
    \includegraphics[width=\linewidth]{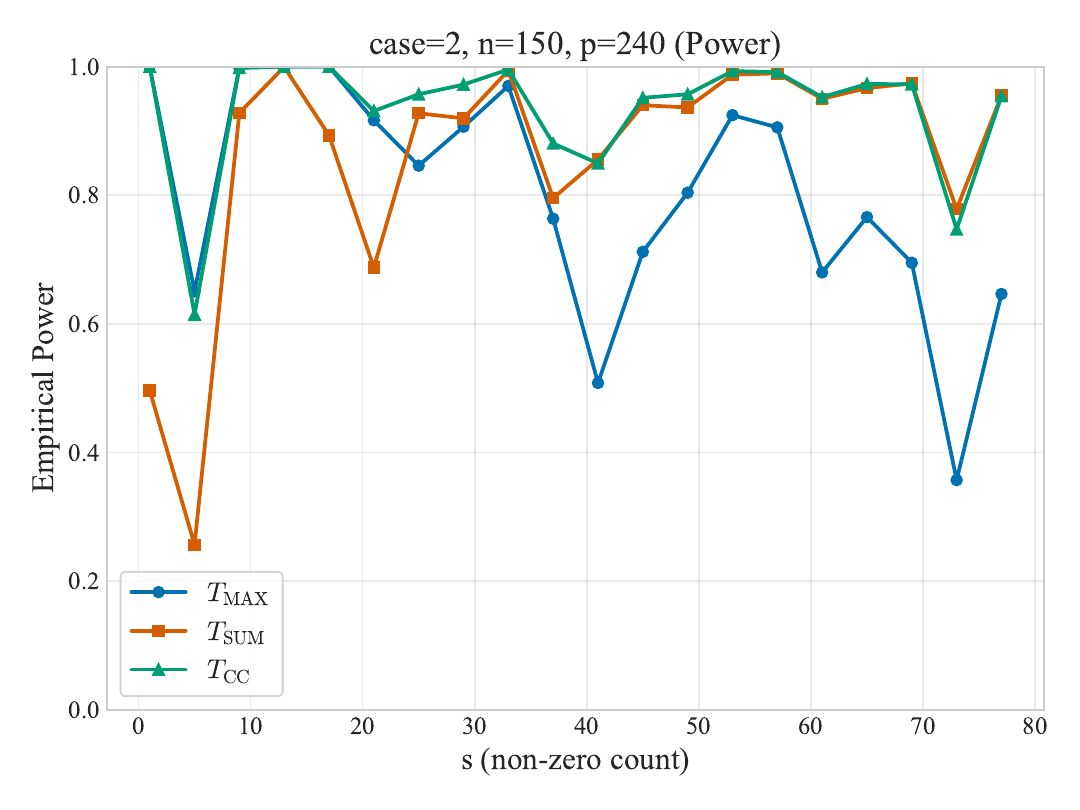}
\end{subfigure}

\vspace{0.3cm}
\begin{subfigure}{0.23\textwidth}
    \centering
    \includegraphics[width=\linewidth]{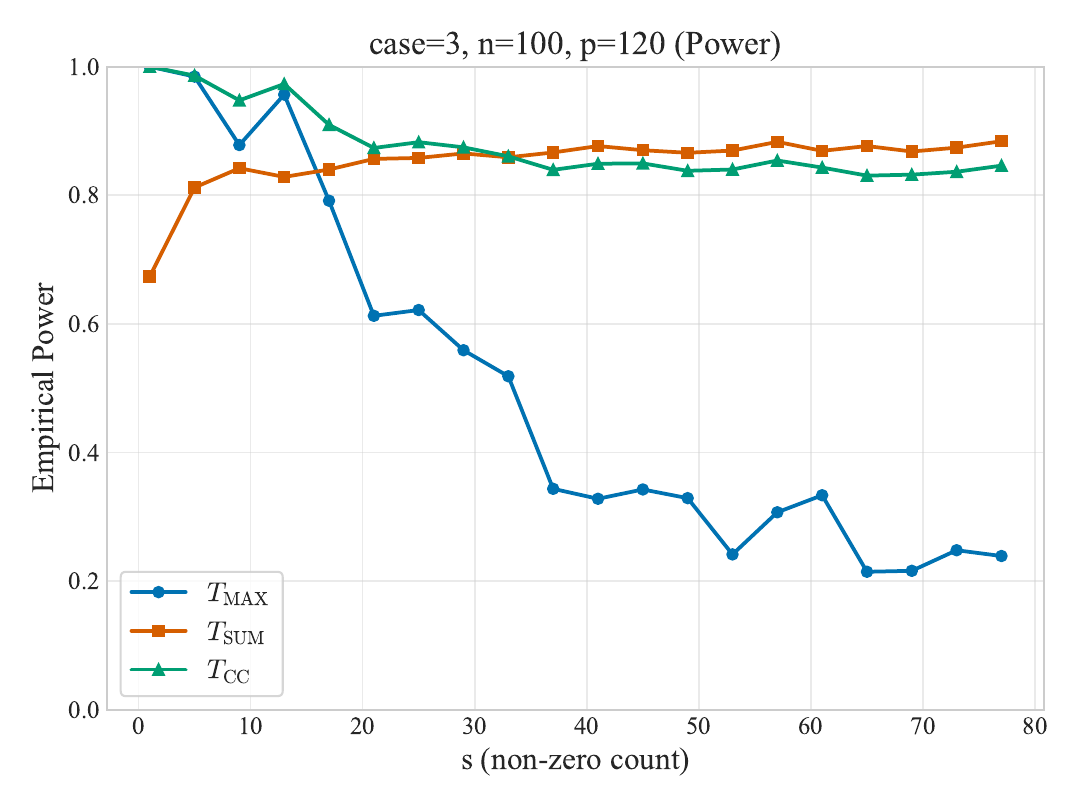}
\end{subfigure}
\hfill
\begin{subfigure}{0.23\textwidth}
    \centering
    \includegraphics[width=\linewidth]{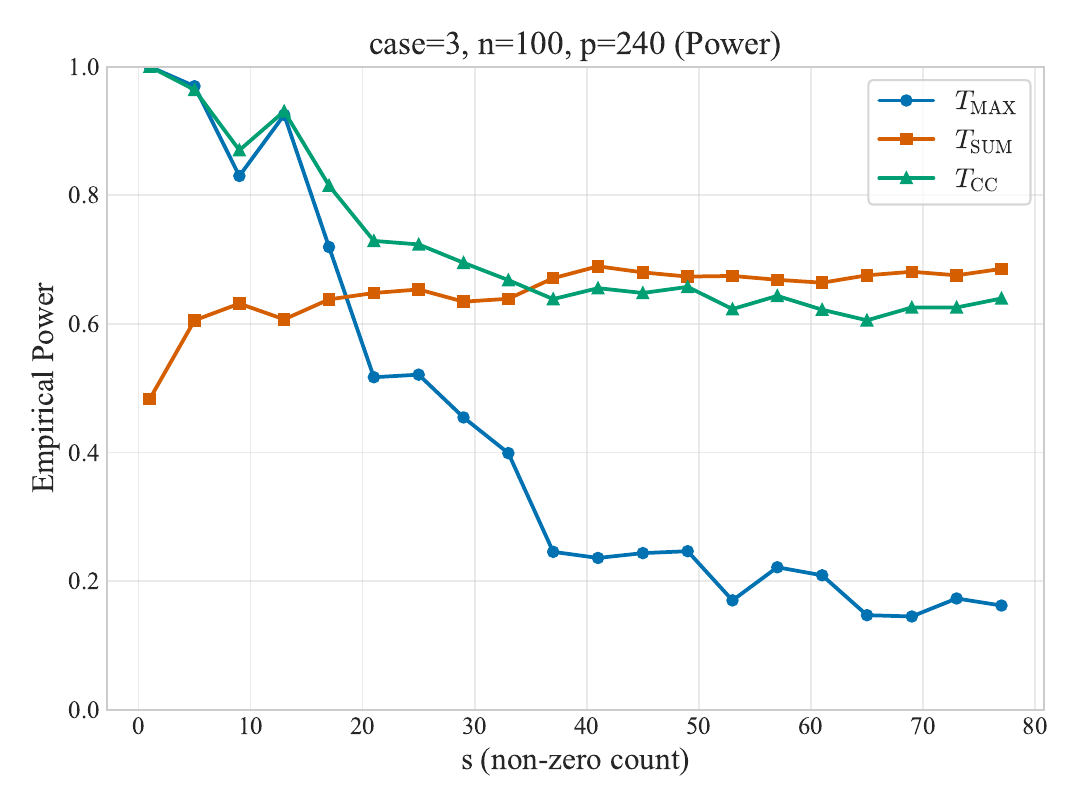}
\end{subfigure}
\hfill
\begin{subfigure}{0.23\textwidth}
    \centering
    \includegraphics[width=\linewidth]{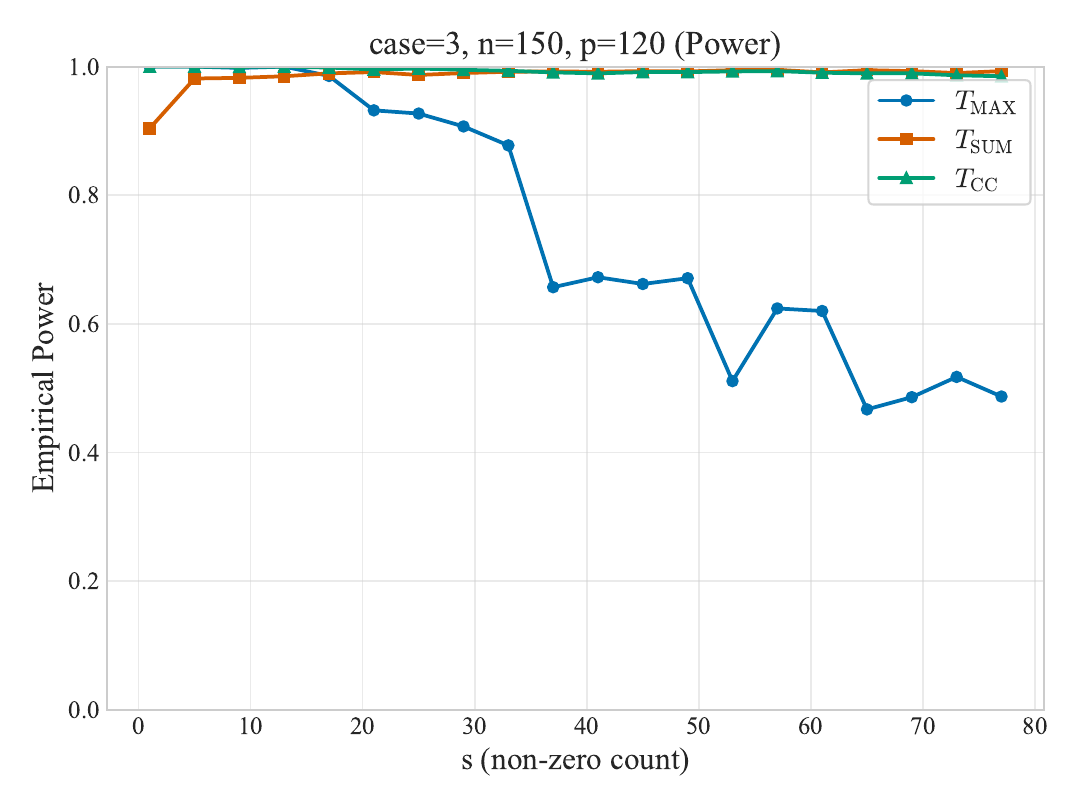}
\end{subfigure}
\hfill
\begin{subfigure}{0.23\textwidth}
    \centering
    \includegraphics[width=\linewidth]{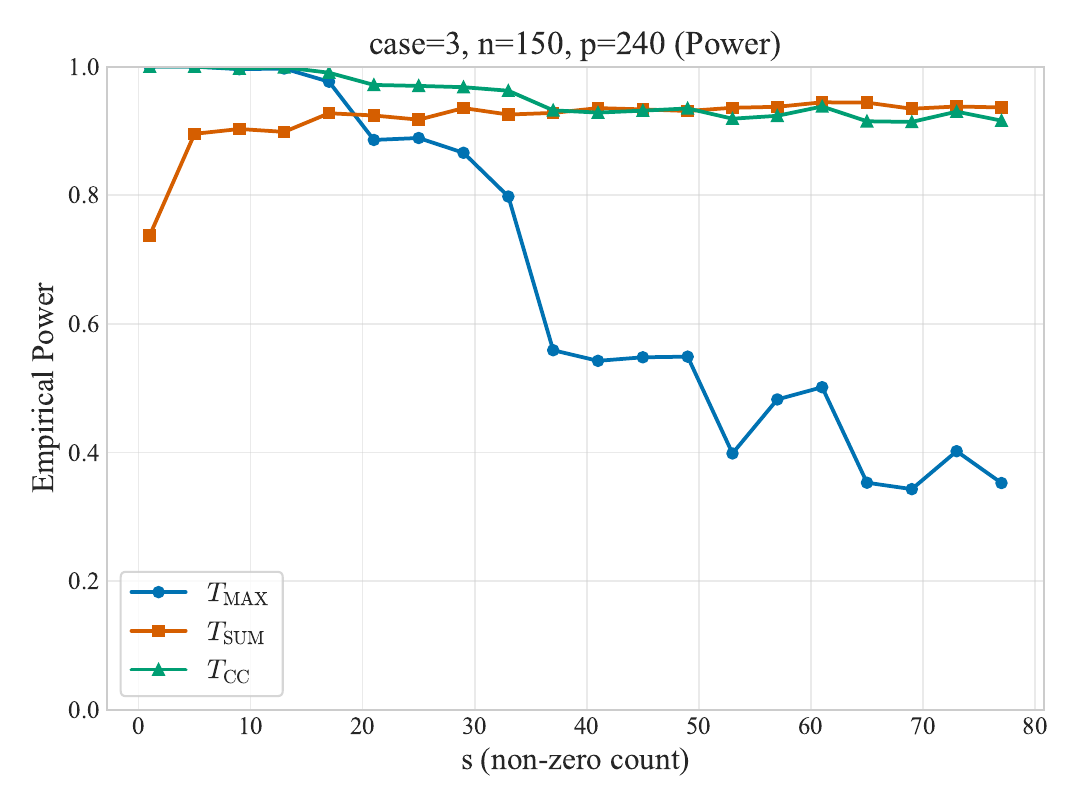}
\end{subfigure}

\caption{Empirical power as a function of $s$ for Cases~1--3 across varying $(n,p)$ 
settings under Laplace distribution ($\tau = 0.5$; 2000 replications).}
\label{fig:power_laplace}
\end{figure}

\begin{figure}[htbp]
\centering
\begin{subfigure}{0.23\textwidth}
    \centering
    \includegraphics[width=\linewidth]{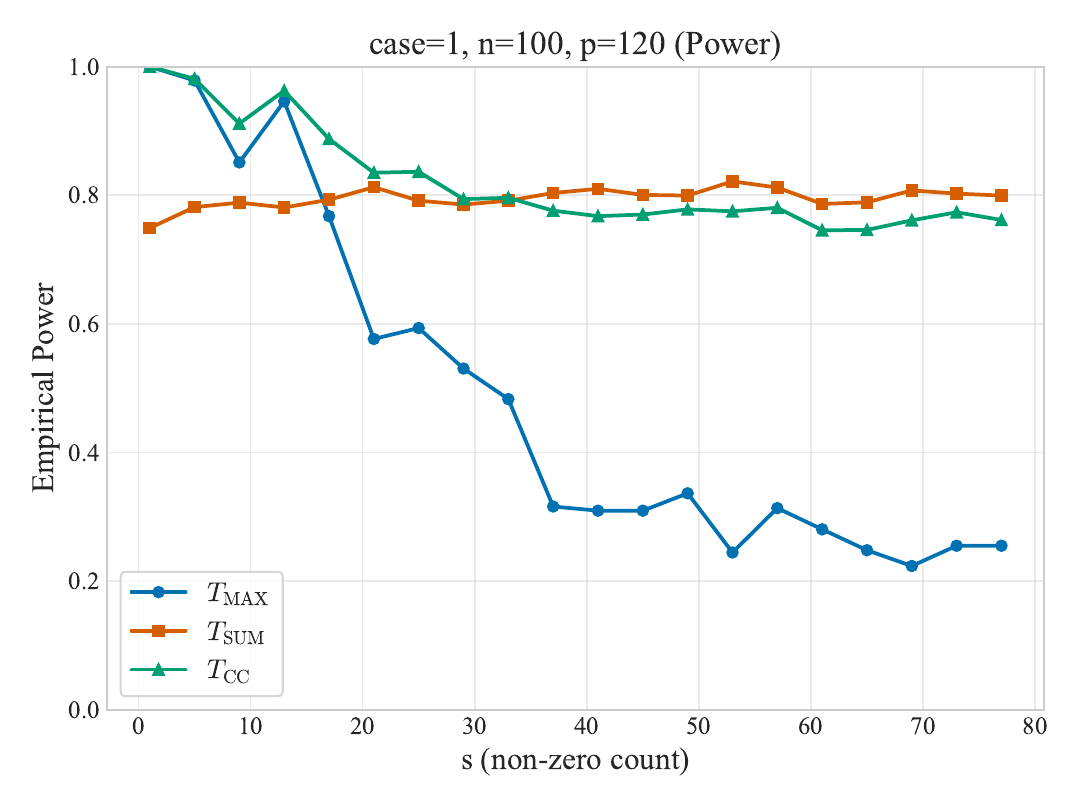}
\end{subfigure}
\hfill
\begin{subfigure}{0.23\textwidth}
    \centering
    \includegraphics[width=\linewidth]{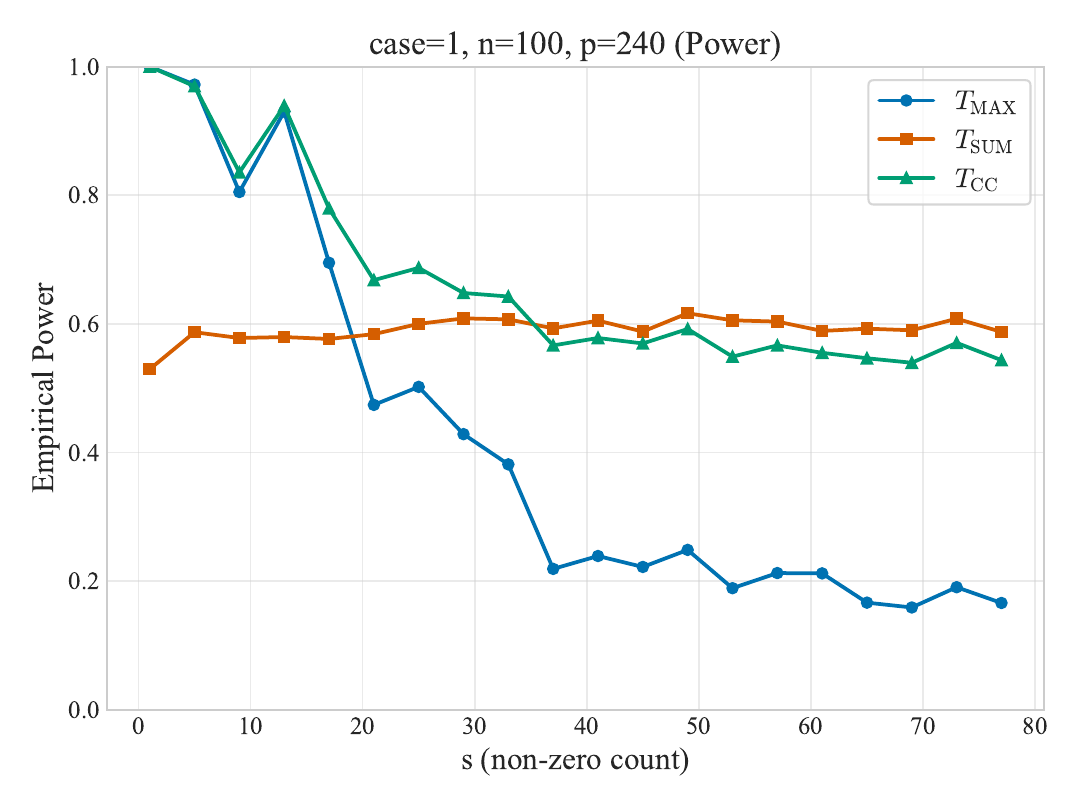}
\end{subfigure}
\hfill
\begin{subfigure}{0.23\textwidth}
    \centering
    \includegraphics[width=\linewidth]{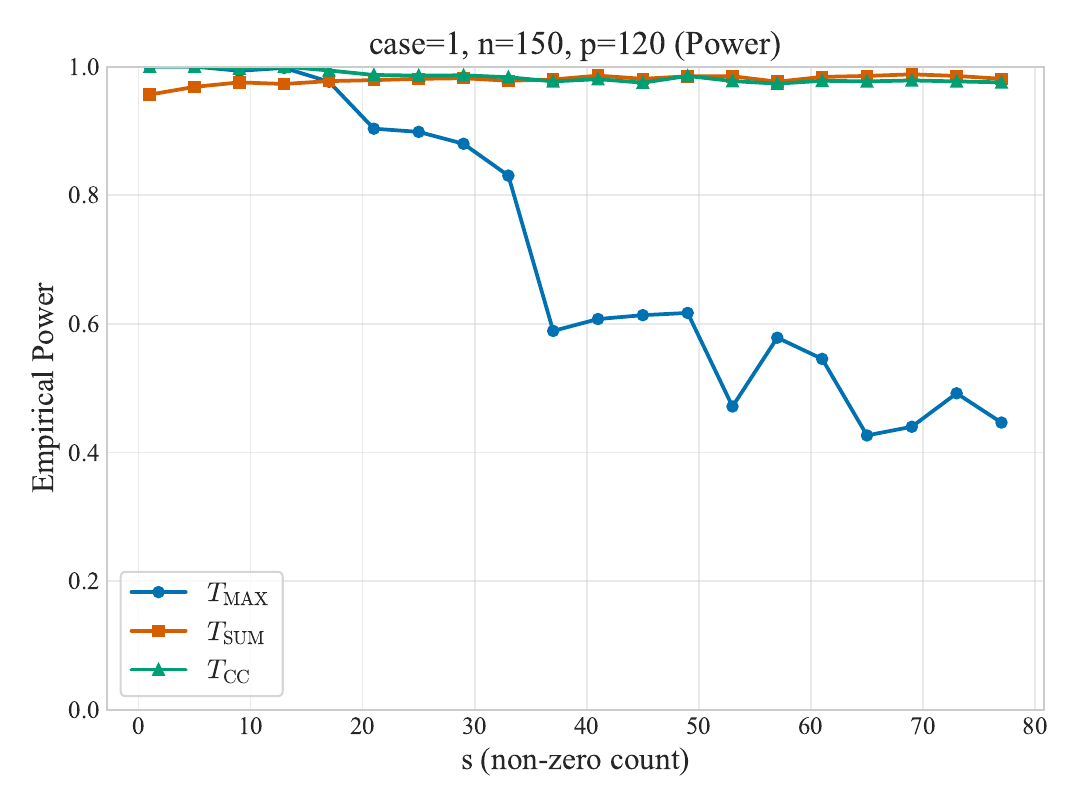}
\end{subfigure}
\hfill
\begin{subfigure}{0.23\textwidth}
    \centering
    \includegraphics[width=\linewidth]{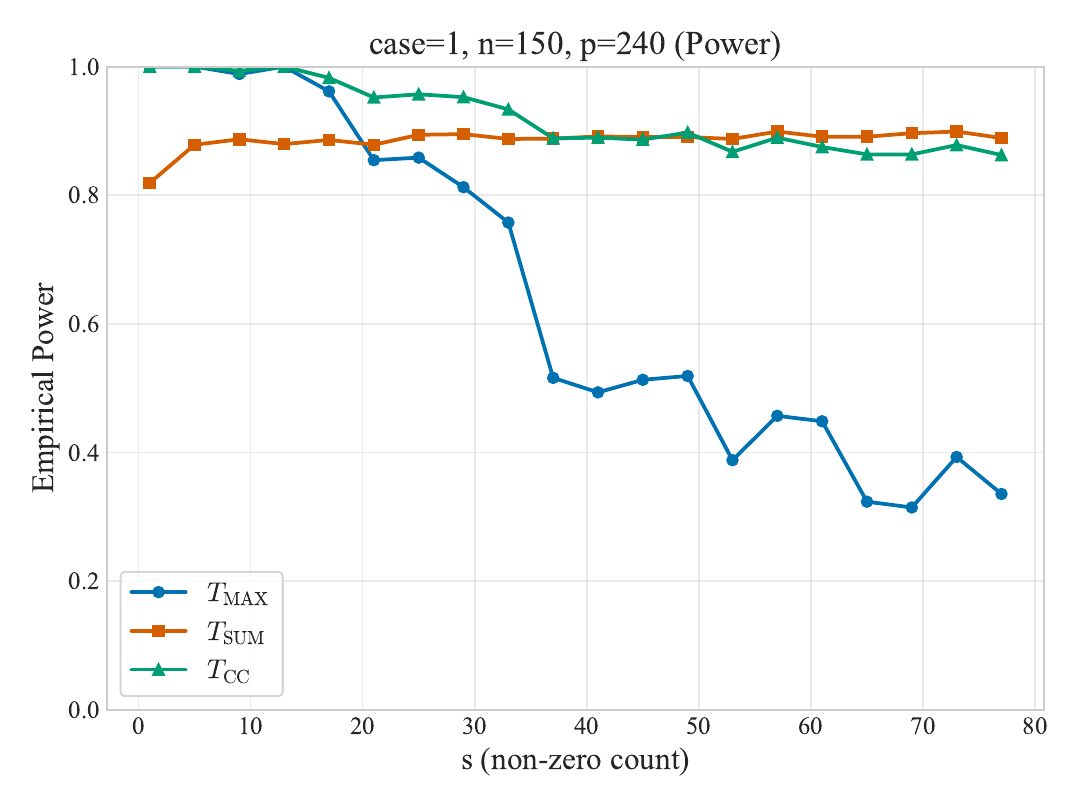}
\end{subfigure}

\vspace{0.3cm}
\begin{subfigure}{0.23\textwidth}
    \centering
    \includegraphics[width=\linewidth]{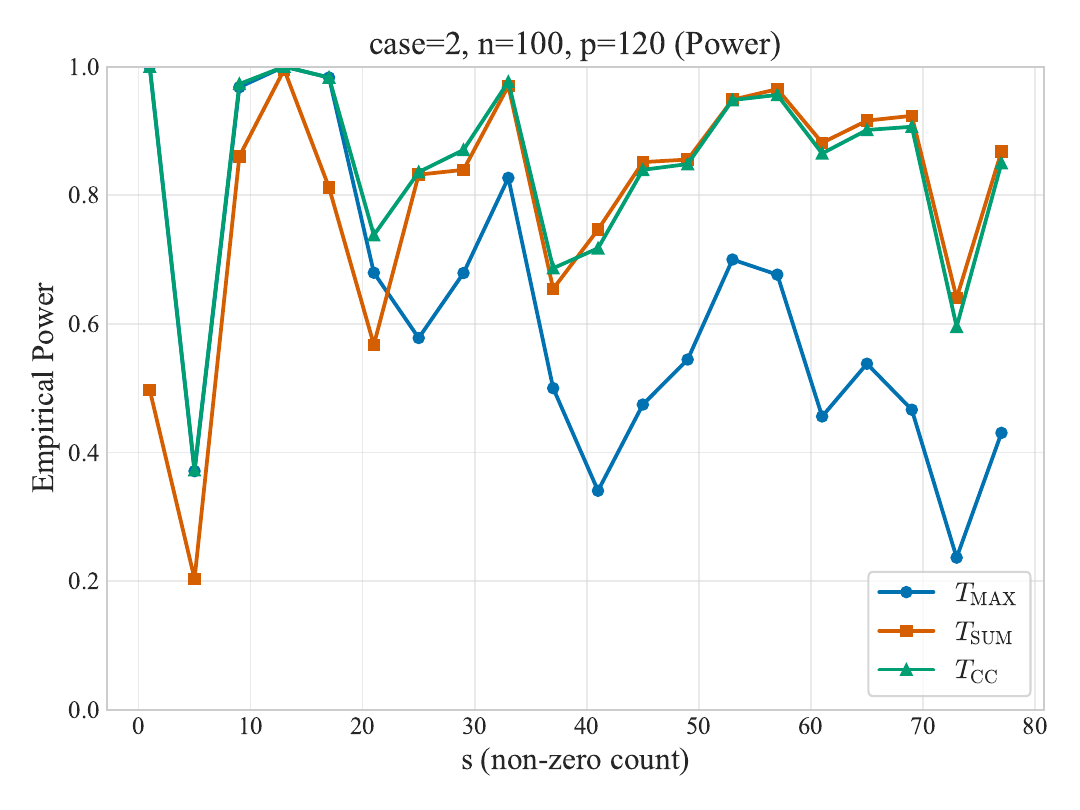}
\end{subfigure}
\hfill
\begin{subfigure}{0.23\textwidth}
    \centering
    \includegraphics[width=\linewidth]{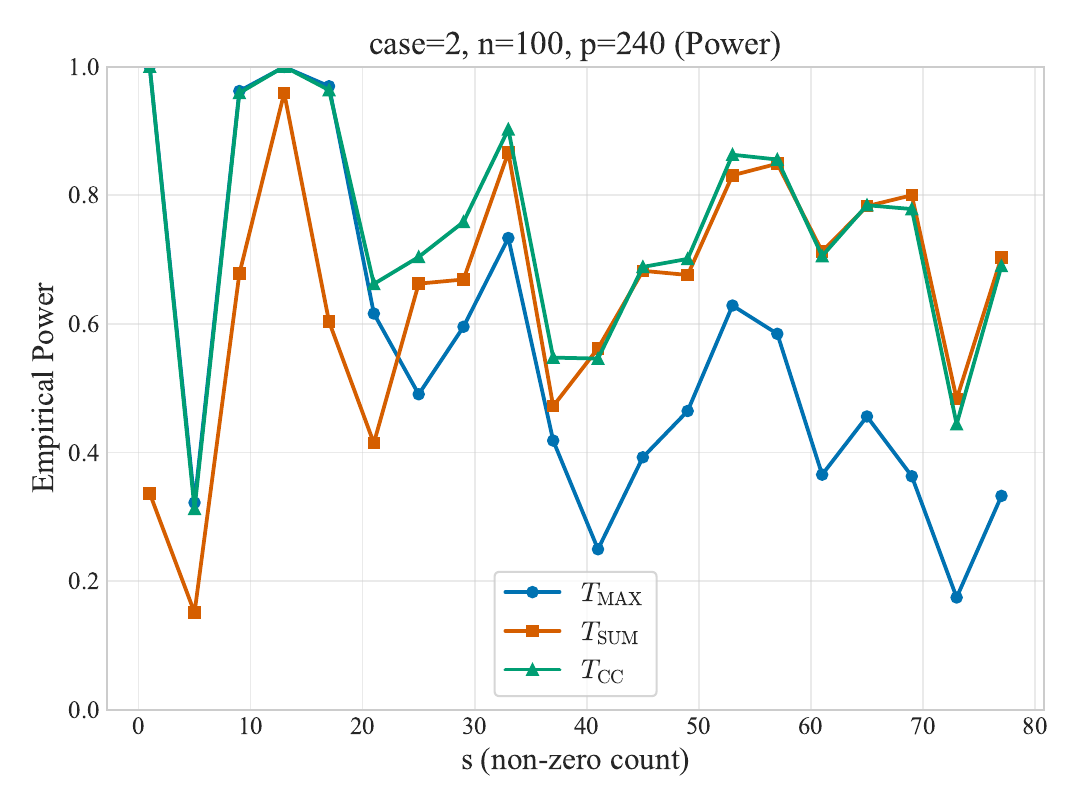}
\end{subfigure}
\hfill
\begin{subfigure}{0.23\textwidth}
    \centering
    \includegraphics[width=\linewidth]{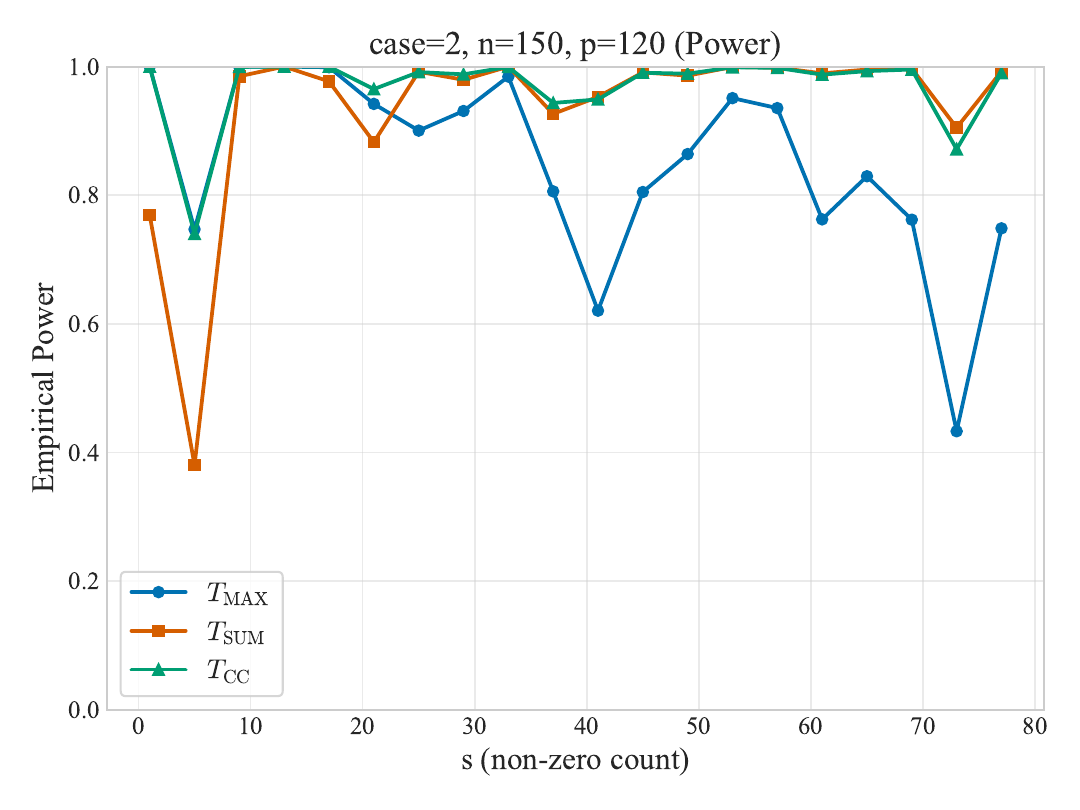}
\end{subfigure}
\hfill
\begin{subfigure}{0.23\textwidth}
    \centering
    \includegraphics[width=\linewidth]{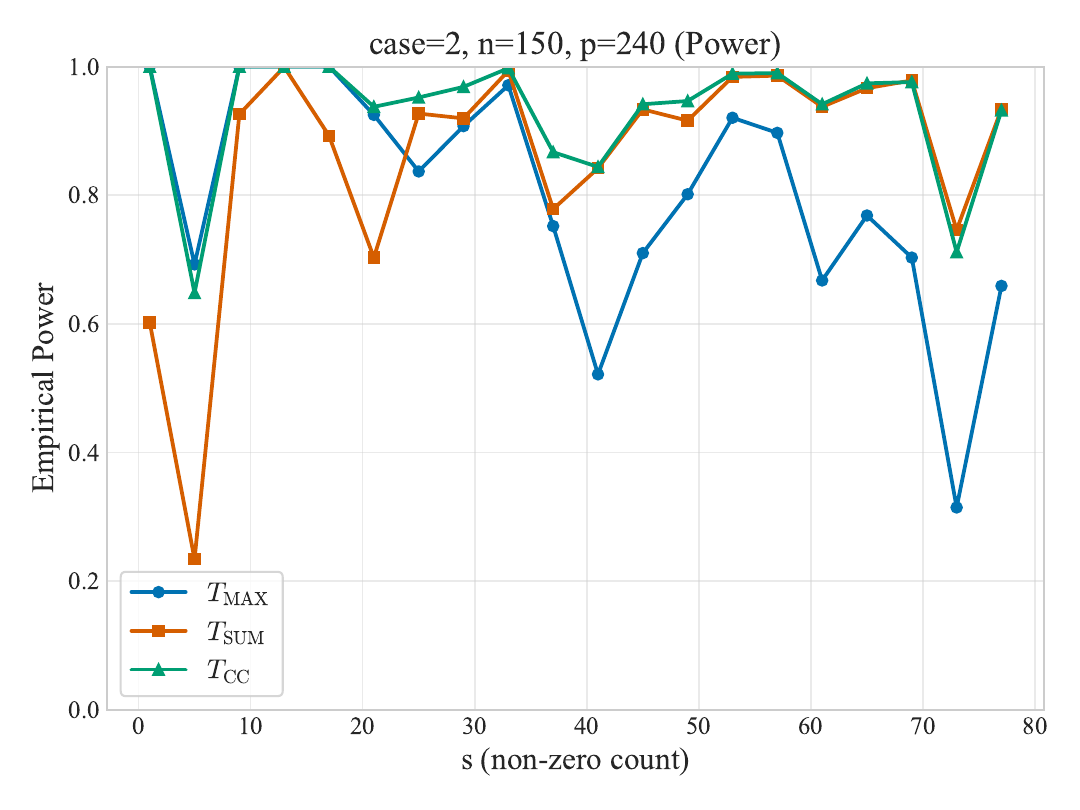}
\end{subfigure}

\vspace{0.3cm}
\begin{subfigure}{0.23\textwidth}
    \centering
    \includegraphics[width=\linewidth]{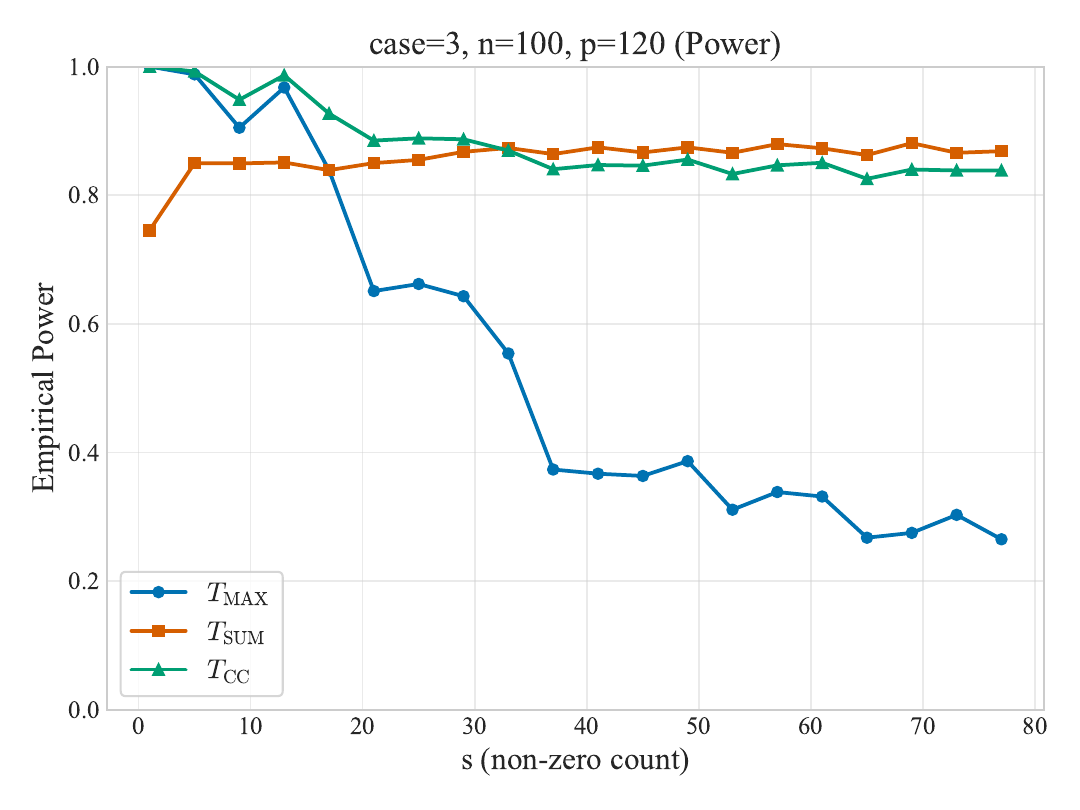}
\end{subfigure}
\hfill
\begin{subfigure}{0.23\textwidth}
    \centering
    \includegraphics[width=\linewidth]{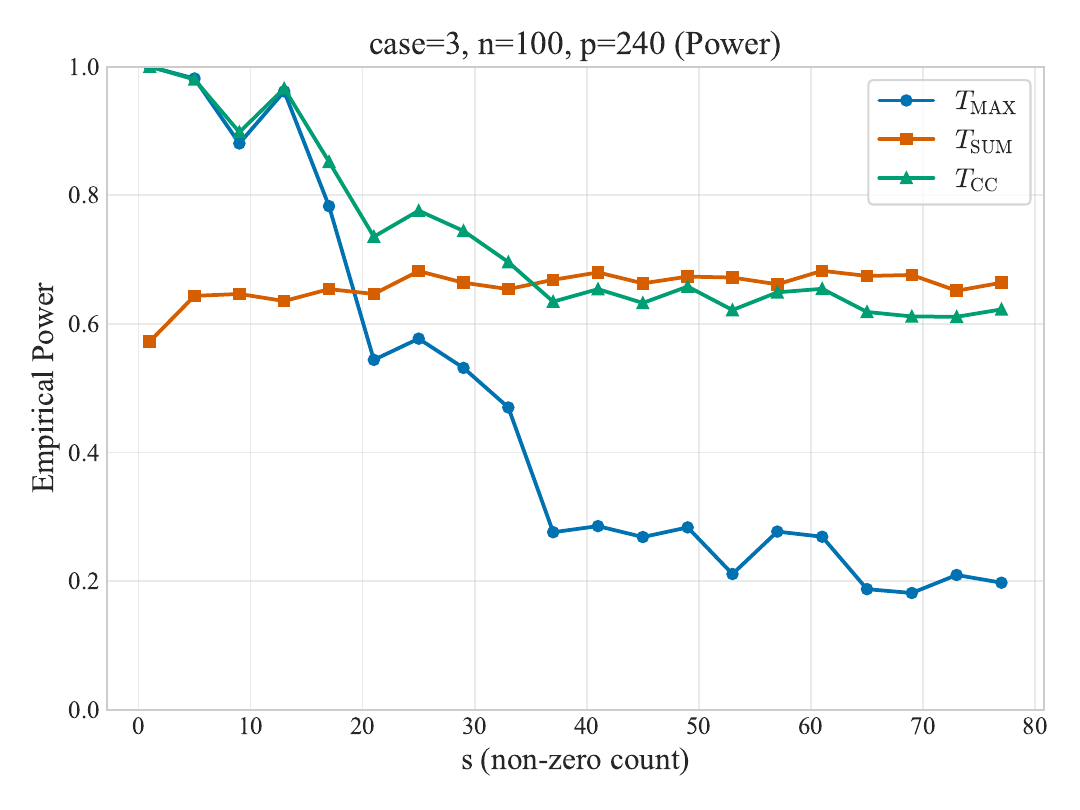}
\end{subfigure}
\hfill
\begin{subfigure}{0.23\textwidth}
    \centering
    \includegraphics[width=\linewidth]{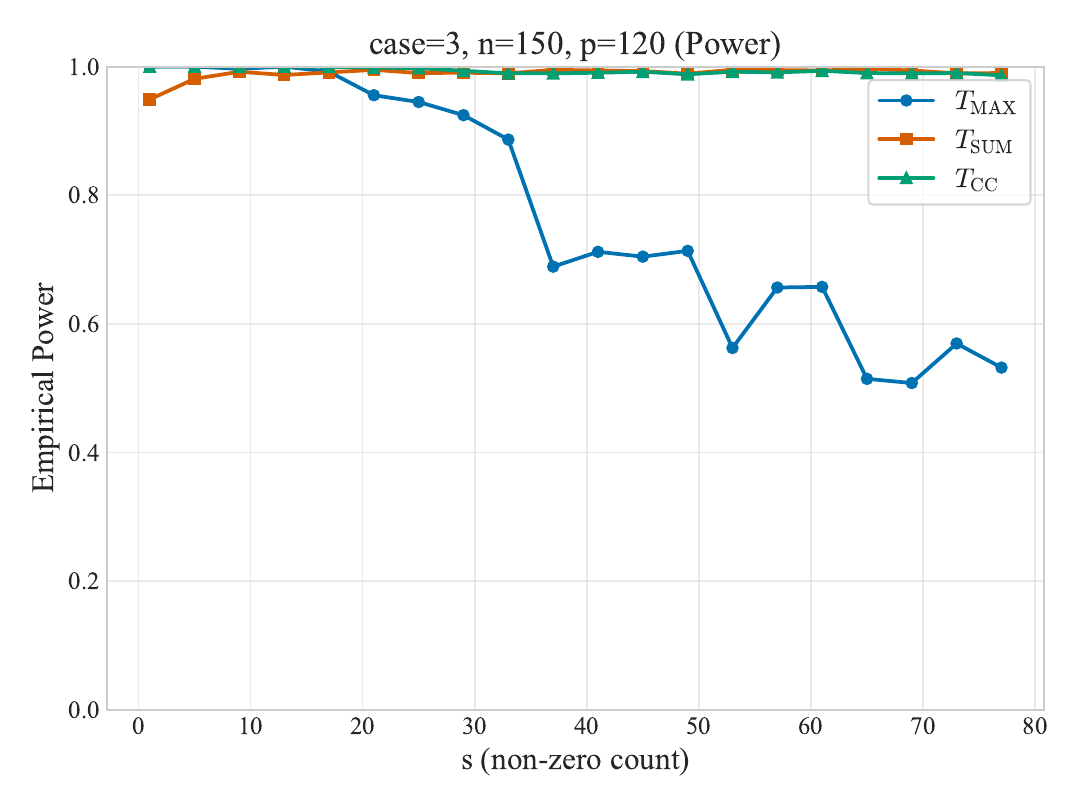}
\end{subfigure}
\hfill
\begin{subfigure}{0.23\textwidth}
    \centering
    \includegraphics[width=\linewidth]{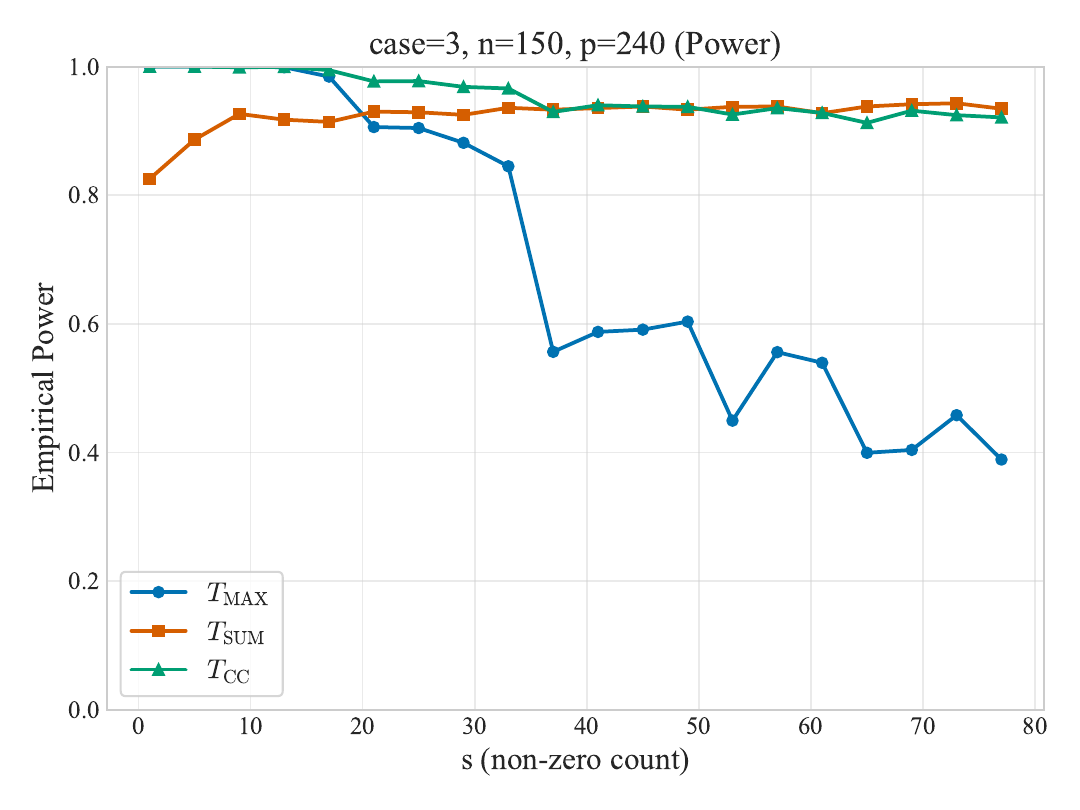}
\end{subfigure}

\caption{Empirical power as a function of $s$ for Cases~1--3 across varying $(n,p)$ 
settings under Logistic distribution ($\tau = 0.5$; 2000 replications).}
\label{fig:power_logistic}
\end{figure}

\begin{figure}[htbp]
\centering
\begin{subfigure}{0.23\textwidth}
    \centering
    \includegraphics[width=\linewidth]{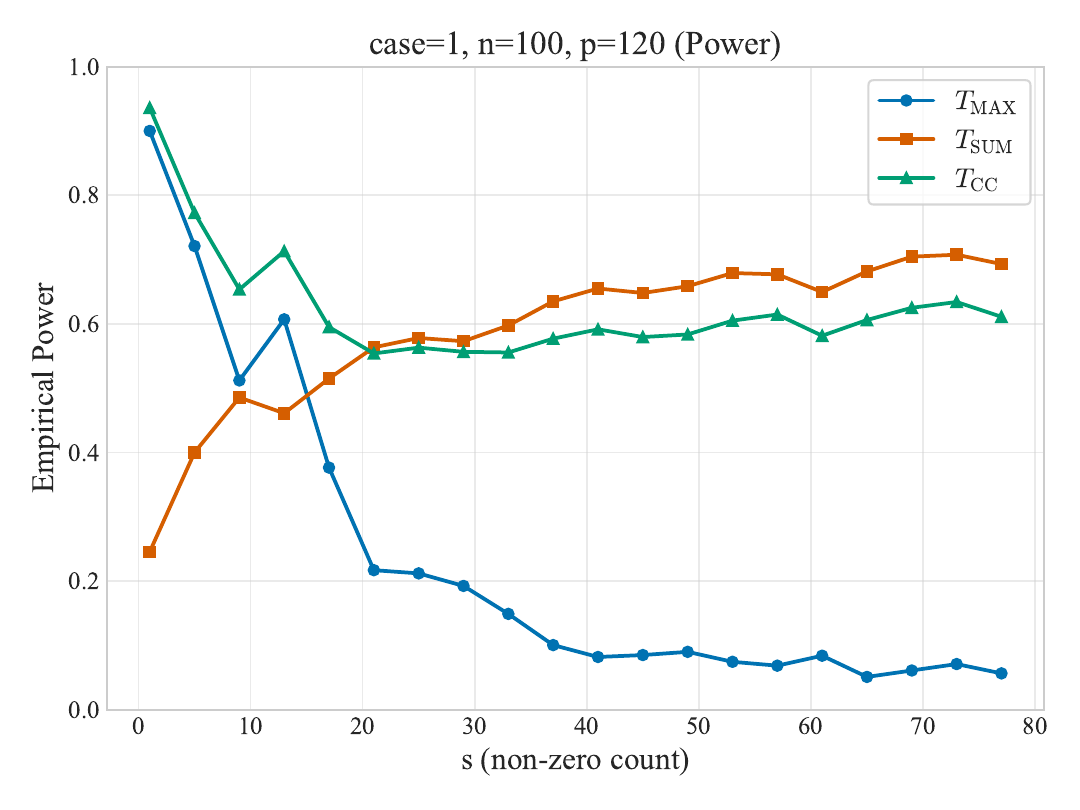}
\end{subfigure}
\hfill
\begin{subfigure}{0.23\textwidth}
    \centering
    \includegraphics[width=\linewidth]{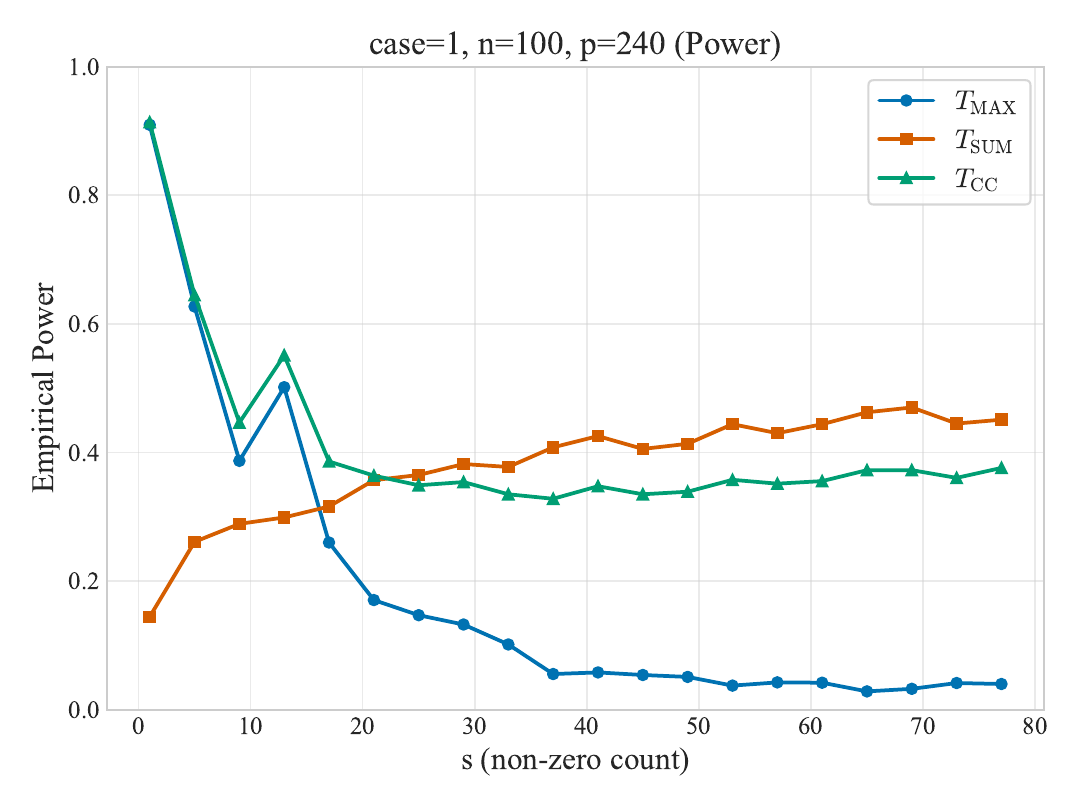}
\end{subfigure}
\hfill
\begin{subfigure}{0.23\textwidth}
    \centering
    \includegraphics[width=\linewidth]{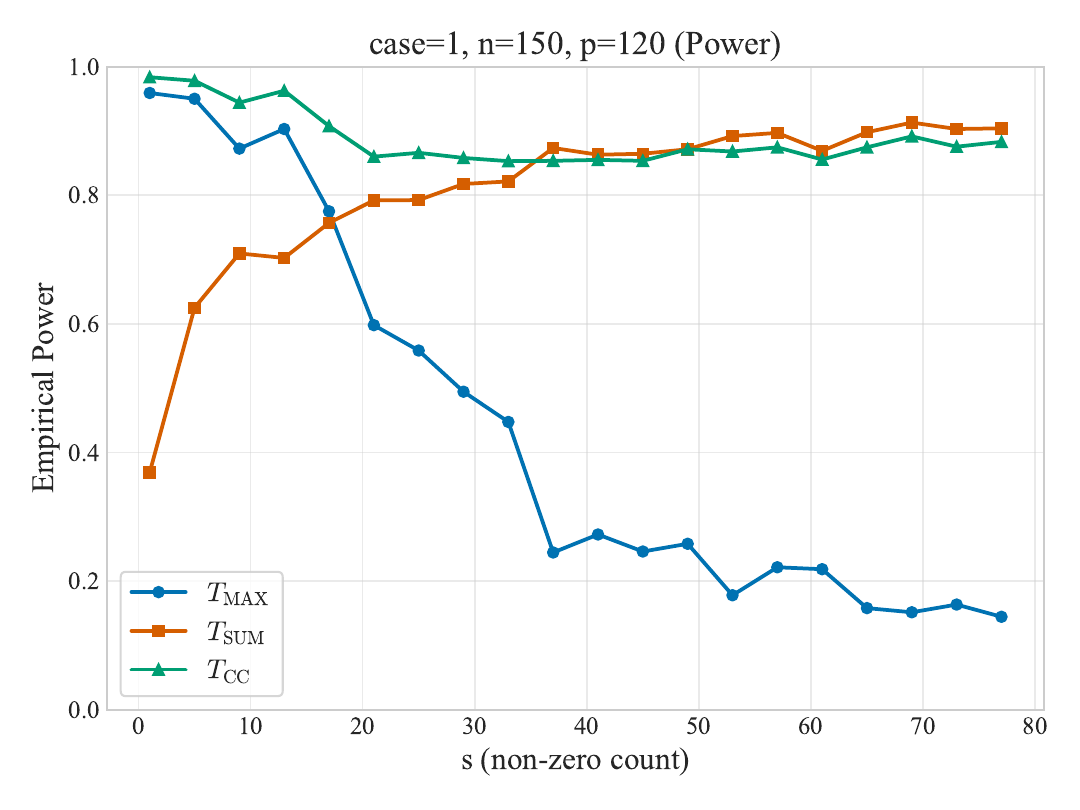}
\end{subfigure}
\hfill
\begin{subfigure}{0.23\textwidth}
    \centering
    \includegraphics[width=\linewidth]{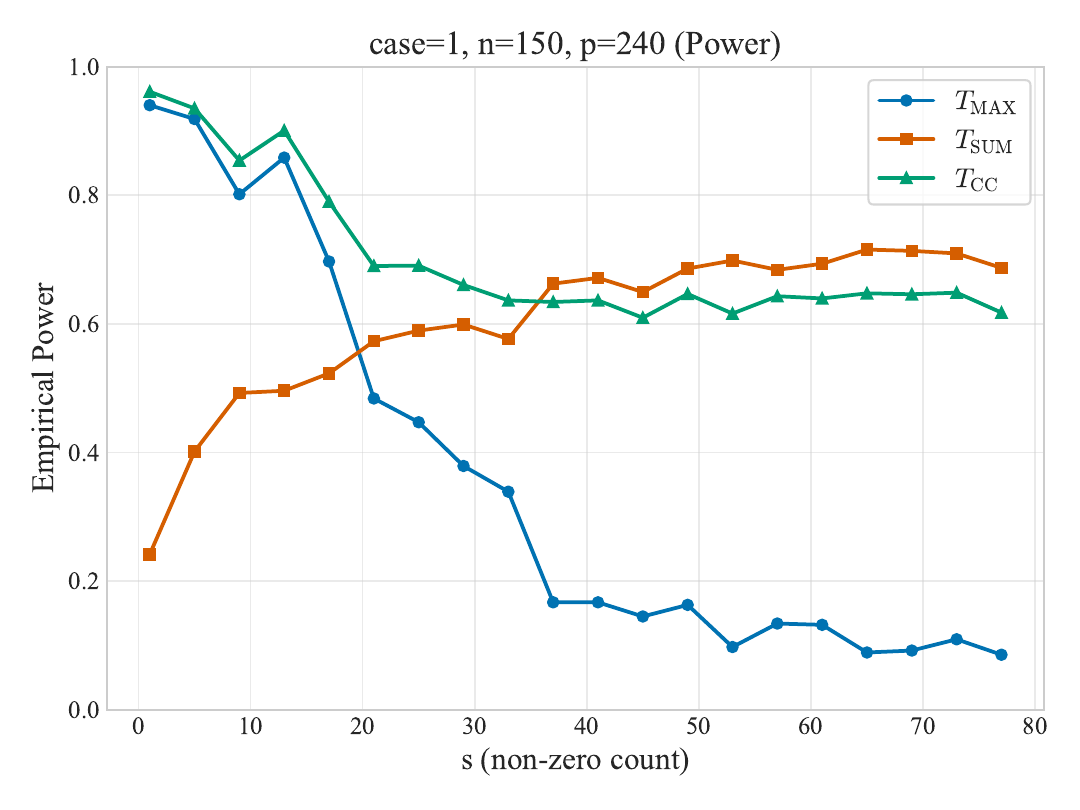}
\end{subfigure}

\vspace{0.3cm}
\begin{subfigure}{0.23\textwidth}
    \centering
    \includegraphics[width=\linewidth]{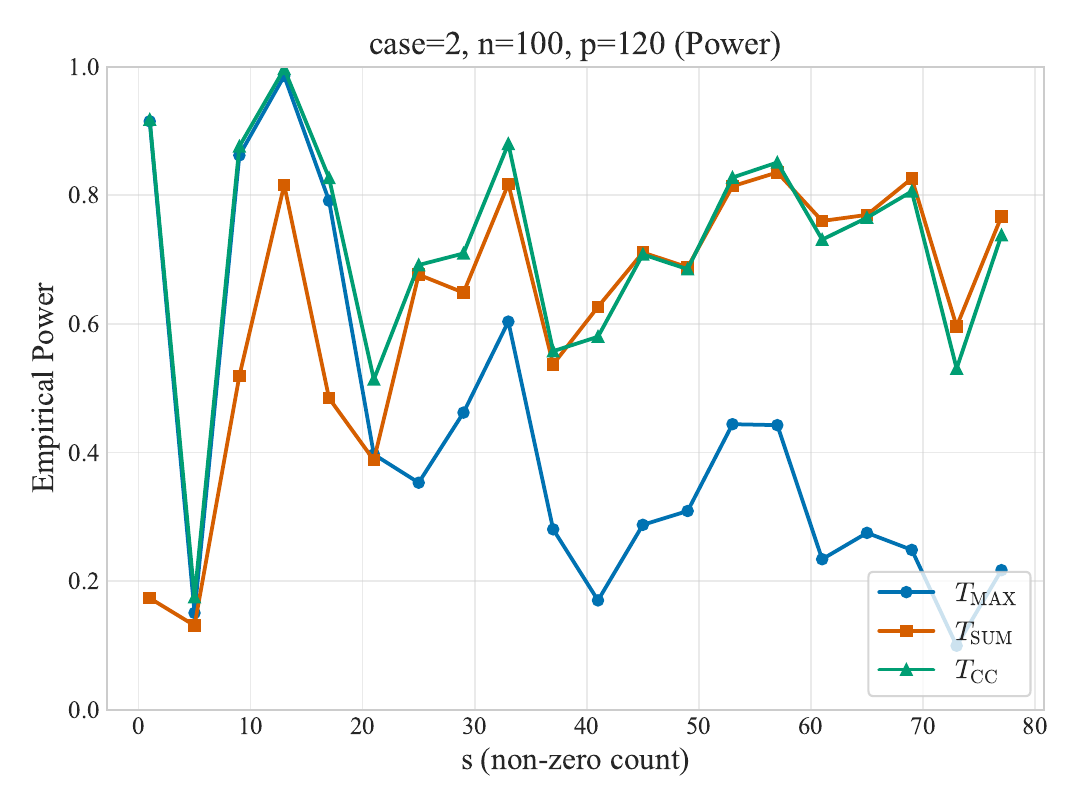}
\end{subfigure}
\hfill
\begin{subfigure}{0.23\textwidth}
    \centering
    \includegraphics[width=\linewidth]{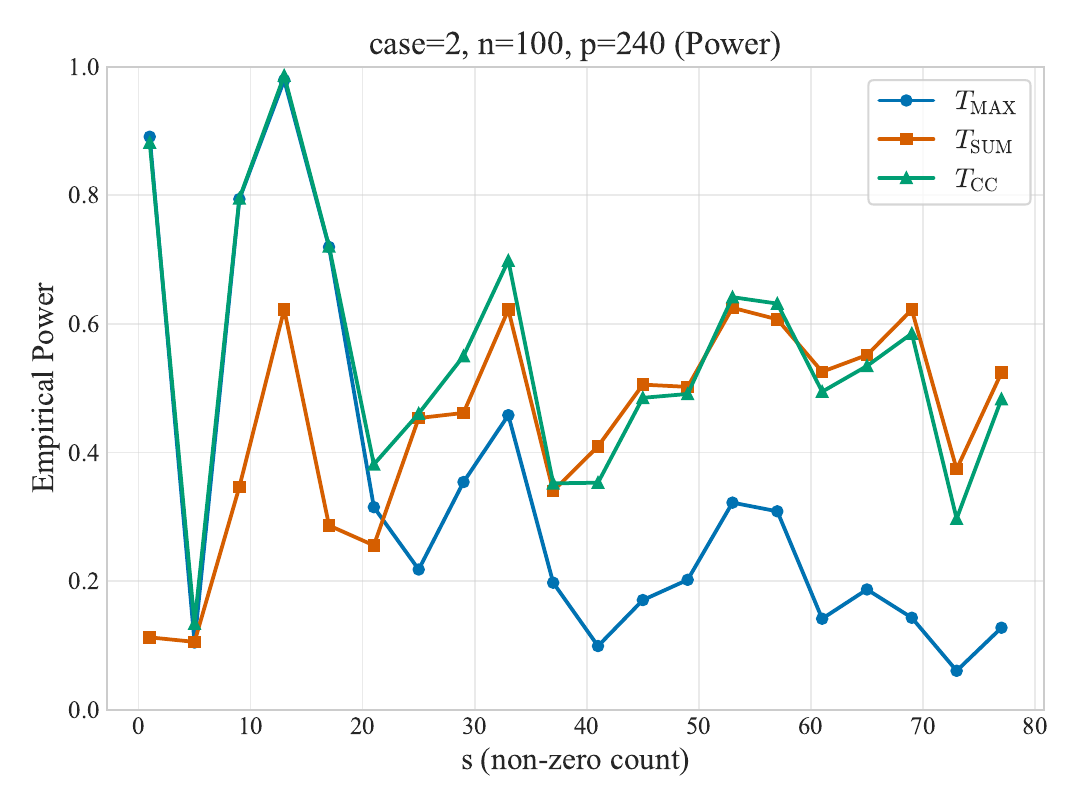}
\end{subfigure}
\hfill
\begin{subfigure}{0.23\textwidth}
    \centering
    \includegraphics[width=\linewidth]{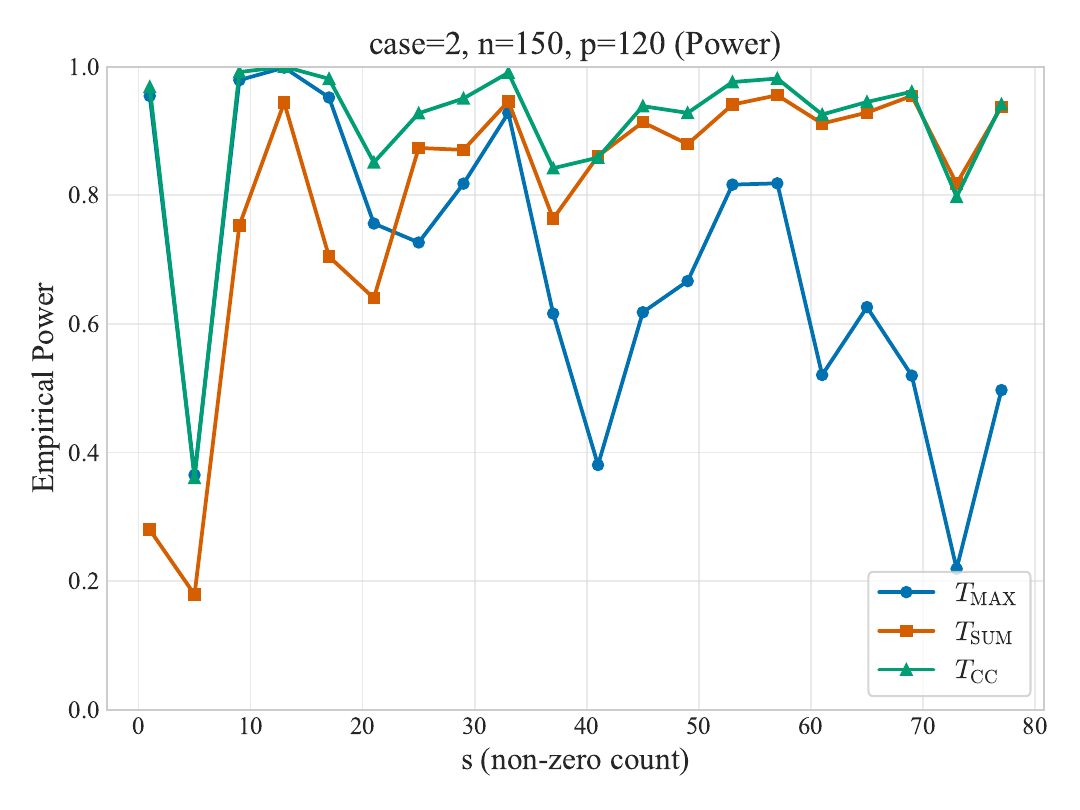}
\end{subfigure}
\hfill
\begin{subfigure}{0.23\textwidth}
    \centering
    \includegraphics[width=\linewidth]{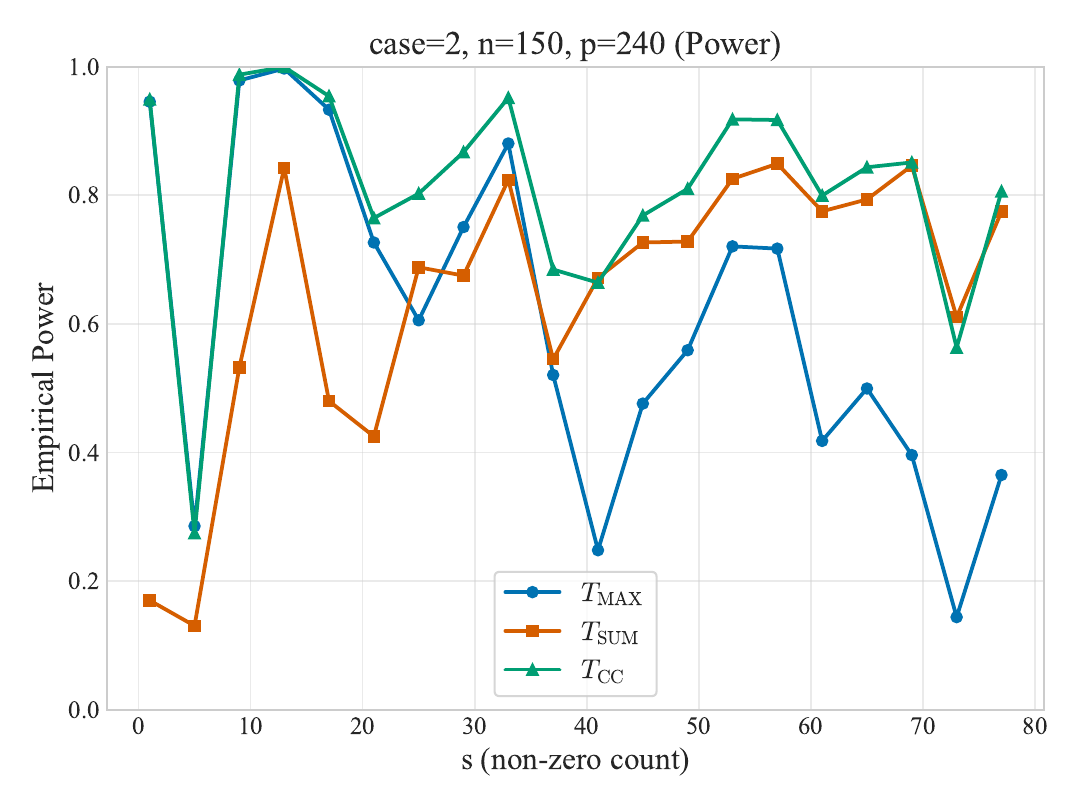}
\end{subfigure}

\vspace{0.3cm}
\begin{subfigure}{0.23\textwidth}
    \centering
    \includegraphics[width=\linewidth]{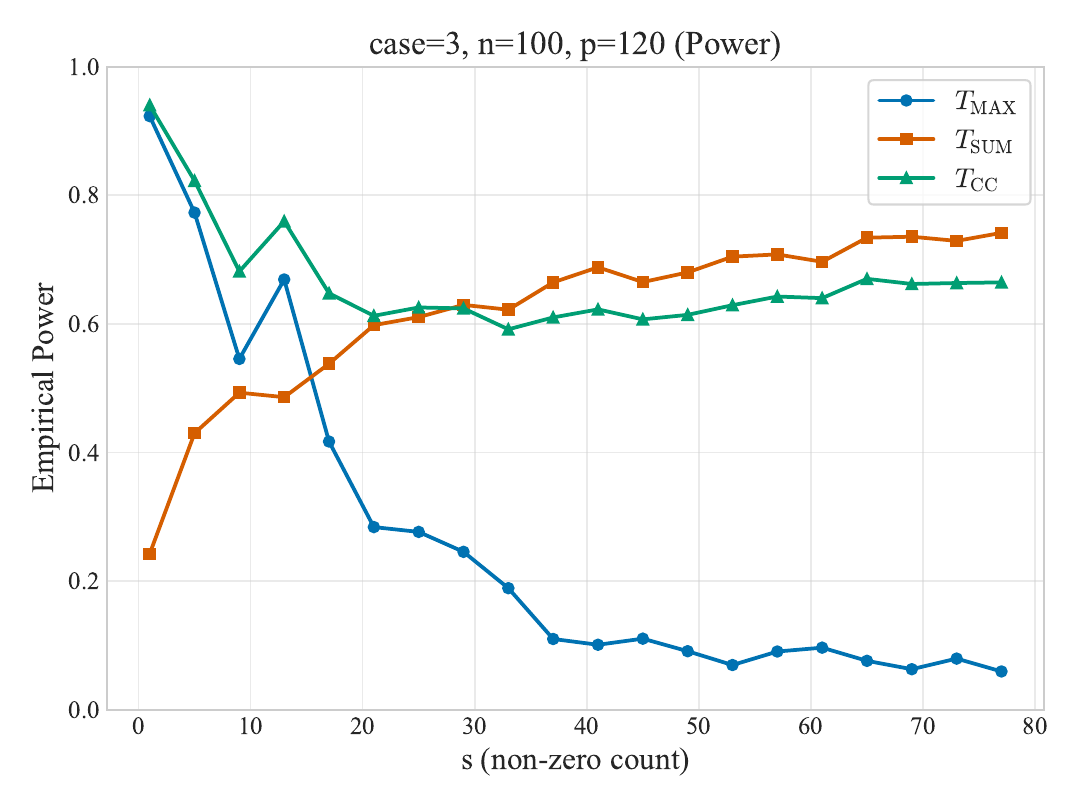}
\end{subfigure}
\hfill
\begin{subfigure}{0.23\textwidth}
    \centering
    \includegraphics[width=\linewidth]{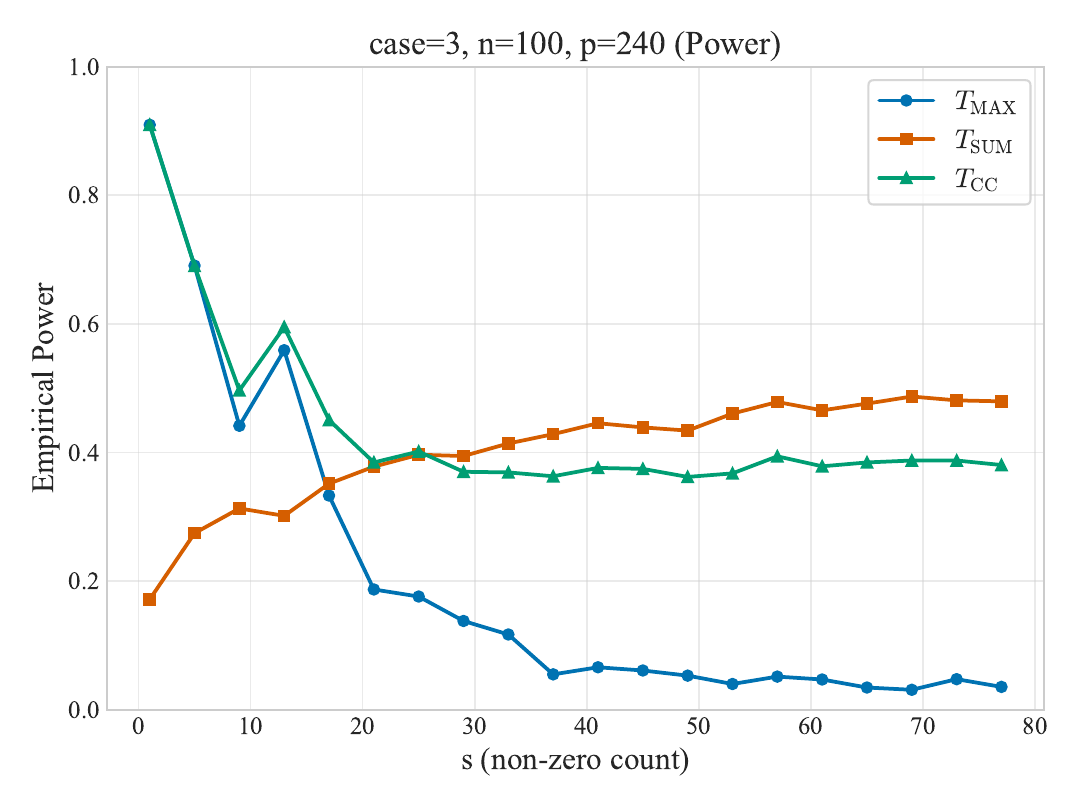}
\end{subfigure}
\hfill
\begin{subfigure}{0.23\textwidth}
    \centering
    \includegraphics[width=\linewidth]{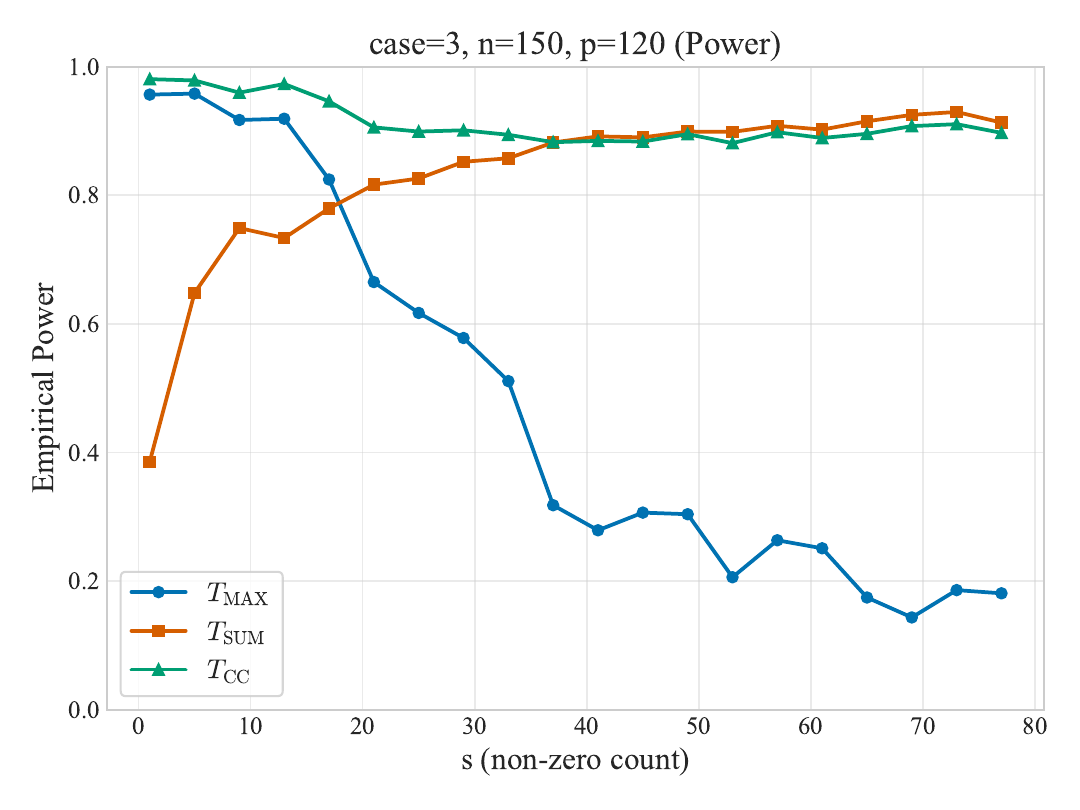}
\end{subfigure}
\hfill
\begin{subfigure}{0.23\textwidth}
    \centering
    \includegraphics[width=\linewidth]{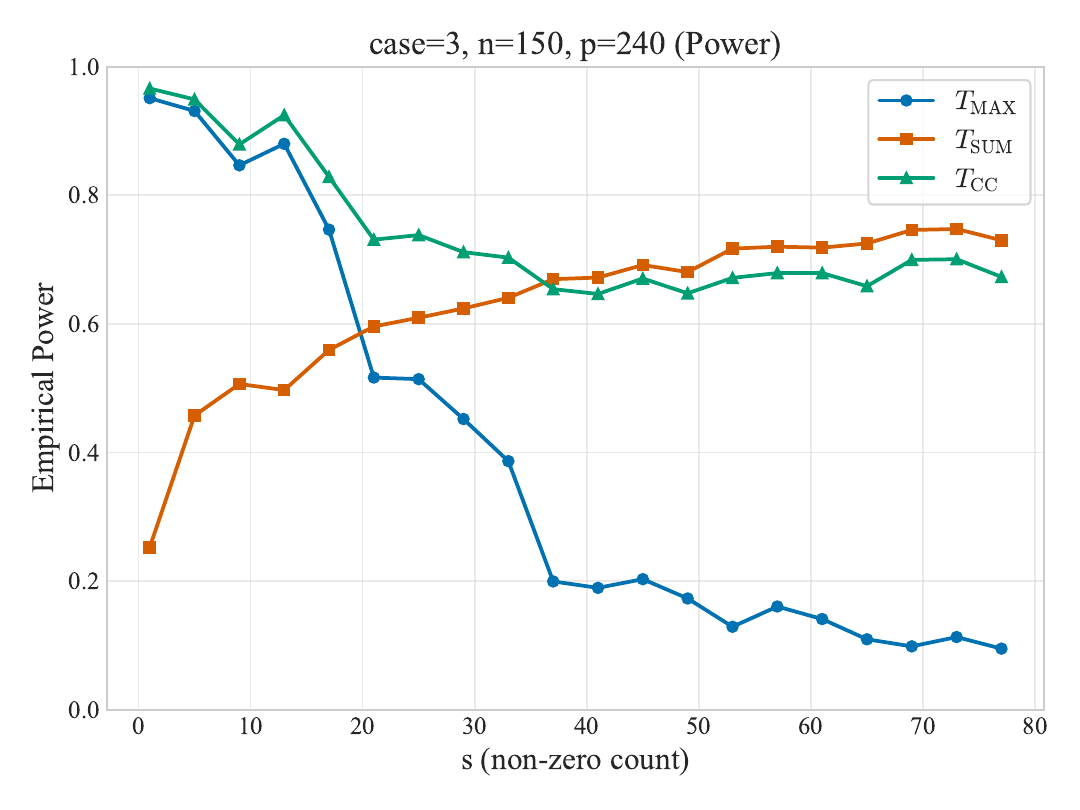}
\end{subfigure}

\caption{Empirical power as a function of $s$ for Cases~1--3 across varying $(n,p)$ 
settings under $t_2$ distribution ($\tau = 0.5$; 2000 replications).}
\label{fig:power_t2}
\end{figure}


The results reveal distinct regimes of optimality: the sum-type test $T_{\mathrm{SUM}}$ performs best when the signals are dense and spread across many coordinates, whereas the max-type test $T_{\mathrm{MAX}}$ is most powerful for extremely sparse alternatives where only a few components carry meaningful deviations. In contrast, the proposed combined test $T_{\mathrm{CC}}$ exhibits uniformly strong performance across the entire sparsity spectrum. It automatically adapts to both dense and sparse regimes, striking a desirable balance between sensitivity to global shifts and the ability to detect isolated but strong signals. Moreover, $T_{\mathrm{CC}}$ demonstrates substantial robustness under non-Gaussian and heavy-tailed designs, maintaining high power even when classical moment-based methods deteriorate. These findings underscore the practical advantage of $T_{\mathrm{CC}}$ as a versatile testing procedure suitable for a broad class of high-dimensional inference problems.

\section{Real data application}\label{sec:realdata}
Wave energy has emerged as a fast-developing and promising renewable resource, and the design of large-scale wave-energy farms is an increasingly important engineering and statistical problem \citep{neshat2020optimisation}. To conduct our empirical analysis, we use the publicly accessible \emph{Large-scale Wave Energy Farm} dataset, available at \href{https://archive.ics.uci.edu/dataset/882/large-scale+wave+energy+farm}{the UCI Machine Learning Repository}.
The dataset contains approximately 54{,}000 configurations of farms consisting of 49 wave-energy converters (WECs), along with an additional 9{,}600 configurations of farms with 100 WECs. 

In this study, we focus on the \texttt{WEC\_Sydney\_49} dataset, which contains 149 covariates and 17{,}964 instances. For each configuration, the dataset provides the Cartesian coordinates $(X_i, Y_i)$ of all 49 WEC units, individual power outputs $\text{Power}_i$, $q$-factors, and a range of geometric and spatial descriptors, including all pairwise inter-device distances and total farm power.
The resulting dataset exhibits high dimensionality, complex nonlinear interactions, and heavy-tailed response behavior, making it an ideal testbed for high-dimensional quantile regression. As illustrated in the right panel of Figure~\ref{fig:combined}, the total power output displays a heavily right-skewed and multimodal distribution, driven by heterogeneous hydrodynamic interactions across WEC configurations. This pronounced non-Gaussian structure highlights the limitations of mean regression for characterizing output variability. In contrast, quantile regression provides a more complete description of the conditional power distribution—particularly in the tails, which correspond to extreme high- and low-output scenarios—while allowing analysis directly on the original power scale. Such modeling is crucial for assessing the reliability, resilience, and extreme-event behavior of large-scale wave-energy farms.

In our analysis of the 49-WEC configuration, we designate the power output from the 13th WEC as the response variable,
\begin{align*}
Y = \text{Power}_{13}.
\end{align*}
This WEC occupies a hydrodynamically strategic interior position (as shown in the left panel of Figure \ref{fig:combined}), and its output reflects aggregated wake interactions, shadowing, and energy propagation patterns generated by surrounding devices. As such, $\text{Power}_{13}$ serves as a representative and scientifically meaningful quantity for modeling WEC-level performance.

To align the empirical setting with the theoretical structure of high-dimensional partially linear quantile regression, we partition the covariates into two groups:
\begin{enumerate}
    \item \textbf{Primary high-dimensional covariates}:
    \begin{align*}
    \mathbf{X} 
    = \bigl( \text{Power}_i : i \in \{1,\ldots,49\}\setminus\{13\} \bigr),
    \end{align*}
    consisting of the remaining 48 individual WEC power outputs.  
    These covariates encode nonlinear wake-interaction patterns among WECs and represent the main predictors of interest.  
    Their dimensionality and potential sparsity motivate the use of penalized high-dimensional quantile regression to perform variable selection and identify the dominant contributors to tail behavior.

    \item \textbf{Secondary physical descriptors}:
    \begin{align*}
    \mathbf{Z}
    = \Bigl\{ \, (X_i, Y_i)_{i=1}^{49},\, qW \Bigr\},
    \end{align*}
    including all WEC spatial coordinates and the farm-level $qW$-factor.  
    These variables provide structural and environmental context, capturing the underlying layout geometry, device spacing, and aggregate hydrodynamic interactions.  
    In our model, $\mathbf{Z}$ is treated as a set of non-focal (nuisance) covariates whose effects are absorbed through a flexible, low-dimensional adjustment without penalization.
\end{enumerate}

This decomposition balances engineering interpretability and statistical efficiency. The vector $\mathbf{Z}$ controls for global geometric and environmental conditions, while the high-dimensional vector $\mathbf{X}$ captures local interaction effects that drive the distribution of WEC-level output across quantiles. The resulting formulation enables a detailed assessment of how spatial layout, device interactions, and farm-level descriptors jointly influence both typical and extreme levels of power production.

By fitting a high-dimensional quantile regression model across a spectrum of quantile levels, we characterize how layout decisions affect not only the central tendency of the output but also the upper-tail efficiency (corresponding to high-yield operating conditions) and lower-tail energy loss scenarios. This analysis provides insights beyond those attainable with conventional optimization-based approaches, such as evolutionary placement strategies, which primarily target the maximization of average output.

\begin{figure}[htbp]
    \centering
    \begin{subfigure}{0.48\textwidth}
        \centering
        \includegraphics[width=\textwidth]{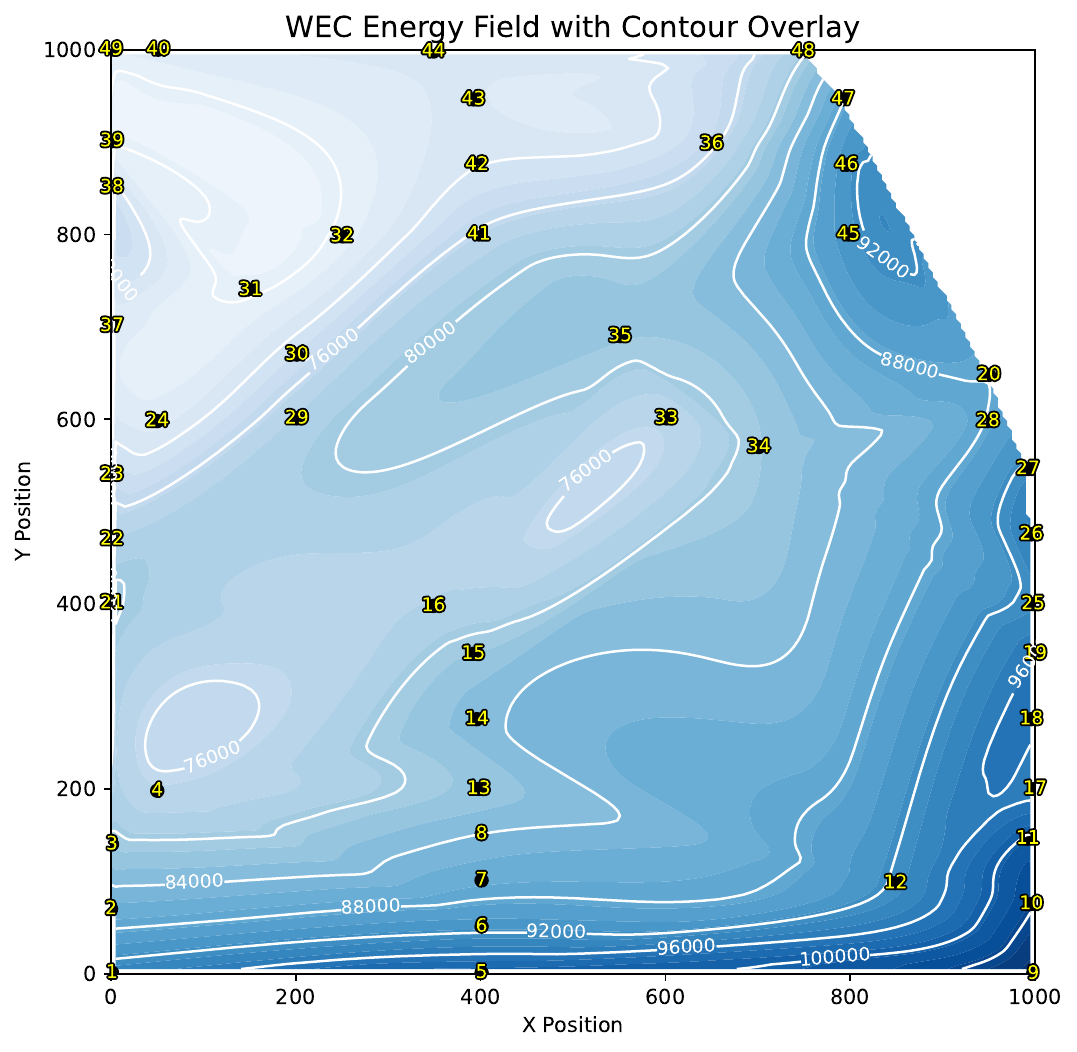}
        \label{fig:1}
    \end{subfigure}
    \hfill
    \begin{subfigure}{0.48\textwidth}
        \centering
        \includegraphics[width=\textwidth]{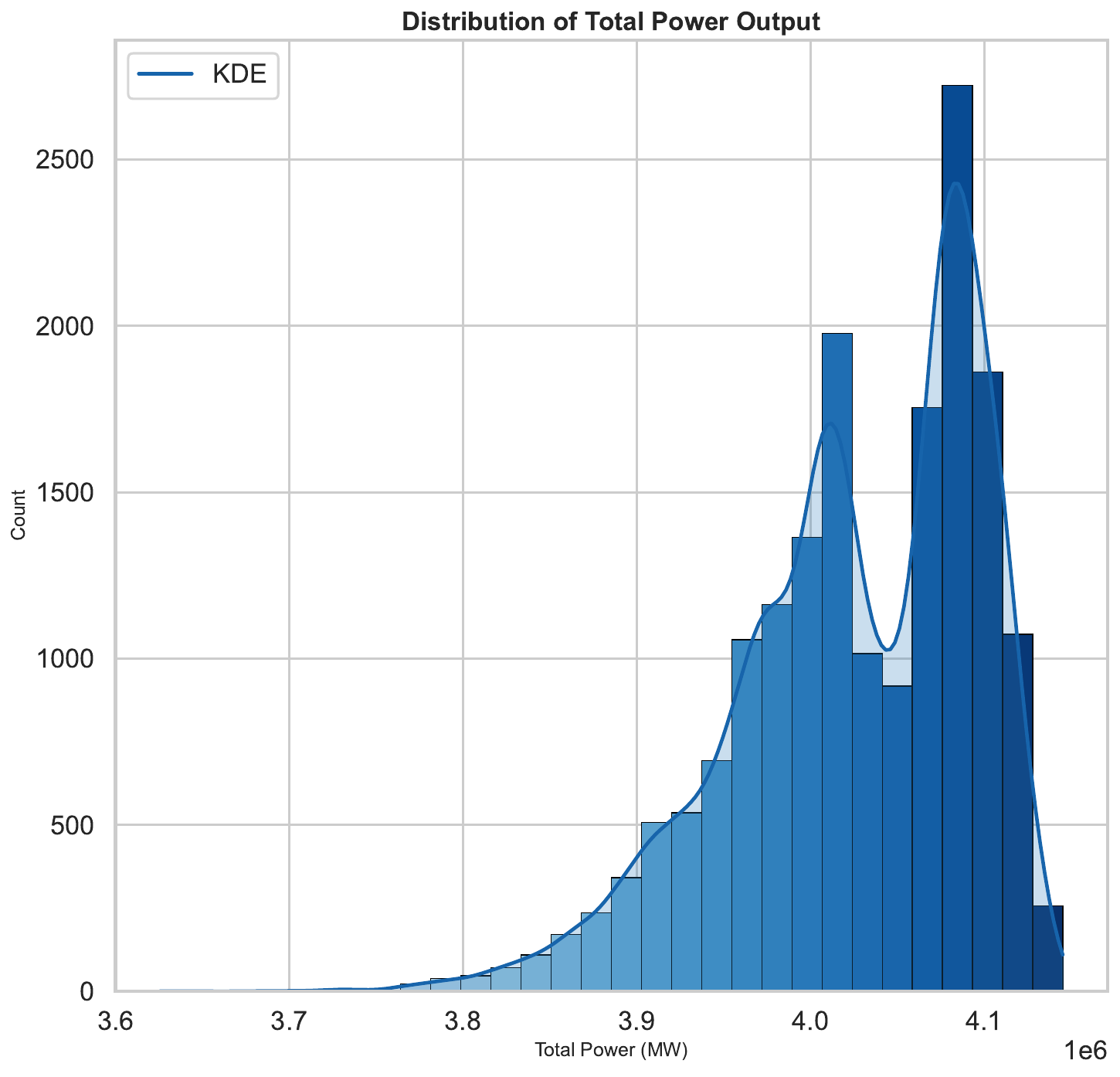}
        \label{fig:2}
    \end{subfigure}
    
    \caption{Visualization of wave energy farm performance. (a) shows the spatial power allocation of individual WEC devices, while (b) depicts the overall power output distribution, reflecting the collective generation behaviour of the farm.}
    \label{fig:combined}
\end{figure}

We conduct 1,000 replications, and in each replication we randomly draw a subsample of size 500 from the full dataset. The empirical rejection proportions of the competing tests across quantile levels are reported in Figure \ref{fig:rejection-rate}. Several findings emerge.

\begin{itemize}
\item Both the $T_{\mathrm{MAX}}$ and $T_{\mathrm{CC}}$ reject the null hypothesis with probability nearly one across all quantile levels. This persistent rejection indicates the presence of at least one highly influential neighboring WEC whose power output exerts a strong hydrodynamic impact on WEC13. Such a phenomenon is consistent with the well-documented inter-device radiation and diffraction coupling observed in large-scale wave-energy arrays. Importantly, the complete rejection achieved by $T_{\mathrm{CC}}$ demonstrates its superior sensitivity to complex local signals relative to traditional aggregation-based tests.

\item The $T_{\mathrm{SUM}}$ statistic exhibits a distinctly quantile-dependent rejection pattern. The highest rejection frequencies occur around the middle quantiles ($\tau\in[0.25,0.60]$, whereas substantially lower rejection rates appear at low and high quantiles. This suggests that aggregate coupling effects are strongest under moderate sea states, where interference and radiation interactions dominate the power redistribution across devices. By contrast, under extreme operating conditions—either very low or very high power output—the variability is driven predominantly by environmental fluctuations rather than inter-device interactions, yielding weaker aggregate effects.
\end{itemize}

These results highlight that while inter-device coupling is consistently detectable by the more sensitive tests such as $T_{\mathrm{MAX}}$ and $T_{\mathrm{CC}}$, the strength and nature of such coupling vary substantially across the conditional distribution of power output.

\begin{figure}[htbp]
    \centering
    \includegraphics[width=0.75\textwidth, height=0.36\textheight]{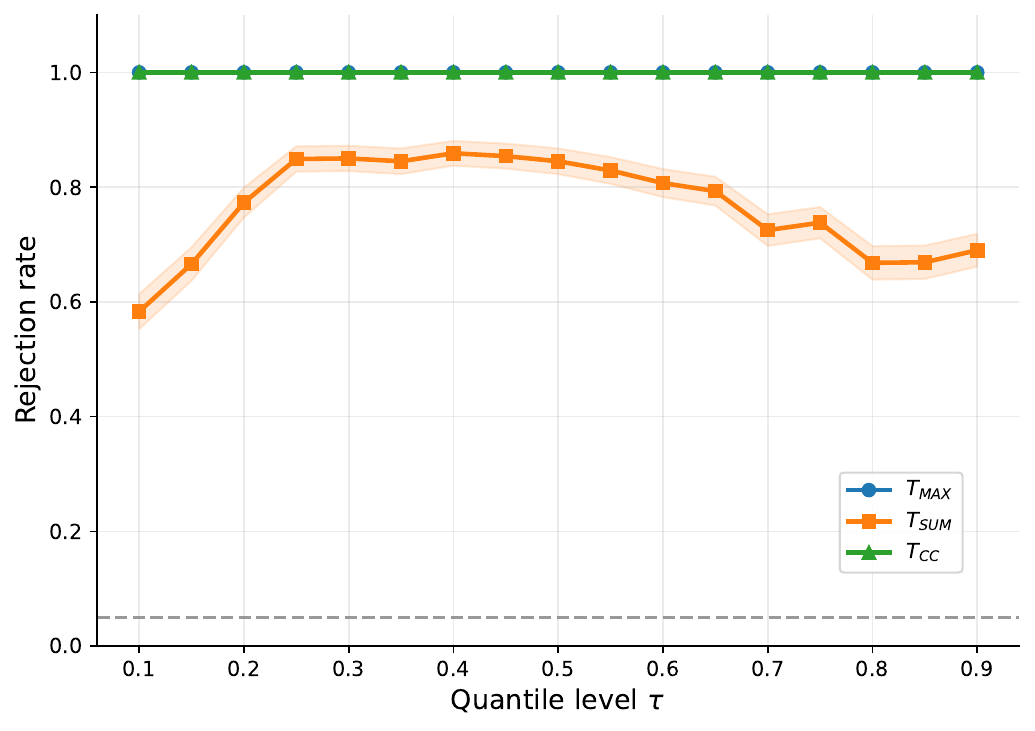}
    \caption{Rejection rates across quantile levels $\tau$ with pointwise 95\% confidence bands for the three test statistics $T_{\mathrm{MAX}}$, $T_{\mathrm{SUM}}$, and $T_{\mathrm{CC}}$.}
    \label{fig:rejection-rate}
\end{figure}

\section{Conclusion}\label{sec:discussion}

In this paper, we develop a new adaptive test for high-dimensional quantile regression by leveraging the asymptotic independence between the max-type and sum-type statistics. The resulting Cauchy combination test provides a simple yet effective way to integrate information across different sparsity regimes. Our theoretical results, together with simulation studies and real data analyses, demonstrate that the proposed method achieves reliable size control and competitive power under a wide range of settings.

Several extensions merit further investigation. One promising direction is to generalize the current framework to more flexible distributions for the covariates $\boldsymbol{U}$ beyond the Gaussian assumption. Another is to explore test statistics that bridge the gap between max-type and sum-type approaches, such as L-type \citep{ma2024adaptive} or other order-statistic-based combinations, which may offer additional adaptivity under intermediate sparsity levels. These developments would help broaden the applicability and robustness of high-dimensional inference in quantile regression.

\bibliographystyle{elsarticle-harv}

\bibliography{reference.bib}

@inproceedings{neshat2020optimisation,
  title={Optimisation of large wave farms using a multi-strategy evolutionary framework},
  author={Neshat, Mehdi and Alexander, Bradley and Sergiienko, Nataliia Y and Wagner, Markus},
  booktitle={Proceedings of the 2020 genetic and evolutionary computation conference},
  pages={1150--1158},
  year={2020}
}

@article{chen2024hypothesis,
  title={Hypothesis testing on high dimensional quantile regression},
  author={Chen, Zhao and Cheng, Vivian Xinyi and Liu, Xu},
  journal={Journal of Econometrics},
  volume={238},
  number={1},
  pages={105525},
  year={2024},
  publisher={Elsevier}
}

@article{tang2022conditional,
  title={Conditional marginal test for high dimensional quantile regression},
  author={Tang, Yanlin and Wang, Yinfeng and Wang, Huixia Judy and Pan, Qing},
  journal={Statistica Sinica},
  volume={32},
  number={2},
  pages={869--892},
  year={2022},
  publisher={JSTOR}
}

@article{zhong2011tests,
  title={Tests for high-dimensional regression coefficients with factorial designs},
  author={Zhong, Ping-Shou and Chen, Song Xi},
  journal={Journal of the American Statistical Association},
  volume={106},
  number={493},
  pages={260--274},
  year={2011},
  publisher={Taylor \& Francis}
}

@article{CuiGuoZhong2018,
  author    = {Cui, Huizheng and Guo, Wei and Zhong, Wei},
  title     = {Test for high-dimensional regression coefficients using refitted cross-validation variance estimation},
  journal   = {The Annals of Statistics},
  volume    = {46},
  number    = {2},
  pages     = {958--988},
  year      = {2018}
}

@book{Koenker2000,
  author    = {Koenker, Roger},
  title     = {Quantile Regression},
  publisher = {Cambridge University Press},
  year      = {2005}
}

@article{feng2011wild,
  author  = {Feng, Xingdong and He, Xuming and Hu, Jianhua},
  title   = {Wild Bootstrap for Quantile Regression},
  journal = {Biometrika},
  volume  = {98},
  number  = {4},
  pages   = {995--999},
  year    = {2011}
}

@article{wang2018wild,
  author  = {Wang, Lan and Van Keilegom, Ingrid and Maidman, Adam},
  title   = {Wild Residual Bootstrap Inference for Penalized Quantile Regression with Heteroscedastic Errors},
  journal = {Biometrika},
  volume  = {105},
  number  = {4},
  pages   = {859--872},
  year    = {2018}
}

@article{wang2018testing,
  author  = {Wang, Huixia Judy and McKeague, Ian W. and Qian, Min},
  title   = {Testing for Marginal Effect in Quantile Regression},
  journal = {Journal of the Royal Statistical Society: Series B (Statistical Methodology)},
  volume  = {80},
  number  = {3},
  pages   = {433--452},
  year    = {2018},
  publisher = {Wiley}
}

@article{feng2022maxsum,
  author  = {Feng, Long and Jiang, Tiefeng and Liu, Binghui and Xiong, Wei},
  title   = {Max-Sum Tests for Cross-Sectional Independence of High-Dimensional Panel Data},
  journal = {Annals of Statistics},
  volume  = {50},
  number  = {2},
  pages   = {1124--1143},
  year    = {2022}
}

@article{wang2023change,
  author  = {Wang, Guanghui and Feng, Long},
  title   = {Computationally Efficient and Data-Adaptive Change Point Inference in High Dimensions},
  journal = {Journal of the Royal Statistical Society: Series B},
  volume  = {85},
  number  = {3},
  pages   = {936--958},
  year    = {2023}
}

@article{feng2022alpha,
  author  = {Feng, Long and Lan, Wei and Liu, Binghui and Ma, Yanyuan},
  title   = {High-Dimensional Test for Alpha in Linear Factor Pricing Models with Sparse Alternatives},
  journal = {Journal of Econometrics},
  volume  = {229},
  number  = {1},
  pages   = {152--175},
  year    = {2022}
}

@article{wang2024rank,
  author  = {Wang, Hongfei and Liu, Binghui and Feng, Long and Ma, Yanyuan},
  title   = {Rank-Based Max-Sum Tests for Mutual Independence of High-Dimensional Random Vectors},
  journal = {Journal of Econometrics},
  volume  = {238},
  number={1},
  pages   = {105578},
  year    = {2024}
}

@article{chen2024hdreg,
  author  = {Chen, Dachuan and Feng, Long and Mykland, Per A. and Zhang, Lan},
  title   = {High Dimensional Regression Coefficient Test with High Frequency Data},
  journal = {Journal of Econometrics},
  pages   = {105812},
  year    = {2024}
}

@article{ma2024alpha,
  author  = {Ma, Huifang and Feng, Long and Wang, Zhaojun and Bao, Jigang},
  title   = {Adaptive Testing for Alphas in Conditional Factor Models with High Dimensional Assets},
  journal = {Journal of Business and Economic Statistics},
  volume  = {42},
  number  = {4},
  pages   = {1356--1366},
  year    = {2024}
}

@article{feng2024asyIndep,
  author  = {Feng, Long and Jiang, Tiefeng and Li, Xiaoyun and Liu, Binghui},
  title   = {Asymptotic Independence of the Sum and Maximum of Dependent Random Variables with Applications to High-Dimensional Tests},
  journal = {Statistica Sinica},
  volume  = {34},
  pages   = {1745--1763},
  year    = {2024}
}

@article{feng2024whitenoise,
  author  = {Feng, Long and Liu, Binghui and Ma, Yanyuan},
  title   = {Testing for High-Dimensional White Noise},
  journal = {Statistica Sinica},
  note    = {Accepted},
  year    = {2024}
}

@article{wang2024fisher,
  author  = {Wang, Hongfei and Liu, Binghui and Feng, Long and Ma, Yanyuan},
  title   = {Fisher's Combined Probability Test for Cross-Sectional Independence in Panel Data Model with Serial Correlation},
  journal = {Statistica Sinica},
  note    = {Accepted},
  year    = {2024}
}

@article{liu2020cauchy,
  title   = {Cauchy Combination Test: A Powerful Test with Analytical p-Value Calculation under Arbitrary Dependency Structures},
  author  = {Liu, Yaowu and Xie, Jun},
  journal = {Journal of the American Statistical Association},
  volume  = {115},
  number  = {529},
  pages   = {393--402},
  year    = {2020},
  publisher = {Taylor \& Francis}
}

@article{wang2024new,
  title={New approaches for testing slope homogeneity in large panel data models},
  author={Wang, Guanghui and Feng, Long and Zhao, Ping},
  journal={Communications in Mathematics and Statistics},
  pages = {1--31},
  year={2024},
  publisher={Springer}
}

@article{ma2024adaptive,
  title={Adaptive L-statistics for high dimensional test problem},
  author={Ma, Huifang and Feng, Long and Wang, Zhaojun},
  journal={arXiv preprint arXiv:2410.14308},
  year={2024}
}

@article{li2023,
title = {The Cauchy Combination Test under Arbitrary Dependence Structures},
journal = {The American Statistician},
volume = {77},
number = {2},
pages = {134-142},
year = {2023},
author = {Mingya Long and Zhengbang Li and Wei Zhang and Qizhai Li}
}

@article{liu2025spatial,
  author  = {Liu, Jixuan and Feng, Long and Zhao, Ping and Wang, Zhaojun},
  title   = {Spatial-Sign Based Max-Sum Test for High Dimensional Location Parameters},
  journal = {Statistica Sinica},
  note    = {Accepted},
  year    = {2025}
}

@article{KoenkerBassett1978,
  author    = {Koenker, Roger and Bassett, Gilbert S.},
  title     = {Regression quantiles},
  journal   = {Econometrica},
  volume    = {46},
  number    = {1},
  pages     = {33--50},
  year      = {1978}
}

@article{KoenkerMachado1999,
  author    = {Koenker, Roger and Machado, João A. F.},
  title     = {Goodness of fit and related inference processes for quantile regression},
  journal   = {Journal of the American Statistical Association},
  volume    = {94},
  number    = {448},
  pages     = {1296--1310},
  year      = {1999}
}

@article{WangHe2007,
  author    = {Wang, Huixia Judy and He, Xuming},
  title     = {Detecting differential expressions in genechip microarray studies},
  journal   = {Journal of the American Statistical Association},
  volume    = {102},
  number    = {477},
  pages     = {104--112},
  year      = {2007}
}

@article{WangZhuZhou2009,
  author    = {Wang, Huixia Judy and Zhu, Zheng and Zhou, Jun},
  title     = {Quantile regression in partially linear varying coefficient models},
  journal   = {The Annals of Statistics},
  volume    = {37},
  number    = {6 A},
  pages     = {3841--3866},
  year      = {2009}
}

@article{ZhangYaoShao2018,
  author    = {Zhang, Xiong and Yao, Shiyu and Shao, Xiangyang},
  title     = {Conditional mean and quantile dependence testing in high dimension},
  journal   = {The Annals of Statistics},
  volume    = {46},
  number    = {1},
  pages     = {219--246},
  year      = {2018}
}

@article{kocherginsky2005practical,
  author   = {Kocherginsky, Masha and He, Xuming and Mu, Yunming},
  title    = {Practical Confidence Intervals for Regression Quantiles},
  journal  = {Journal of Computational and Graphical Statistics},
  volume   = {14},
  number   = {1},
  pages    = {41--55},
  year     = {2005},
  publisher= {Taylor \& Francis}
}

\appendix

\section*{Appendix}
We analyze the joint distribution of the max-type and sum-type statistics. 
Let $\bPsi=(\hat\psi_1,\ldots,\hat\psi_n)^\top$ and denote by $\bX_{\cdot j}$ the $j$th column of $\bX$.
Define
\begin{align}
    M_p 
    &:= \max_{1\le j\le p} \bigl(\bW_{\cdot j}^\top \bPsi\bigr)^2, \label{eq:def_Mp}\\
    U_p 
    &:= \sum_{j=1}^p\left\{\bigl(\bX_{\cdot j}^\top \bPsi\bigr)^2-\sum_{i=1}^n X_{ij}^2\hat\psi_i^2\right\}. \label{eq:def_Up}
\end{align}
Note that
\begin{align}
    \sum_{j=1}^p\bigl(\bX_{\cdot j}^\top \bPsi\bigr)^2
    &= \sum_{j=1}^p\left(\sum_{i=1}^n X_{ij}\hat\psi_i\right)\left(\sum_{\ell=1}^n X_{\ell j}\hat\psi_\ell\right)
      = \sum_{i=1}^n\sum_{\ell=1}^n \bigl(\bX_i^\top\bX_\ell\bigr)\hat\psi_i\hat\psi_\ell, \label{eq:col_to_pair}\\
    \sum_{j=1}^p\sum_{i=1}^n X_{ij}^2\hat\psi_i^2
    &= \sum_{i=1}^n \bigl(\bX_i^\top \bX_i\bigr)\hat\psi_i^2. \label{eq:center_term}
\end{align}
Therefore,
\begin{equation}
    U_p
    = \sum_{i\neq \ell} \bigl(\bX_i^\top \bX_\ell\bigr)\hat\psi_i\hat\psi_\ell. \label{eq:Up_pairwise}
\end{equation}
Recalling that
\[
T_{\mathrm{SUM}}
= \frac{2}{n(n-1)}\sum_{i\neq \ell} \bigl(\bX_i^\top \bX_\ell\bigr)\hat\psi_i\hat\psi_\ell,
\]
we obtain the exact identity
\begin{equation}
    T_{\mathrm{SUM}}=\frac{2}{n(n-1)}\,U_p. \label{eq:Tsum_Up_identity}
\end{equation}
In the sequel, we work with the centered-and-scaled version of $U_p$:
\begin{equation}
    \widetilde{U_p} := \frac{U_p-\Esp(U_p)}{v_p},
    \qquad v_p := \sqrt{\Var(U_p)}. \label{eq:Up_standardized}
\end{equation}
By \eqref{eq:Tsum_Up_identity}, $\widetilde U_p$ is equivalent to the standardized $T_{\mathrm{SUM}}$ up to the deterministic factor $2/\{n(n-1)\}$, and hence the asymptotic behavior of $(T_{\mathrm{SUM}},T_{\mathrm{MAX}})$ can be studied through $(U_p,M_p)$.

\begin{lemma}[Conditional structure of $(\boldsymbol{\Psi},\boldsymbol{W},\boldsymbol{V})$]\label{lem:cond_struct}
Let
$$
\mathcal{F}=\sigma(\mathbf{Z},\boldsymbol{\varepsilon})
$$
be the $\sigma$-field generated by the covariate matrix
$$
\mathbf{Z}=(\boldsymbol{Z}_1^\top,\ldots,\boldsymbol{Z}_n^\top)^\top
$$
and the noise vector
$$
\boldsymbol{\varepsilon}=(\varepsilon_1,\ldots,\varepsilon_n)^\top.
$$
Under the null hypothesis $H_0:\boldsymbol{\beta}_\tau=\boldsymbol{0}$ and Assumptions \textnormal{(A1)}--\textnormal{(A3)}, the following statements hold.
\begin{enumerate}
\item[(i)] The score vector
$$
\boldsymbol{\Psi}=(\hat{\psi}_1,\ldots,\hat{\psi}_n)^\top
$$
is $\mathcal{F}$-measurable.

\item[(ii)] Conditional on $\mathcal{F}$ (equivalently, conditional on $\mathbf{Z}$ since $\boldsymbol{\varepsilon}$ is independent of $\mathbf{X}$), the design matrix
$$
\mathbf{X}=(\boldsymbol{X}_1^\top,\ldots,\boldsymbol{X}_n^\top)^\top
$$
admits the decomposition
\begin{equation}\label{eq:X_decomposition_cond_struct_std}
\mathbf{X}=\mathbf{M}(\mathbf{Z})+\mathbf{E},
\end{equation}
where $\mathbf{M}(\mathbf{Z})$ is deterministic given $\mathbf{Z}$, and the rows of
$$
\mathbf{E}=(\boldsymbol{E}_1^\top,\ldots,\boldsymbol{E}_n^\top)^\top
$$
are independent Gaussian vectors satisfying
$$
\boldsymbol{E}_i\mid \mathbf{Z}\sim N(\boldsymbol{0},\boldsymbol{\Sigma}_{x|z}),
\qquad i=1,\ldots,n,
$$
with $\boldsymbol{\Sigma}_{x|z}=\boldsymbol{\Sigma}_x-\boldsymbol{\Sigma}_{xz}\boldsymbol{\Sigma}_z^{-1}\boldsymbol{\Sigma}_{zx}$.

Let
$$
P_Z=\mathbf{Z}(\mathbf{Z}^\top\mathbf{Z})^{-1}\mathbf{Z}^\top,
\qquad
P_Z^\perp=\mathbf{I}_n-P_Z,
$$
and define the projected matrices
$$
\mathbf{V}=P_Z\mathbf{X},
\qquad
\mathbf{W}=P_Z^\perp\mathbf{X}.
$$
Then, conditional on $\mathcal{F}$,
\begin{equation}\label{eq:WV_decomposition_cond_struct_std}
\mathbf{W}=P_Z^\perp \mathbf{E},
\qquad
\mathbf{V}=P_Z\mathbf{M}(\mathbf{Z})+P_Z\mathbf{E},
\end{equation}
and $\mathbf{W}$ and $\mathbf{V}$ are independent Gaussian matrices.
\end{enumerate}
\end{lemma}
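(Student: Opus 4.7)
My plan is to dispose of (i) as a simple measurability observation and then build (ii) from standard Gaussian projection identities. For (i), under $H_0$ the model collapses to $Y_i=\boldsymbol{Z}_i^\top\boldsymbol{\alpha}_\tau+\varepsilon_i$, so each $Y_i$ is a deterministic function of $(\boldsymbol{Z}_i,\varepsilon_i)$ and hence $\mathcal{F}$-measurable. Since $\hat{\boldsymbol{\alpha}}_\tau$ is a functional of $\{(Y_i,\boldsymbol{Z}_i)\}_{i=1}^n$ alone, it inherits $\mathcal{F}$-measurability, and therefore so does each $\hat{\psi}_i=I(Y_i-\boldsymbol{Z}_i^\top\hat{\boldsymbol{\alpha}}_\tau\le 0)-\tau$. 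For (ii), I would lean on Assumption (A2): because $(\tilde{\boldsymbol{Z}}^\top,\boldsymbol{X}^\top)^\top\sim N(\mathbf{0},\boldsymbol{\Sigma})$, the conditional-Gaussian formula gives $\boldsymbol{X}_i\mid\boldsymbol{Z}_i\sim N(\boldsymbol{\Sigma}_{xz}\boldsymbol{\Sigma}_z^{-1}\tilde{\boldsymbol{Z}}_i,\boldsymbol{\Sigma}_{x|z})$. I would then define $\mathbf{M}(\mathbf{Z})$ as the $n\times p$ matrix whose $i$th row is $\tilde{\boldsymbol{Z}}_i^\top\boldsymbol{\Sigma}_z^{-1}\boldsymbol{\Sigma}_{zx}$ and set $\mathbf{E}:=\mathbf{X}-\mathbf{M}(\mathbf{Z})$; the i.i.d.\ structure of the sample combined with this formula yields that the rows $\boldsymbol{E}_i$ are mutually independent and $N(\mathbf{0},\boldsymbol{\Sigma}_{x|z})$ conditionally on $\mathbf{Z}$. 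Because $\mathbf{X}$ is independent of $\boldsymbol{\varepsilon}$ by model construction, conditioning on $\mathcal{F}=\sigma(\mathbf{Z},\boldsymbol{\varepsilon})$ coincides with conditioning on $\mathbf{Z}$, which is the one measurability step I would spell out carefully.

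Next, I would exploit a simple geometric fact: $\mathbf{M}(\mathbf{Z})$ lies in the column space of $\mathbf{Z}$. This is exactly where the leading ``$1$'' in $\boldsymbol{Z}_i=(1,\tilde{\boldsymbol{Z}}_i^\top)^\top$ matters---the $n\times(q-1)$ matrix $\tilde{\mathbf{Z}}$ is simply the trailing columns of $\mathbf{Z}$, so $\mathbf{M}(\mathbf{Z})=\mathbf{Z}\mathbf{A}$ for an explicit block-coefficient matrix $\mathbf{A}$. Hence $P_Z^\perp\mathbf{M}(\mathbf{Z})=\mathbf{0}$ and $P_Z\mathbf{M}(\mathbf{Z})=\mathbf{M}(\mathbf{Z})$, which immediately yields $\mathbf{W}=P_Z^\perp\mathbf{E}$ and $\mathbf{V}=\mathbf{M}(\mathbf{Z})+P_Z\mathbf{E}$ in the exact form asserted in \eqref{eq:WV_decomposition_cond_struct_std}.

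For the conditional independence of $\mathbf{W}$ and $\mathbf{V}$ given $\mathcal{F}$: since $\mathbf{M}(\mathbf{Z})$ is $\mathcal{F}$-measurable, the problem reduces to showing that $P_Z^\perp\mathbf{E}$ and $P_Z\mathbf{E}$ are conditionally independent. Given $\mathcal{F}$ the projections are deterministic and $\mathbf{E}$ is Gaussian, so $(P_Z^\perp\mathbf{E},P_Z\mathbf{E})$ is jointly Gaussian, and it suffices to verify vanishing cross-covariance. For deterministic $\mathbf{a},\mathbf{b}\in\mathbb{R}^n$ and $\mathbf{c},\mathbf{d}\in\mathbb{R}^p$, the cross-covariance $\Cov(\mathbf{a}^\top P_Z^\perp\mathbf{E}\mathbf{c},\mathbf{b}^\top P_Z\mathbf{E}\mathbf{d})$ factorises as $(\mathbf{c}^\top\boldsymbol{\Sigma}_{x|z}\mathbf{d})\,\mathbf{a}^\top P_Z^\perp P_Z\mathbf{b}$ and vanishes because $P_Z^\perp P_Z=\mathbf{0}$. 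Joint Gaussianity plus zero covariance then delivers independence. The argument is essentially routine; the only items that warrant real care are the Kronecker-type covariance bookkeeping for the columns of $\mathbf{E}$ and the justification of the reduction from $\sigma(\mathbf{Z},\boldsymbol{\varepsilon})$ to $\sigma(\mathbf{Z})$ inside the conditional distribution statement.
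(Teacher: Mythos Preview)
Your proposal is correct and follows essentially the same route as the paper's own proof: both parts argue (i) by noting that under $H_0$ the responses, and hence $\hat{\boldsymbol{\alpha}}_\tau$ and each $\hat\psi_i$, are functions of $(\mathbf{Z},\boldsymbol{\varepsilon})$ alone, and (ii) by applying the Gaussian conditioning formula to obtain $\mathbf{X}=\mathbf{M}(\mathbf{Z})+\mathbf{E}$, observing that $\mathbf{M}(\mathbf{Z})$ lies in the column space of $\mathbf{Z}$ so that $P_Z^\perp\mathbf{M}(\mathbf{Z})=\mathbf{0}$, and then using $P_Z^\perp P_Z=\mathbf{0}$ to conclude that $P_Z^\perp\mathbf{E}$ and $P_Z\mathbf{E}$ are jointly Gaussian with zero cross-covariance and hence conditionally independent. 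Your write-up is, if anything, slightly more explicit about the Kronecker-type covariance calculation and the passage from $\sigma(\mathbf{Z},\boldsymbol{\varepsilon})$ to $\sigma(\mathbf{Z})$, but the substance is identical.
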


\begin{proof}
(i) Under $H_0\colon \beta = 0$, the $\tau$-quantile regression model reduces to
$$Y_i = Z_i^\trans \alpha + \varepsilon_i, \qquad i = 1,\dots,n,$$
where $Z_i^\trans$ denotes the $i$th row of $\bZ$ and $\alpha \in \mathbb{R}^q$ is the nuisance parameter. Let $\hat{\alpha}$ be the estimator of $\alpha$ obtained from the $\tau$-quantile regression of $Y$ on $\bZ$ under $H_0$. By standard quantile regression theory and Assumption (A1) on the density $f_{\varepsilon}$ (positivity, differentiability, and bounded derivative), $\hat{\alpha}$ is a measurable function of $(\bZ,Y)$. Therefore, the residuals
$$r_i = Y_i - Z_i^\trans \hat{\alpha}, \qquad i=1,\dots,n,$$
are $\mathcal{F}$-measurable.
By the construction of the score,
\begin{align*}
\hat{\psi}_i =
\begin{cases}
1-\tau, & r_i \ge 0,\\[2pt]
-\tau, & r_i < 0.
\end{cases}
\end{align*}
Since $r_i$ is a measurable function of $(Z_i,\varepsilon_i)$ and $\mathcal{F}=\sigma(\bZ,\varepsilon)$, it follows that $\hat{\psi}_i$ is $\mathcal{F}$-measurable.
Thus, the vector $\bPsi = (\hat{\psi}_1,\dots,\hat{\psi}_n)^\trans$ is $\mathcal{F}$-measurable and is therefore fixed conditional on $\mathcal{F}$. This proves (i).

(ii) By Assumption (A2), for each $i$,
$$
\begin{pmatrix}
\widetilde{\boldsymbol{Z}}_i\\
\boldsymbol{X}_i
\end{pmatrix}
\sim N\!\left(
\begin{pmatrix}
\boldsymbol{0}\\
\boldsymbol{0}
\end{pmatrix},
\begin{pmatrix}
\boldsymbol{\Sigma}_z & \boldsymbol{\Sigma}_{zx}\\
\boldsymbol{\Sigma}_{xz} & \boldsymbol{\Sigma}_x
\end{pmatrix}
\right),
$$
where $\boldsymbol{Z}_i=(1,\widetilde{\boldsymbol{Z}}_i^\top)^\top$.
By the multivariate normal conditioning formula,
$$
\boldsymbol{X}_i\mid \widetilde{\boldsymbol{Z}}_i
\sim
N\!\left(
\boldsymbol{\mu}_{x|z}(\widetilde{\boldsymbol{Z}}_i),
\boldsymbol{\Sigma}_{x|z}
\right),
\qquad
\boldsymbol{\mu}_{x|z}(\widetilde{\boldsymbol{Z}}_i)
=
\boldsymbol{\Sigma}_{xz}\boldsymbol{\Sigma}_z^{-1}\widetilde{\boldsymbol{Z}}_i,
$$
where
$$
\boldsymbol{\Sigma}_{x|z}
=\boldsymbol{\Sigma}_x-\boldsymbol{\Sigma}_{xz}\boldsymbol{\Sigma}_z^{-1}\boldsymbol{\Sigma}_{zx}.
$$
Therefore we can write
$$
\boldsymbol{X}_i=\boldsymbol{\mu}_{x|z}(\widetilde{\boldsymbol{Z}}_i)+\boldsymbol{E}_i,
$$
where, conditional on $\mathbf{Z}$, the vectors $\boldsymbol{E}_1,\ldots,\boldsymbol{E}_n$ are independent and satisfy
$$
\boldsymbol{E}_i\mid \mathbf{Z}\sim N(\boldsymbol{0},\boldsymbol{\Sigma}_{x|z}).
$$
Stacking over $i$ yields \eqref{eq:X_decomposition_cond_struct_std} with
$$
\mathbf{M}(\mathbf{Z})
=
\left(
\boldsymbol{\mu}_{x|z}(\widetilde{\boldsymbol{Z}}_1)^\top,\ldots,
\boldsymbol{\mu}_{x|z}(\widetilde{\boldsymbol{Z}}_n)^\top
\right)^\top
$$
and $\mathbf{E}=(\boldsymbol{E}_1^\top,\ldots,\boldsymbol{E}_n^\top)^\top$.

Next, for each coordinate $j\in\{1,\ldots,p\}$, the $j$th column of $\mathbf{M}(\mathbf{Z})$ has the form $\mathbf{Z}\boldsymbol{c}_j$ for some vector $\boldsymbol{c}_j\in\mathbb{R}^q$ (because $\boldsymbol{\mu}_{x|z}(\widetilde{\boldsymbol{Z}}_i)$ is linear in $\widetilde{\boldsymbol{Z}}_i$ and the first component of $\boldsymbol{Z}_i$ equals $1$).
Hence each column of $\mathbf{M}(\mathbf{Z})$ lies in the column space of $\mathbf{Z}$, and thus
$$
P_Z^\perp \mathbf{M}(\mathbf{Z})=\boldsymbol{0}.
$$
Using $\mathbf{W}=P_Z^\perp\mathbf{X}$ and $\mathbf{V}=P_Z\mathbf{X}$ together with $\mathbf{X}=\mathbf{M}(\mathbf{Z})+\mathbf{E}$ gives \eqref{eq:WV_decomposition_cond_struct_std}.

Finally, conditional on $\mathbf{Z}$ (and hence on $\mathcal{F}$), $\mathbf{E}$ is a Gaussian matrix and $(P_Z^\perp\mathbf{E},P_Z\mathbf{E})$ is jointly Gaussian as a linear transformation of $\mathbf{E}$.
Moreover, since $P_Z$ and $P_Z^\perp$ are orthogonal projections, we have $P_Z^\perp P_Z=\boldsymbol{0}$, and therefore the cross-covariance between $P_Z^\perp\mathbf{E}$ and $P_Z\mathbf{E}$ is zero conditional on $\mathbf{Z}$.
Hence $P_Z^\perp\mathbf{E}$ and $P_Z\mathbf{E}$ are conditionally uncorrelated jointly Gaussian and thus conditionally independent given $\mathbf{Z}$.
Because $P_Z\mathbf{M}(\mathbf{Z})$ is deterministic given $\mathbf{Z}$, adding it to $P_Z\mathbf{E}$ does not affect independence.
Therefore, $\mathbf{W}$ and $\mathbf{V}$ are independent conditional on $\mathcal{F}$.
\end{proof}

To analyze the dependence between the Max-type and Sum-type statistics, we fix an index set $\Ind \subset \{1,\dots,p\}$ with $|\Ind| = d$, that corresponds to the locations of potential extreme values, and let $\Jnd = \{1, \dots, p\} \setminus \Ind$. For any $j \in \Jnd$, we employ the Gaussian linear decomposition of $\bW_{\cdot j}$ conditional on $\bW_{\Ind}$:
\begin{equation} \label{eq:decomp}
    \bW_{\cdot j} = \sum_{k \in \Ind} \gamma_{jk} \bW_{\cdot k} + \bW_{\cdot j}^*, 
    \qquad \text{where } \bW_{\cdot j}^* \perp \{\bW_{\cdot k}\}_{k \in \Ind}.
\end{equation}
The coefficients $\{\gamma_{jk}\}_{k \in \Ind}$ are determined by the Gaussian structure with covariance matrix $\boldsymbol{\Sigma}_{x|z}$. Assumption \textnormal{(A3)} implies that the correlation matrix $\mathbf{R}_{x|z} = (r_{ij})_{1\le i,j\le p}$ satisfies $\max_{1\le i,j\le p} |r_{ij}| \le r_0 < 1$ and the eigenvalues of $\boldsymbol{\Sigma}_{x|z}$ are uniformly bounded. Consequently, there exists a constant $C_{\lambda} < \infty$, depending only on $r_0$ and the eigenvalue bounds in \textnormal{(A3)}, such that
$$\sum_{j \in \Jnd} \gamma_{jk}^2 \le C_{\lambda}
\quad \text{for all } k \in \Ind.$$

\begin{lemma}[Uniform negligibility of remainder terms] \label{lem:negligibility}
    Using $\bX_{\cdot j} = \bW_{\cdot j} + \bV_{\cdot j}$, we decompose
    \begin{align*}
        U_p &= \sum_{j=1}^p \left\{ (\bX_{\cdot j}^\trans \bPsi)^2 - \sum_{i=1}^n X_{ij}^2 \hat{\psi}_i^2 \right\} \\
        &= A_1 + 2A_2 + A_3 - (A_{4,W} + 2A_{4,WV} + A_{4,V}),
    \end{align*}
    where
    \begin{align*}
    A_{1} &= \sum_{j=1}^p (\bW_{\cdot j}^\trans \bPsi)^2, \\
    A_{2} &= \sum_{j=1}^p (\bW_{\cdot j}^\trans \bPsi)(\bV_{\cdot j}^\trans \bPsi), \\
    A_{3} &= \sum_{j=1}^p(\bV_{\cdot j}^\trans \bPsi)^2, \\
    A_{4,W} &= - \sum_{j=1}^p \sum_{i=1}^n (W_{ij}^2 \hat{\psi}_i^2), \\
    A_{4,WV} &= - \sum_{j=1}^p \sum_{i=1}^n (W_{ij}V_{ij} \hat{\psi}_i^2), \\
    A_{4,V} &= - \sum_{j=1}^p \sum_{i=1}^n (V_{ij}^2 \hat{\psi}_i^2).
    \end{align*}
    Under Assumptions \textnormal{(A1)}–\textnormal{(A4)}, $v_p = \sqrt{\Var(U_p)} \asymp \sqrt{p}$.

    For any index set $\Ind \subset \{1,\dots,p\}$ with $|\Ind|=d$, let $\Jnd = \{1,\dots,p\}\setminus\Ind$ and consider the decomposition
    $$
        \bW_{\cdot j} = \sum_{k\in\Ind}\gamma_{jk}\bW_{\cdot k} + \bW_{\cdot j}^*, 
        \qquad j\in\Jnd,
    $$
    where $\bW_{\cdot j}^*$ is independent of $\{\bW_{\cdot k}\}_{k\in\Ind}$ and $\sum_{j\in\Jnd}\gamma_{jk}^2 \le C_\lambda$ for all $k\in\Ind$, with $C_\lambda<\infty$ depending only on \textnormal{(A3)}. Denote $Q_k = \bW_{\cdot k}^\trans\bPsi$ for $k\in\Ind$.

    For each component $A_1, A_2, A_3, A_{4,W}, A_{4,WV}, A_{4,V}$, we further decompose
    $$
        A_\ell = A_{\ell,\Ind}^\perp + \Theta_{\ell,\Ind},
    $$
    where $A_{\ell,\Ind}^\perp$ depends only on $\{\bW_{\cdot j}^*\}_{j\in\Jnd}$, $\bV$ and $\bPsi$. We then define
    $$
    \begin{aligned}
    A_{\Ind}^\perp 
    &:= A_{1,\Ind}^\perp + 2A_{2,\Ind}^\perp + A_{3,\Ind}^\perp
        - \bigl(A_{4,W,\Ind}^\perp + 2A_{4,WV,\Ind}^\perp + A_{4,V,\Ind}^\perp\bigr),\\
    \Theta_{\Ind}
    &:= \Theta_{1,\Ind} + 2\Theta_{2,\Ind}
        - \bigl(\Theta_{4,W,\Ind} + 2\Theta_{4,WV,\Ind}\bigr),
    \end{aligned}
    $$
    so that $U_p = A_{\Ind}^\perp + \Theta_{\Ind}$.

    Then there exist constants $C>0$ and $p_0\ge 3$ such that, with
    $$
        \epsilon_p := \frac{(\log p)^C}{\sqrt{p}} \to 0,
    $$
    for every $t\ge 1$ and all $p\ge p_0$,
    $$
        \sup_{\Ind:|\Ind|=d} P\bigl(|\Theta_{\Ind}| \ge \epsilon_p v_p\bigr) 
        \le \frac{1}{p^t}.
    $$
\end{lemma}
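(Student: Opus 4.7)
The plan is to expand $\Theta_\Ind$ explicitly via the decomposition $\bW_{\cdot j}=\sum_{k\in\Ind}\gamma_{jk}\bW_{\cdot k}+\bW_{\cdot j}^*$ for $j\in\Jnd$ and bound each piece by conditional Gaussian and sub-exponential concentration, working on the $\sigma$-field $\mathcal{F}$ supplied by Lemma~\ref{lem:cond_struct}. By that lemma, $\bPsi$ is $\mathcal{F}$-measurable while $\bW$ and $\bV$ remain Gaussian and mutually independent conditional on $\mathcal{F}$, with respective covariance structures built from $(\boldsymbol{\Sigma}_{x|z})_{kk'}P_Z^\perp$ and $(\boldsymbol{\Sigma}_{x|z})_{kk'}P_Z$. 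Writing $Q_k=\bW_{\cdot k}^\top\bPsi$, $Q_j^*=\bW_{\cdot j}^{*\top}\bPsi$, and $R_j=\bV_{\cdot j}^\top\bPsi$, straightforward algebra identifies $\Theta_\Ind$ as a finite sum of three types of pieces: quadratic forms in $\{Q_k\}_{k\in\Ind}$, namely $\sum_{k\in\Ind}Q_k^2$ and $\sum_{j\in\Jnd}(\sum_{k\in\Ind}\gamma_{jk}Q_k)^2$; bilinear cross terms such as $\sum_{k\in\Ind}Q_k\sum_{j\in\Jnd}\gamma_{jk}Q_j^*$, $\sum_{k\in\Ind}Q_kR_k$, and $\sum_{k\in\Ind}Q_k\sum_{j\in\Jnd}\gamma_{jk}R_j$; and diagonal correction terms $\sum_{k\in\Ind}\sum_i W_{ik}^2\hat\psi_i^2$, $\sum_{k\in\Ind}\sum_i W_{ik}V_{ik}\hat\psi_i^2$, together with their $\gamma$-weighted analogues. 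Since $A_3$ and $A_{4,V}$ involve only $\bV$, one automatically has $\Theta_{3,\Ind}=\Theta_{4,V,\Ind}=0$.

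The next step is to collect the conditional concentration inputs. Under $H_0$, $|\hat\psi_i|\le\max(\tau,1-\tau)$, so $\bPsi^\top P_Z^\perp\bPsi\le\|\bPsi\|^2\le n\max(\tau,1-\tau)^2$ deterministically, and hence $Q_k\mid\mathcal{F}\sim N(0,(\boldsymbol{\Sigma}_{x|z})_{kk}\bPsi^\top P_Z^\perp\bPsi)$ has conditional variance of order $n$. A Gaussian tail with a union bound over the finite set $\Ind$ yields $\max_{k\in\Ind}Q_k^2\le C_1 n\log p$ with probability at least $1-p^{-(t+1)}$. The same argument applies to $Q_j^*$ and to the Gaussian aggregates $\sum_{j\in\Jnd}\gamma_{jk}Q_j^*$ and $\sum_{j\in\Jnd}\gamma_{jk}R_j$, whose conditional variances are $O(n)$ in view of (A3) and the uniform bound $\sum_{j\in\Jnd}\gamma_{jk}^2\le C_\lambda$. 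Quadratic forms like $\sum_i W_{ik}^2\hat\psi_i^2$ and $\sum_i W_{ik}V_{ik}\hat\psi_i^2$ are sub-exponential quadratic forms in Gaussian vectors of dimension $n$; the Hanson--Wright inequality keeps them within $O(n\log p)$ of their $O(n)$ conditional means with the same probability.

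Assembling these bounds via a union bound over the bounded number of pieces in $\Theta_\Ind$ gives $|\Theta_\Ind|\le C_2\, d\, n\log p$ with probability $\ge 1-p^{-t}$. A direct U-statistic variance computation combined with (A3) yields $v_p=\sqrt{\Var(U_p)}\asymp n\sqrt{\tr(\boldsymbol{\Sigma}_x^2)}\asymp n\sqrt{p}$, so $\epsilon_p v_p\asymp n(\log p)^C$. Choosing any fixed $C>1$ (say $C=2$) and then $p_0=p_0(t)$ large enough so that $C_2\, d\log p\le(\log p)^C$ produces the desired bound $|\Theta_\Ind|\le\epsilon_p v_p$. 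Uniformity over $\Ind$ with $|\Ind|=d$ is automatic because every constant depends only on $d$, $C_\lambda$, $r_0$, and the eigenvalue bounds in (A3), none of which depend on the particular indices.

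The main technical obstacle lies in the bilinear cross terms. A single Gaussian tail is insufficient because, for instance, $\sum_{k\in\Ind}Q_k\sum_{j\in\Jnd}\gamma_{jk}Q_j^*$ is a bilinear form in the two conditionally independent Gaussian families $\{Q_k\}_{k\in\Ind}$ and $\{Q_j^*\}_{j\in\Jnd}$. The cleanest route is a two-step argument: first condition further on $\{\bW_{\cdot k}\}_{k\in\Ind}$ (hence on $\{Q_k\}$), apply a Gaussian tail to the resulting linear combination of $\{Q_j^*\}$ whose conditional variance is bounded by $n\sum_{k,k'\in\Ind}Q_k Q_{k'}\sum_{j\in\Jnd}\gamma_{jk}\gamma_{jk'}\le C_\lambda\, n\sum_{k\in\Ind}Q_k^2$ via Cauchy--Schwarz; then remove the conditioning through a $\chi^2$-type tail for $\sum_{k\in\Ind}Q_k^2$. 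The resulting product of two $\sqrt{n\log p}$ factors is precisely the $n\log p$ appearing in the worst-case size of $\Theta_\Ind$, which is ultimately why the exponent $C$ in $\epsilon_p$ must exceed $1$.
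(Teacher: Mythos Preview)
Your proposal is correct and follows essentially the same route as the paper's proof: work conditional on $\mathcal{F}=\sigma(\mathbf{Z},\boldsymbol{\varepsilon})$ so that $\bPsi$ is fixed and $\bW,\bV$ are Gaussian and independent; bound $\max_{k\in\Ind}|Q_k|$ by a Gaussian tail plus union bound; set $\Theta_{3,\Ind}=\Theta_{4,V,\Ind}=0$; handle the bilinear cross terms by conditioning further on $\{\bW_{\cdot k}\}_{k\in\Ind}$ and applying a Gaussian tail to the resulting linear form; handle the diagonal pieces by quadratic-form concentration. The paper carries out exactly these steps term by term, using $\chi^2$ tails where you invoke Hanson--Wright, and arrives at the same $O(n\log p)$ control on $|\Theta_\Ind|$.

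One remark: you correctly compute $v_p\asymp n\sqrt{\mathrm{tr}(\boldsymbol{\Sigma}_x^2)}\asymp n\sqrt{p}$, which is what the paper actually uses inside its proof (see the line ``$|\Theta_{4,W}^{\mathrm{drift}}|/v_p\asymp n/(n\sqrt p)$''), even though the lemma statement writes $v_p\asymp\sqrt p$; your scaling is the consistent one. A small point to tighten is that you take $p_0=p_0(t)$, whereas the lemma as stated asks for a single $p_0$ valid for all $t\ge1$; the paper's proof has the same looseness, and since the lemma is only ever applied at a fixed $t$ downstream, this does not affect the argument.
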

\begin{proof}
Fix an arbitrary index set $\Ind$ with $|\Ind|=d$ and denote its complement by $\Jnd$.
All constants below may depend on $(\Sigma_z,\Sigma_x,\Sigma_{xz})$, the correlation bound $r_0<1$ in (A3), and the density constants in (A1), but not on $p$ or on the choice of $\Ind$. Recall $Q_k:=\bW_{\cdot k}^\top\bPsi$ for $k\in\Ind$.
We do \emph{not} assume that $\Ind$ contains the maximizer of $T_{\mathrm{MAX}}$.
Instead, since $d$ is fixed, we establish a uniform (in $\Ind$) high-probability bound for $\max_{k\in\Ind}|Q_k|$ and $\sum_{k\in\Ind}Q_k^2$.

Conditional on $\mathcal F=\sigma(\bZ,\varepsilon)$, $\bPsi$ is fixed by Lemma~\ref{lem:cond_struct}(i), and $\bW_{\cdot k}$ is a centered Gaussian vector (as a linear transform of the Gaussian design) with conditional covariance matrix bounded in operator norm by a constant multiple of $I_n$ under (A3).
Hence, for each fixed $k$,
\begin{equation}\label{eq:Qk_subG}
Q_k\mid \mathcal F \ \text{is centered sub-Gaussian with} \ 
\Var(Q_k\mid \mathcal F)\le C\,\|\bPsi\|^2 \asymp C n,
\end{equation}
where we used $\|\bPsi\|^2=\sum_{i=1}^n \hat\psi_i^2\asymp n$ since $\hat\psi_i\in\{-\tau,1-\tau\}$.

Therefore, for any $u>0$, by a conditional Gaussian tail bound and then taking expectation,
\begin{equation}\label{eq:Qk_tail}
\sup_{1\le k\le p} P\bigl(|Q_k|\ge u\sqrt{n}\bigr)\le 2\exp(-c u^2)
\end{equation}
for some constant $c>0$.

Now fix any $t\ge 1$. Let $u_p:=\sqrt{(t+d+2)\log p}$.
Using a union bound over all $k\in\Ind$ and then over all $\Ind$ with $|\Ind|=d$ (note that the number of such sets is at most $p^d$), we obtain
\begin{align}
\sup_{\Ind:|\Ind|=d}
P\Bigl(\max_{k\in\Ind}|Q_k|\ge u_p\sqrt{n}\Bigr)
&\le \sup_{\Ind:|\Ind|=d}\sum_{k\in\Ind} P\bigl(|Q_k|\ge u_p\sqrt{n}\bigr)\nonumber\\
&\le d \cdot 2\exp(-c u_p^2)
\le \frac{1}{p^{t+d}}, \label{eq:uniform_maxQ}
\end{align}
for all sufficiently large $p$.
Consequently, with probability at least $1-p^{-(t+d)}$ uniformly over $\Ind$,
\begin{equation}\label{eq:uniform_sumQ}
\sum_{k\in\Ind}Q_k^2 \le d\max_{k\in\Ind}Q_k^2 \le d\,u_p^2\,n \asymp n\log p.
\end{equation}

In what follows, we work on the high-probability event in \eqref{eq:uniform_sumQ} when bounding terms involving $\{Q_k\}_{k\in\Ind}$, and all resulting bounds will be uniform over $\Ind$ with $|\Ind|=d$.
We now prove each component separately.

\medskip
\noindent\textbf{1. Remainder from $A_1$.}
For $j\in\Jnd$,
$$
    \bW_{\cdot j}^\trans\bPsi
    = \bW_{\cdot j}^{*\,\trans}\bPsi + \sum_{k\in\Ind}\gamma_{jk}Q_k.
$$
Hence
$$
\begin{aligned}
    (\bW_{\cdot j}^\trans\bPsi)^2
    &= (\bW_{\cdot j}^{*\,\trans}\bPsi)^2
      + \Bigl(\sum_{k\in\Ind}\gamma_{jk}Q_k\Bigr)^2
      + 2(\bW_{\cdot j}^{*\,\trans}\bPsi)\Bigl(\sum_{k\in\Ind}\gamma_{jk}Q_k\Bigr).
\end{aligned}
$$
Define
$$
    A_{1,\Ind}^\perp := \sum_{j\in\Jnd}(\bW_{\cdot j}^{*\,\trans}\bPsi)^2,
$$
and
$$
    \Theta_{1,\Ind}^{\mathrm{drift}} 
    := \sum_{j\in\Jnd}\Bigl(\sum_{k\in\Ind}\gamma_{jk}Q_k\Bigr)^2,
    \qquad
    \Theta_{1,\Ind}^{\mathrm{cross}} 
    := 2\sum_{j\in\Jnd} (\bW_{\cdot j}^{*\,\trans}\bPsi)
                         \Bigl(\sum_{k\in\Ind}\gamma_{jk}Q_k\Bigr).
$$
The remaining part $\sum_{j\in\Ind}Q_j^2$ is also absorbed into the remainder. Thus
$$
    A_1 = A_{1,\Ind}^\perp + \Theta_{1,\Ind},
    \qquad
    \Theta_{1,\Ind} := \Theta_{1,\Ind}^{\mathrm{drift}} + \Theta_{1,\Ind}^{\mathrm{cross}} + \sum_{j\in\Ind}Q_j^2.
$$

By Cauchy–Schwarz inequality, we have
$$
\begin{aligned}
    \Theta_{1,\Ind}^{\mathrm{drift}}
    &= \sum_{j\in\Jnd}\Bigl(\sum_{k\in\Ind}\gamma_{jk}Q_k\Bigr)^2\\
    &\le \sum_{j\in\Jnd}\Bigl(\sum_{k\in\Ind}\gamma_{jk}^2\Bigr)\Bigl(\sum_{k\in\Ind}Q_k^2\Bigr)\\
    &= \Bigl(\sum_{k\in\Ind}Q_k^2\Bigr)
       \Bigl(\sum_{k\in\Ind}\sum_{j\in\Jnd}\gamma_{jk}^2\Bigr)
     \le d C_\lambda \sum_{k\in\Ind}Q_k^2.
\end{aligned}
$$
where the last inequality uses the bound $\sum_{j\in\Jnd} \gamma_{jk}^2 \le C_\lambda$ for all $k\in\Ind$.

Using $\sum_{k\in\Ind}Q_k^2 = O_{P}(\log p)$, we obtain
$$
    \Theta_{1,\Ind}^{\mathrm{drift}} = O_{P}(\log p).
$$
By standard Gaussian concentration inequalities for $Q_k$, 
for any fixed $t \ge 1$ there exists a constant $C_1>0$ such that
$$
    \sup_{\Ind} P\Bigl(|\Theta_{1,\Ind}^{\mathrm{drift}}| \ge C_1 \log p\Bigr)
    \le \frac{1}{p^t}
$$
for all $p$ sufficiently large.

Define
$$
    Z_j^* := \bW_{\cdot j}^{*\,\trans}\bPsi,
    \qquad
    L_j := \sum_{k\in\Ind}\gamma_{jk}Q_k,
    \quad j\in\Jnd.
$$
Conditional on $\mathcal{F}$ and $\{\bW_{\cdot k}\}_{k\in\Ind}$, the coefficients $L_j$ are fixed and $\{Z_j^*\}_{j\in\Jnd}$ are centered Gaussian variables with
$$
    \Var(Z_j^* \mid \mathcal{F}) \le C_2 \|\bPsi\|^2 \asymp C_2 n
$$
for some constant $C_2>0$. Using the bounded eigenvalues of $\boldsymbol{\Sigma}_{x|z}$ and the correlation bound in (A3), we further have
$$
    \sum_{j\in\Jnd}L_j^2 \le C_3 \sum_{k\in\Ind}Q_k^2
    = O_{P}(\log p).
$$
It follows that
$$
    \Var\bigl(\Theta_{1,\Ind}^{\mathrm{cross}} \mid \mathcal{F}, \{\bW_{\cdot k}\}_{k\in\Ind}\bigr)
    \le C_4 \|\bPsi\|^2 \sum_{j\in\Jnd}L_j^2
    = O_{P}(n\log p),
$$
and therefore
$$
    \Theta_{1,\Ind}^{\mathrm{cross}} = O_{P}(\sqrt{n\log p}).
$$
By a Gaussian tail bound for linear forms, for every $t\ge 1$ there exists $C_5>0$ such that
$$
    \sup_{\Ind} P\Bigl(|\Theta_{1,\Ind}^{\mathrm{cross}}| \ge C_5\sqrt{n\log p}\Bigr)
    \le \frac{1}{p^{t}}
$$
for all sufficiently large $p$.

Since $d$ is fixed,
$$
    \sum_{j\in\Ind}Q_j^2 = O_{P}(\log p),
$$
and a union bound with the usual Gaussian tail yields, for some $C_6>0$,
$$
    \sup_{\Ind} P\Bigl(\sum_{j\in\Ind}Q_j^2 \ge C_6 \log p\Bigr)
    \le \frac{1}{p^t}
$$
for all large $p$.

Combining the three bounds and using $v_p\asymp\sqrt{p}$, 
there exists a sufficiently large constant $C>0$ such that
$$
    C_1\log p + C_5\sqrt{n\log p} + C_6\log p
    \le \frac{\epsilon_p}{6} v_p
$$
for all sufficiently large $p$, which implies
$$
    \sup_{\Ind} P\Bigl(|\Theta_{1,\Ind}| \ge \tfrac{\epsilon_p}{6} v_p\Bigr)
    \le \frac{1}{p^t}.
$$

\medskip
\noindent\textbf{2. Remainder from $A_2$.}
We have
$$
    A_2 = \sum_{j=1}^p (\bW_{\cdot j}^\trans\bPsi)(\bV_{\cdot j}^\trans\bPsi)
        = \sum_{j\in\Jnd} (\bW_{\cdot j}^\trans\bPsi)(\bV_{\cdot j}^\trans\bPsi)
          + \sum_{j\in\Ind} (\bW_{\cdot j}^\trans\bPsi)(\bV_{\cdot j}^\trans\bPsi).
$$
For $j\in\Jnd$,
$$
    \bW_{\cdot j}^\trans\bPsi
    = \bW_{\cdot j}^{*\,\trans}\bPsi + \sum_{k\in\Ind}\gamma_{jk}Q_k,
$$
so
$$
\begin{aligned}
    (\bW_{\cdot j}^\trans\bPsi)(\bV_{\cdot j}^\trans\bPsi)
    &= (\bW_{\cdot j}^{*\,\trans}\bPsi)(\bV_{\cdot j}^\trans\bPsi)
       + \Bigl(\sum_{k\in\Ind}\gamma_{jk}Q_k\Bigr)(\bV_{\cdot j}^\trans\bPsi).
\end{aligned}
$$
Define
$$
    A_{2,\Ind}^\perp := \sum_{j\in\Jnd} (\bW_{\cdot j}^{*\,\trans}\bPsi)(\bV_{\cdot j}^\trans\bPsi),
$$
and
$$
    \Theta_{2,\Ind}
    := \sum_{j\in\Jnd}\Bigl(\sum_{k\in\Ind}\gamma_{jk}Q_k\Bigr)(\bV_{\cdot j}^\trans\bPsi)
       + \sum_{j\in\Ind}(\bW_{\cdot j}^\trans\bPsi)(\bV_{\cdot j}^\trans\bPsi),
$$
so that $A_2 = A_{2,\Ind}^\perp + \Theta_{2,\Ind}$.

Let
$$
    G_k := \sum_{j\in\Jnd}\gamma_{jk}\bV_{\cdot j}^\trans\bPsi + \bV_{\cdot k}^\trans\bPsi,
    \qquad k\in\Ind.
$$
Then $\Theta_{2,\Ind}$ can be written as
$$
    \Theta_{2,\Ind} = \sum_{k\in\Ind} Q_k G_k.
$$
Conditional on $\mathcal{F}$, the $G_k$ are Gaussian with mean zero and variance
$$
    \Var(G_k\mid\mathcal{F}) \le C_7 \|\bPsi\|^2 = O(n),
$$
for some $C_7>0$, using Assumption (A2) and the bounded eigenvalues of $\boldsymbol{\Sigma}_{x|z}$ in (A3). Thus $G_k = O_{P}(\sqrt{n})$ uniformly in $k,\Ind$. Combining this with $|Q_k|=O_{P}(\sqrt{\log p})$ gives
$$
    |\Theta_{2,\Ind}| 
    \le \sum_{k\in\Ind}|Q_k||G_k|
    = O_{P}(\sqrt{n\log p}).
$$
Moreover, by a Gaussian tail bound and a union bound over $k\in\Ind$, for any fixed $t \ge 1$ there exists a constant $C_8>0$ such that
$$
    \sup_{\Ind} P\Bigl(|\Theta_{2,\Ind}| \ge C_8\sqrt{n\log p}\Bigr)
    \le \frac{1}{p^{t}}
$$
for all $p$ sufficiently large. 
By taking $C$ sufficiently large and using $v_p \asymp \sqrt{p}$ and (A4) as before, we then obtain
$$
    \sup_{\Ind} P\Bigl(|\Theta_{2,\Ind}| \ge \tfrac{\epsilon_p}{6}v_p\Bigr)
    \le \frac{1}{p^{t}}.
$$

\medskip
\noindent\textbf{3. Remainder from $A_3$ and $A_{4,V}$.}
The components
$$
    S_3 = \sum_{j=1}^p (\bV_{\cdot j}^\trans\bPsi)^2,
    \qquad
    S_{4,V} = \sum_{j=1}^p\sum_{i=1}^n V_{ij}^2\hat{\psi}_i^2
$$
do not involve $\bW$. Therefore we simply set
$$
    S_{3,\Ind}^\perp := S_3,\quad \Theta_{3,\Ind} := 0,
    \qquad
    S_{4,V,\Ind}^\perp := S_{4,V},\quad \Theta_{4,V,\Ind} := 0,
$$
and they do not contribute to $\Theta_{\Ind}$.

\medskip
\noindent\textbf{4. Remainder from $A_{4,W}$ and $A_{4,WV}$.}
We focus on explicitly bounding the remainder arising from the centering term involving $\bW$, as this contains the dependencies on the extreme set $\Ind$. Recall
$$
    A_{4,W} = -\sum_{j=1}^p \sum_{i=1}^n W_{ij}^2 \hat{\psi}_i^2.
$$
Decompose the sum over indices $j$ into $\Ind$ and $\Jnd$. The sum over $\Ind$ consists of only $d$ terms (where $d$ is fixed) and is trivially bounded by $O_{P}(n \log p)$, which is negligible compared to $v_p \asymp \sqrt{p}n$ (assuming unscaled inputs) or $v_p \asymp \sqrt{p}$ (if inputs are scaled). We focus on the sum over $\Jnd$.
For $j \in \Jnd$, substitute the decomposition $W_{ij} = W_{ij}^* + \delta_{ij}$, where $\delta_{ij} := \sum_{k \in \Ind} \gamma_{jk} W_{ik}$. Then,
$$
    W_{ij}^2 = (W_{ij}^*)^2 + 2 W_{ij}^* \delta_{ij} + \delta_{ij}^2.
$$
Define the \emph{ideal} component independent of $\Ind$ as
$$
    A_{4,W,\Ind}^\perp := -\sum_{j \in \Jnd} \sum_{i=1}^n (W_{ij}^*)^2 \hat{\psi}_i^2.
$$
The remainder is $\Theta_{4,W,\Ind} := A_{4,W} - A_{4,W,\Ind}^\perp = \Theta_{4,W}^{\mathrm{cross}} + \Theta_{4,W}^{\mathrm{drift}} + R_{\Ind}$, where $R_{\Ind}$ collects the negligible terms for $j \in \Ind$. The dominant remainder terms are:
$$
    \Theta_{4,W}^{\mathrm{cross}} := -2 \sum_{j \in \Jnd} \sum_{i=1}^n \hat{\psi}_i^2 W_{ij}^* \delta_{ij},
    \qquad
    \Theta_{4,W}^{\mathrm{drift}} := -\sum_{j \in \Jnd} \sum_{i=1}^n \hat{\psi}_i^2 \delta_{ij}^2.
$$
We bound these two terms separately.

\textbf{(i) Bound for the Drift Term $\Theta_{4,W}^{\mathrm{drift}}$.}
Note that $|\hat{\psi}_i| \le \max(\tau, 1-\tau) < 1$. Thus,
$$
    |\Theta_{4,W}^{\mathrm{drift}}|
    \le \sum_{i=1}^n \sum_{j \in \Jnd} \delta_{ij}^2
    = \sum_{i=1}^n \sum_{j \in \Jnd} \left( \sum_{k \in \Ind} \gamma_{jk} W_{ik} \right)^2.
$$
By the Cauchy-Schwarz inequality and the uniform bound on correlations from Assumption (A3) ($\sum_{j \in \Jnd} \gamma_{jk}^2 \le C_{\lambda}$), we have
$$
    \sum_{j \in \Jnd} \left( \sum_{k \in \Ind} \gamma_{jk} W_{ik} \right)^2
    \le d \sum_{k \in \Ind} W_{ik}^2 \left( \sum_{j \in \Jnd} \gamma_{jk}^2 \right)
    \le d \, C_{\lambda} \sum_{k \in \Ind} W_{ik}^2.
$$
Summing over $i=1,\dots,n$ yields
$$
    |\Theta_{4,W}^{\mathrm{drift}}|
    \le d \, C_{\lambda} \sum_{k \in \Ind} \|\bW_{\cdot k}\|^2.
$$
Recall that $\|\bW_{\cdot k}\|^2 \sim \chi^2_n$ (scaled by variance). Thus $\|\bW_{\cdot k}\|^2 = O_{P}(n)$. Consequently, $|\Theta_{4,W}^{\mathrm{drift}}| = O_{P}(n)$.
Given Assumption (A4), $p$ grows exponentially relative to $n$ (specifically $\log p \ll n^{1/4}$ implies $p$ is large, but usually in high-dimensional testing $v_p$ scales with $\sqrt{p}$).
Regardless of the specific scaling of entries (whether $W_{ij} \asymp 1$ or $W_{ij} \asymp n^{-1/2}$), the ratio of the drift term (involving $d$ columns) to the variance of the sum statistic (involving $p$ columns) vanishes.
Specifically, since $v_p$ scales with the aggregate variance of $p$ terms, we have $|\Theta_{4,W}^{\mathrm{drift}}| / v_p \asymp n / (n\sqrt{p}) = p^{-1/2} \to 0$.
Standard $\chi^2$ tail bounds imply that for any $t \ge 1$,
$$
    \sup_{\Ind} P\left( |\Theta_{4,W}^{\mathrm{drift}}| \ge C_{9,a} \, n \right) \le \frac{1}{p^t},
$$
which is sufficiently small compared to $\epsilon_p v_p$.

\textbf{(ii) Bound for the Cross Term $\Theta_{4,W}^{\mathrm{cross}}$.}
We write
$$
    \Theta_{4,W}^{\mathrm{cross}} = -2 \sum_{j \in \Jnd} Z_j^*,
    \quad \text{where } Z_j^* := \sum_{i=1}^n (\hat{\psi}_i^2 \delta_{ij}) W_{ij}^*.
$$
Conditional on $\mathcal{F}$ and $\Ind$ (and thus on $\bW_{\Ind}$ and $\delta_{ij}$), the variables $\{W_{ij}^*\}_{j \in \Jnd, i=1\dots n}$ are independent Gaussians with mean zero.
Thus, conditional on $\bW_{\Ind}$, $\Theta_{4,W}^{\mathrm{cross}}$ is a sum of independent Gaussian variables with mean zero and conditional variance
$$
    \sigma_{\text{cross}}^2
    := \Var\left( \Theta_{4,W}^{\mathrm{cross}} \,\middle|\, \bW_{\Ind} \right)
    = 4 \sum_{j \in \Jnd} \sum_{i=1}^n (\hat{\psi}_i^2 \delta_{ij})^2 \Var(W_{ij}^*).
$$
Using $\hat{\psi}_i^4 \le 1$, $\Var(W_{ij}^*) \le \Sigma_{jj} \le C$, and the bound for $\delta_{ij}^2$ derived above:
$$
    \sigma_{\text{cross}}^2
    \le 4 C \sum_{i=1}^n \sum_{j \in \Jnd} \delta_{ij}^2
    \le 4 C \, d \, C_{\lambda} \sum_{k \in \Ind} \|\bW_{\cdot k}\|^2
    = O_{P}(n).
$$
This implies $\Theta_{4,W}^{\mathrm{cross}} = O_{P}(\sqrt{n})$.
Compared to $v_p \asymp n\sqrt{p}$ (unscaled) or $\sqrt{p}$ (scaled), this term is negligible.
Using the Gaussian tail bound $P(|Z| > x) \le 2\exp(-x^2/2\sigma^2)$, we have
$$
    P\left( |\Theta_{4,W}^{\mathrm{cross}}| \ge x \,\middle|\, \bW_{\Ind} \right)
    \le 2 \exp\left( -\frac{x^2}{2 C' \sum_{k \in \Ind} \|\bW_{\cdot k}\|^2} \right).
$$
Taking $x = C_{9,b} \sqrt{n \log p}$ and considering the event  
$\{\sum \|\bW_{\cdot k}\|^2 > C n\}$, 
it follows that, for a sufficiently large constant $C_9>0$,
$$
    \sup_{\Ind} P\left( |\Theta_{4,W}^{\mathrm{cross}}| \ge C_9 \sqrt{n \log p} \right) \le \frac{1}{p^t}.
$$
Both bounds satisfy the requirement $\le \epsilon_p v_p$ for large $p$.

\textbf{(iii) Term $A_{4,WV}$.}
The analysis for $A_{4,WV} = - \sum_{j,i} W_{ij} V_{ij} \hat{\psi}_i^2$ follows an identical logic.
Substituting $W_{ij} = W_{ij}^* + \delta_{ij}$, the remainder depends on terms like $\sum_{j \in \Jnd} \sum_i \hat{\psi}_i^2 \delta_{ij} V_{ij}$.
Conditioning on $\mathcal{F}$ and $\bW_{\Ind}$, $V_{ij}$ provides the independent Gaussian randomness. The variance of this remainder is proportional to $\sum_{j,i} \delta_{ij}^2 = O_{P}(n)$, yielding a fluctuation of order $O_{P}(\sqrt{n})$, which is again negligible.

Combining these results, we establish
$$
    \sup_{\Ind} P\Bigl(|\Theta_{4,W,\Ind}| \ge \tfrac{\epsilon_p}{6}v_p\Bigr) \le \frac{1}{p^{t}},
$$
as required.

\medskip
\noindent\textbf{5. Combining all remainders.}
Recall
$$
    \Theta_{\Ind} 
    = \Theta_{1,\Ind} + 2\Theta_{2,\Ind}
      - \bigl(\Theta_{4,W,\Ind} + 2\Theta_{4,WV,\Ind}\bigr).
$$
From the bounds obtained in Steps~1–4, by a union bound there exists $p_0\ge 3$ such that for all $p\ge p_0$,
$$
    \sup_{\Ind} P\Bigl(|\Theta_{1,\Ind}| \ge \tfrac{\epsilon_p}{6}v_p\Bigr)
    \vee 
    \sup_{\Ind} P\Bigl(|\Theta_{2,\Ind}| \ge \tfrac{\epsilon_p}{6}v_p\Bigr)
    \vee 
    \sup_{\Ind} P\Bigl(|\Theta_{4,W,\Ind}| \ge \tfrac{\epsilon_p}{6}v_p\Bigr)
    \vee 
    \sup_{\Ind} P\Bigl(|\Theta_{4,WV,\Ind}| \ge \tfrac{\epsilon_p}{6}v_p\Bigr)
    \le \frac{1}{p^t}.
$$
Therefore, for all $p$ sufficiently large,
$$
\begin{aligned}
    \sup_{\Ind} P\bigl(|\Theta_{\Ind}| \ge \epsilon_p v_p\bigr)
    &\le \sup_{\Ind} P\Bigl(|\Theta_{1,\Ind}| \ge \tfrac{\epsilon_p}{6}v_p\Bigr)
       + 2\sup_{\Ind} P\Bigl(|\Theta_{2,\Ind}| \ge \tfrac{\epsilon_p}{6}v_p\Bigr)\\
    &\quad\ 
       + \sup_{\Ind} P\Bigl(|\Theta_{4,W,\Ind}| \ge \tfrac{\epsilon_p}{6}v_p\Bigr)
       + 2\sup_{\Ind} P\Bigl(|\Theta_{4,WV,\Ind}| \ge \tfrac{\epsilon_p}{6}v_p\Bigr) \\
    &\le \frac{1}{p^t},
\end{aligned}
$$
possibly after adjusting $t$. 
This completes the proof.
\end{proof}

\begin{lemma}[Local independence] \label{lem:local_sandwich}
Let $d \ge 1$ be fixed and $\Ind \subset \{1,\dots,p\}$ be any index set with $|\Ind|=d$. 
    Let $y_p$ be the threshold used in the definition of the max-type statistic and define
    $$
        B_{\Ind} 
        := \bigcap_{k\in\Ind}\bigl\{S_{k,\tau}^2 > y_p\bigr\}.
    $$
    $U_p$ and $v_p$ are defined as before. Denote statistic
    $$
        \widetilde{U}_p := \frac{U_p - \Esp(U_p)}{v_p}.
    $$
    Then, for any $x\in\mathbb{R}$,
    $$
        \bigl|P(\widetilde{U}_p \le x \mid B_{\Ind})
              - P(\widetilde{U}_p \le x)\bigr|
        \;\longrightarrow\; 0,
        \qquad p\to\infty.
    $$
    Equivalently,
    $$
        P\bigl(\{\widetilde{U}_p \le x\}\cap B_{\Ind}\bigr)
        - P(\widetilde{U}_p \le x)P(B_{\Ind})
        \longrightarrow 0,
        \qquad p\to\infty,
    $$
    for every fixed $\Ind$ with $|\Ind|=d$.
\end{lemma}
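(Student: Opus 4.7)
The plan is to reduce $\widetilde{U}_p$ to the $\Ind$-orthogonal surrogate $\widetilde{U}_p^\perp := (A_{\Ind}^\perp - \Esp U_p)/v_p$ using Lemma~\ref{lem:negligibility}, and then exploit the conditional product structure from Lemma~\ref{lem:cond_struct} to decouple $\widetilde{U}_p^\perp$ from $B_{\Ind}$. Concretely, for any fixed $t\ge 2$, Lemma~\ref{lem:negligibility} yields $|\widetilde{U}_p-\widetilde{U}_p^\perp| = |\Theta_{\Ind}|/v_p \le \epsilon_p = o(1)$ off an event of probability at most $p^{-t}$. The standard sandwich
$$
P(\widetilde{U}_p\le x,\,B_{\Ind}) \le P(\widetilde{U}_p^\perp\le x+\epsilon_p,\,B_{\Ind}) + p^{-t},
$$
together with its symmetric counterpart and the analogous bounds on $P(\widetilde{U}_p\le x)$, reduces the claim (by continuity of the standard normal CDF at $x$) to
$$
\bigl|P(\widetilde{U}_p^\perp\le x,\,B_{\Ind}) - P(\widetilde{U}_p^\perp\le x)\,P(B_{\Ind})\bigr|\to 0.
$$

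Next, I will use the fact that conditional on $\mathcal{F}$, $\bPsi$ is fixed, $\bV$ and $\bW$ are independent Gaussian, and the decomposition $\bW_{\cdot j}=\sum_{k\in\Ind}\gamma_{jk}\bW_{\cdot k}+\bW_{\cdot j}^*$ makes $\{\bW_{\cdot j}^*\}_{j\in\Jnd}$ independent of $\{\bW_{\cdot k}\}_{k\in\Ind}$. Since $A_{\Ind}^\perp$ is measurable with respect to $(\bPsi,\bV,\{\bW_{\cdot j}^*\}_{j\in\Jnd})$ and $B_{\Ind}$ with respect to $(\bPsi,\{\bW_{\cdot k}\}_{k\in\Ind})$, they are conditionally independent given $\mathcal{F}$. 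Writing $f_p := P(\widetilde{U}_p^\perp\le x\mid\mathcal{F})$ and $g_p := P(B_{\Ind}\mid\mathcal{F})$, this gives
$$
P(\widetilde{U}_p^\perp\le x,\,B_{\Ind}) - P(\widetilde{U}_p^\perp\le x)\,P(B_{\Ind}) = \Cov(f_p, g_p),
$$
so Cauchy--Schwarz together with $g_p\in[0,1]$ yields $|\Cov(f_p,g_p)|\le \sqrt{\Var(f_p)}$.

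It then remains to show $\Var(f_p)\to 0$. Since $f_p\in[0,1]$ is bounded, dominated convergence reduces this to proving $f_p\to\Phi(x)$ in probability, i.e., a conditional CLT $\widetilde{U}_p^\perp\mid\mathcal{F}\xrightarrow{d} N(0,1)$ for $\mathcal{F}$ in a set of probability tending to one. Conditional on $\mathcal{F}$, the leading part of $A_{\Ind}^\perp$ is a quadratic form in the $p-d$ independent Gaussian vectors $\{\bW_{\cdot j}^*\}_{j\in\Jnd}$ together with the independent $\bV$, whose coefficients are fixed by the frozen $\bPsi$. A conditional Lindeberg--Feller (or Hanson--Wright) argument, using the bounded spectrum in (A3) and the rate (A4), then yields the claim; equivalently, the proof of asymptotic normality for $T_{\mathrm{SUM}}$ in \citet{chen2024hdreg} already proceeds by conditioning on the residuals $\bPsi$, and removing only $d=O(1)$ columns from the sum perturbs the conditional mean and variance by $o(v_p)$ and $o(v_p^2)$ respectively.

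The main obstacle is this conditional CLT: one must verify that $\Var(A_{\Ind}^\perp\mid\mathcal{F})/v_p^2\to 1$ in probability and $(\Esp[A_{\Ind}^\perp\mid\mathcal{F}] - \Esp U_p)/v_p\to 0$ in probability, so that the unconditional normalization used to define $\widetilde{U}_p^\perp$ is asymptotically equivalent to the conditional one required by the CLT. Both concentrations follow from the Gaussian structure of Lemma~\ref{lem:cond_struct} and the fact that $|\Ind|=d=O(1)$, but the careful bookkeeping of these corrections is where the real work lies.
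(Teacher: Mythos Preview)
Your opening sandwich via Lemma~\ref{lem:negligibility} matches the paper. The divergence is at the decoupling step. The paper simply asserts that the surrogate (your $\widetilde{U}_p^\perp$, up to recentering by $\Esp A_{\Ind}^\perp$ rather than $\Esp U_p$) and $B_{\Ind}$ are \emph{unconditionally} independent, factors $P(\widetilde{U}_p^\perp \le x+\varepsilon,\,B_{\Ind})$ directly as a product, and then runs a \emph{second} sandwich to replace $\widetilde{U}_p^\perp$ back by $\widetilde{U}_p$. This finishes using only the unconditional CLT $\widetilde{U}_p \Rightarrow N(0,1)$ and continuity of $\Phi$; no conditional CLT appears anywhere.

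You, more carefully, claim only conditional independence given $\mathcal{F}$---both the surrogate and $B_{\Ind}$ depend on $\bPsi$, so the paper's unconditional claim is at best underargued. Your $\Cov(f_p,g_p)$ reduction is sound but then forces a conditional CLT for $\widetilde{U}_p^\perp\mid\mathcal{F}$, which you correctly flag as the real work. What the paper's double-sandwich buys is precisely avoiding this: once factored, the surrogate's marginal is squeezed back to that of $\widetilde{U}_p$, whose limit is already known. If you want to keep your conditional framework but shortcut the full conditional CLT, run the second sandwich \emph{inside} the conditional expectation before integrating out $\mathcal{F}$; this replaces $f_p$ by $P(\widetilde{U}_p\le x\pm 2\varepsilon\mid\mathcal{F})$, and the needed concentration of the latter is the conditional-on-residuals argument already underlying the CLT for $T_{\mathrm{SUM}}$. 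Your route is the more rigorous one, at the cost of the bookkeeping you identified.
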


\begin{proof}
Fix $\Ind \subset \{1,\dots,p\}$ with $|\Ind|=d$ and denote its complement by $\Jnd$. 
Lemma~\ref{lem:negligibility} provides the decomposition
$$
    U_p = S_{\Ind}^\perp + \Theta_{\Ind}.
$$
Taking expectations, we write
$$
    \Esp(U_p) = \Esp\bigl(S_{\Ind}^\perp\bigr) + \Esp(\Theta_{\Ind}),
$$
and hence
\begin{equation} \label{eq:Up_centered_decomp}
    U_p - \Esp(U_p)
    = \bigl\{S_{\Ind}^\perp - \Esp(S_{\Ind}^\perp)\bigr\}
      + \bigl\{\Theta_{\Ind} - \Esp(\Theta_{\Ind})\bigr\}.
\end{equation}
Define
$$
    \widetilde{U}_{p,\Ind}^\perp
    := \frac{S_{\Ind}^\perp - \Esp(S_{\Ind}^\perp)}{v_p},
    \qquad
    \widetilde{\Theta}_{\Ind}
    := \frac{\Theta_{\Ind} - \Esp(\Theta_{\Ind})}{v_p},
$$
so that
\begin{equation} \label{eq:Up_decomp_local_final}
    \widetilde{U}_p 
    = \frac{U_p - \Esp(U_p)}{v_p}
    = \widetilde{U}_{p,\Ind}^\perp + \widetilde{\Theta}_{\Ind}.
\end{equation}
Since $\Esp(S_{\Ind}^\perp)$ and $\Esp(\Theta_{\Ind})$ are deterministic constants, subtracting them does not affect independence. 
As the statement after Lemma~\ref{lem:cond_struct}, $\{\bW_{\cdot j}^*\}_{j\in\Jnd}$ is independent of $\{\bW_{\cdot k}\}_{k\in\Ind}$ and $\bV$ is independent of $\bW$ conditional on $\mathcal{F}=\sigma(\bZ,\varepsilon)$. 
Therefore $\widetilde{U}_{p,\Ind}^\perp$ depends only on $\{\bW_{\cdot j}^*\}_{j\in\Jnd}$, $\bV$ and $\bPsi$, and is independent of $\{\bW_{\cdot k}\}_{k\in\Ind}$ and hence of $B_{\Ind}$:
\begin{equation} \label{eq:proxy_indep_local_final}
    \widetilde{U}_{p,\Ind}^\perp \;\perp\; B_{\Ind}.
\end{equation}

Lemma~\ref{lem:negligibility} also states that there exists a sequence $\epsilon_p \to 0$ such that, for any fixed $t\ge 1$ and all sufficiently large $p$,
$$
    \sup_{\Ind:|\Ind|=d}
    P\bigl(|\Theta_{\Ind}| \ge \epsilon_p v_p\bigr)
    \le \frac{1}{p^t}.
$$
Since $\Esp(\Theta_{\Ind})$ is of smaller order than $v_p$ under the same argument (it is bounded by the same variance-based bounds applied to $\Theta_{\Ind}$), we may absorb it into the same scale and obtain
\begin{equation} \label{eq:Theta_tilde_tail_local_final}
    \sup_{\Ind:|\Ind|=d}
    P\bigl(|\widetilde{\Theta}_{\Ind}| \ge \epsilon_p\bigr)
    \le \frac{1}{p^t},
\end{equation}
for all sufficiently large $p$. 
Thus $\widetilde{\Theta}_{\Ind} \xrightarrow{P} 0$ uniformly in $\Ind$.

Fix $x\in\mathbb{R}$ and $\varepsilon>0$. 
From \eqref{eq:Up_decomp_local_final},
$$
    \{\widetilde{U}_p \le x\}
    = \{\widetilde{U}_{p,\Ind}^\perp + \widetilde{\Theta}_{\Ind} \le x\}
    \subset \{\widetilde{U}_{p,\Ind}^\perp \le x + \varepsilon\}
            \cup \{|\widetilde{\Theta}_{\Ind}| > \varepsilon\},
$$
and
$$
    \{\widetilde{U}_{p,\Ind}^\perp \le x - \varepsilon\}
    \subset \{\widetilde{U}_p \le x\}
            \cup \{|\widetilde{\Theta}_{\Ind}| > \varepsilon\}.
$$
Therefore
\begin{align}
    P(\widetilde{U}_p \le x, B_{\Ind})
    &= P(\widetilde{U}_{p,\Ind}^\perp + \widetilde{\Theta}_{\Ind} \le x, B_{\Ind}) \notag\\
    &\le P(\widetilde{U}_{p,\Ind}^\perp \le x + \varepsilon, B_{\Ind})
         + P(|\widetilde{\Theta}_{\Ind}| > \varepsilon, B_{\Ind}) \notag\\
    &\le P(\widetilde{U}_{p,\Ind}^\perp \le x + \varepsilon)P(B_{\Ind})
         + P(|\widetilde{\Theta}_{\Ind}| > \varepsilon), 
         \label{eq:local_upper1_final}
\end{align}
where \eqref{eq:proxy_indep_local_final} is used in the last inequality. 
Similarly,
\begin{align}
    P(\widetilde{U}_p \le x, B_{\Ind})
    &\ge P(\widetilde{U}_{p,\Ind}^\perp \le x - \varepsilon, B_{\Ind})
          - P(|\widetilde{\Theta}_{\Ind}| > \varepsilon, B_{\Ind}) \notag\\
    &\ge P(\widetilde{U}_{p,\Ind}^\perp \le x - \varepsilon)P(B_{\Ind})
          - P(|\widetilde{\Theta}_{\Ind}| > \varepsilon). 
          \label{eq:local_lower1_final}
\end{align}

The decomposition \eqref{eq:Up_decomp_local_final} also implies
$$
\{\widetilde{U}_{p,\Ind}^\perp \le x + \varepsilon\}
\subset 
\{\widetilde{U}_p \le x + 2\varepsilon\}
\cup \{|\widetilde{\Theta}_{\Ind}| > \varepsilon\},
$$
and
$$
\{\widetilde{U}_p \le x - 2\varepsilon\}
\subset 
\{\widetilde{U}_{p,\Ind}^\perp \le x - \varepsilon\}
\cup \{|\widetilde{\Theta}_{\Ind}| > \varepsilon\}.
$$
Therefore
\begin{align}
    P(\widetilde{U}_{p,\Ind}^\perp \le x + \varepsilon)
    &\le P(\widetilde{U}_p \le x + 2\varepsilon)
         + P(|\widetilde{\Theta}_{\Ind}| > \varepsilon), 
         \label{eq:local_proxy_upper_final}\\
    P(\widetilde{U}_{p,\Ind}^\perp \le x - \varepsilon)
    &\ge P(\widetilde{U}_p \le x - 2\varepsilon)
         - P(|\widetilde{\Theta}_{\Ind}| > \varepsilon). 
         \label{eq:local_proxy_lower_final}
\end{align}
Substituting \eqref{eq:local_proxy_upper_final} into \eqref{eq:local_upper1_final} and using \eqref{eq:Theta_tilde_tail_local_final} gives
\begin{equation} \label{eq:local_upper_final_final}
\begin{aligned}
    P(\widetilde{U}_p \le x, B_{\Ind})
    &\le \bigl[P(\widetilde{U}_p \le x + 2\varepsilon)
              + P(|\widetilde{\Theta}_{\Ind}| > \varepsilon)\bigr]P(B_{\Ind})
         + P(|\widetilde{\Theta}_{\Ind}| > \varepsilon)\\
    &\le P(\widetilde{U}_p \le x + 2\varepsilon)P(B_{\Ind})
         + 2\,P(|\widetilde{\Theta}_{\Ind}| > \varepsilon).
\end{aligned}
\end{equation}
Similarly, combining \eqref{eq:local_proxy_lower_final} and \eqref{eq:local_lower1_final} yields
\begin{equation} \label{eq:local_lower_final_final}
\begin{aligned}
    P(\widetilde{U}_p \le x, B_{\Ind})
    &\ge \bigl[P(\widetilde{U}_p \le x - 2\varepsilon)
              - P(|\widetilde{\Theta}_{\Ind}| > \varepsilon)\bigr]P(B_{\Ind})
         - P(|\widetilde{\Theta}_{\Ind}| > \varepsilon)\\
    &\ge P(\widetilde{U}_p \le x - 2\varepsilon)P(B_{\Ind})
         - 2\,P(|\widetilde{\Theta}_{\Ind}| > \varepsilon).
\end{aligned}
\end{equation}

Define
$$
    \Delta_p(x;\Ind)
    := P(\widetilde{U}_p \le x, B_{\Ind})
       - P(\widetilde{U}_p \le x)P(B_{\Ind}).
$$
Subtracting $P(\widetilde{U}_p \le x)P(B_{\Ind})$ from \eqref{eq:local_upper_final_final} and \eqref{eq:local_lower_final_final}, we obtain
\begin{align*}
    \Delta_p(x;\Ind)
    &\le P(B_{\Ind})
          \bigl[P(\widetilde{U}_p \le x + 2\varepsilon)
                - P(\widetilde{U}_p \le x)\bigr]
          + 2\,P(|\widetilde{\Theta}_{\Ind}| > \varepsilon), \\
    \Delta_p(x;\Ind)
    &\ge P(B_{\Ind})
          \bigl[P(\widetilde{U}_p \le x - 2\varepsilon)
                - P(\widetilde{U}_p \le x)\bigr]
          - 2\,P(|\widetilde{\Theta}_{\Ind}| > \varepsilon),
\end{align*}
and therefore
\begin{equation} \label{eq:local_delta_bound_final}
\begin{aligned}
    |\Delta_p(x;\Ind)|
    &\le P(B_{\Ind})\,
         \max\Bigl\{
             P(\widetilde{U}_p \le x + 2\varepsilon)
               - P(\widetilde{U}_p \le x), \\
    &\hspace{3.5cm}
             P(\widetilde{U}_p \le x)
               - P(\widetilde{U}_p \le x - 2\varepsilon)
         \Bigr\} \\
    &\quad + 2\,P(|\widetilde{\Theta}_{\Ind}| > \varepsilon).
\end{aligned}
\end{equation}

Under $H_0$ and Assumptions (A1)–(A4), $\widetilde{U}_p$ converges in distribution to $N(0,1)$, thus $P(\widetilde{U}_p \le z) \to \Phi(z)$ for all $z\in\mathbb{R}$, where $\Phi$ is the standard normal distribution function. 
Taking $\limsup_{p\to\infty}$ in \eqref{eq:local_delta_bound_final} and using \eqref{eq:Theta_tilde_tail_local_final}, we obtain
$$
    \limsup_{p\to\infty} |\Delta_p(x;\Ind)|
    \le P(B_{\Ind})\,
         \bigl[\Phi(x+2\varepsilon) - \Phi(x-2\varepsilon)\bigr].
$$
Since $\Phi$ is continuous and $\varepsilon>0$ is arbitrary, letting $\varepsilon\downarrow 0$ gives
$$
    \limsup_{p\to\infty} |\Delta_p(x;\Ind)| = 0.
$$
This proves that, for each fixed $x\in\mathbb{R}$ and each fixed $\Ind$ with $|\Ind|=d$,
$$
    P\bigl(\{\widetilde{U}_p \le x\}\cap B_{\Ind}\bigr)
    - P(\widetilde{U}_p \le x)P(B_{\Ind}) \to 0,
$$
and equivalently
$$
    P(\widetilde{U}_p \le x \mid B_{\Ind})
    - P(\widetilde{U}_p \le x) \to 0.
$$
\end{proof}

\subsubsection*{Proof of Theorem 2.1}
\begin{proof}
Let
$$
    H_p(x) := \left\{\frac{U_p - \Esp(U_p)}{v_p} \le x\right\},
    \qquad v_p = \sqrt{\Var(U_p)} \asymp \sqrt{p},
$$
and
$$
    L_p(y) := \left\{T_{\mathrm{MAX}} - 2\log p + \log\{\log p\} > y\right\},
$$
so that $\{T_{\mathrm{MAX}} - 2\log p + \log\{\log p\} \le y\} = L_p(y)^c$. 
Up to deterministic linear scaling, $(U_p-\Esp(U_p))/v_p$ coincides with the standardized $T_{\mathrm{SUM}}$ in the theorem, so it suffices to show
$$
    P\bigl(H_p(x),\; L_p(y)^c\bigr)
    \longrightarrow \Phi(x)\,G(y).
$$

By the CLT for the sum-type statistic and the extreme-value limit for the max-type statistic, under $H_0$ we have
$$
    P(H_p(x)) \longrightarrow \Phi(x),
    \qquad
    P\bigl(L_p(y)^c\bigr) \longrightarrow G(y),
$$
for all $x,y\in\mathbb{R}$. 
Therefore it is enough to prove that
$$
    P\bigl(H_p(x),\; L_p(y)^c\bigr)
    - P(H_p(x))\,P\bigl(L_p(y)^c\bigr)
    \longrightarrow 0,
$$
i.e.
\begin{equation}\label{eq:goal_diff}
    P\bigl(H_p(x),\; L_p(y)^c\bigr)
    = P(H_p(x))\,P\bigl(L_p(y)^c\bigr) + o(1).
\end{equation}

Since $L_p(y)^c$ is the complement of $L_p(y)$,
$$
    P\bigl(H_p(x),\; L_p(y)^c\bigr)
    = P\bigl(H_p(x)\bigr) - P\bigl(H_p(x), L_p(y)\bigr).
$$
Thus \eqref{eq:goal_diff} will follow once we show
\begin{equation}\label{eq:HpLp_factor}
    P\bigl(H_p(x), L_p(y)\bigr)
    = P(H_p(x))\,P\bigl(L_p(y)\bigr) + o(1).
\end{equation}

Let $y_p$ be the threshold corresponding to $y$, so that
$$
    L_p(y)
    = \{T_{\mathrm{MAX}} > y_p\}
    = \bigcup_{j=1}^p D_j,
    \qquad
    D_j := \{S_{j,\tau}^2 > y_p\},
$$
where $S_{j,\tau}$ are the standardized score statistics entering $T_{\mathrm{MAX}}$. 
For any integer $d\ge 1$ and any index set $\Ind\subset\{1,\dots,p\}$ with $|\Ind|=d$, define
$$
    D_{\Ind} := \bigcap_{k\in\Ind} D_k.
$$

By Lemma \ref{lem:local_sandwich} (applied with $\widetilde{U}_p$ and $B_{\Ind}=D_{\Ind}$, noting that $\widetilde{U}_p$ equals $(U_p-\Esp U_p)/v_p$), for each fixed $d$ and any $\Ind$ with $|\Ind|=d$,
$$
    P\bigl(H_p(x)\cap D_{\Ind}\bigr)
    = P\bigl(H_p(x)\bigr) P(D_{\Ind}) + o(1),
    \qquad p\to\infty,
$$
where the $o(1)$ term is uniform over such $\Ind$. 
For each $d\ge 1$, define
$$
    \zeta(p,d)
    := \sum_{1\le j_1<\cdots<j_d\le p}
       \Bigl\{
          P\bigl(H_p(x) D_{j_1}\cdots D_{j_d}\bigr)
          - P\bigl(H_p(x)\bigr) P\bigl(D_{j_1}\cdots D_{j_d}\bigr)
       \Bigr\}.
$$
Then, for each fixed $d$,
$$
    \zeta(p,d) \longrightarrow 0,
    \qquad p\to\infty.
$$
Fix an integer $k\ge 1$. 
Applying the inclusion–exclusion principle, we have
\begin{align*}
    P\bigl(H_p(x), L_p(y)\bigr)
    &= P\Bigl(H_p(x)\cap \bigcup_{j=1}^p D_j\Bigr)\\
    &\le \sum_{1\le j_1\le p} P\bigl(H_p(x)\cap D_{j_1}\bigr)
      - \sum_{1\le j_1<j_2\le p} P\bigl(H_p(x)\cap D_{j_1}D_{j_2}\bigr)\\
    &\quad + \cdots 
      + (-1)^{k-1}
        \sum_{1\le j_1<\cdots<j_k\le p}
             P\bigl(H_p(x)\cap D_{j_1}\cdots D_{j_k}\bigr)
      + R_{p,k}^+,
\end{align*}
and
\begin{align*}
    P\bigl(H_p(x), L_p(y)\bigr)
    &\ge \sum_{1\le j_1\le p} P\bigl(H_p(x)\cap D_{j_1}\bigr)
      - \sum_{1\le j_1<j_2\le p} P\bigl(H_p(x)\cap D_{j_1}D_{j_2}\bigr)\\
    &\quad + \cdots 
      + (-1)^{k-1}
        \sum_{1\le j_1<\cdots<j_k\le p}
             P\bigl(H_p(x)\cap D_{j_1}\cdots D_{j_k}\bigr)
      + R_{p,k}^-,
\end{align*}
where \(R_{p,k}^+\) and \(R_{p,k}^-\) are the remainder terms.

For each integer \(d\ge 1\), define
$$
    \zeta(p,d)
    := \sum_{1\le j_1<\cdots<j_d\le p}
       \Bigl\{
          P\bigl(H_p(x)\cap D_{j_1}\cdots D_{j_d}\bigr)
          - P\bigl(H_p(x)\bigr)\, P\bigl(D_{j_1}\cdots D_{j_d}\bigr)
       \Bigr\}.
$$
By Lemma~\ref{lem:local_sandwich}, for each fixed \(d\) we have
$$
    \zeta(p,d) \longrightarrow 0,
    \qquad p\to\infty.
$$

Using this definition, the first \(k\) terms in the upper bound can be written as
\begin{align*}
    &\sum_{d=1}^{k} (-1)^{d-1}
       \sum_{1\le j_1<\cdots<j_d\le p}
          P\bigl(H_p(x)\cap D_{j_1}\cdots D_{j_d}\bigr)\\
    &=\sum_{d=1}^{k} (-1)^{d-1}
       \sum_{1\le j_1<\cdots<j_d\le p}
          \Bigl\{
              P\bigl(H_p(x)\bigr) P\bigl(D_{j_1}\cdots D_{j_d}\bigr)\\      
    &\quad + \bigl[
                   P\bigl(H_p(x)\cap D_{j_1}\cdots D_{j_d}\bigr)
                   - P\bigl(H_p(x)\bigr) P\bigl(D_{j_1}\cdots D_{j_d}\bigr)
                \bigr]
          \Bigr\}\\
    &= P\bigl(H_p(x)\bigr)
         \Biggl[
             \sum_{d=1}^{k} (-1)^{d-1}
             \sum_{1\le j_1<\cdots<j_d\le p}
                P\bigl(D_{j_1}\cdots D_{j_d}\bigr)
         \Biggr]
       + \sum_{d=1}^{k} (-1)^{d-1} \zeta(p,d).
\end{align*}
Define
$$
    I_{p,k}
    := \sum_{d=1}^{k} (-1)^{d-1}
       \sum_{1\le j_1<\cdots<j_d\le p}
          P\bigl(D_{j_1}\cdots D_{j_d}\bigr),
$$
so that
$$
    \sum_{d=1}^{k} (-1)^{d-1}
       \sum_{1\le j_1<\cdots<j_d\le p}
          P\bigl(H_p(x)\cap D_{j_1}\cdots D_{j_d}\bigr)
    = P\bigl(H_p(x)\bigr) I_{p,k}
      + \sum_{d=1}^{k} (-1)^{d-1} \zeta(p,d).
$$

On the other hand, applying inclusion–exclusion to \(L_p(y)\) alone yields
$$
    P\bigl(L_p(y)\bigr)
    = I_{p,k} + R_{p,k}^0,
$$
where \(R_{p,k}^0\) is the corresponding remainder term. Under Assumption (A3) and the extreme-value limit for the max-type statistic, the usual Gaussian extreme-value argument implies that, for each fixed \(k\),
$$
    R_{p,k}^0 \longrightarrow 0,
    \qquad p\to\infty.
$$

Substituting these into the upper and lower bounds, one obtains the factorization
$$
    P\bigl(H_p(x), L_p(y)\bigr)
    = P\bigl(H_p(x)\bigr)\,P\bigl(L_p(y)\bigr) + o(1),
    \qquad p\to\infty,
$$
which is the desired asymptotic independence.
\end{proof}

\begin{lemma}\label{lem:psi_meas_H1}
Let $\mathcal{M}=\{j:\beta_{j,\tau}\neq 0\}$ and define
$$
\mathcal{F}_{\mathcal{M}}=\sigma(\mathbf{Z},\boldsymbol{\varepsilon},\mathbf{X}_{\mathcal{M}}).
$$
Assume that under both $H_0$ and $H_1$ we construct $\hat{\boldsymbol{\alpha}}_\tau$ as the $\tau$-quantile regression estimator from regressing $Y$ on $Z$ only, i.e.
$$
\hat{\boldsymbol{\alpha}}_{\tau}
=\argmin_{\boldsymbol{\alpha}\in\mathbb{R}^q}\sum_{i=1}^n\rho_\tau(Y_i-\boldsymbol{Z}_i^\top\boldsymbol{\alpha}).
$$
Then under the local alternative,
$$
Y_i=\boldsymbol{Z}_i^\top\boldsymbol{\alpha}_\tau+\boldsymbol{X}_{i,\mathcal{M}}^\top\boldsymbol{\beta}_{\mathcal{M},\tau}+\varepsilon_i,
\qquad i=1,\ldots,n,
$$
the score vector $\boldsymbol{\Psi}=(\hat{\psi}_1,\ldots,\hat{\psi}_n)^\top$ with
$\hat{\psi}_i=I(Y_i-\boldsymbol{Z}_i^\top\hat{\boldsymbol{\alpha}}_\tau\le 0)-\tau$
is $\mathcal{F}_{\mathcal{M}}$-measurable.
\end{lemma}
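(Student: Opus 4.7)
The plan is to mirror the argument used in Lemma \ref{lem:cond_struct}(i), enlarging the conditioning $\sigma$-field by $\mathbf{X}_{\mathcal{M}}$ so that it absorbs the signal component of the response. First I would observe that under the local alternative, each $Y_i$ is a deterministic function of $(\boldsymbol{Z}_i, \boldsymbol{X}_{i,\mathcal{M}}, \varepsilon_i)$ via $Y_i = \boldsymbol{Z}_i^\top \boldsymbol{\alpha}_\tau + \boldsymbol{X}_{i,\mathcal{M}}^\top \boldsymbol{\beta}_{\mathcal{M},\tau} + \varepsilon_i$. Since $\boldsymbol{\alpha}_\tau$ and $\boldsymbol{\beta}_{\mathcal{M},\tau}$ are fixed, $Y_i$ is $\mathcal{F}_{\mathcal{M}}$-measurable, and hence so is the full response vector $\mathbf{Y}=(Y_1,\ldots,Y_n)^\top$. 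Crucially, the terms $\boldsymbol{X}_{i,j}^\top \beta_{j,\tau}$ for $j\notin\mathcal{M}$ vanish identically, so no coordinate of $\mathbf{X}_{\mathcal{M}^c}$ enters $\mathbf{Y}$.

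Next I would argue that $\hat{\boldsymbol{\alpha}}_\tau$, defined as the minimizer of $\sum_{i=1}^n \rho_\tau(Y_i-\boldsymbol{Z}_i^\top\boldsymbol{\alpha})$, is a Borel-measurable function of $(\mathbf{Y},\mathbf{Z})$. The objective is convex and piecewise linear in $\boldsymbol{\alpha}$, so its set of minimizers is a nonempty closed convex polyhedron, and a measurable-selection result (e.g.\ taking the lexicographically smallest vertex of the optimal face) yields a measurable $\hat{\boldsymbol{\alpha}}_\tau$. Consequently $\hat{\boldsymbol{\alpha}}_\tau\in\sigma(\mathbf{Y},\mathbf{Z})\subset\mathcal{F}_{\mathcal{M}}$.

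Finally, for each $i$ the residual $r_i := Y_i-\boldsymbol{Z}_i^\top\hat{\boldsymbol{\alpha}}_\tau$ is a composition of $\mathcal{F}_{\mathcal{M}}$-measurable quantities and is therefore $\mathcal{F}_{\mathcal{M}}$-measurable. Applying the Borel map $t\mapsto I(t\le 0)-\tau$ shows that each $\hat{\psi}_i$, and hence the stacked vector $\boldsymbol{\Psi}$, is $\mathcal{F}_{\mathcal{M}}$-measurable.

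The main technical subtlety is the measurable-selection issue for $\hat{\boldsymbol{\alpha}}_\tau$, since the quantile-regression objective need not have an almost surely unique minimizer; however, this is standard and handled by any fixed measurable tie-breaking rule (or by Rockafellar--Wets type results). Apart from this, the proof reduces to a straightforward composition-of-measurable-maps argument, and no probabilistic estimate is required. The key conceptual point, relative to Lemma \ref{lem:cond_struct}(i), is that under $H_1$ one must enlarge the conditioning $\sigma$-field precisely by $\mathbf{X}_{\mathcal{M}}$ and no more, because only the signal coordinates enter $\mathbf{Y}$, which preserves the independence structure needed in later steps.
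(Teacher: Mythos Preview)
Your proposal is correct and follows essentially the same route as the paper's proof: show that $\mathbf{Y}$ is $\mathcal{F}_{\mathcal{M}}$-measurable under the alternative, deduce that $\hat{\boldsymbol{\alpha}}_\tau$ (as a measurable function of $(\mathbf{Z},\mathbf{Y})$) is $\mathcal{F}_{\mathcal{M}}$-measurable, and conclude the same for each $\hat{\psi}_i$ via the residuals. Your added care about the measurable-selection issue for $\hat{\boldsymbol{\alpha}}_\tau$ is a reasonable elaboration that the paper simply asserts.
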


\begin{proof}
By definition, $\hat{\boldsymbol{\alpha}}_\tau$ is a measurable function of $(\mathbf{Z},\mathbf{Y})$ . Under the alternative,
$\mathbf{Y}=(Y_1,\ldots,Y_n)^\top$ is a measurable function of
$(\mathbf{Z},\boldsymbol{\varepsilon},\mathbf{X}_{\mathcal{M}})$, hence of $\mathcal{F}_{\mathcal{M}}$.
Therefore $\hat{\boldsymbol{\alpha}}_\tau$ is $\mathcal{F}_{\mathcal{M}}$-measurable.

It follows that each residual
$r_i:=Y_i-\boldsymbol{Z}_i^\top\hat{\boldsymbol{\alpha}}_\tau$
is $\mathcal{F}_{\mathcal{M}}$-measurable, and so is
$\hat{\psi}_i=I(r_i\le 0)-\tau$.
Thus $\boldsymbol{\Psi}$ is $\mathcal{F}_{\mathcal{M}}$-measurable.
\end{proof}

\begin{lemma}\label{lem:cond_gauss_Mc}
Assume (A2). Let $\mathcal{M}\subset\{1,\ldots,p\}$ be fixed and let $\mathcal{M}^c$ be its complement.
Conditional on $\mathcal{F}_{\mathcal{M}}=\sigma(\mathbf{Z},\boldsymbol{\varepsilon},\mathbf{X}_{\mathcal{M}})$, the rows
$\{\boldsymbol{X}_{i,\mathcal{M}^c}\}_{i=1}^n$ are independent Gaussian vectors with
$$
\boldsymbol{X}_{i,\mathcal{M}^c}\mid \mathcal{F}_{\mathcal{M}}
\sim N\!\Big(\boldsymbol{\mu}_{i,\mathcal{M}^c},\ \boldsymbol{\Sigma}_{\mathcal{M}^c\mid (z,\mathcal{M})}\Big),
\qquad i=1,\ldots,n,
$$
where the conditional mean $\boldsymbol{\mu}_{i,\mathcal{M}^c}$ is a (random) function of
$(\boldsymbol{Z}_i,\boldsymbol{X}_{i,\mathcal{M}})$ and the conditional covariance
$\boldsymbol{\Sigma}_{\mathcal{M}^c\mid (z,\mathcal{M})}$ is deterministic (does not depend on $i$).
Moreover, under (A3), the eigenvalues of $\boldsymbol{\Sigma}_{\mathcal{M}^c\mid (z,\mathcal{M})}$ are uniformly bounded.

Let $P_Z=\mathbf{Z}(\mathbf{Z}^\top\mathbf{Z})^{-1}\mathbf{Z}^\top$ and $P_Z^\perp=\mathbf{I}_n-P_Z$.
Then conditional on $\mathcal{F}_{\mathcal{M}}$,
$$
\mathbf{W}_{\mathcal{M}^c}:=P_Z^\perp \mathbf{X}_{\mathcal{M}^c}
$$
is Gaussian with mean $P_Z^\perp \boldsymbol{\mu}_{\mathcal{M}^c}$ and conditional covariance induced by
$\boldsymbol{\Sigma}_{\mathcal{M}^c\mid (z,\mathcal{M})}$.
\end{lemma}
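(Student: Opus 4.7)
The plan is to prove the lemma in three stages: first remove $\boldsymbol{\varepsilon}$ from the conditioning, then apply the multivariate Gaussian conditioning formula row by row, and finally transfer the result to $\mathbf{W}_{\mathcal{M}^c}$ by noting that $P_Z^\perp$ is deterministic after conditioning on $\mathcal{F}_{\mathcal{M}}$. By the model assumption that $\boldsymbol{\varepsilon}$ is independent of $(\mathbf{Z},\mathbf{X})$, the conditional law of $\mathbf{X}_{\mathcal{M}^c}$ given $(\mathbf{Z},\boldsymbol{\varepsilon},\mathbf{X}_{\mathcal{M}})$ coincides with the conditional law given $(\mathbf{Z},\mathbf{X}_{\mathcal{M}})$. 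The i.i.d.\ structure of $\{(\boldsymbol{Z}_i,\boldsymbol{X}_i)\}_{i=1}^n$ then implies that, conditional on $(\mathbf{Z},\mathbf{X}_{\mathcal{M}})$, the rows $\boldsymbol{X}_{i,\mathcal{M}^c}$ are independent across $i$ and each depends only on $(\boldsymbol{Z}_i,\boldsymbol{X}_{i,\mathcal{M}})$. Since (A2) makes $(\tilde{\boldsymbol{Z}}_i,\boldsymbol{X}_{i,\mathcal{M}},\boldsymbol{X}_{i,\mathcal{M}^c})$ jointly Gaussian, the standard conditioning formula yields
$$\boldsymbol{X}_{i,\mathcal{M}^c}\mid(\boldsymbol{Z}_i,\boldsymbol{X}_{i,\mathcal{M}})\sim N\!\bigl(\boldsymbol{\mu}_{i,\mathcal{M}^c},\,\boldsymbol{\Sigma}_{\mathcal{M}^c\mid(z,\mathcal{M})}\bigr),$$
where $\boldsymbol{\mu}_{i,\mathcal{M}^c}$ is an affine function of $(\tilde{\boldsymbol{Z}}_i,\boldsymbol{X}_{i,\mathcal{M}})$ and $\boldsymbol{\Sigma}_{\mathcal{M}^c\mid(z,\mathcal{M})}$ is the Schur complement of the $(z,\mathcal{M})$-block in $\boldsymbol{\Sigma}$, which is deterministic and does not depend on $i$.

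For the uniform eigenvalue bound I would invoke two standard Schur-complement facts under (A3). The Loewner-order inequality $\boldsymbol{\Sigma}_{\mathcal{M}^c\mid(z,\mathcal{M})}\preceq\boldsymbol{\Sigma}_{\mathcal{M}^c,\mathcal{M}^c}$ yields $\lambda_{\max}(\boldsymbol{\Sigma}_{\mathcal{M}^c\mid(z,\mathcal{M})})\le\lambda_{\max}(\boldsymbol{\Sigma})$, while the identity $\boldsymbol{\Sigma}_{\mathcal{M}^c\mid(z,\mathcal{M})}=\bigl[(\boldsymbol{\Sigma}^{-1})_{\mathcal{M}^c,\mathcal{M}^c}\bigr]^{-1}$ combined with $\lambda_{\max}\bigl((\boldsymbol{\Sigma}^{-1})_{\mathcal{M}^c,\mathcal{M}^c}\bigr)\le 1/\lambda_{\min}(\boldsymbol{\Sigma})$ gives $\lambda_{\min}(\boldsymbol{\Sigma}_{\mathcal{M}^c\mid(z,\mathcal{M})})\ge\lambda_{\min}(\boldsymbol{\Sigma})$. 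Since the eigenvalues of $\boldsymbol{\Sigma}$ are uniformly bounded by (A3), so are those of the conditional covariance.

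Finally, $P_Z^\perp=\mathbf{I}_n-\mathbf{Z}(\mathbf{Z}^\top\mathbf{Z})^{-1}\mathbf{Z}^\top$ is a measurable function of $\mathbf{Z}$ and is therefore deterministic once we condition on $\mathcal{F}_{\mathcal{M}}$. Hence $\mathbf{W}_{\mathcal{M}^c}=P_Z^\perp\mathbf{X}_{\mathcal{M}^c}$ is a fixed linear transformation of a conditionally Gaussian matrix, and is thus conditionally Gaussian with mean $P_Z^\perp\boldsymbol{\mu}_{\mathcal{M}^c}$ (obtained by stacking the rows $\boldsymbol{\mu}_{i,\mathcal{M}^c}$) and with covariance induced by $P_Z^\perp$ acting on the $n$-row side and $\boldsymbol{\Sigma}_{\mathcal{M}^c\mid(z,\mathcal{M})}$ acting on the $|\mathcal{M}^c|$-column side. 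The step I would be most careful about is verifying that the Schur complement appearing in the conditioning formula depends only on the population covariance $\boldsymbol{\Sigma}$ and not on the conditioning values $(\mathbf{Z},\mathbf{X}_{\mathcal{M}})$; this is the defining feature of Gaussian conditioning, but it is worth writing it out explicitly so that no randomness accidentally leaks into the conditional covariance and one can legitimately call $\boldsymbol{\Sigma}_{\mathcal{M}^c\mid(z,\mathcal{M})}$ deterministic. Everything else reduces to routine bookkeeping.
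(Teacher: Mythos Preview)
Your proposal is correct and follows essentially the same approach as the paper's proof: both argue via the multivariate Gaussian conditioning formula applied row by row, use the i.i.d.\ structure for conditional independence across $i$, invoke Schur-complement properties for the eigenvalue bound, and conclude by noting that $P_Z^\perp$ is $\mathcal{F}_{\mathcal{M}}$-measurable so $\mathbf{W}_{\mathcal{M}^c}$ is a deterministic linear image of a conditionally Gaussian matrix. Your version is slightly more explicit in first removing $\boldsymbol{\varepsilon}$ from the conditioning and in spelling out the Loewner/Schur-complement inequalities for the eigenvalue bounds, but these are refinements of the same argument rather than a different route.
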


\begin{proof}
Under (A2), for each $i$, the joint vector $(\widetilde{\boldsymbol{Z}}_i^\top,\boldsymbol{X}_{i,\mathcal{M}}^\top,\boldsymbol{X}_{i,\mathcal{M}^c}^\top)^\top$
is multivariate normal with mean zero and a block covariance matrix determined by $\boldsymbol{\Sigma}$.
By the multivariate normal conditioning formula, conditioning on
$(\widetilde{\boldsymbol{Z}}_i,\boldsymbol{X}_{i,\mathcal{M}})$ yields a Gaussian conditional distribution for
$\boldsymbol{X}_{i,\mathcal{M}^c}$ with a linear conditional mean
$\boldsymbol{\mu}_{i,\mathcal{M}^c}$ and conditional covariance
$\boldsymbol{\Sigma}_{\mathcal{M}^c\mid (z,\mathcal{M})}$ that depends only on the population covariance blocks.

Independence across $i$ in the i.i.d.\ sampling implies that conditional on
$\{(\widetilde{\boldsymbol{Z}}_i,\boldsymbol{X}_{i,\mathcal{M}})\}_{i=1}^n$ (hence on $\mathcal{F}_{\mathcal{M}}$),
the rows $\{\boldsymbol{X}_{i,\mathcal{M}^c}\}_{i=1}^n$ remain independent.

Uniform boundedness of eigenvalues of $\boldsymbol{\Sigma}_{\mathcal{M}^c\mid (z,\mathcal{M})}$
follows from the bounded eigenvalues of $\boldsymbol{\Sigma}$ in (A3) and the fact that a Schur complement of a
uniformly well-conditioned covariance matrix is also uniformly well-conditioned.

Finally, $\mathbf{W}_{\mathcal{M}^c}=P_Z^\perp \mathbf{X}_{\mathcal{M}^c}$ is a linear transformation of a Gaussian matrix,
hence Gaussian conditional on $\mathcal{F}_{\mathcal{M}}$, with mean and covariance transformed accordingly.
\end{proof}

\begin{lemma}\label{lem:block_reduction_H1}
Assume (A1)--(A5) and denote $\mathcal{M}=\{j:\beta_{j,\tau}\neq 0\}$ with $m=|\mathcal{M}|=o(p^{1/2})$.
Recall
$$
U_p=\sum_{j=1}^p\Big\{(\mathbf{X}_{\cdot j}^\top\boldsymbol{\Psi})^2-\sum_{i=1}^n X_{ij}^2\hat{\psi}_i^2\Big\},
\qquad
\widetilde U_p=\frac{U_p-E(U_p)}{v_p},\ \ v_p=\sqrt{\Var(U_p)}.
$$
Define the signal-block and noise-block components
$$
U_{\mathcal{M}}
:=\sum_{j\in\mathcal{M}}\Big\{(\mathbf{X}_{\cdot j}^\top\boldsymbol{\Psi})^2-\sum_{i=1}^n X_{ij}^2\hat{\psi}_i^2\Big\},
\qquad
U_{\mathcal{M}^c}
:=\sum_{j\in\mathcal{M}^c}\Big\{(\mathbf{X}_{\cdot j}^\top\boldsymbol{\Psi})^2-\sum_{i=1}^n X_{ij}^2\hat{\psi}_i^2\Big\},
$$
so that $U_p=U_{\mathcal{M}}+U_{\mathcal{M}^c}$.
Then
\begin{equation}\label{eq:block_reduce_main}
\frac{U_{\mathcal{M}}-E(U_{\mathcal{M}})}{v_p}=o_p(1),
\qquad
\frac{E(U_{\mathcal{M}})}{v_p}=o(1),
\end{equation}
and consequently
\begin{equation}\label{eq:block_reduce_conseq}
\widetilde U_p
=
\frac{U_{\mathcal{M}^c}-E(U_{\mathcal{M}^c})}{v_p}
+o_p(1).
\end{equation}
\end{lemma}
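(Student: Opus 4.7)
My plan is to prove both parts of \eqref{eq:block_reduce_main} directly, from which \eqref{eq:block_reduce_conseq} is immediate by linearity since $U_p = U_{\mathcal{M}} + U_{\mathcal{M}^c}$. First I would rewrite the signal block as a pairwise U-statistic,
$$ U_{\mathcal{M}} = \sum_{i\neq \ell} (X_{i,\mathcal{M}}^\top X_{\ell,\mathcal{M}})\,\hat\psi_i\hat\psi_\ell, $$
by expanding the square exactly as in the identity $U_p = \sum_{i\neq\ell}(X_i^\top X_\ell)\hat\psi_i\hat\psi_\ell$ derived at the start of the appendix. By Lemma~\ref{lem:psi_meas_H1}, $\boldsymbol{\Psi}$ is $\mathcal{F}_{\mathcal{M}}$-measurable under $H_1$, so $U_{\mathcal{M}}$ is a functional of $(\mathbf{Z},\boldsymbol{\varepsilon},\mathbf{X}_{\mathcal{M}})$ alone. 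To compute moments cleanly I would then introduce the oracle score $\psi_i^* := I(\varepsilon_i + X_{i,\mathcal{M}}^\top\boldsymbol{\beta}_{\mathcal{M},\tau}\le 0)-\tau$ and replace $\hat\psi_i$ by $\psi_i^*$ via the Bahadur representation, using (A1) and $\hat{\boldsymbol{\alpha}}_\tau-\boldsymbol{\alpha}_\tau=O_P(n^{-1/2})$. The aggregated remainder from $\hat\psi_i-\psi_i^*$ is controlled by the last clause of (A5) involving $\tilde R_i(\boldsymbol t)$, which is tailored precisely so that the approximation error in the full double sum is $o_P(v_p)$.

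With the oracle representation in hand, $(X_i,\varepsilon_i)$ are i.i.d.\ and a direct computation gives
$$ E(U_{\mathcal{M}}) \;\approx\; n(n-1)\,\|\boldsymbol{\nu}_{\mathcal{M}}\|^2, \qquad \boldsymbol{\nu} := E(X_i\,\varsigma(X_i)). $$
The (A5) bound $\boldsymbol{\nu}^\top\boldsymbol{\Sigma}_x\boldsymbol{\nu} = o(n^{-1}\mathrm{tr}(\boldsymbol{\Sigma}_x^2))$, combined with the bounded spectrum of $\boldsymbol{\Sigma}_x$ from (A3) and the sparsity $m=|\mathcal{M}|=o(p^{1/2})$, delivers $\|\boldsymbol{\nu}_{\mathcal{M}}\|^2 = o(\mathrm{tr}(\boldsymbol{\Sigma}_x^2)/n)$, hence $E(U_{\mathcal{M}})/v_p = o(1)$ since $v_p\asymp n\sqrt{\mathrm{tr}(\boldsymbol{\Sigma}_x^2)}$. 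For the variance I would apply the Hoeffding decomposition to $g(Z_i,Z_\ell) = X_{i,\mathcal{M}}^\top X_{\ell,\mathcal{M}}\psi_i^*\psi_\ell^*$ and obtain $\mathrm{Var}(U_{\mathcal{M}})\lesssim n^3\sigma_1^2 + n^2\sigma_2^2$, where $\sigma_2^2 \le E[(X_{i,\mathcal{M}}^\top X_{\ell,\mathcal{M}})^2] = \mathrm{tr}(\boldsymbol{\Sigma}_{x,\mathcal{M}}^2) \lesssim m$ under (A3), and $\sigma_1^2 = O(\boldsymbol{\nu}_{\mathcal{M}}^\top \boldsymbol{\Sigma}_{x,\mathcal{M}}\boldsymbol{\nu}_{\mathcal{M}})$ is again negligible by (A5). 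Using $v_p^2 \asymp n^2 p$ and $m = o(\sqrt p)$, both contributions are $o(v_p^2)$, and Chebyshev's inequality yields $(U_{\mathcal{M}}-E(U_{\mathcal{M}}))/v_p = o_P(1)$, completing the proof of \eqref{eq:block_reduce_main}.

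The main technical obstacle I anticipate lies in Step 2: controlling the Bahadur remainder $\hat\psi_i - \psi_i^*$ once it is \emph{aggregated} over the $O(n^2)$ pairs in $U_{\mathcal{M}}$. Because $\hat\psi_i$ is coupled through the common estimator $\hat{\boldsymbol{\alpha}}_\tau$, this remainder is not a sum of independent errors, and a naive union bound would blow up. Disentangling the coupling will require the density smoothness in (A1) together with the Hoeffding-type linearization built into the $\tilde R_i(\boldsymbol t)$ bound in (A5); the latter is sharp enough to survive the $O(n^2)$ aggregation against the $n\sqrt p$ scale of $v_p$, but extracting this cleanly is the most delicate part of the argument. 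A secondary bookkeeping issue is relating $\|\boldsymbol{\nu}_{\mathcal{M}}\|^2$ to the global quantity $\boldsymbol{\nu}^\top\boldsymbol{\Sigma}_x\boldsymbol{\nu}$ provided by (A5), which I would handle by using the bounded spectrum of $\boldsymbol{\Sigma}_x$ and, when a finer decomposition is needed, the Gaussian conditioning structure of Lemma~\ref{lem:cond_gauss_Mc} to restrict attention to the $\mathcal{M}$-coordinates without losing a factor depending on $p/m$.
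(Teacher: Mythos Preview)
Your proposal follows the same overall architecture as the paper's proof: bound $\Var(U_{\mathcal{M}})$ and apply Chebyshev for the fluctuation part, then control $E(U_{\mathcal{M}})$ separately via the score-perturbation conditions in (A5). The two implementations differ in technical detail, however, and the comparison is instructive.

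For the variance, the paper does not exploit the U-statistic structure in $(i,\ell)$ at all. It simply writes $U_{\mathcal{M}}=\sum_{j\in\mathcal{M}}G_j$ with $G_j=\sum_{i\neq\ell}X_{ij}X_{\ell j}\hat\psi_i\hat\psi_\ell$, bounds $\sup_j\Var(G_j)$ by a constant (in the appropriate $n$-scale), and applies the crude Cauchy--Schwarz inequality $\Var(\sum_{j\in\mathcal{M}}G_j)\le m\sum_{j\in\mathcal{M}}\Var(G_j)\le Cm^2$. This loses a factor of $m$ relative to your Hoeffding decomposition, which yields $\Var(U_{\mathcal{M}})\lesssim n^2\mathrm{tr}(\boldsymbol{\Sigma}_{x,\mathcal{M}}^2)\lesssim n^2 m$ for the degenerate part; but since $m=o(p^{1/2})$ already forces $m^2/p=o(1)$, the cruder bound suffices and avoids the first-order Hoeffding term $n^3\sigma_1^2$ entirely. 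For the mean, the paper centers at the \emph{null} score $\psi_i=I(\varepsilon_i\le 0)-\tau$ rather than your oracle $\psi_i^*$; this makes $E(G_j^{(0)})=0$ trivially and pushes the entire signal into the remainder $\Delta_i=\hat\psi_i-\psi_i$, whose aggregated expectation is then declared $o(v_p)$ by direct appeal to the $\varsigma$ and $\tilde R_i$ clauses of (A5). Your route, computing $E(U_{\mathcal{M}})\approx n(n-1)\|\boldsymbol{\nu}_{\mathcal{M}}\|^2$ explicitly and bounding it through $\boldsymbol{\nu}^\top\boldsymbol{\Sigma}_x\boldsymbol{\nu}$, is more transparent about where the signal sits but requires the extra bookkeeping step you flag. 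Both arguments are correct; the paper's is shorter because it accepts the looser $m^2$ variance bound and hides the mean calculation inside (A5), while yours is sharper and makes the role of each (A5) clause more visible.
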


\begin{proof}
Throughout the proof, all $o(\cdot)$ and $O(\cdot)$ terms are as $n,p\to\infty$.
We use $v_p^2=\Var(U_p)\asymp p$ as established in the $H_0$ analysis under (A1)--(A4).
For any fixed $j\in\{1,\ldots,p\}$, define
$$
G_j
:=
(\mathbf{X}_{\cdot j}^\top\boldsymbol{\Psi})^2-\sum_{i=1}^n X_{ij}^2\hat{\psi}_i^2
=
\sum_{i\neq \ell} X_{ij}X_{\ell j}\hat\psi_i\hat\psi_\ell.
$$
Since $\hat\psi_i\in\{-\tau,1-\tau\}$, we have $|\hat\psi_i|\le 1$ and thus
$$
|G_j|
\le
\sum_{i\neq \ell}|X_{ij}X_{\ell j}|.
$$
Under (A2)--(A3), $X_{ij}$ has uniformly bounded second and fourth moments, hence by standard moment calculations
\begin{equation}\label{eq:VarGj_bound}
\sup_{1\le j\le p}\Var(G_j)\le C
\end{equation}
for some constant $C<\infty$ that does not depend on $n,p$. By definition, $U_{\mathcal{M}}=\sum_{j\in\mathcal{M}}G_j$.
Therefore, by Cauchy--Schwarz,
$$
\Var(U_{\mathcal{M}})
=
\Var\Big(\sum_{j\in\mathcal{M}}G_j\Big)
\le
m\sum_{j\in\mathcal{M}}\Var(G_j)
\le
m^2\sup_{1\le j\le p}\Var(G_j)
\le
C m^2,
$$
where we used \eqref{eq:VarGj_bound}.
Consequently,
$$
\frac{U_{\mathcal{M}}-E(U_{\mathcal{M}})}{v_p}
=
O_p\Big(\frac{\sqrt{\Var(U_{\mathcal{M}})}}{v_p}\Big)
=
O_p\Big(\frac{m}{\sqrt{p}}\Big)
=o_p(1),
$$
because $m=o(p^{1/2})$.

Write $\hat\psi_i=\psi_i+\Delta_i$, where
$$
\psi_i:=I(\varepsilon_i\le 0)-\tau,
\qquad
\Delta_i:=\hat\psi_i-\psi_i.
$$
Then
$$
G_j
=
\sum_{i\neq \ell}X_{ij}X_{\ell j}\psi_i\psi_\ell
+
\sum_{i\neq \ell}X_{ij}X_{\ell j}\big(\psi_i\Delta_\ell+\Delta_i\psi_\ell+\Delta_i\Delta_\ell\big)
=:G_{j}^{(0)}+R_j.
$$
Hence $U_{\mathcal{M}}=U_{\mathcal{M}}^{(0)}+R_{\mathcal{M}}$ with
$U_{\mathcal{M}}^{(0)}:=\sum_{j\in\mathcal{M}}G_j^{(0)}$ and $R_{\mathcal{M}}:=\sum_{j\in\mathcal{M}}R_j$.

Under the model and (A2), $\varepsilon$ is independent of $(Z,X)$ and $E(\psi_i)=0$.
Moreover, $(X_{ij})_{i=1}^n$ are i.i.d.\ across $i$ with mean zero.
Therefore, for each fixed $j$,
$$
E\big(G_j^{(0)}\big)
=
\sum_{i\neq \ell}E(X_{ij}X_{\ell j})E(\psi_i\psi_\ell)
=0,
$$
and thus $E(U_{\mathcal{M}}^{(0)})=0$.

It remains to control $E(R_{\mathcal{M}})$.
Assumption (A5) provides precisely the needed small-order control for the score perturbation
$\Delta_i=\hat\psi_i-\psi_i$ under the local alternative:
the conditions involving $\varsigma(\boldsymbol{X}_i)$ and $\tilde R_i(\boldsymbol{t})$
imply that the aggregated contribution of terms of the form
$$
\sum_{j\in\mathcal{M}}\sum_{i\neq \ell}X_{ij}X_{\ell j}\psi_i\Delta_\ell,
\qquad
\sum_{j\in\mathcal{M}}\sum_{i\neq \ell}X_{ij}X_{\ell j}\Delta_i\Delta_\ell
$$
has expectation $o(v_p)$ (and in fact is $o_p(v_p)$).
Formally, by expanding $E(R_{\mathcal{M}})$ and applying the bounds in (A5) together with
Cauchy--Schwarz and bounded eigenvalues in (A3), we obtain
\begin{equation}\label{eq:ERm_small}
E(R_{\mathcal{M}})=o(v_p).
\end{equation}
Hence $E(U_{\mathcal{M}})=E(U_{\mathcal{M}}^{(0)})+E(R_{\mathcal{M}})=o(v_p)$.

Combining the fluctuation bound in Part 2 and the mean bound in Part 3 yields \eqref{eq:block_reduce_main},
and \eqref{eq:block_reduce_conseq} follows immediately from $U_p=U_{\mathcal{M}}+U_{\mathcal{M}^c}$.
\end{proof}

\begin{proof}[Proof of Theorem \ref{th1}]
Fix $x,y\in\mathbb{R}$.
Let
$$
H_p(x):=\left\{\widetilde U_p\le x\right\},
\qquad
\widetilde U_p:=\frac{U_p-E(U_p)}{v_p},
$$
and
$$
L_p(y):=\left\{T_{\mathrm{MAX}}-2\log p+\log\log p>y\right\}.
$$
It suffices to show
\begin{equation}\label{eq:goal_H1_main}
P\big(H_p(x),\,L_p(y)\big)=P\big(H_p(x)\big)P\big(L_p(y)\big)+o(1).
\end{equation}

Let $\mathcal{M}=\{j:\beta_{j,\tau}\neq 0\}$ with $m=|\mathcal{M}|=o(p^{1/2})$ and $\mathcal{M}^c=\{1,\ldots,p\}\setminus\mathcal{M}$.
Define
$$
\widetilde U_{\mathcal{M}^c}:=\frac{U_{\mathcal{M}^c}-E(U_{\mathcal{M}^c})}{v_p}.
$$
By Lemma~\ref{lem:block_reduction_H1},
\begin{equation}\label{eq:Hp_replace_H1}
\widetilde U_p=\widetilde U_{\mathcal{M}^c}+o_p(1).
\end{equation}
Hence
\begin{equation}\label{eq:reduce_joint_H1}
P\big(H_p(x),L_p(y)\big)=P\big(\widetilde U_{\mathcal{M}^c}\le x,\,L_p(y)\big)+o(1),
\qquad
P\big(H_p(x)\big)=P\big(\widetilde U_{\mathcal{M}^c}\le x\big)+o(1).
\end{equation}
Therefore \eqref{eq:goal_H1_main} follows once we prove
\begin{equation}\label{eq:goal_Mc}
P\big(\widetilde U_{\mathcal{M}^c}\le x,\,L_p(y)\big)
=
P\big(\widetilde U_{\mathcal{M}^c}\le x\big)\,P\big(L_p(y)\big)+o(1).
\end{equation}

Write $L_p(y)=\bigcup_{j=1}^p D_j$ with $D_j:=\{S_{j,\tau}^2>y_p\}$ and $y_p:=2\log p-\log\log p+y$.
For any integer $d\ge 1$ and any $I\subset\{1,\ldots,p\}$ with $|I|=d$, define $D_I:=\cap_{j\in I}D_j$.

Let $\mathcal{F}_{\mathcal{M}}=\sigma(\mathbf{Z},\boldsymbol{\varepsilon},\mathbf{X}_{\mathcal{M}})$.
By Lemma~\ref{lem:psi_meas_H1}, $\boldsymbol{\Psi}$ is $\mathcal{F}_{\mathcal{M}}$-measurable.
Conditional on $\mathcal{F}_{\mathcal{M}}$, $\mathbf{W}_{\mathcal{M}^c}=P_Z^\perp\mathbf{X}_{\mathcal{M}^c}$ is Gaussian by Lemma~\ref{lem:cond_gauss_Mc},
and under (A3) admits the same Gaussian linear decomposition as in \eqref{eq:decomp} for any fixed $I$ with $|I|=d$.
Thus the argument of Lemma~\ref{lem:local_sandwich} applies to $\widetilde U_{\mathcal{M}^c}$ and yields, for each fixed $d$,
\begin{equation}\label{eq:local_prod_H1}
P\big(\widetilde U_{\mathcal{M}^c}\le x,\;D_I\big)
=
P\big(\widetilde U_{\mathcal{M}^c}\le x\big)\,P(D_I)+o(1),
\end{equation}
uniformly over $I$ with $|I|=d$.

Define, for each fixed $d\ge 1$,
$$
\zeta(p,d)
:=
\sum_{1\le j_1<\cdots<j_d\le p}
\left\{
P\big(\widetilde U_{\mathcal{M}^c}\le x,\;D_{j_1}\cdots D_{j_d}\big)
-
P\big(\widetilde U_{\mathcal{M}^c}\le x\big)\,P\big(D_{j_1}\cdots D_{j_d}\big)
\right\}.
$$
By \eqref{eq:local_prod_H1}, $\zeta(p,d)\to 0$ for each fixed $d$.

Fix $k\ge 1$. Applying the inclusion--exclusion principle to $L_p(y)=\cup_{j=1}^p D_j$ yields
\begin{align}
P\big(\widetilde U_{\mathcal{M}^c}\le x,\;L_p(y)\big)
&=
\sum_{d=1}^{k}(-1)^{d-1}\!\!\sum_{1\le j_1<\cdots<j_d\le p}
P\big(\widetilde U_{\mathcal{M}^c}\le x,\;D_{j_1}\cdots D_{j_d}\big)
+R_{p,k}^{\pm}, \label{eq:IE_joint_H1}
\end{align}
with remainder $R_{p,k}^{\pm}$.
Using the definition of $\zeta(p,d)$, the truncated sum equals
\begin{align}
&\sum_{d=1}^{k}(-1)^{d-1}\!\!\sum_{1\le j_1<\cdots<j_d\le p}
\Big\{
P\big(\widetilde U_{\mathcal{M}^c}\le x\big)\,P\big(D_{j_1}\cdots D_{j_d}\big)
\Big\}
+\sum_{d=1}^k(-1)^{d-1}\zeta(p,d) \notag\\
&=
P\big(\widetilde U_{\mathcal{M}^c}\le x\big)\,I_{p,k}
+\sum_{d=1}^k(-1)^{d-1}\zeta(p,d), \label{eq:IE_joint_expand_H1}
\end{align}
where
$$
I_{p,k}
:=
\sum_{d=1}^{k}(-1)^{d-1}\!\!\sum_{1\le j_1<\cdots<j_d\le p}
P\big(D_{j_1}\cdots D_{j_d}\big).
$$
Similarly,
\begin{equation}\label{eq:IE_marg_H1}
P\big(L_p(y)\big)=I_{p,k}+R_{p,k}^0,
\end{equation}
with remainder $R_{p,k}^0$.

By the same extreme-value remainder control as in the proof of Theorem~\ref{th0}$,$
$$
\lim_{k\to\infty}\limsup_{p\to\infty}\big(|R_{p,k}^{\pm}|+|R_{p,k}^0|\big)=0.
$$
Combining \eqref{eq:IE_joint_H1}--\eqref{eq:IE_marg_H1} and sending $p\to\infty$ first (for fixed $k$) yields
$$
P\big(\widetilde U_{\mathcal{M}^c}\le x,\;L_p(y)\big)
=
P\big(\widetilde U_{\mathcal{M}^c}\le x\big)\,P\big(L_p(y)\big)+o(1),
$$
which proves \eqref{eq:goal_Mc}. Finally, \eqref{eq:goal_H1_main} follows from \eqref{eq:reduce_joint_H1}.
\end{proof}

\subsection*{Supplementary Results for Section \ref{sec:simulation}} \label{sup_fig}

This section presents supplementary empirical power results for the lower and upper quantiles, $\tau = 0.25$ and $\tau = 0.75$, across all experimental settings considered in Section~\ref{sec:simulation}. 
For each data-generating distribution and each of Cases~1--3, we report power curves over varying $s$ across different $(n,p)$ settings. 
These additional figures complement the main-text results at $\tau = 0.50$ by illustrating how the relative performance of the competing methods changes in the tails of the conditional distribution.

\begin{figure}[htbp]
\centering
\begin{subfigure}{0.23\textwidth}
    \centering
    \includegraphics[width=\linewidth]{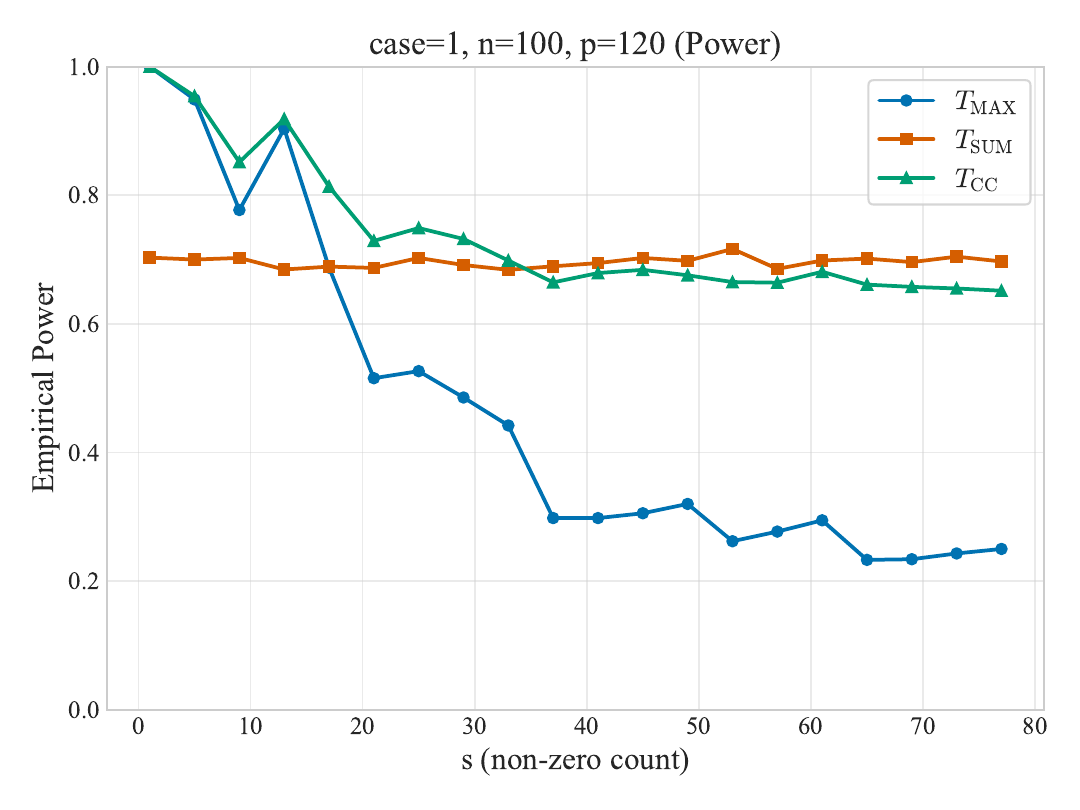}
\end{subfigure}
\hfill
\begin{subfigure}{0.23\textwidth}
    \centering
    \includegraphics[width=\linewidth]{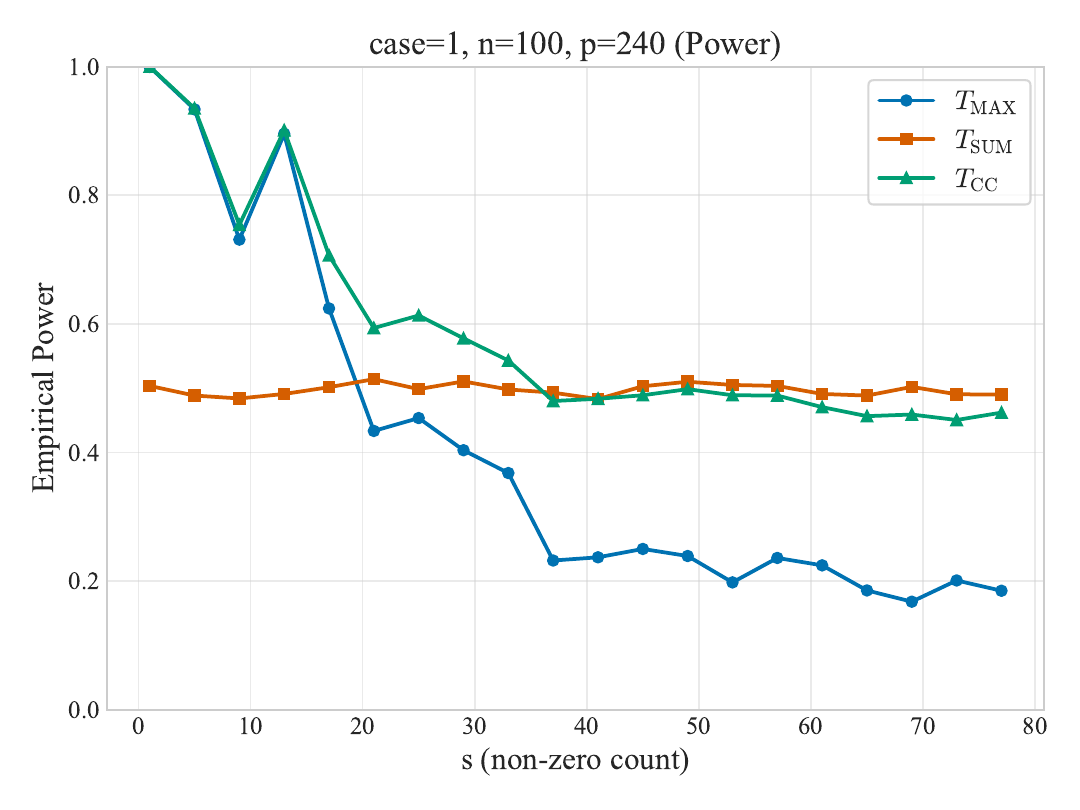}
\end{subfigure}
\hfill
\begin{subfigure}{0.23\textwidth}
    \centering
    \includegraphics[width=\linewidth]{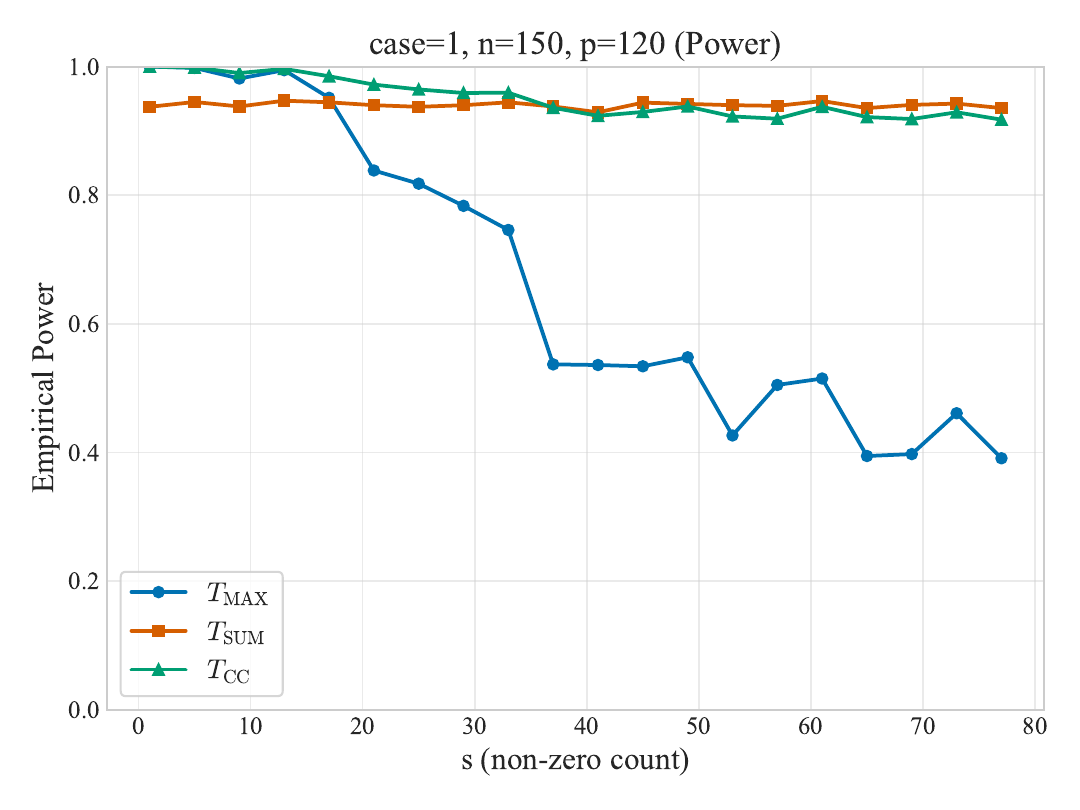}
\end{subfigure}
\hfill
\begin{subfigure}{0.23\textwidth}
    \centering
    \includegraphics[width=\linewidth]{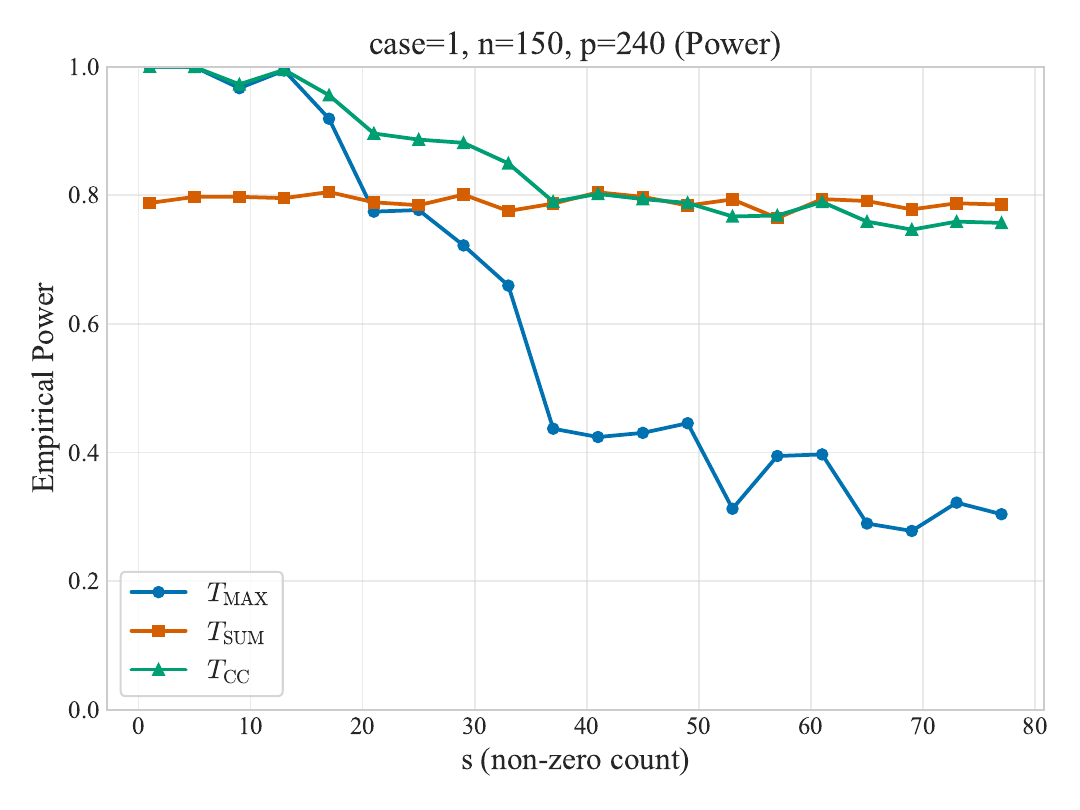}
\end{subfigure}

\vspace{0.3cm}
\begin{subfigure}{0.23\textwidth}
    \centering
    \includegraphics[width=\linewidth]{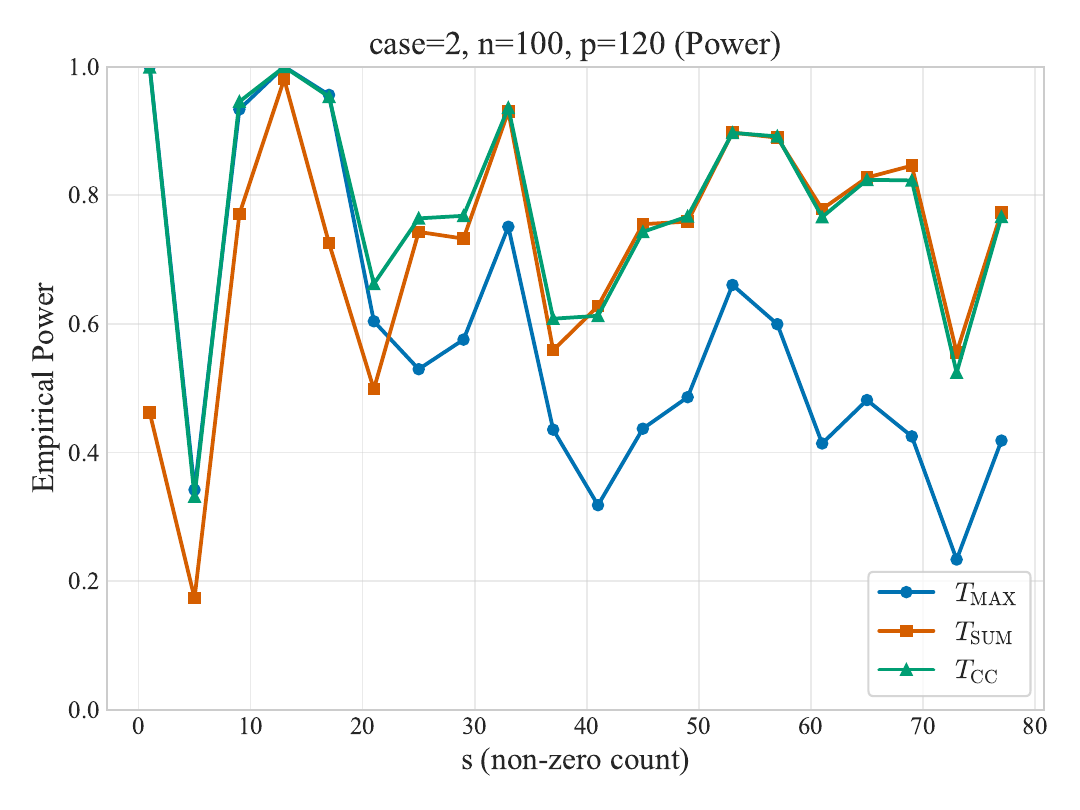}
\end{subfigure}
\hfill
\begin{subfigure}{0.23\textwidth}
    \centering
    \includegraphics[width=\linewidth]{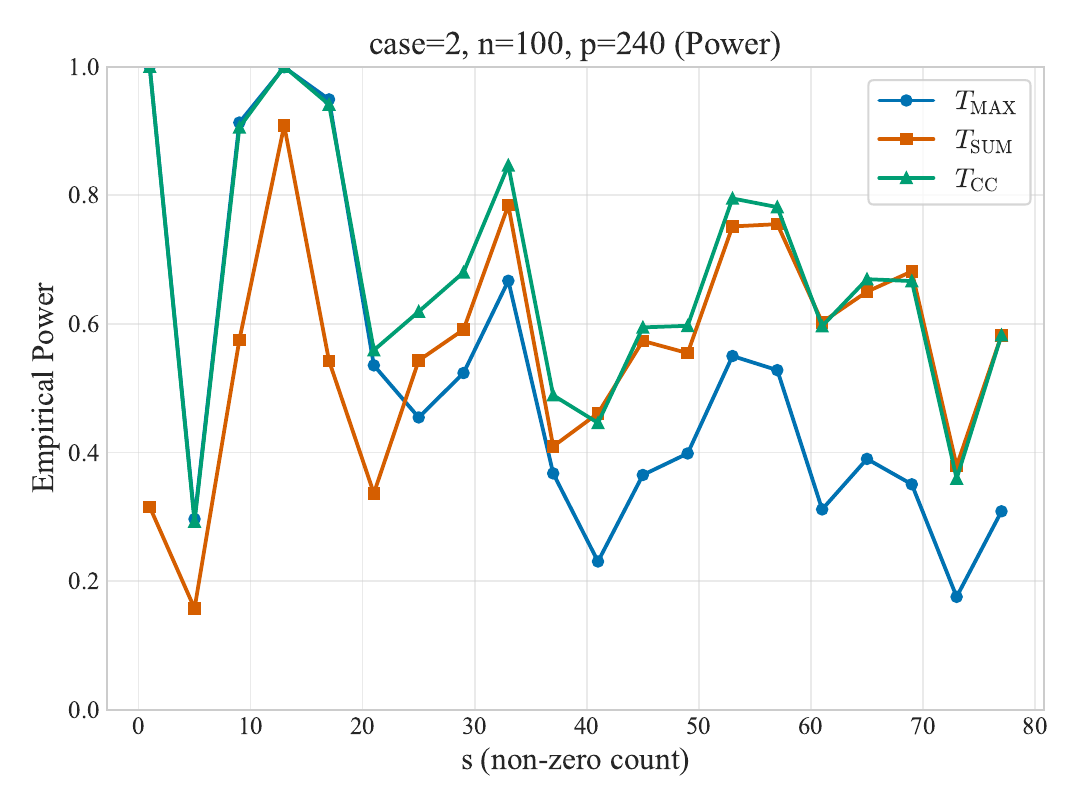}
\end{subfigure}
\hfill
\begin{subfigure}{0.23\textwidth}
    \centering
    \includegraphics[width=\linewidth]{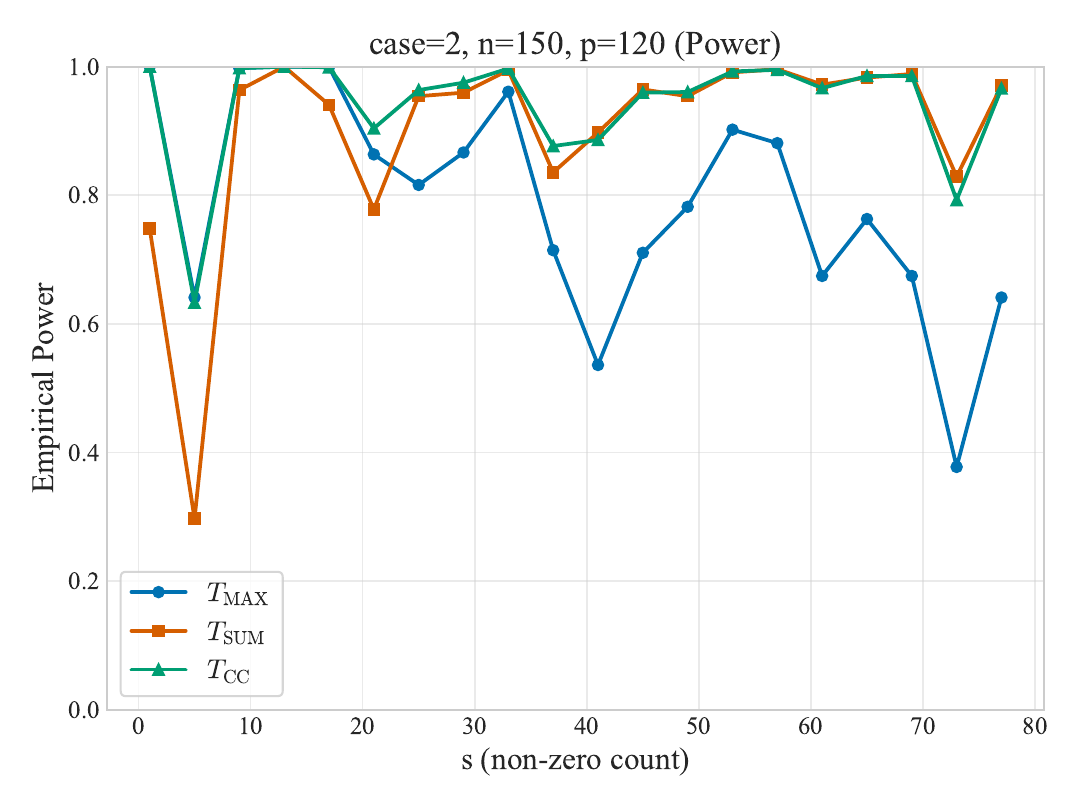}
\end{subfigure}
\hfill
\begin{subfigure}{0.23\textwidth}
    \centering
    \includegraphics[width=\linewidth]{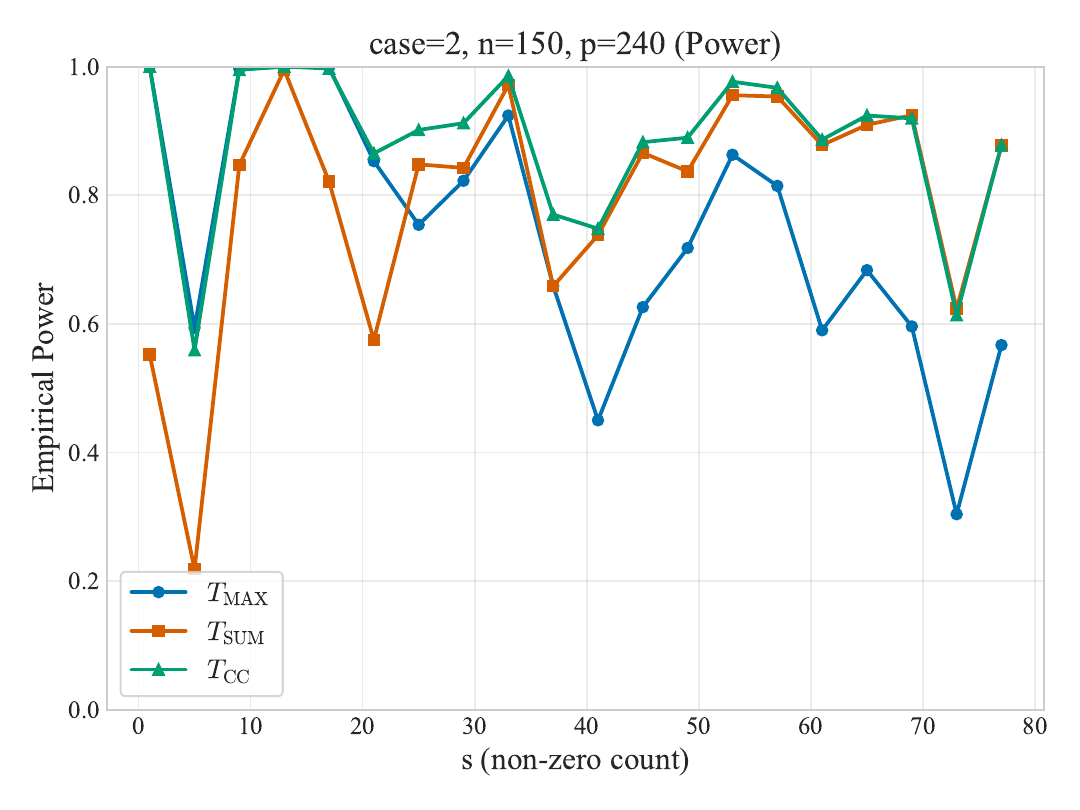}
\end{subfigure}

\vspace{0.3cm}
\begin{subfigure}{0.23\textwidth}
    \centering
    \includegraphics[width=\linewidth]{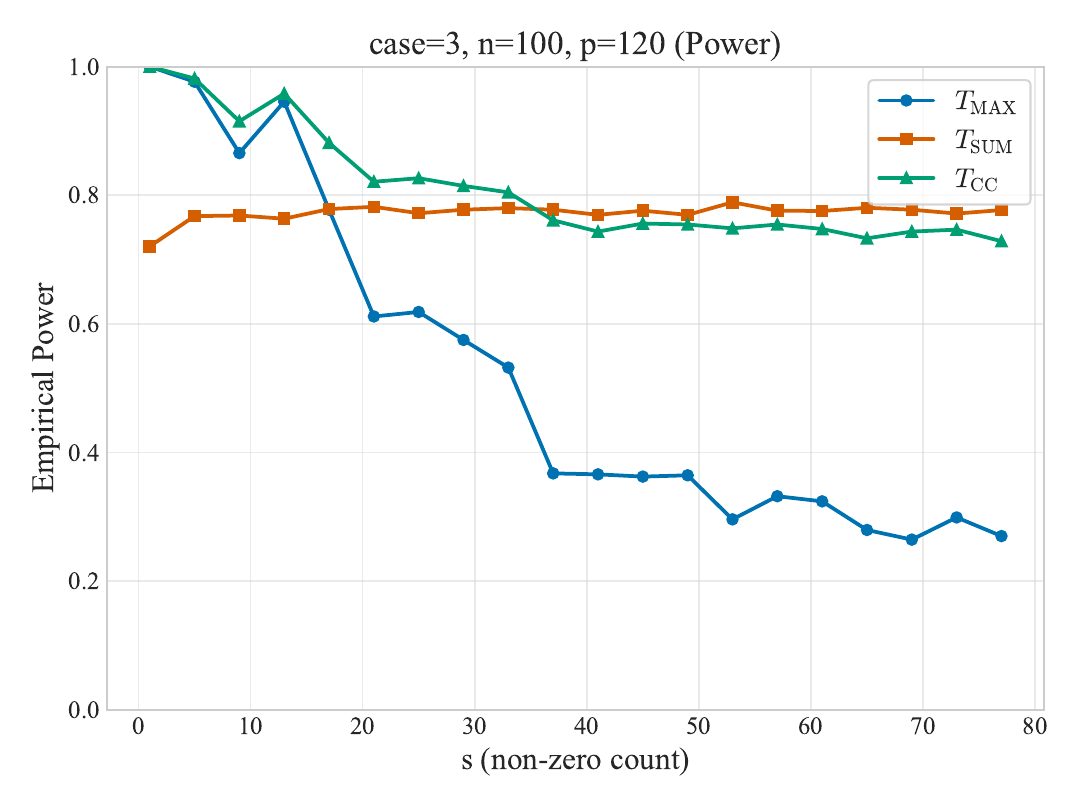}
\end{subfigure}
\hfill
\begin{subfigure}{0.23\textwidth}
    \centering
    \includegraphics[width=\linewidth]{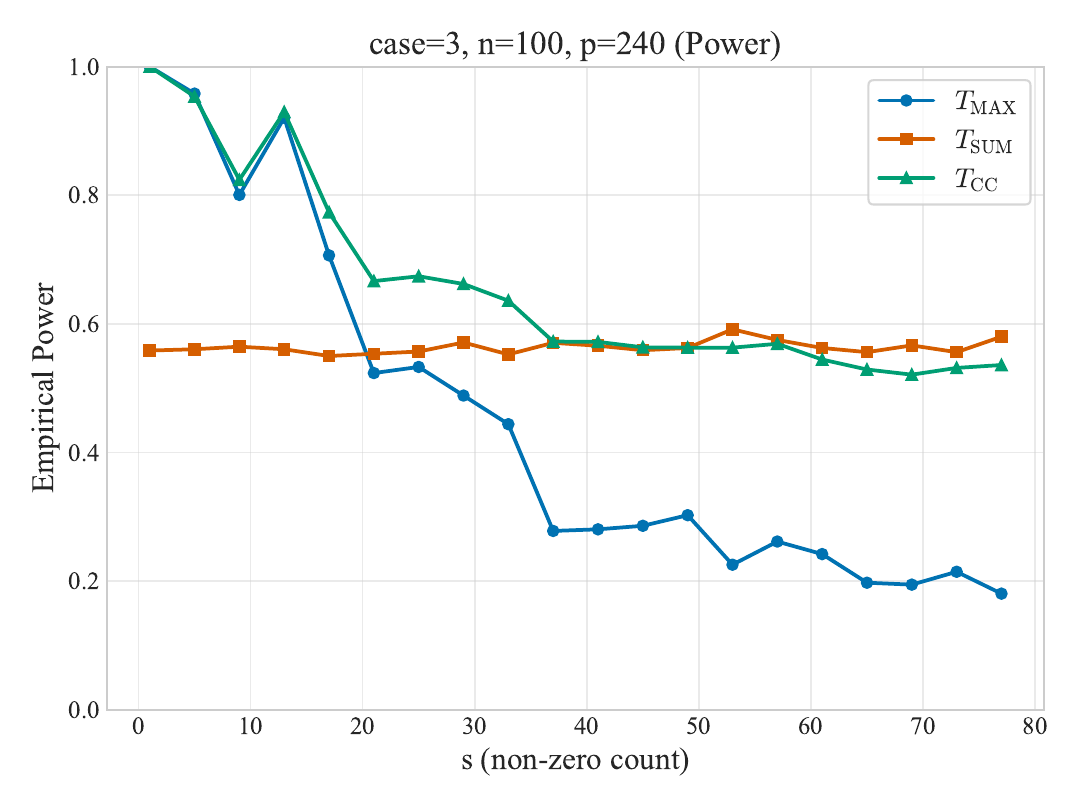}
\end{subfigure}
\hfill
\begin{subfigure}{0.23\textwidth}
    \centering
    \includegraphics[width=\linewidth]{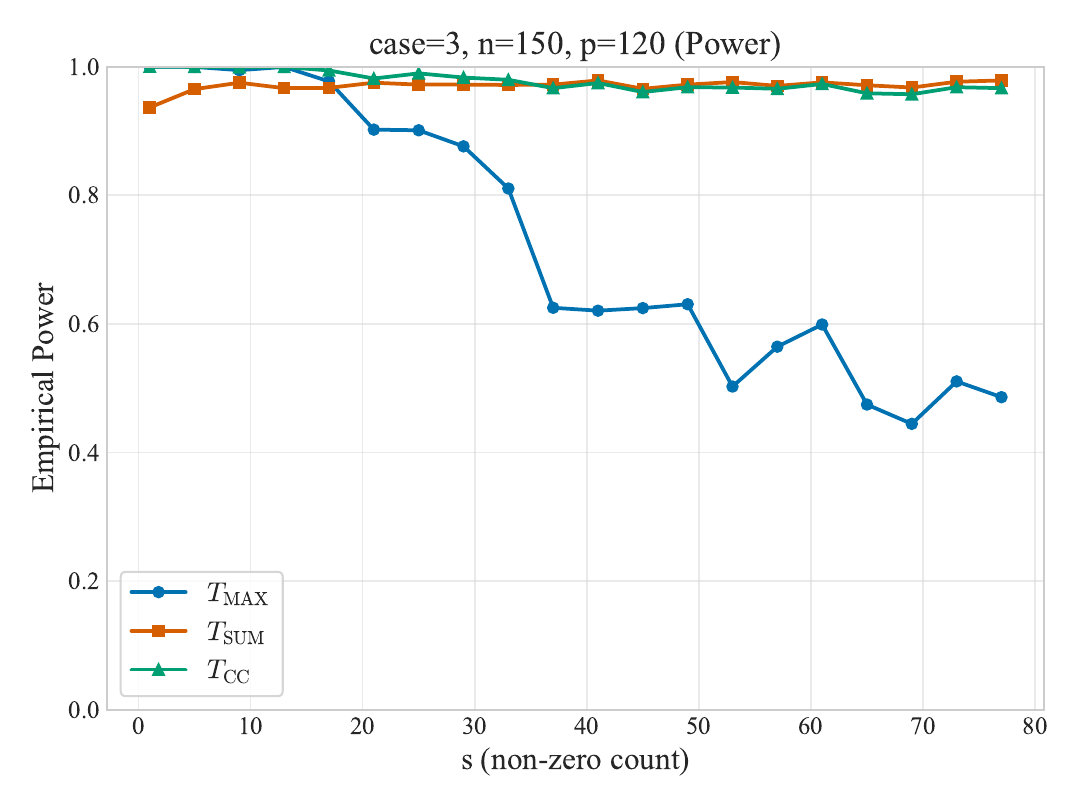}
\end{subfigure}
\hfill
\begin{subfigure}{0.23\textwidth}
    \centering
    \includegraphics[width=\linewidth]{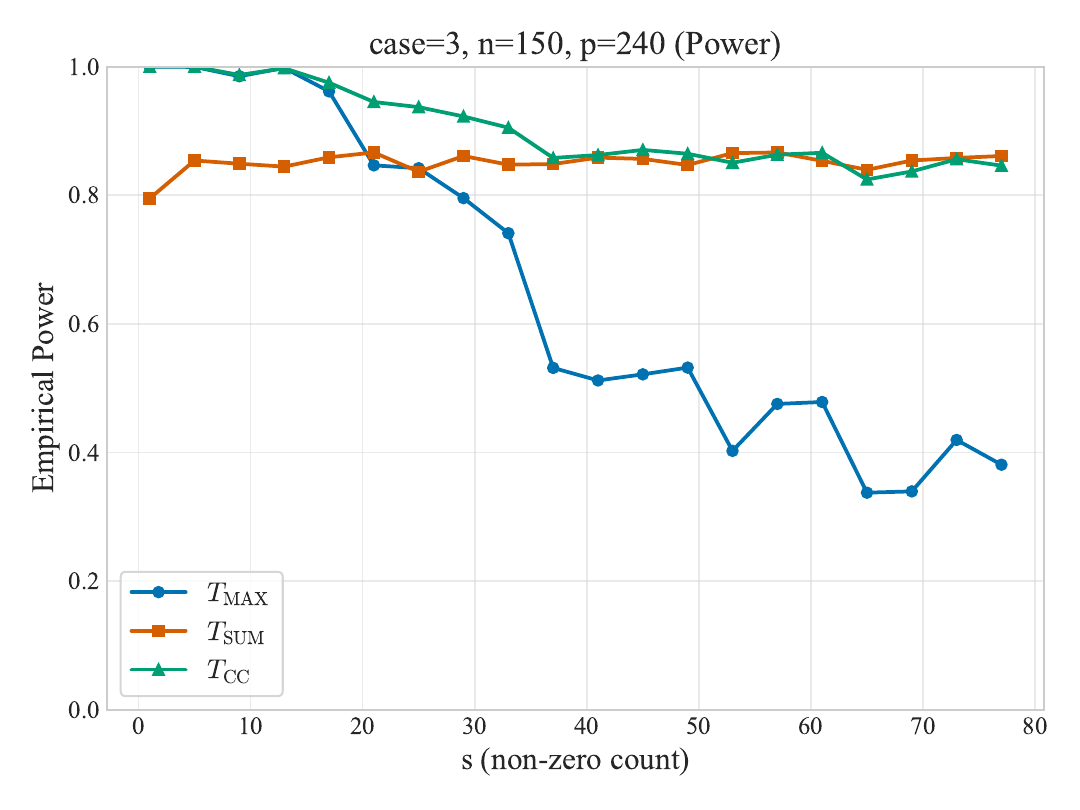}
\end{subfigure}

\caption{Empirical power as a function of $s$ for Cases~1--3 across varying $(n,p)$ 
settings under Normal distribution ($\tau = 0.25$; 2000 replications).}
\label{fig:power25}
\end{figure}


\begin{figure}[htbp]
\centering
\begin{subfigure}{0.23\textwidth}
    \centering
    \includegraphics[width=\linewidth]{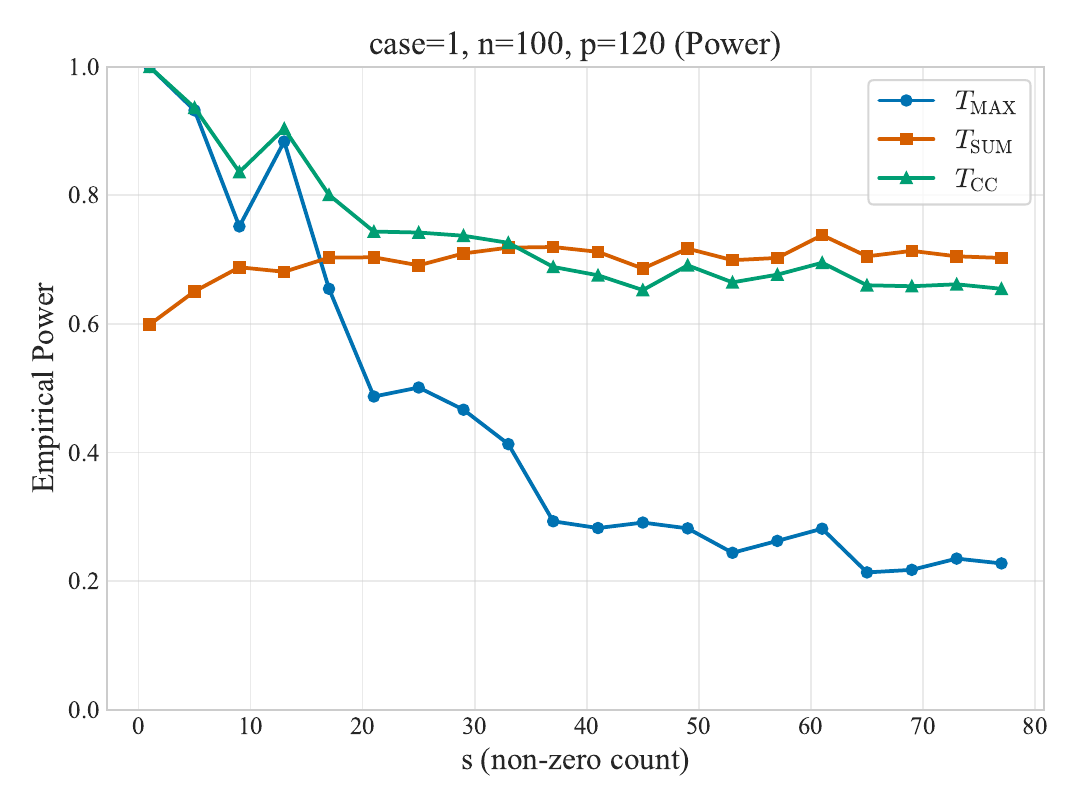}
\end{subfigure}
\hfill
\begin{subfigure}{0.23\textwidth}
    \centering
    \includegraphics[width=\linewidth]{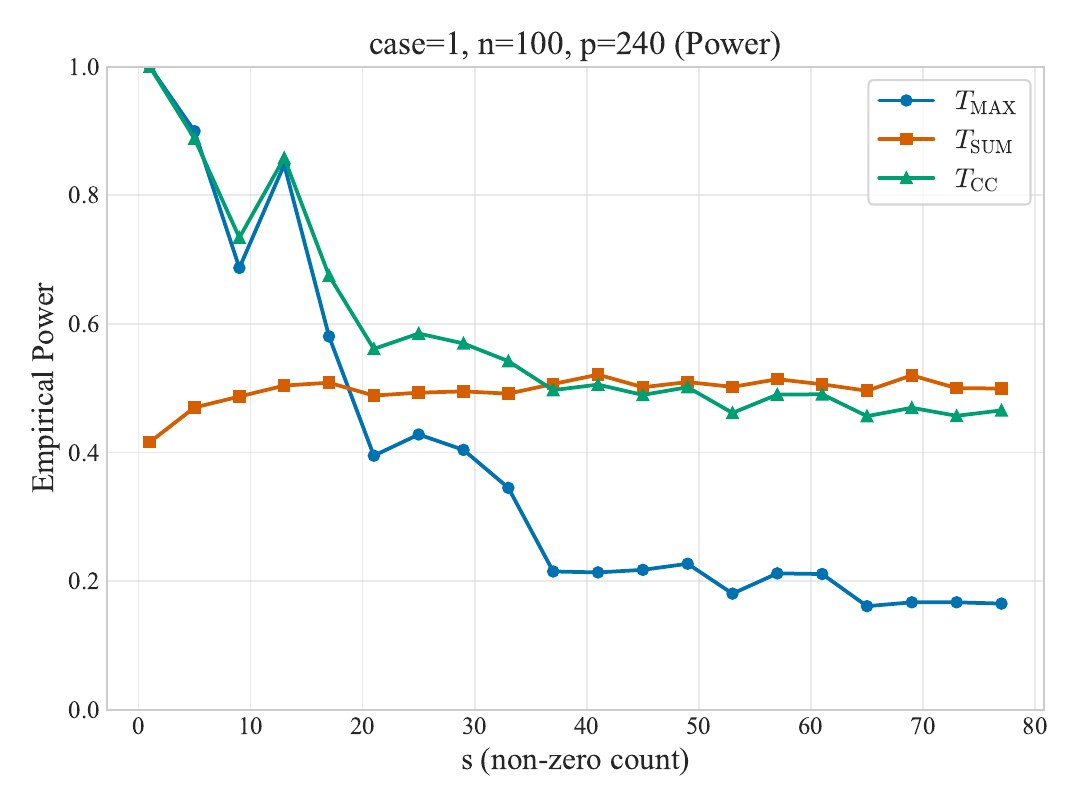}
\end{subfigure}
\hfill
\begin{subfigure}{0.23\textwidth}
    \centering
    \includegraphics[width=\linewidth]{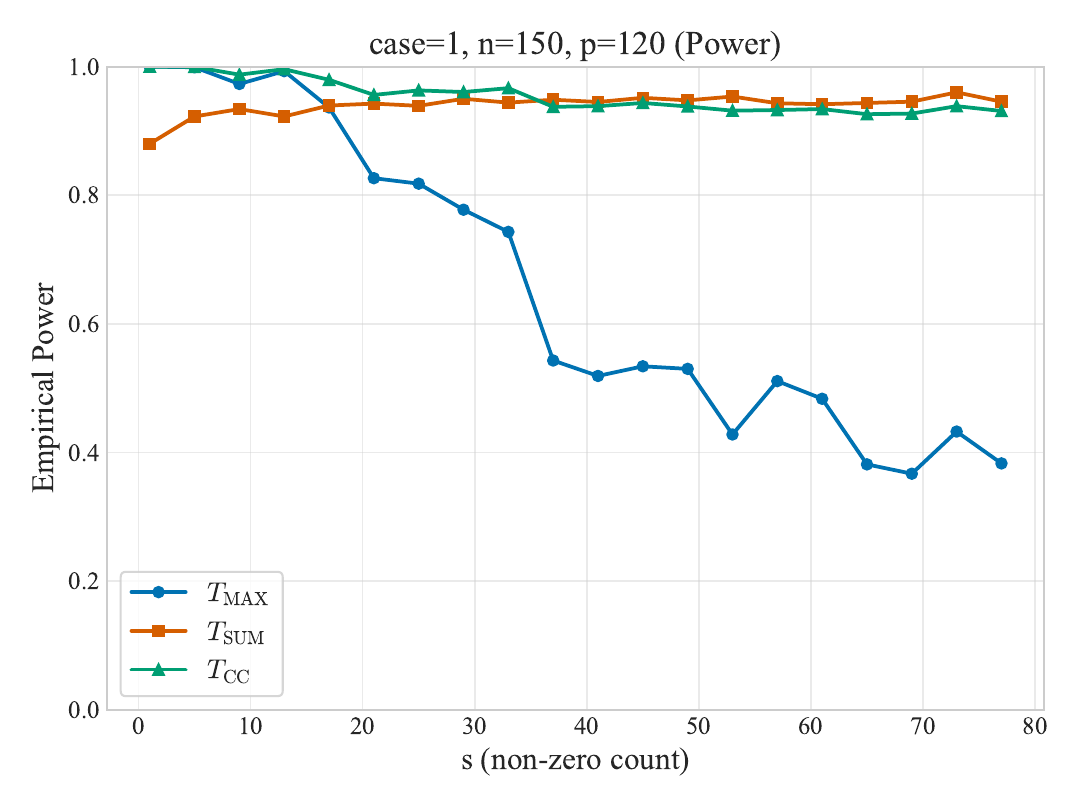}
\end{subfigure}
\hfill
\begin{subfigure}{0.23\textwidth}
    \centering
    \includegraphics[width=\linewidth]{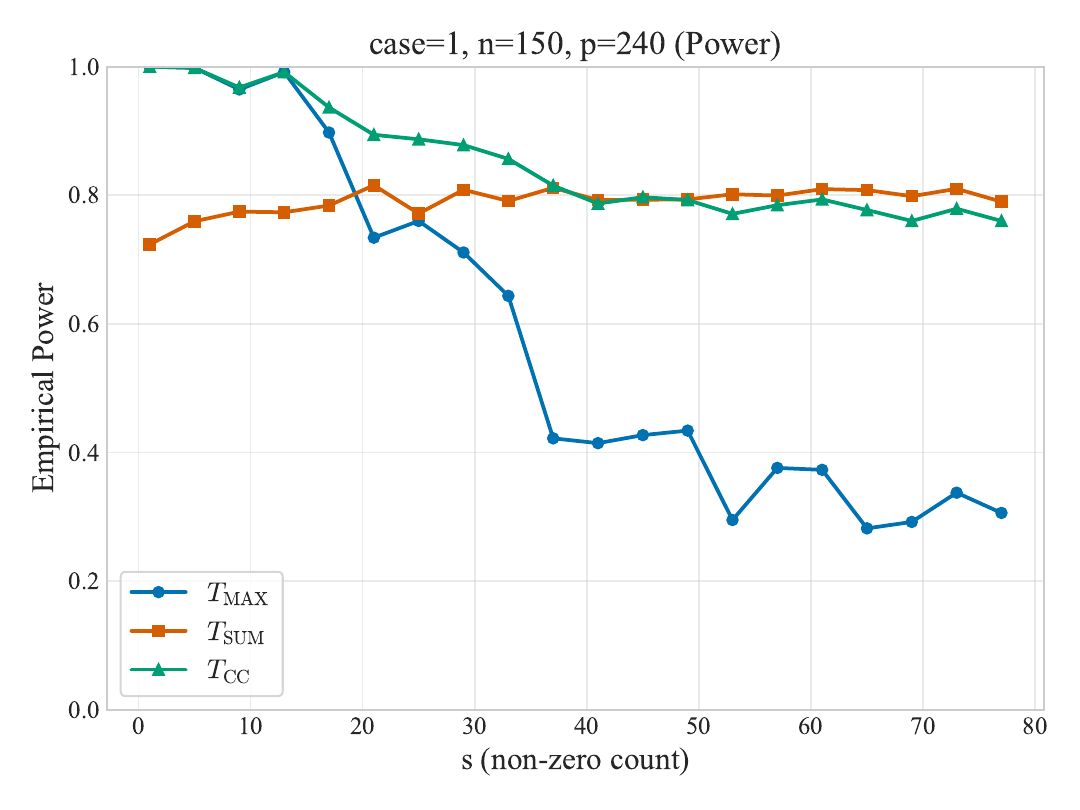}
\end{subfigure}

\vspace{0.3cm}
\begin{subfigure}{0.23\textwidth}
    \centering
    \includegraphics[width=\linewidth]{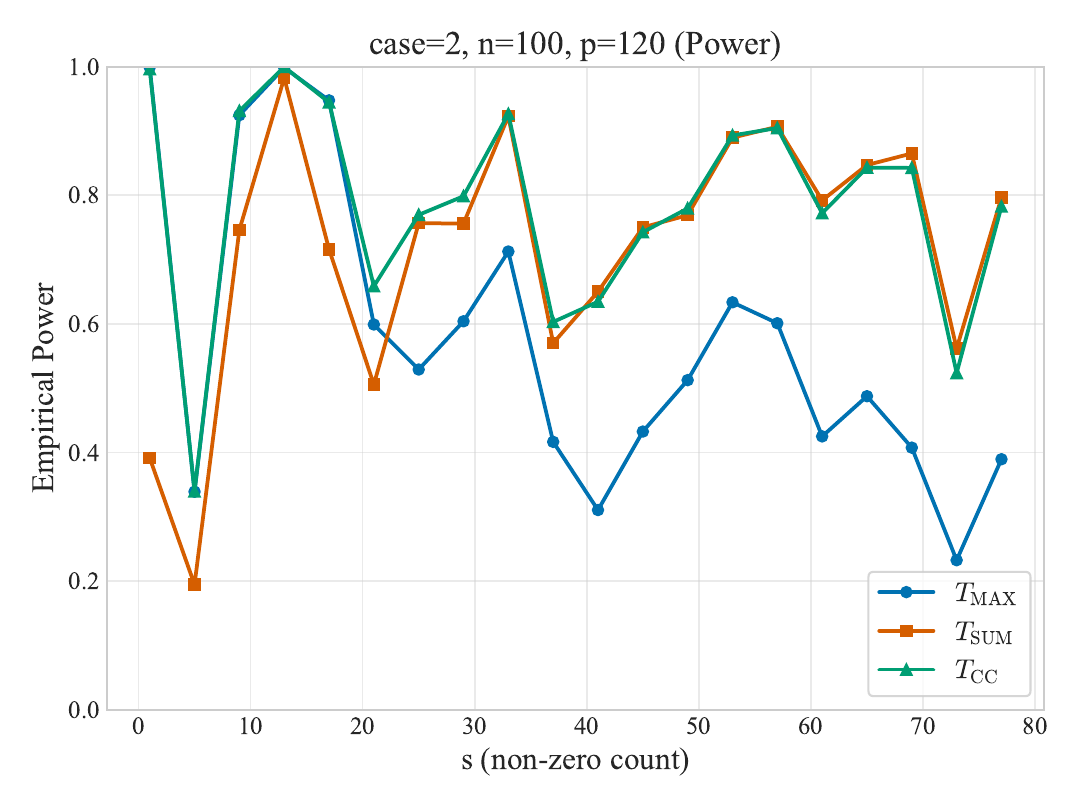}
\end{subfigure}
\hfill
\begin{subfigure}{0.23\textwidth}
    \centering
    \includegraphics[width=\linewidth]{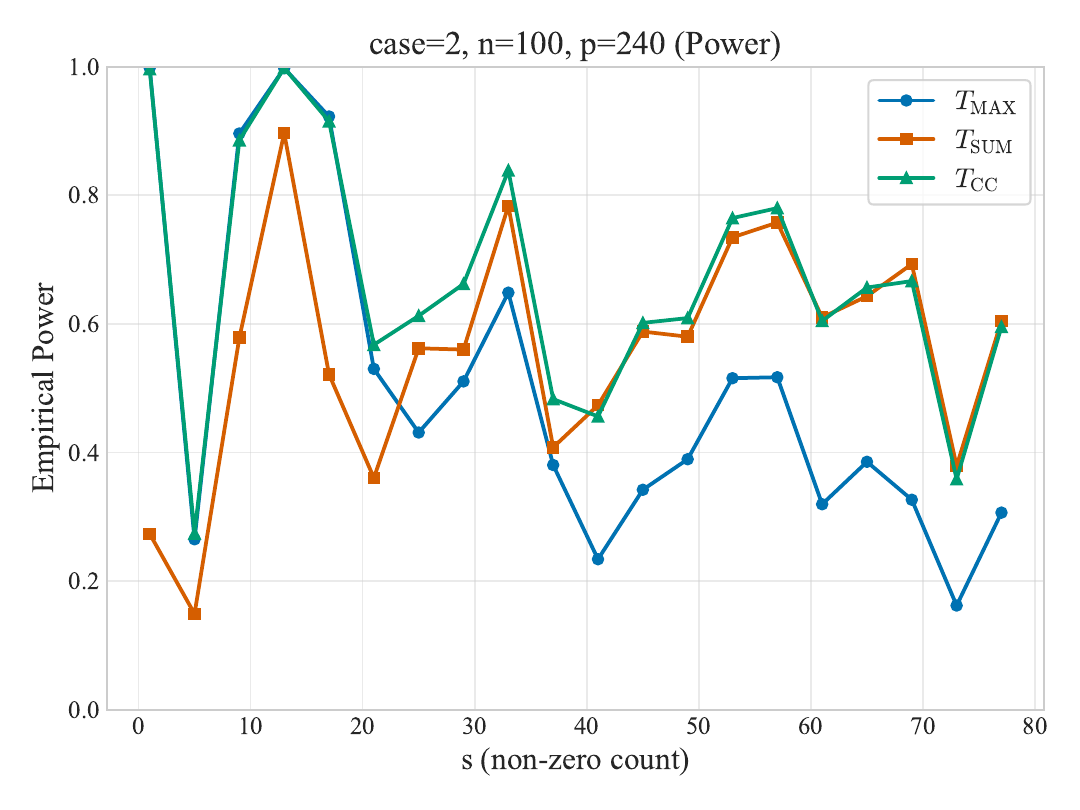}
\end{subfigure}
\hfill
\begin{subfigure}{0.23\textwidth}
    \centering
    \includegraphics[width=\linewidth]{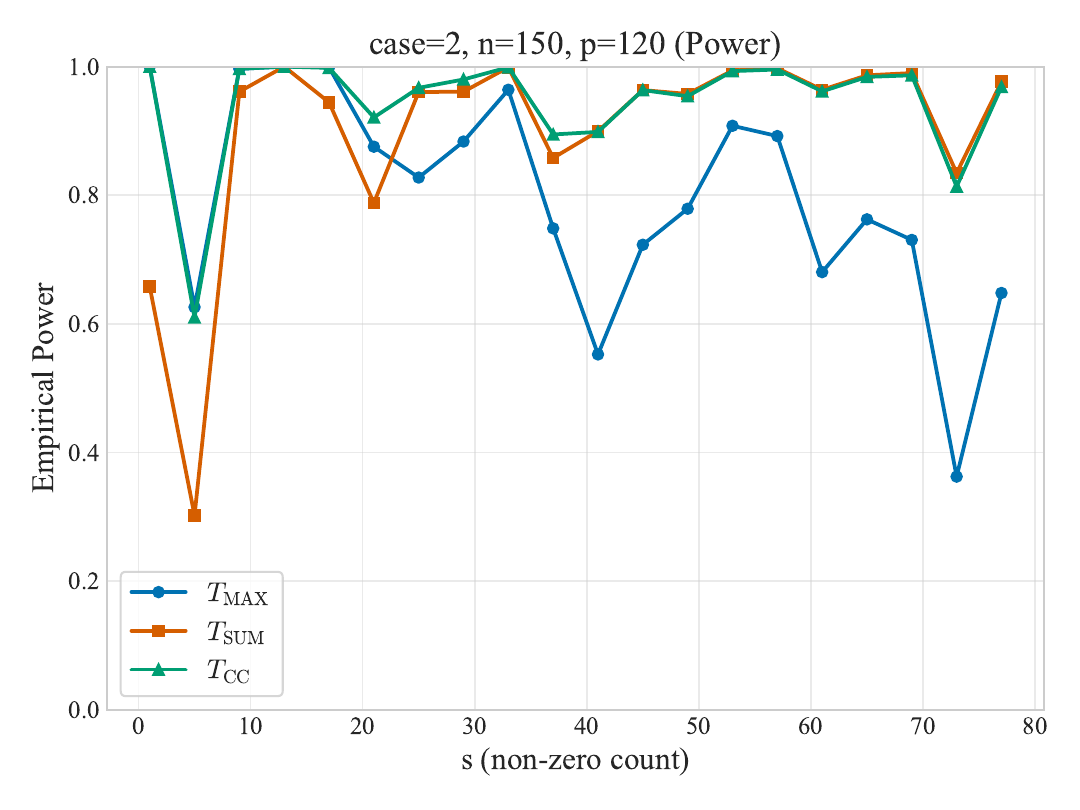}
\end{subfigure}
\hfill
\begin{subfigure}{0.23\textwidth}
    \centering
    \includegraphics[width=\linewidth]{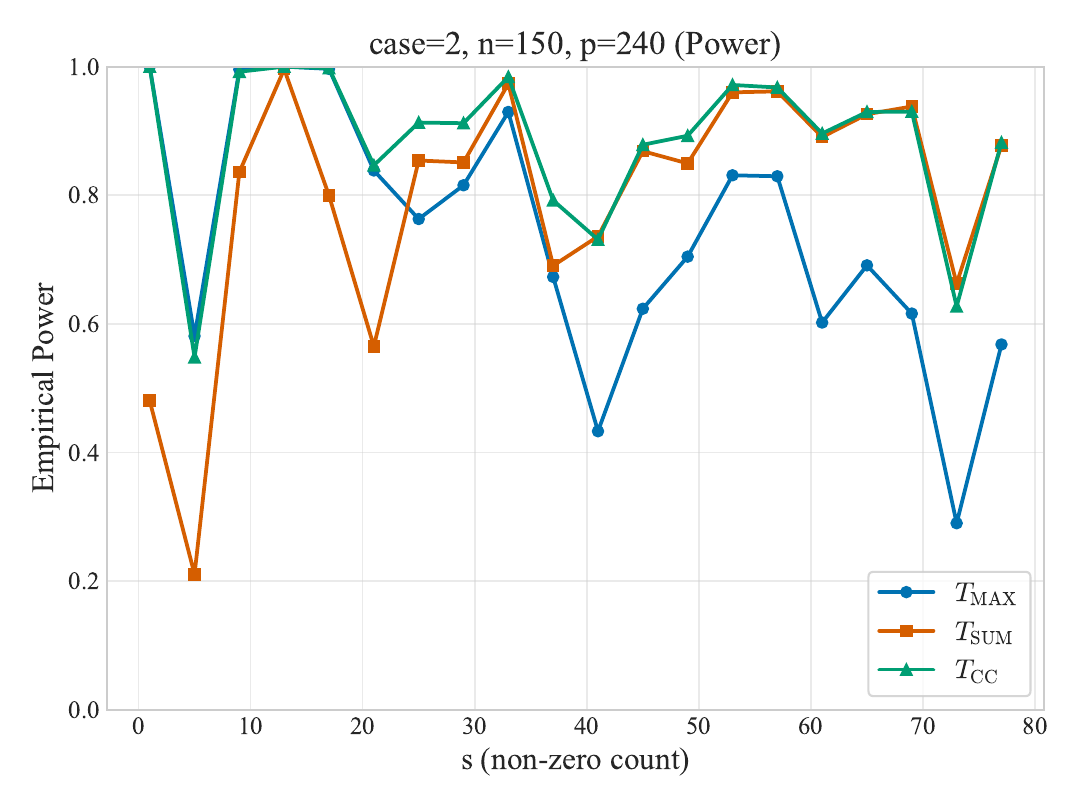}
\end{subfigure}

\vspace{0.3cm}
\begin{subfigure}{0.23\textwidth}
    \centering
    \includegraphics[width=\linewidth]{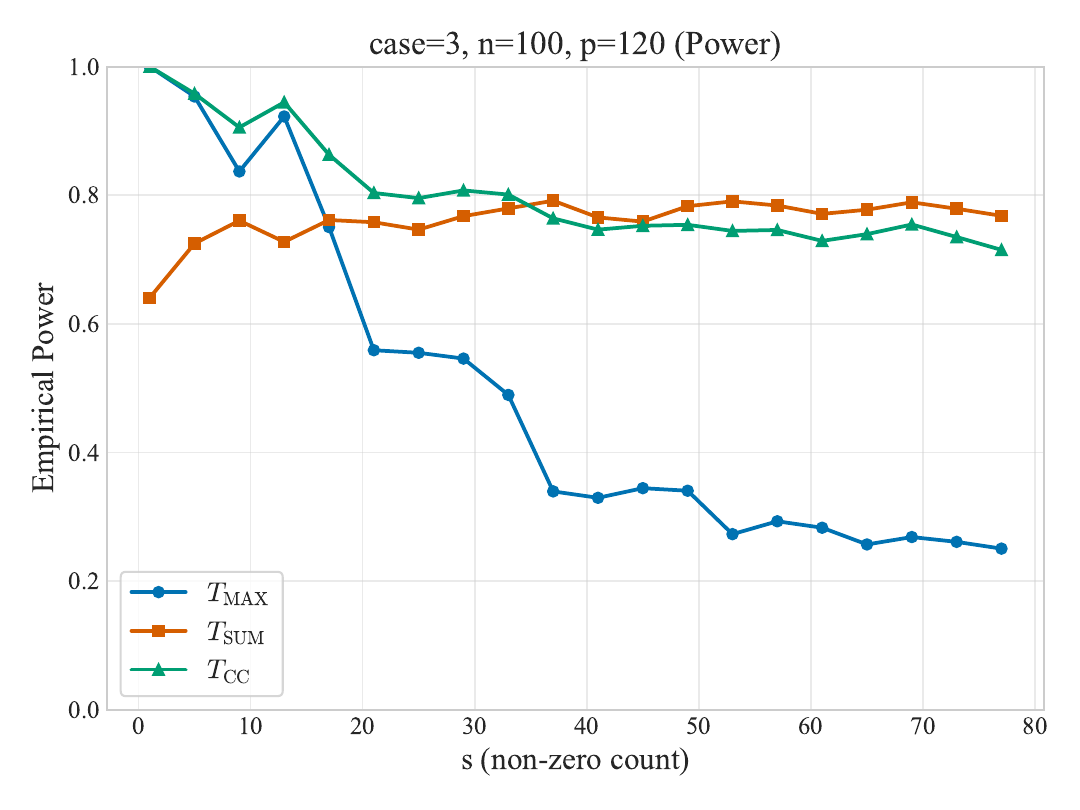}
\end{subfigure}
\hfill
\begin{subfigure}{0.23\textwidth}
    \centering
    \includegraphics[width=\linewidth]{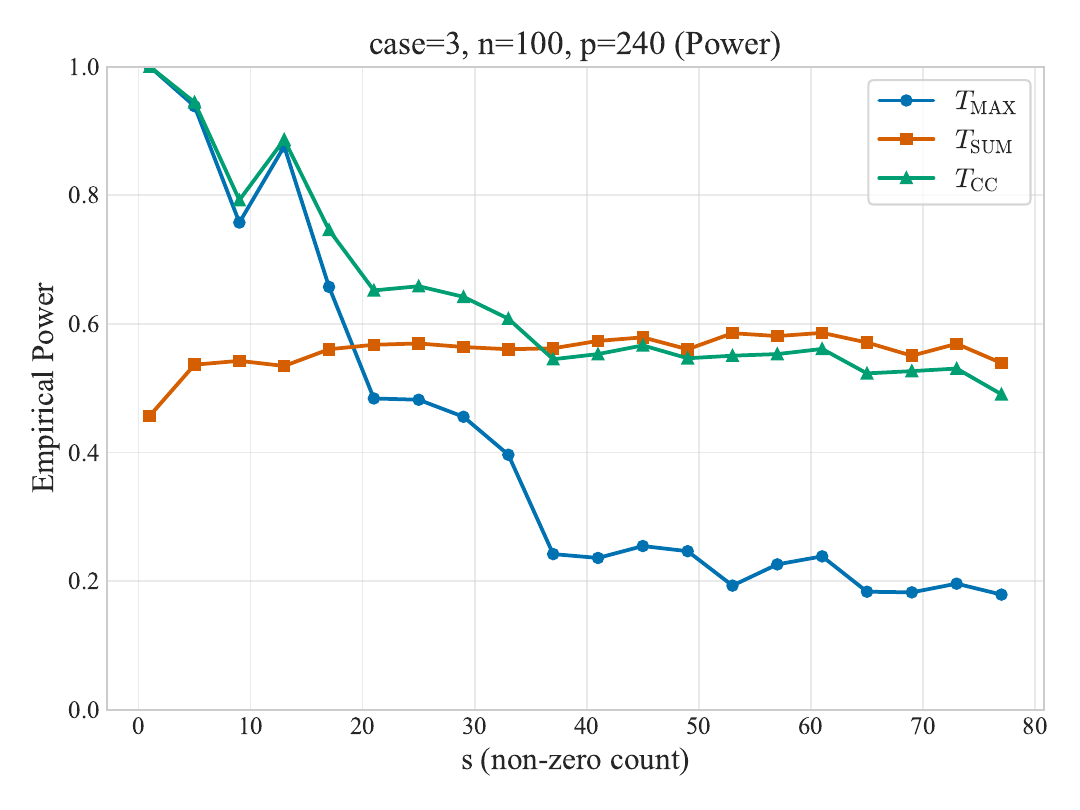}
\end{subfigure}
\hfill
\begin{subfigure}{0.23\textwidth}
    \centering
    \includegraphics[width=\linewidth]{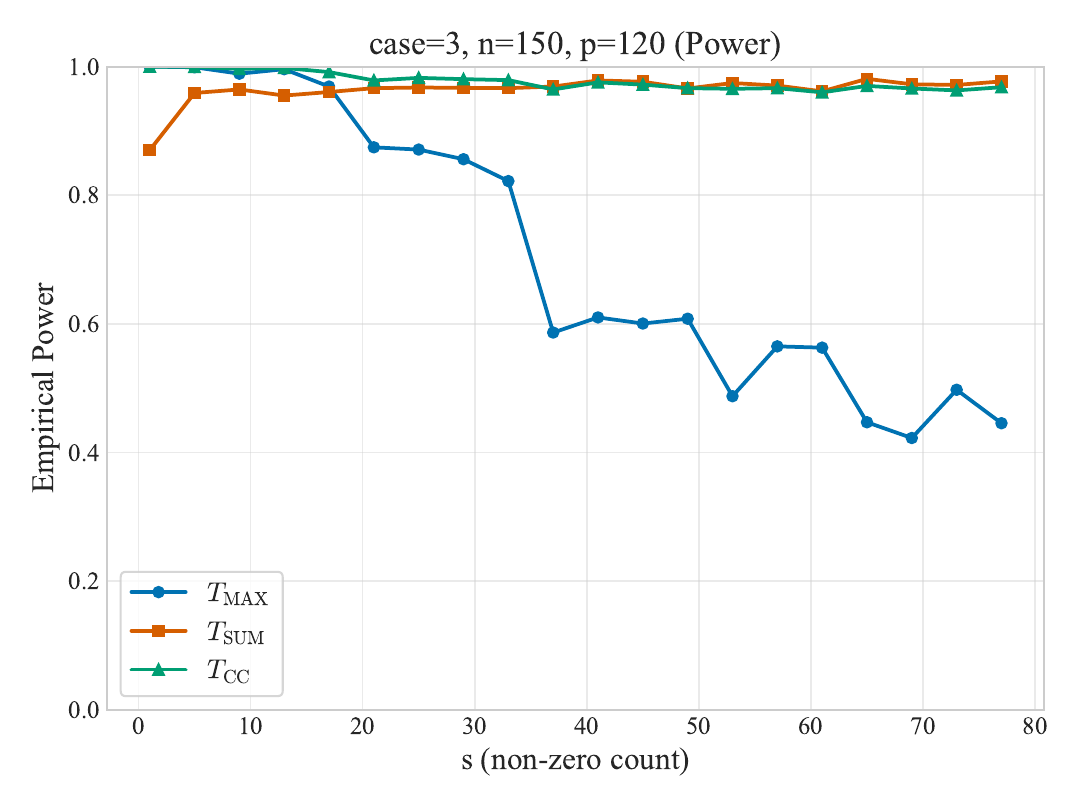}
\end{subfigure}
\hfill
\begin{subfigure}{0.23\textwidth}
    \centering
    \includegraphics[width=\linewidth]{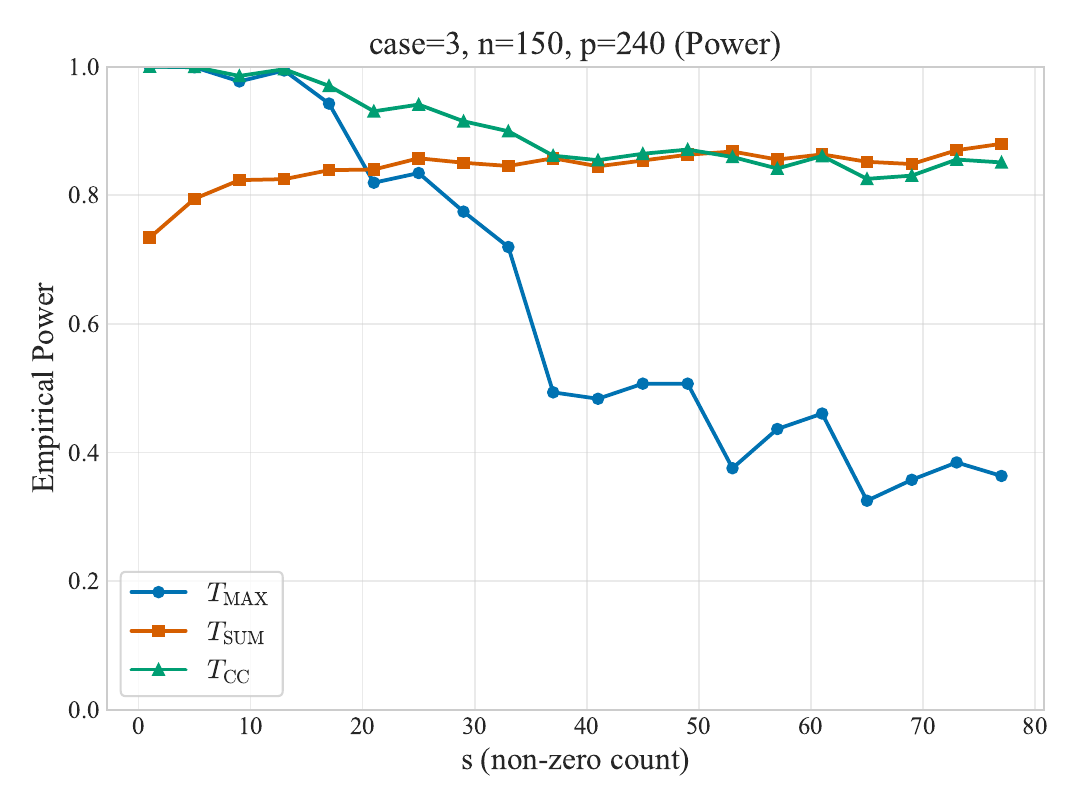}
\end{subfigure}

\caption{Empirical power as a function of $s$ for Cases~1--3 across varying $(n,p)$ 
settings under Laplace distribution ($\tau = 0.25$; 2000 replications).}
\label{fig:power25_laplace}
\end{figure}



\begin{figure}[htbp]
\centering
\begin{subfigure}{0.23\textwidth}
    \centering
    \includegraphics[width=\linewidth]{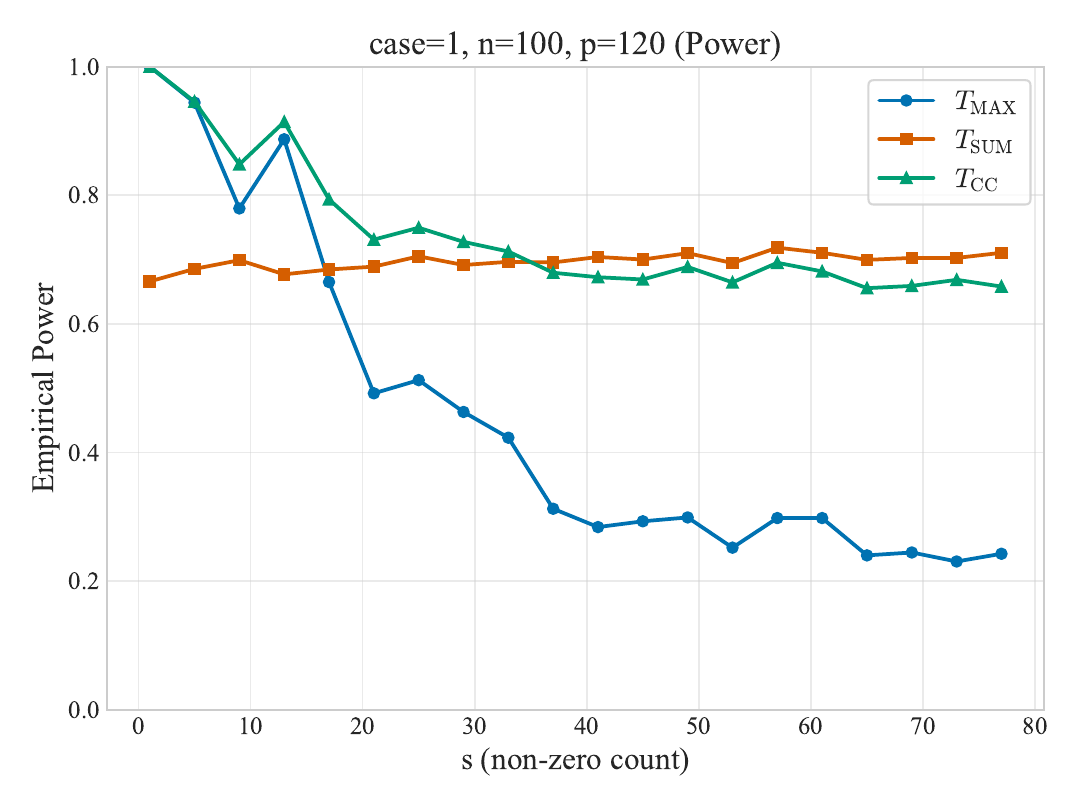}
\end{subfigure}
\hfill
\begin{subfigure}{0.23\textwidth}
    \centering
    \includegraphics[width=\linewidth]{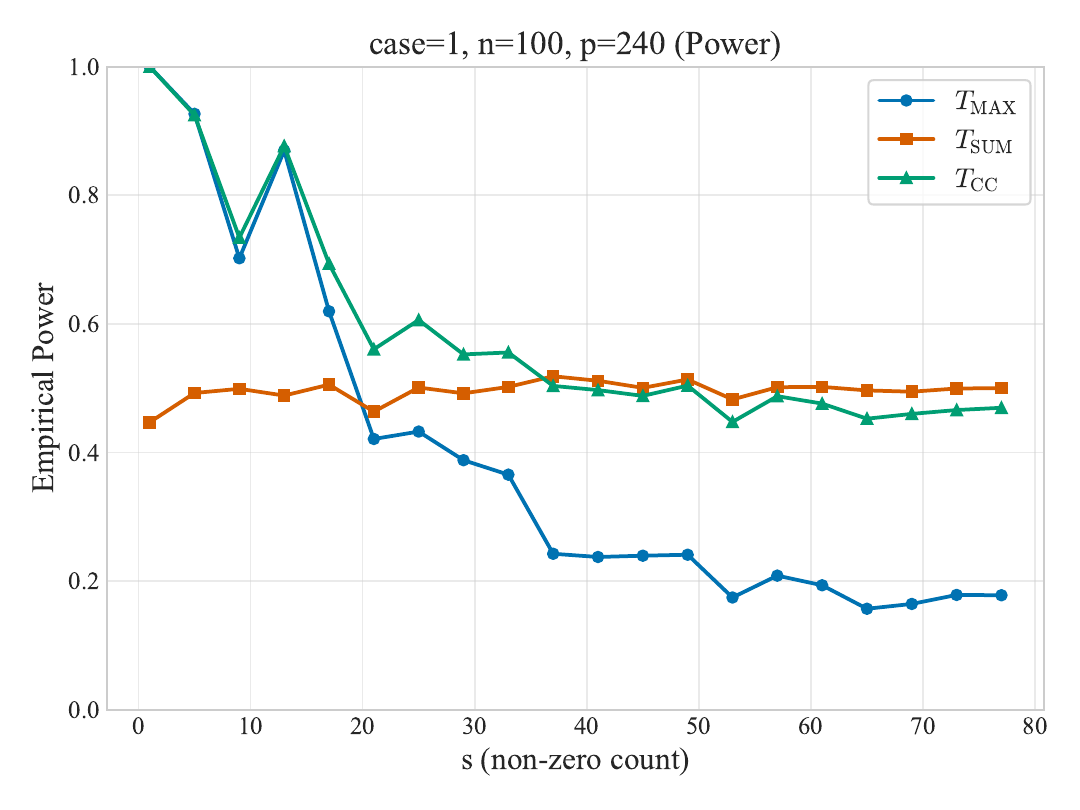}
\end{subfigure}
\hfill
\begin{subfigure}{0.23\textwidth}
    \centering
    \includegraphics[width=\linewidth]{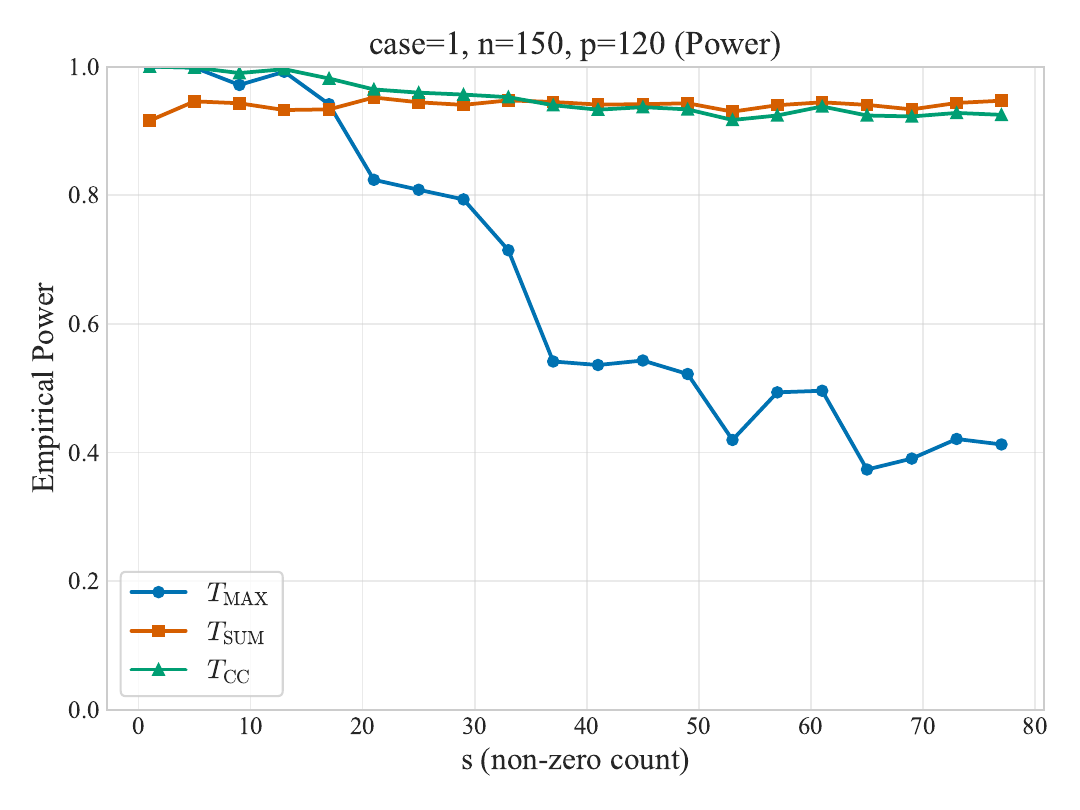}
\end{subfigure}
\hfill
\begin{subfigure}{0.23\textwidth}
    \centering
    \includegraphics[width=\linewidth]{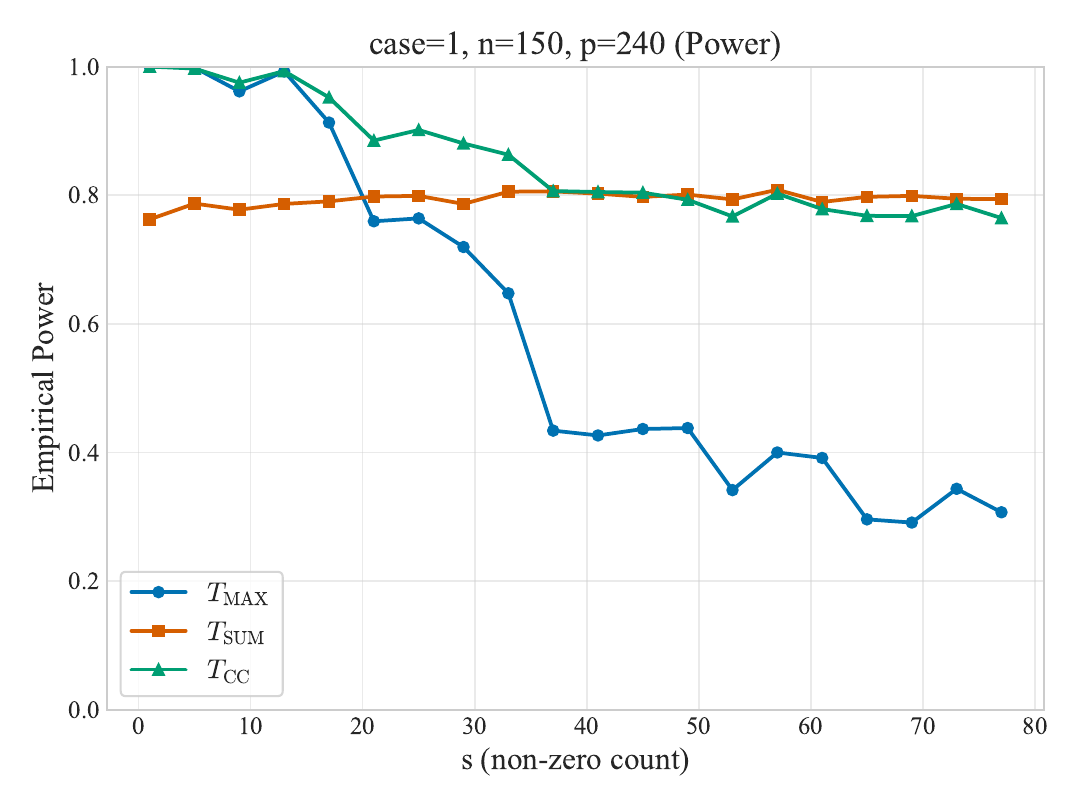}
\end{subfigure}

\vspace{0.3cm}
\begin{subfigure}{0.23\textwidth}
    \centering
    \includegraphics[width=\linewidth]{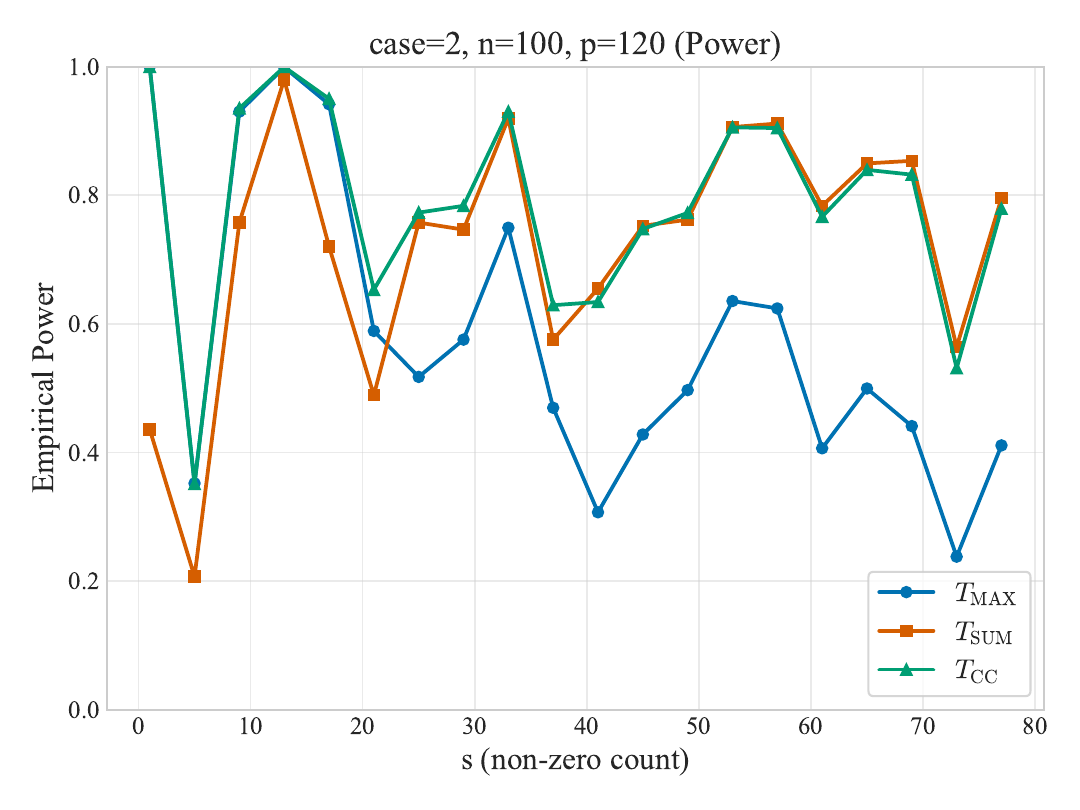}
\end{subfigure}
\hfill
\begin{subfigure}{0.23\textwidth}
    \centering
    \includegraphics[width=\linewidth]{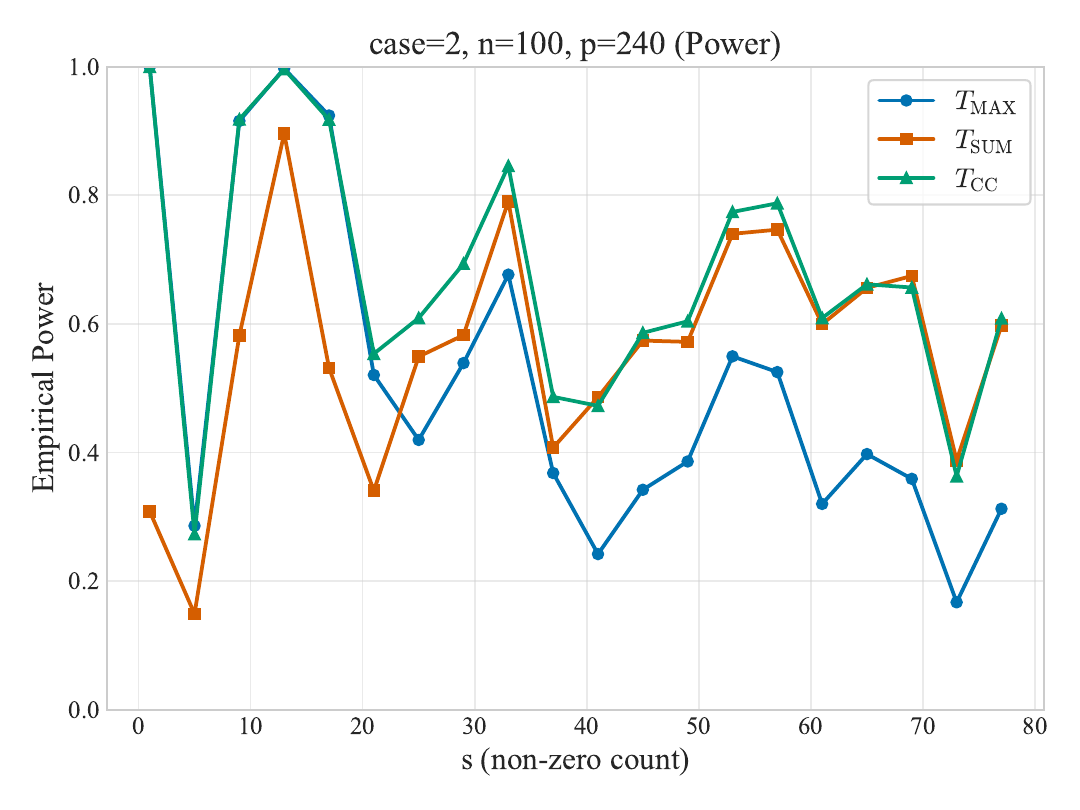}
\end{subfigure}
\hfill
\begin{subfigure}{0.23\textwidth}
    \centering
    \includegraphics[width=\linewidth]{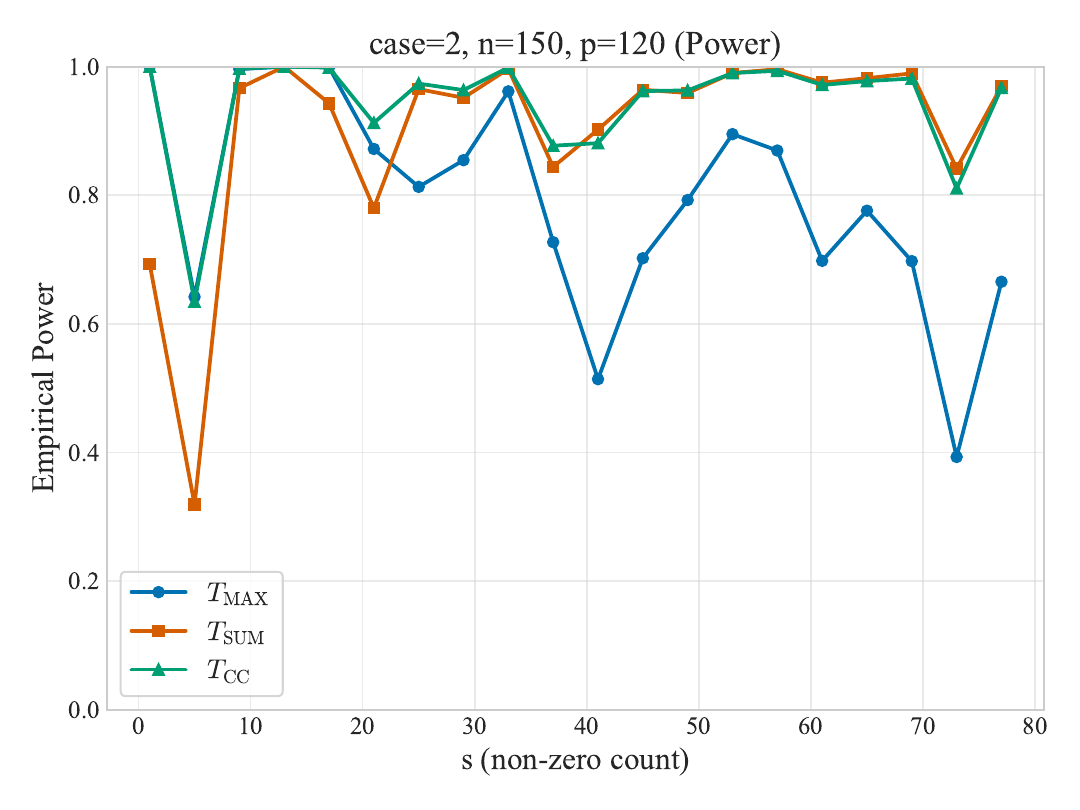}
\end{subfigure}
\hfill
\begin{subfigure}{0.23\textwidth}
    \centering
    \includegraphics[width=\linewidth]{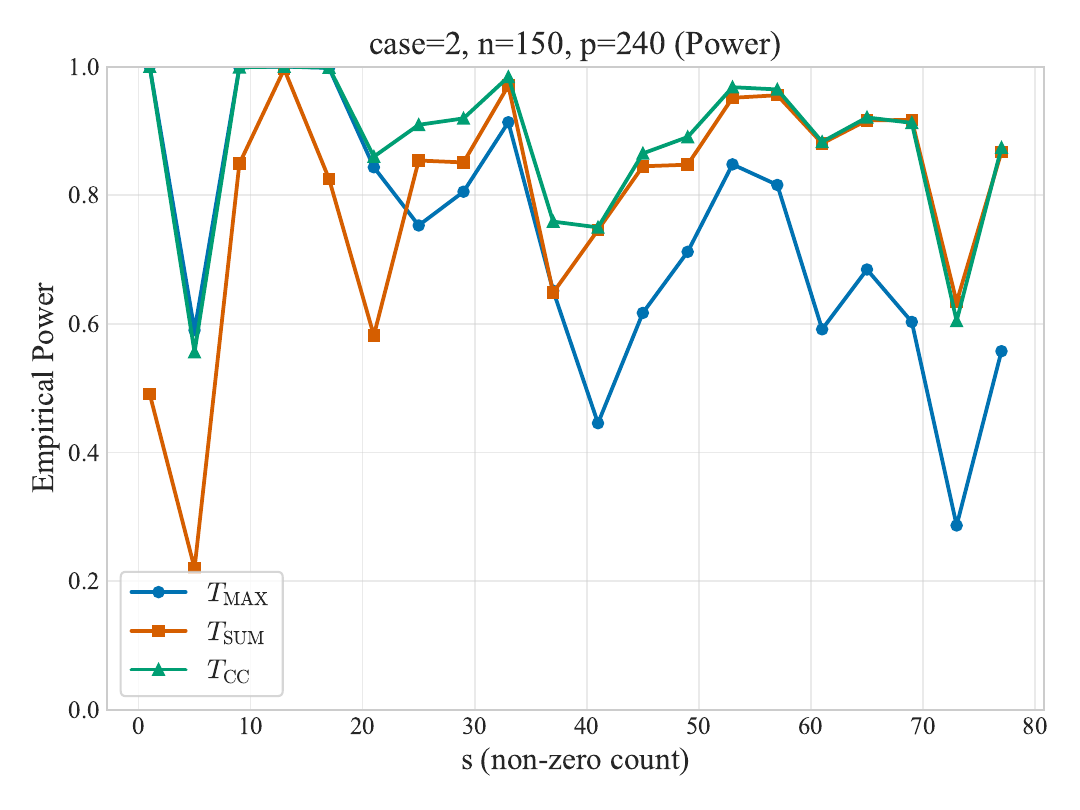}
\end{subfigure}

\vspace{0.3cm}
\begin{subfigure}{0.23\textwidth}
    \centering
    \includegraphics[width=\linewidth]{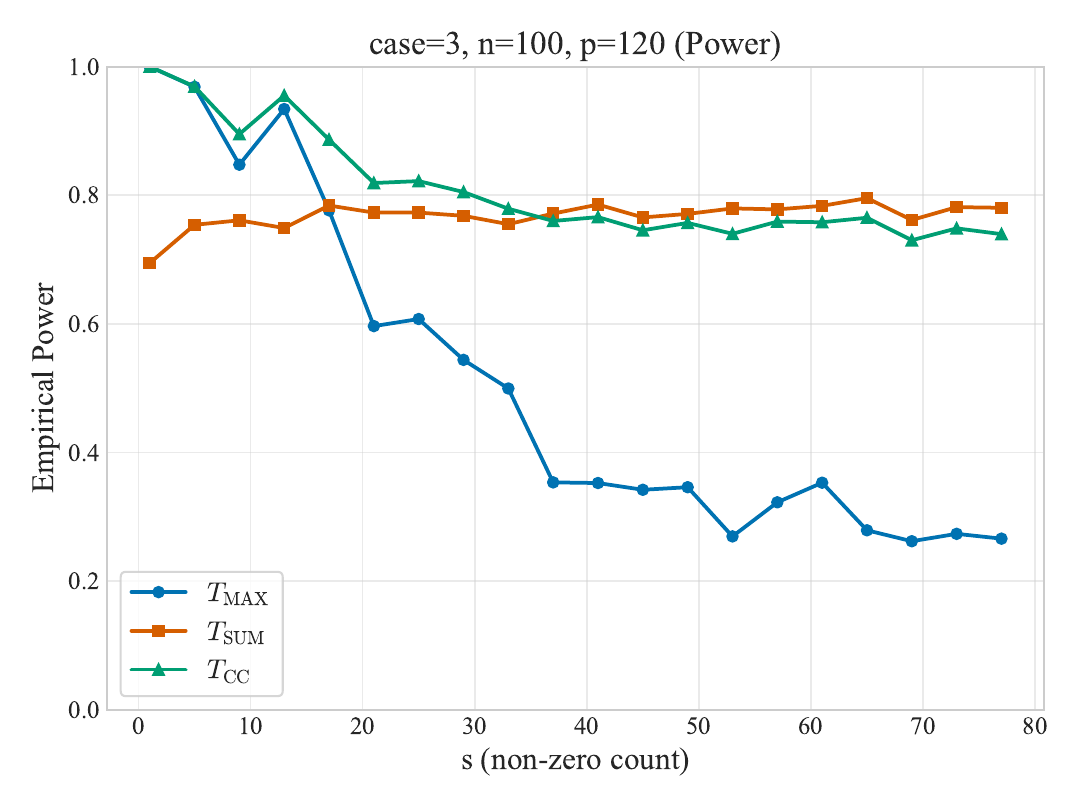}
\end{subfigure}
\hfill
\begin{subfigure}{0.23\textwidth}
    \centering
    \includegraphics[width=\linewidth]{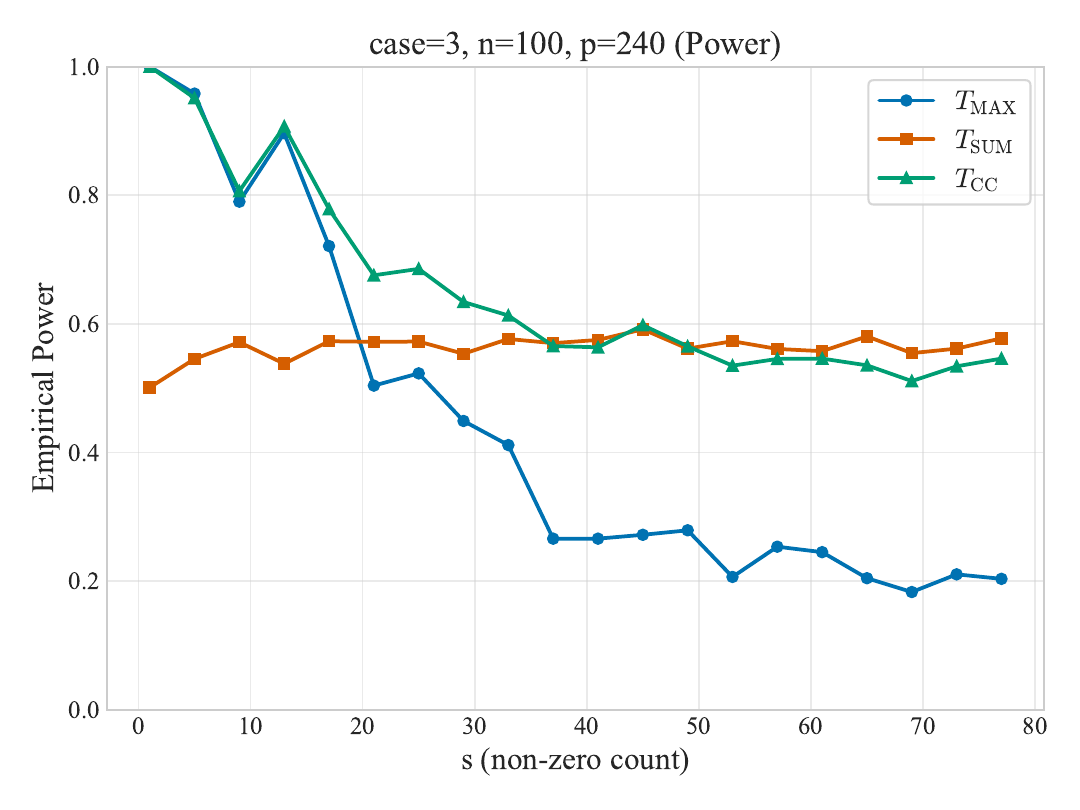}
\end{subfigure}
\hfill
\begin{subfigure}{0.23\textwidth}
    \centering
    \includegraphics[width=\linewidth]{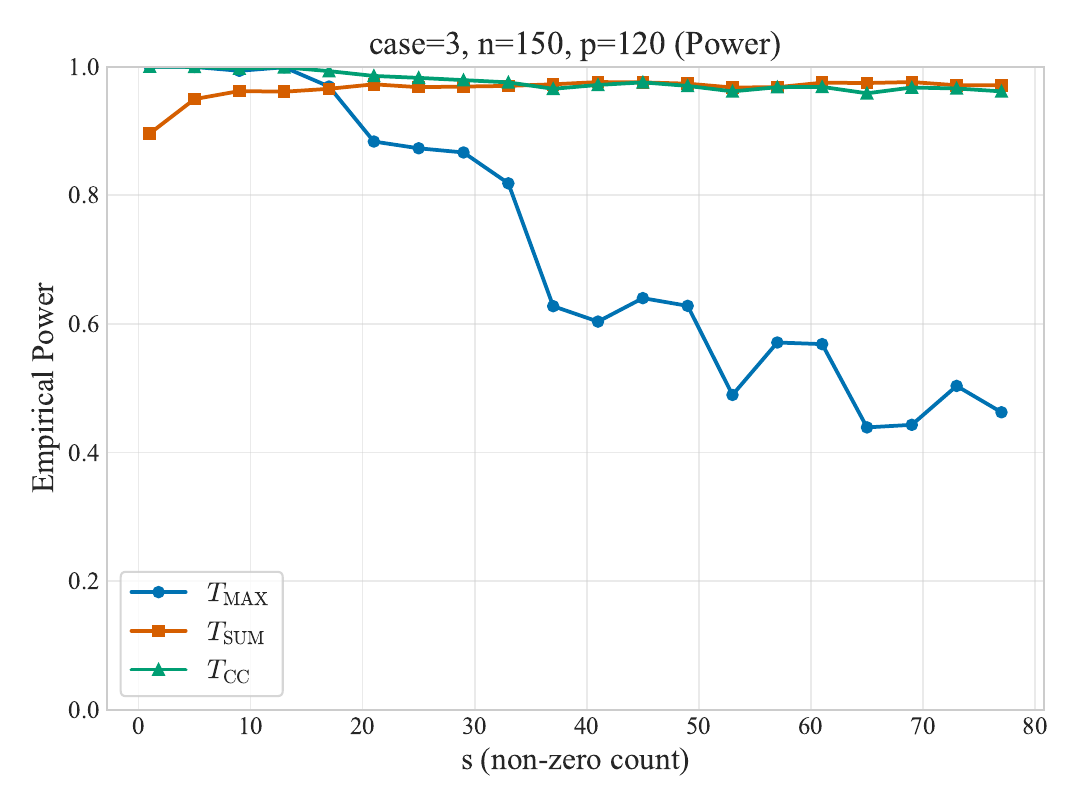}
\end{subfigure}
\hfill
\begin{subfigure}{0.23\textwidth}
    \centering
    \includegraphics[width=\linewidth]{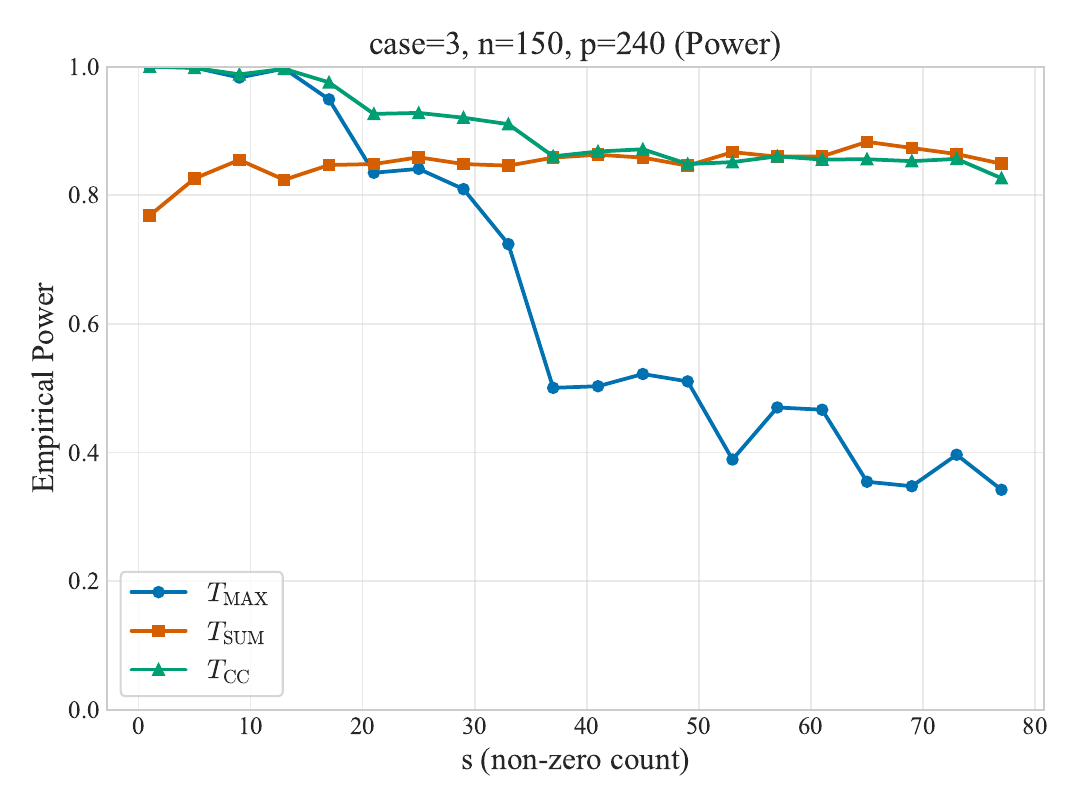}
\end{subfigure}

\caption{Empirical power as a function of $s$ for Cases~1--3 across varying $(n,p)$ 
settings under Logistic distribution ($\tau = 0.25$; 2000 replications).}
\label{fig:power25_logistic}
\end{figure}


\begin{figure}[htbp]
\centering
\begin{subfigure}{0.23\textwidth}
    \centering
    \includegraphics[width=\linewidth]{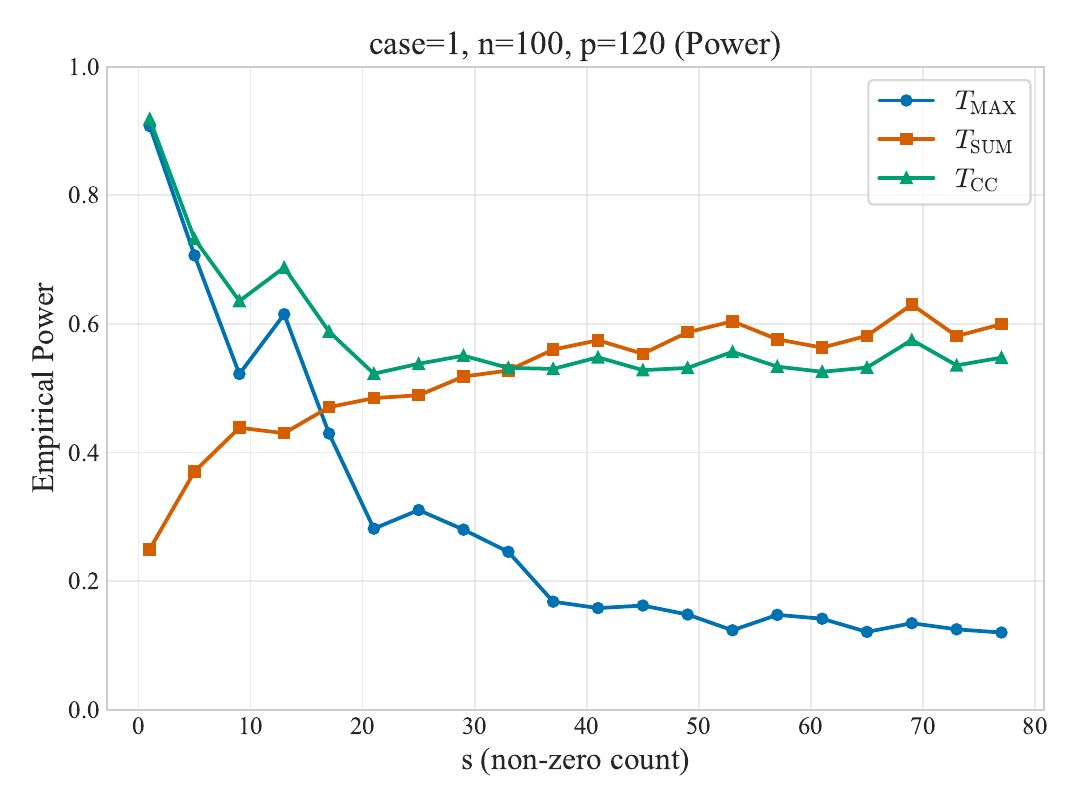}
\end{subfigure}
\hfill
\begin{subfigure}{0.23\textwidth}
    \centering
    \includegraphics[width=\linewidth]{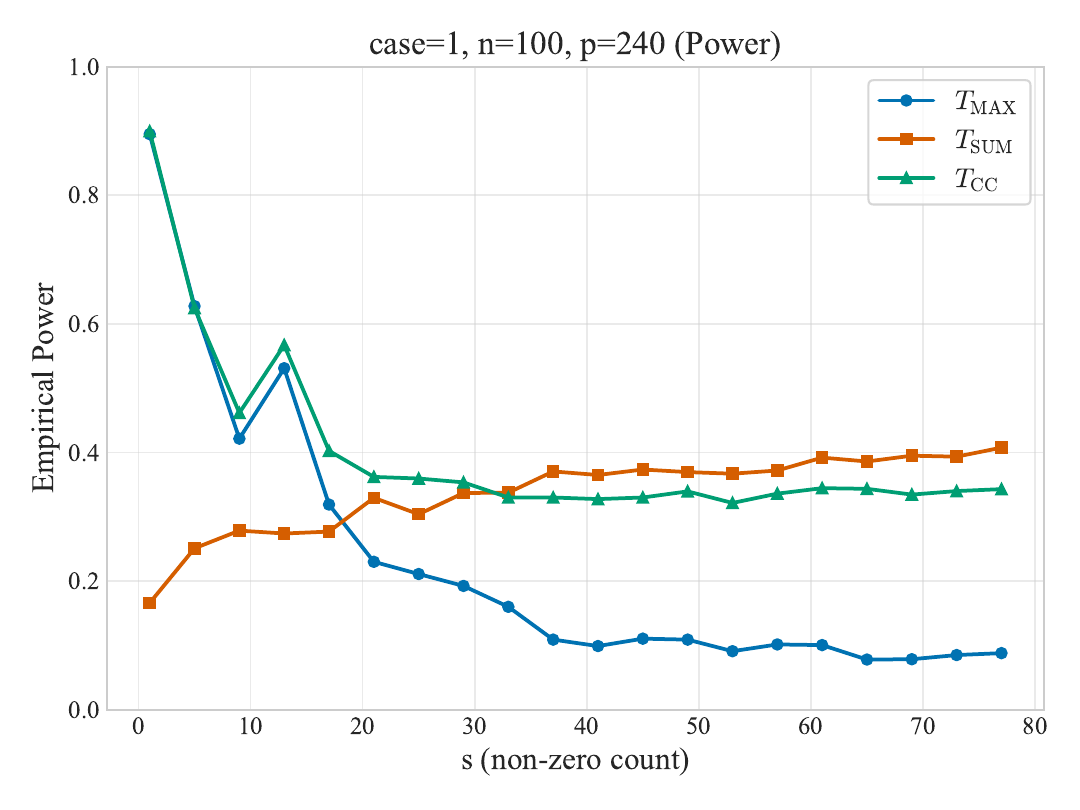}
\end{subfigure}
\hfill
\begin{subfigure}{0.23\textwidth}
    \centering
    \includegraphics[width=\linewidth]{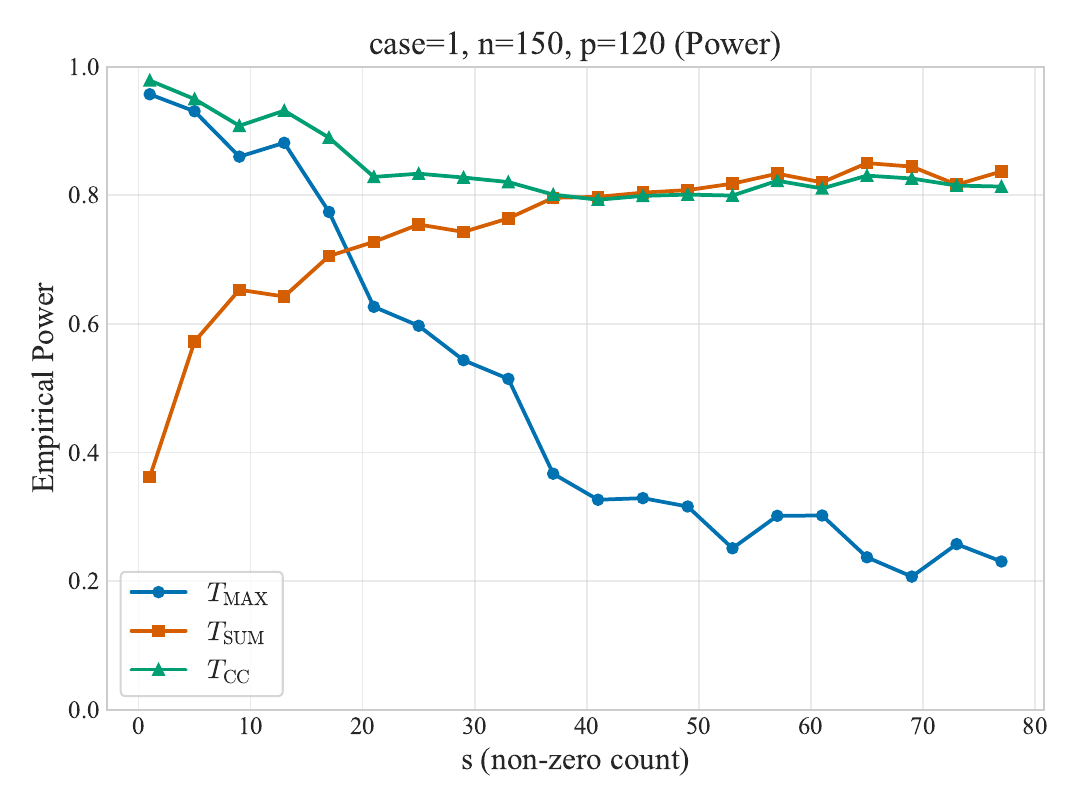}
\end{subfigure}
\hfill
\begin{subfigure}{0.23\textwidth}
    \centering
    \includegraphics[width=\linewidth]{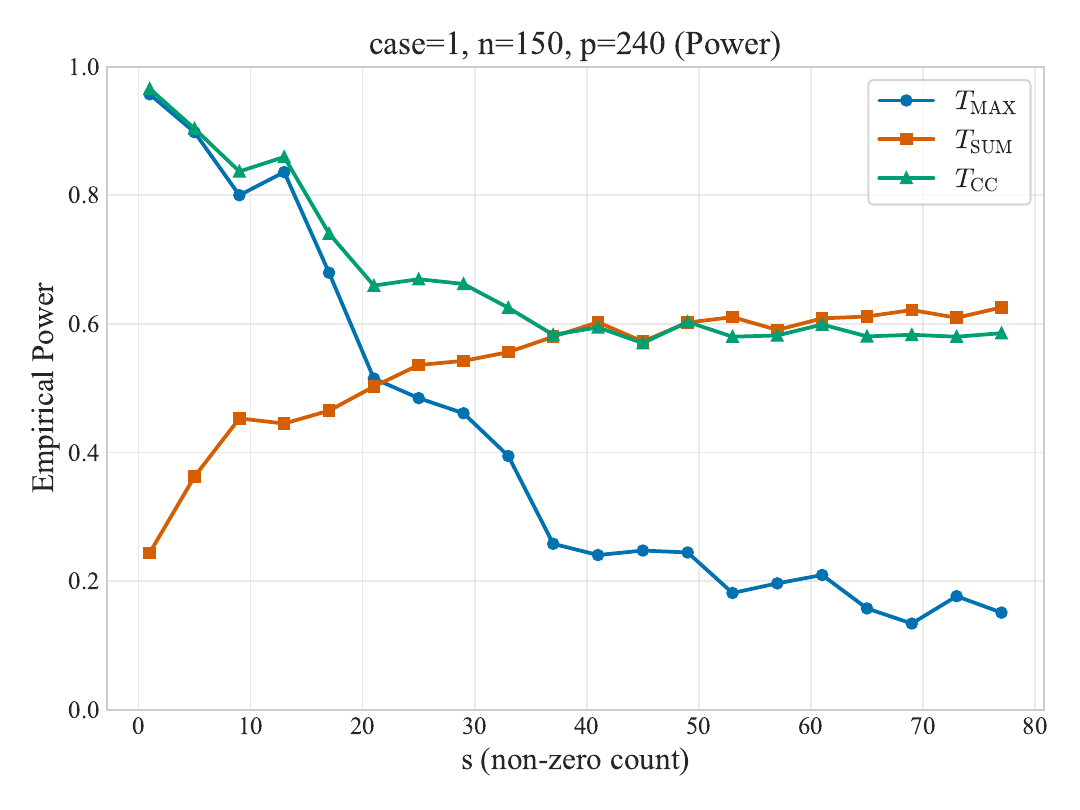}
\end{subfigure}

\vspace{0.3cm}
\begin{subfigure}{0.23\textwidth}
    \centering
    \includegraphics[width=\linewidth]{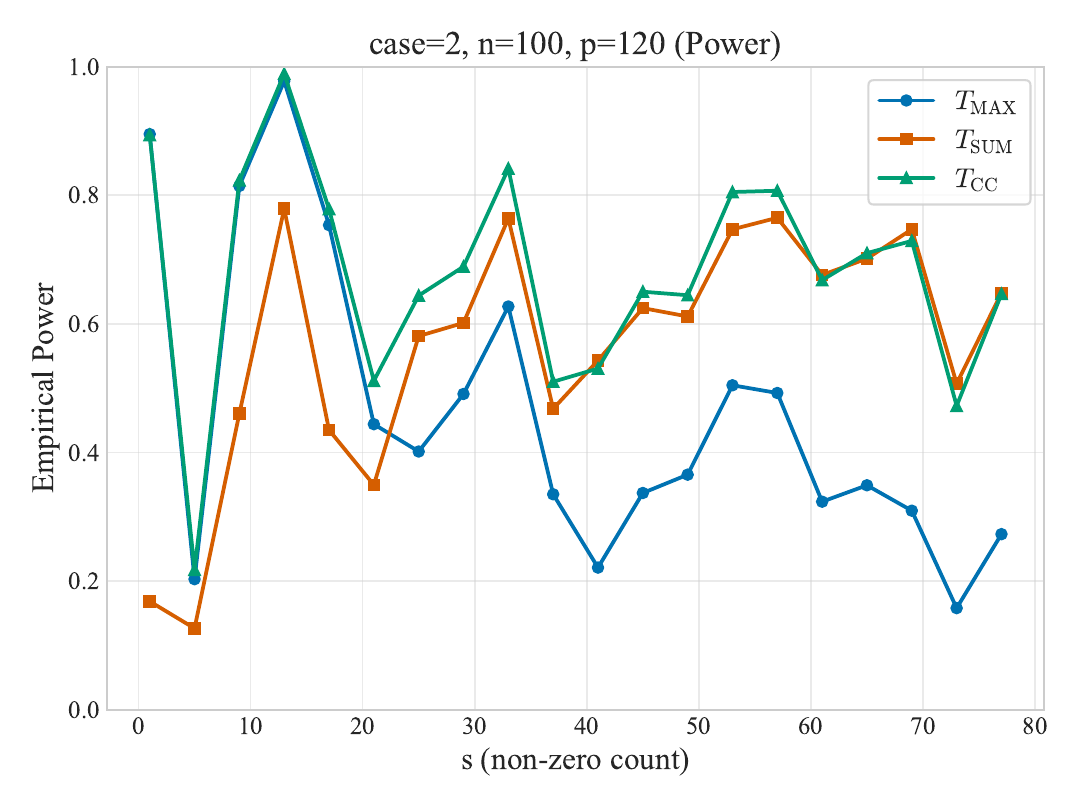}
\end{subfigure}
\hfill
\begin{subfigure}{0.23\textwidth}
    \centering
    \includegraphics[width=\linewidth]{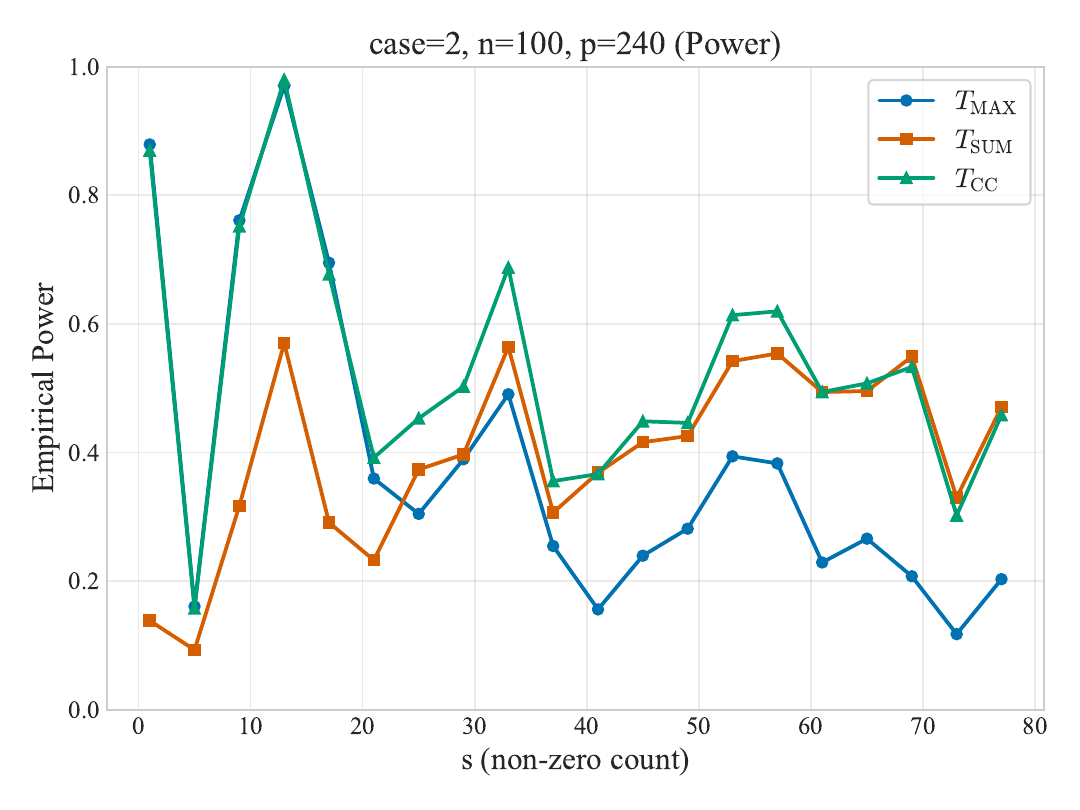}
\end{subfigure}
\hfill
\begin{subfigure}{0.23\textwidth}
    \centering
    \includegraphics[width=\linewidth]{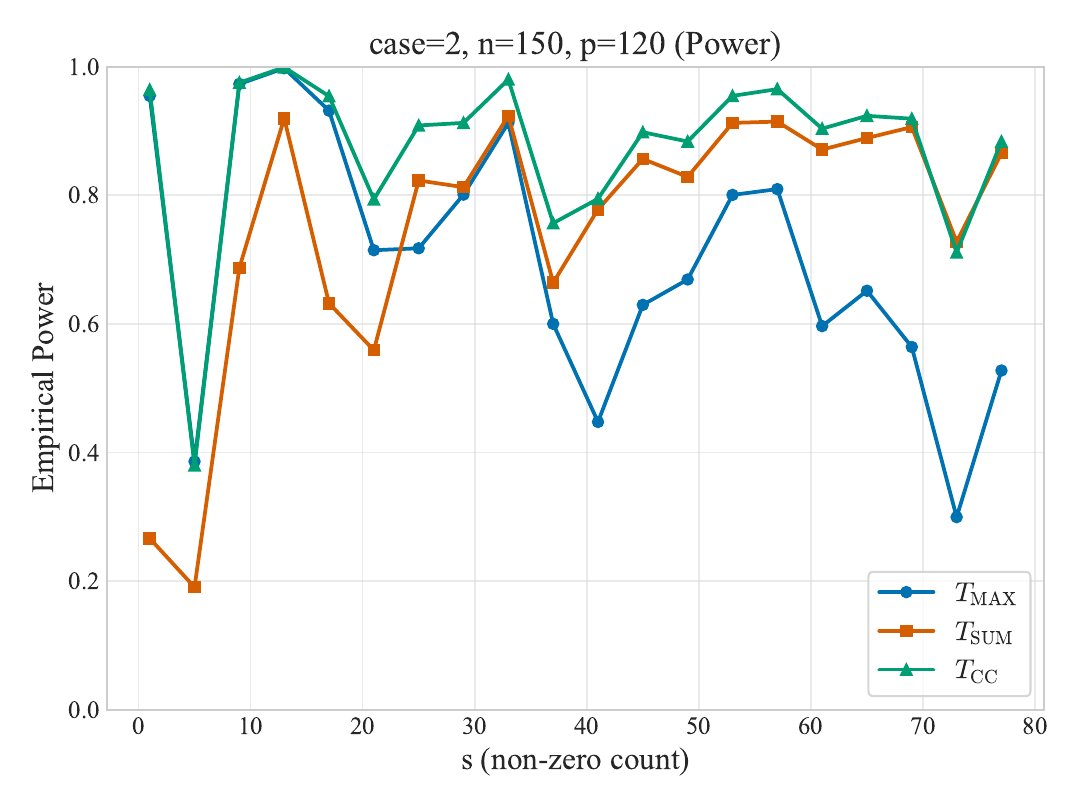}
\end{subfigure}
\hfill
\begin{subfigure}{0.23\textwidth}
    \centering
    \includegraphics[width=\linewidth]{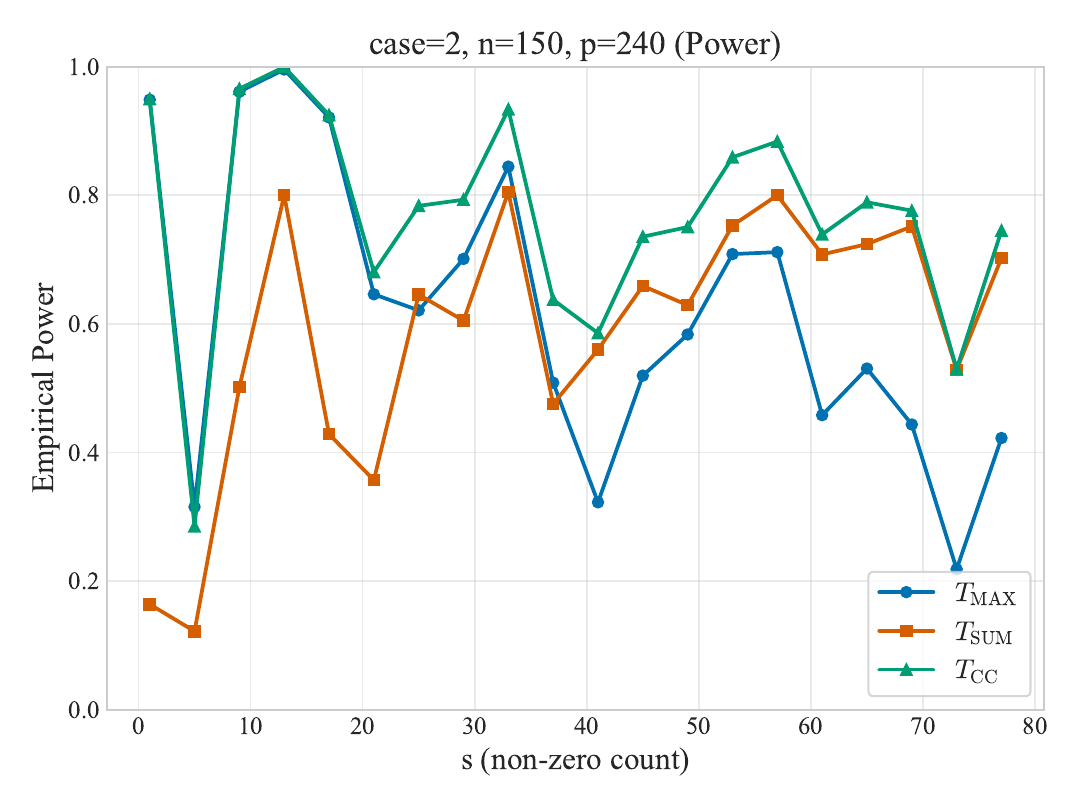}
\end{subfigure}

\vspace{0.3cm}
\begin{subfigure}{0.23\textwidth}
    \centering
    \includegraphics[width=\linewidth]{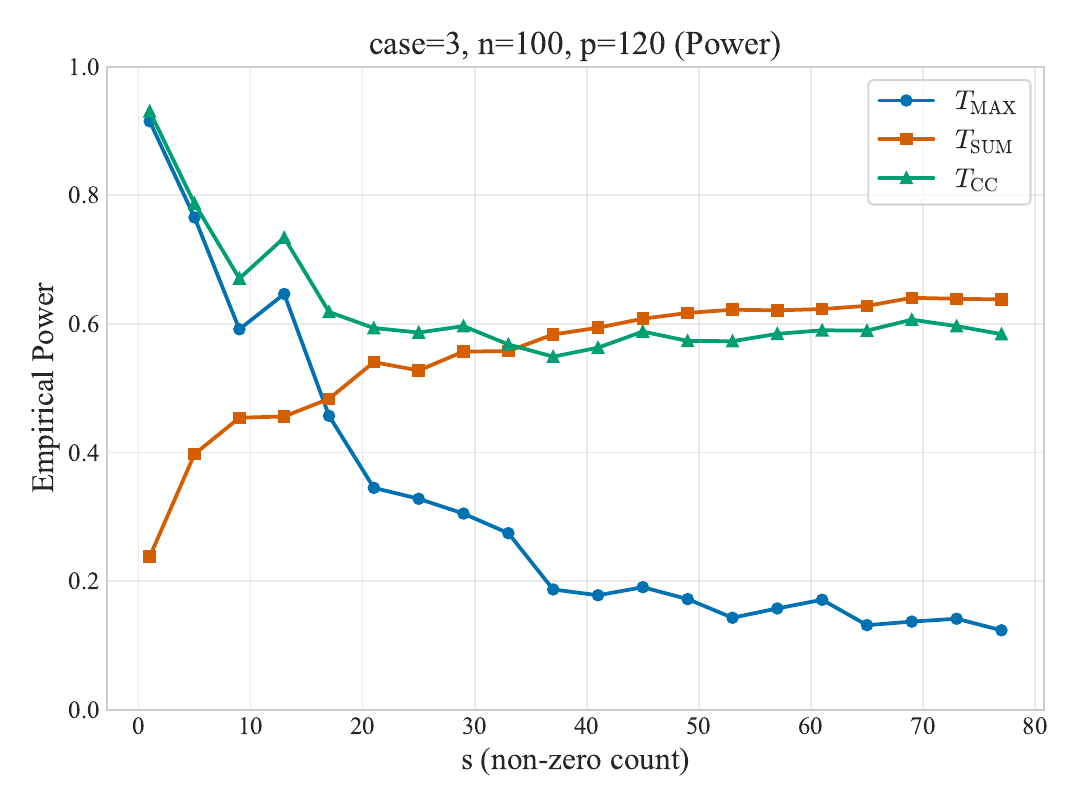}
\end{subfigure}
\hfill
\begin{subfigure}{0.23\textwidth}
    \centering
    \includegraphics[width=\linewidth]{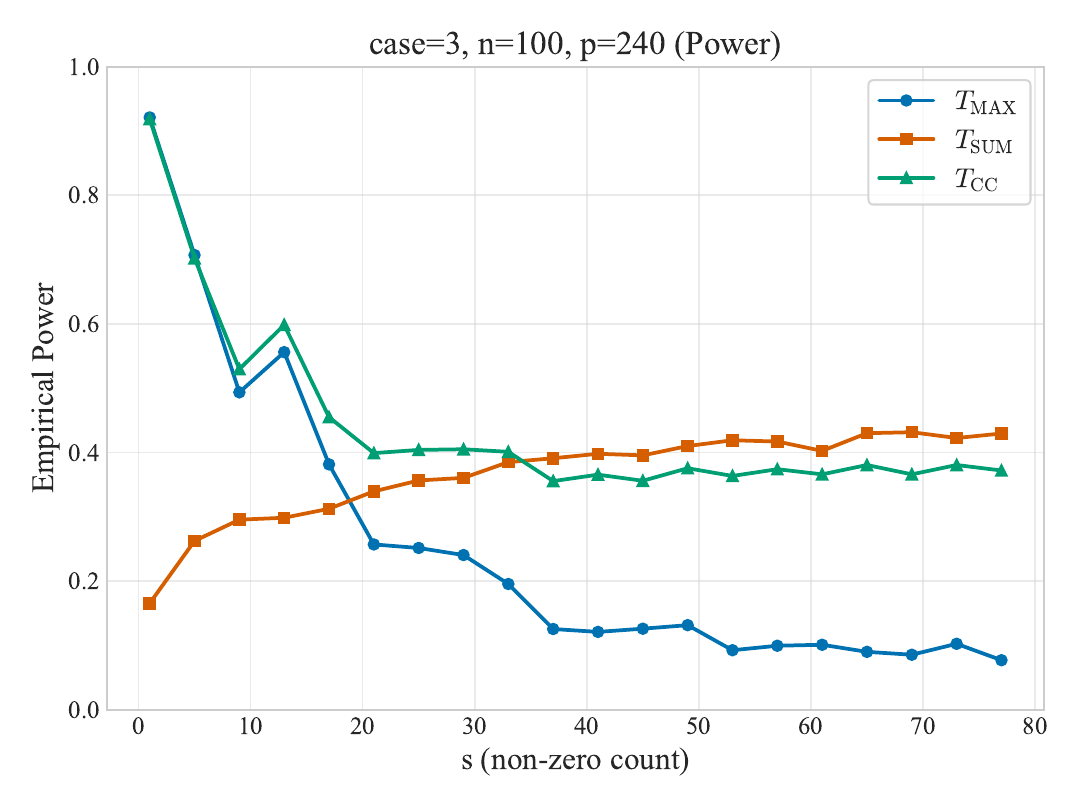}
\end{subfigure}
\hfill
\begin{subfigure}{0.23\textwidth}
    \centering
    \includegraphics[width=\linewidth]{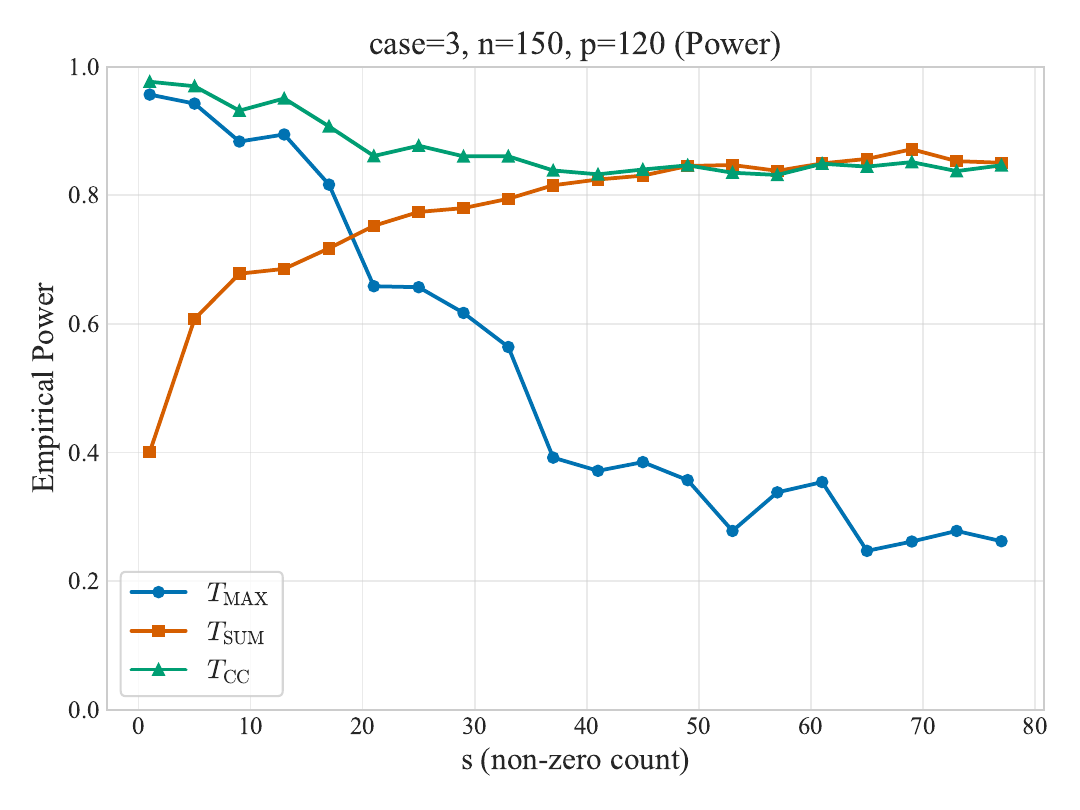}
\end{subfigure}
\hfill
\begin{subfigure}{0.23\textwidth}
    \centering
    \includegraphics[width=\linewidth]{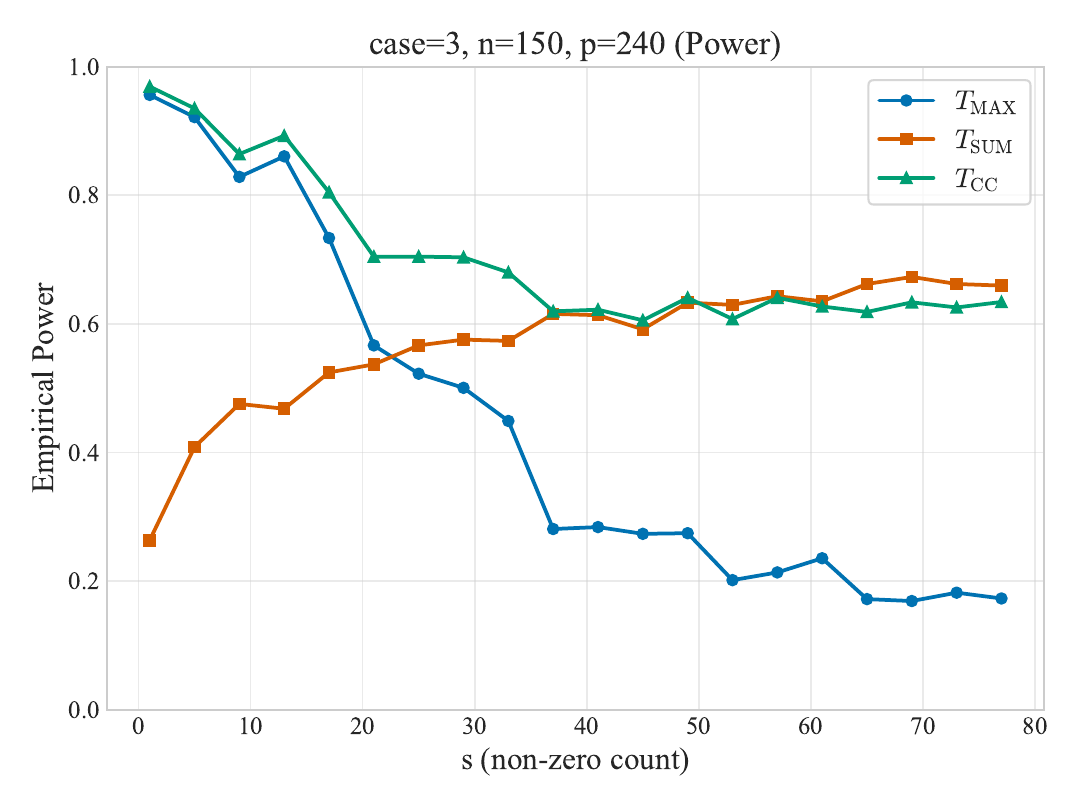}
\end{subfigure}

\caption{Empirical power as a function of $s$ for Cases~1--3 across varying $(n,p)$ 
settings under $t_2$ distribution ($\tau = 0.25$; 2000 replications).}
\label{fig:power25_t2}
\end{figure}

\begin{figure}[htbp]
\centering
\begin{subfigure}{0.23\textwidth}
    \centering
    \includegraphics[width=\linewidth]{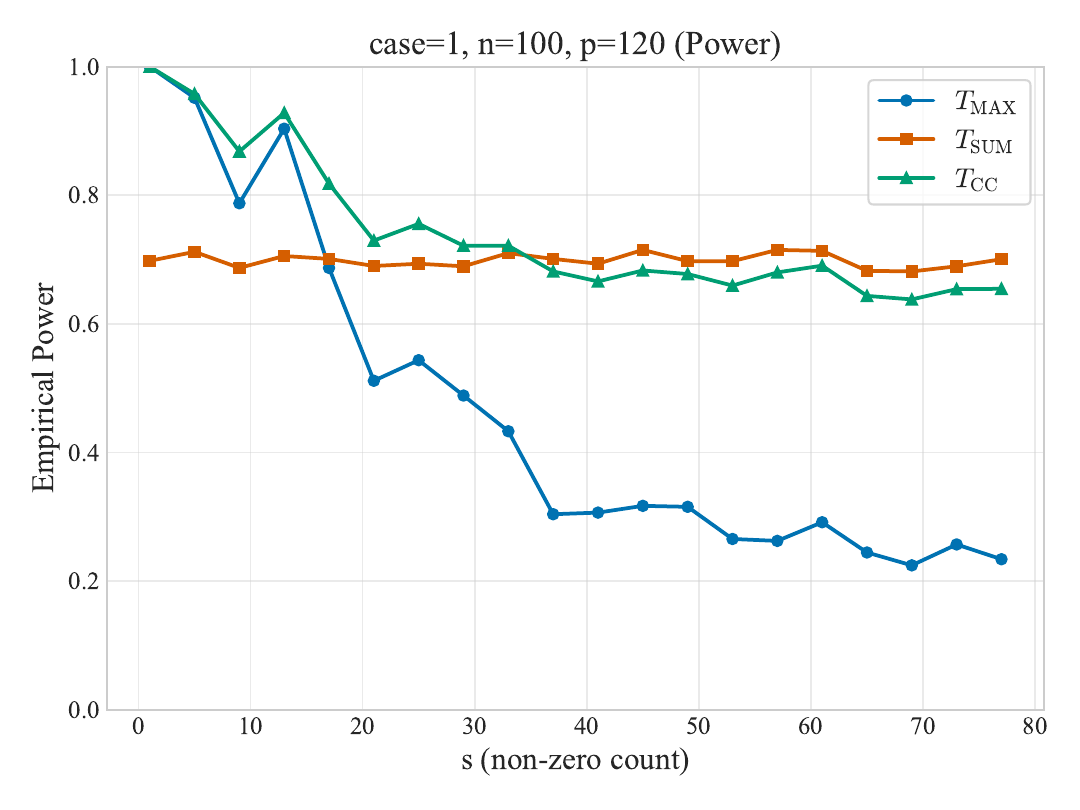}
\end{subfigure}
\hfill
\begin{subfigure}{0.23\textwidth}
    \centering
    \includegraphics[width=\linewidth]{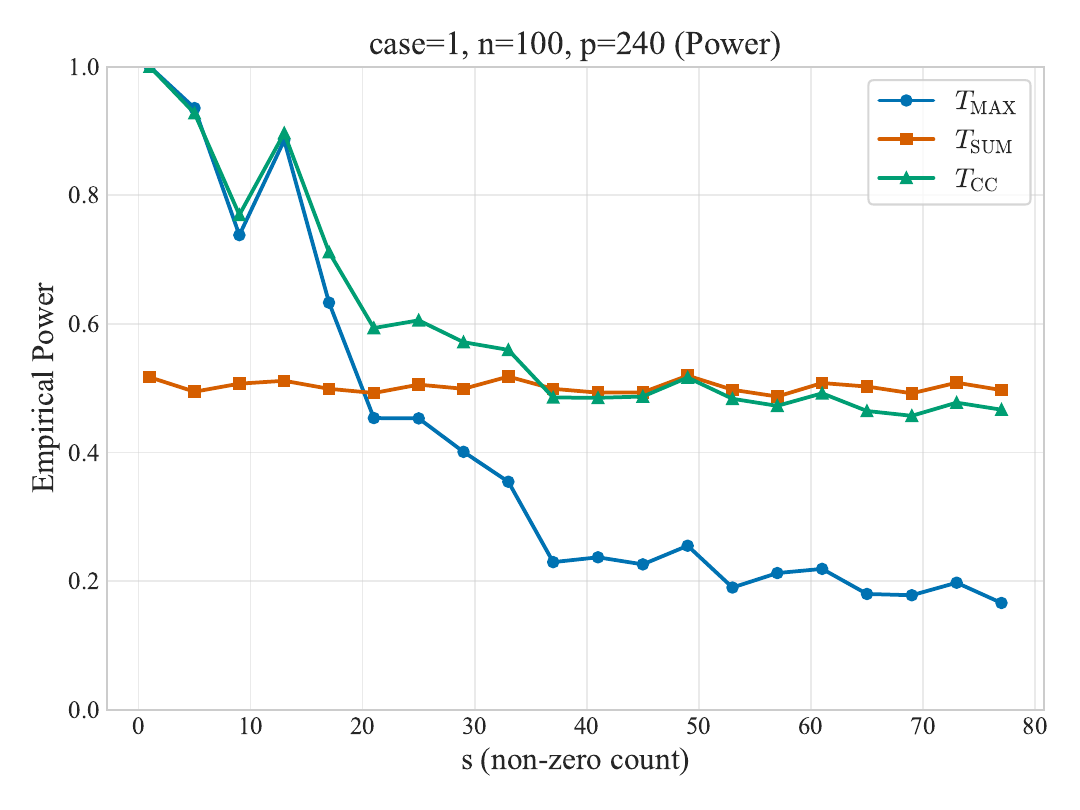}
\end{subfigure}
\hfill
\begin{subfigure}{0.23\textwidth}
    \centering
    \includegraphics[width=\linewidth]{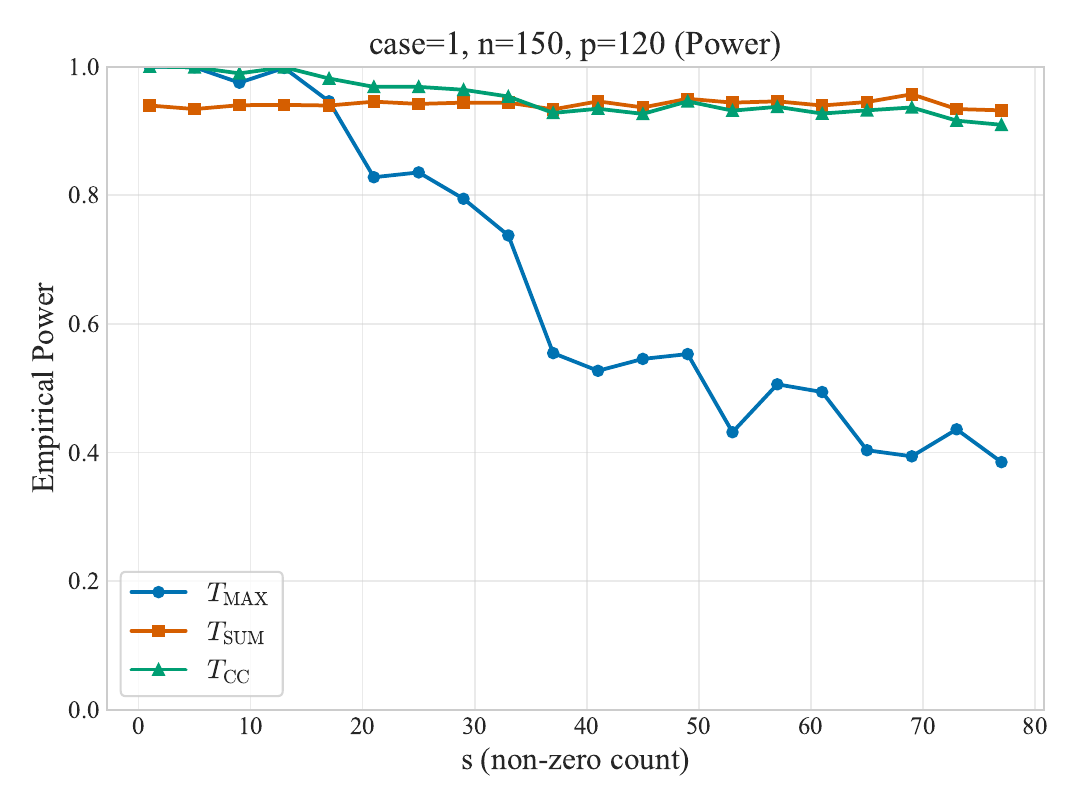}
\end{subfigure}
\hfill
\begin{subfigure}{0.23\textwidth}
    \centering
    \includegraphics[width=\linewidth]{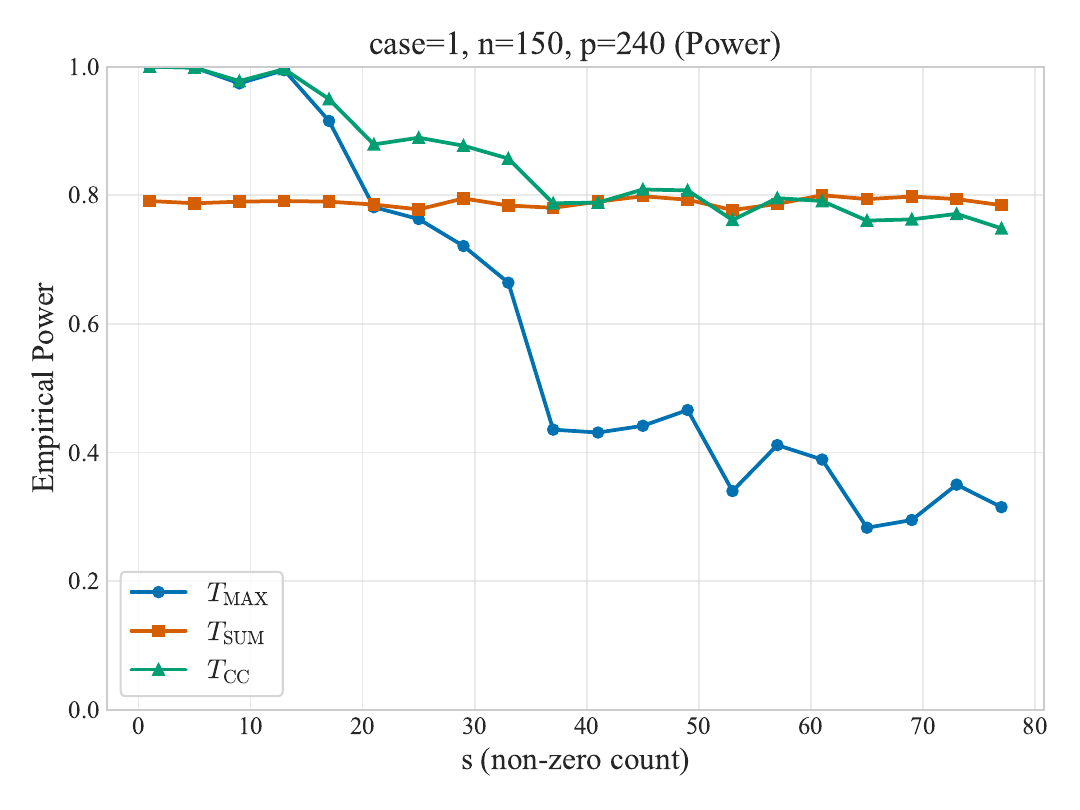}
\end{subfigure}

\vspace{0.3cm}
\begin{subfigure}{0.23\textwidth}
    \centering
    \includegraphics[width=\linewidth]{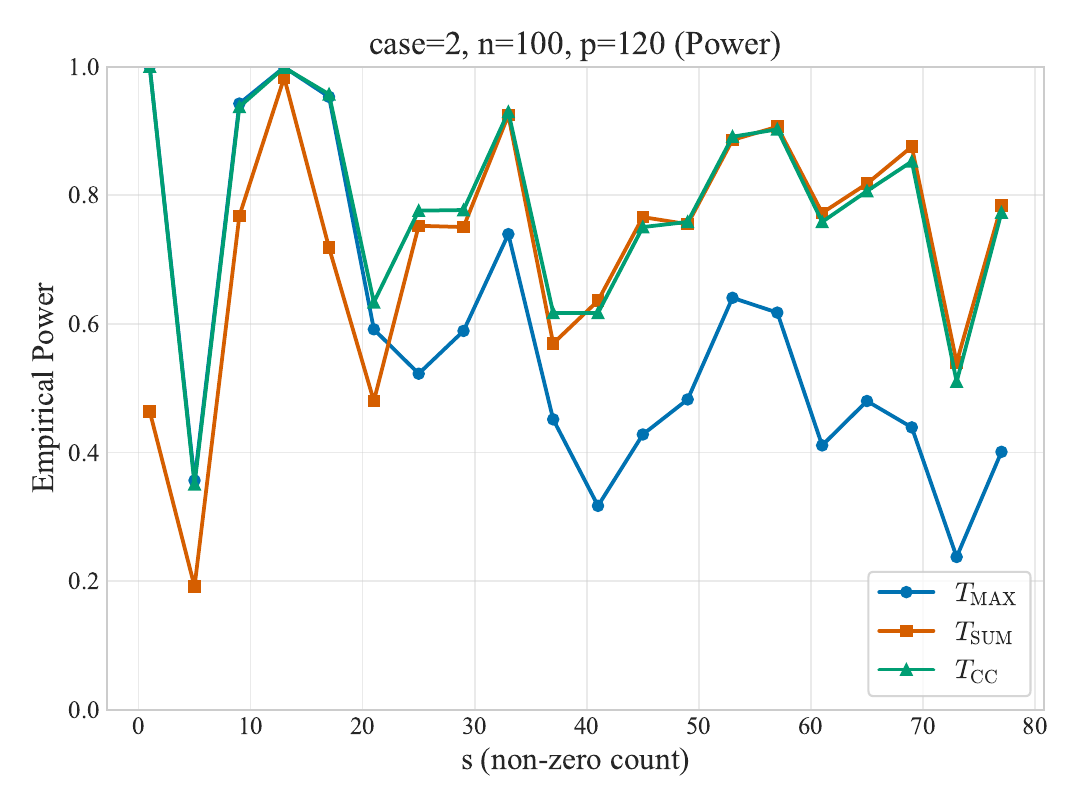}
\end{subfigure}
\hfill
\begin{subfigure}{0.23\textwidth}
    \centering
    \includegraphics[width=\linewidth]{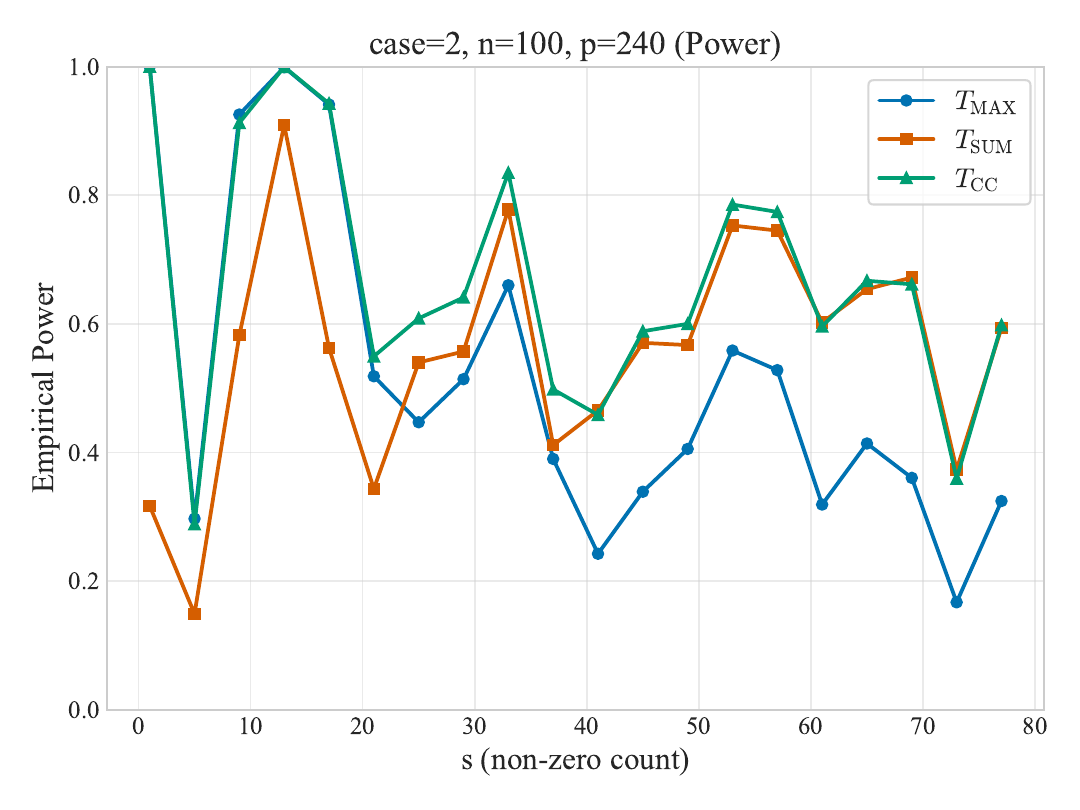}
\end{subfigure}
\hfill
\begin{subfigure}{0.23\textwidth}
    \centering
    \includegraphics[width=\linewidth]{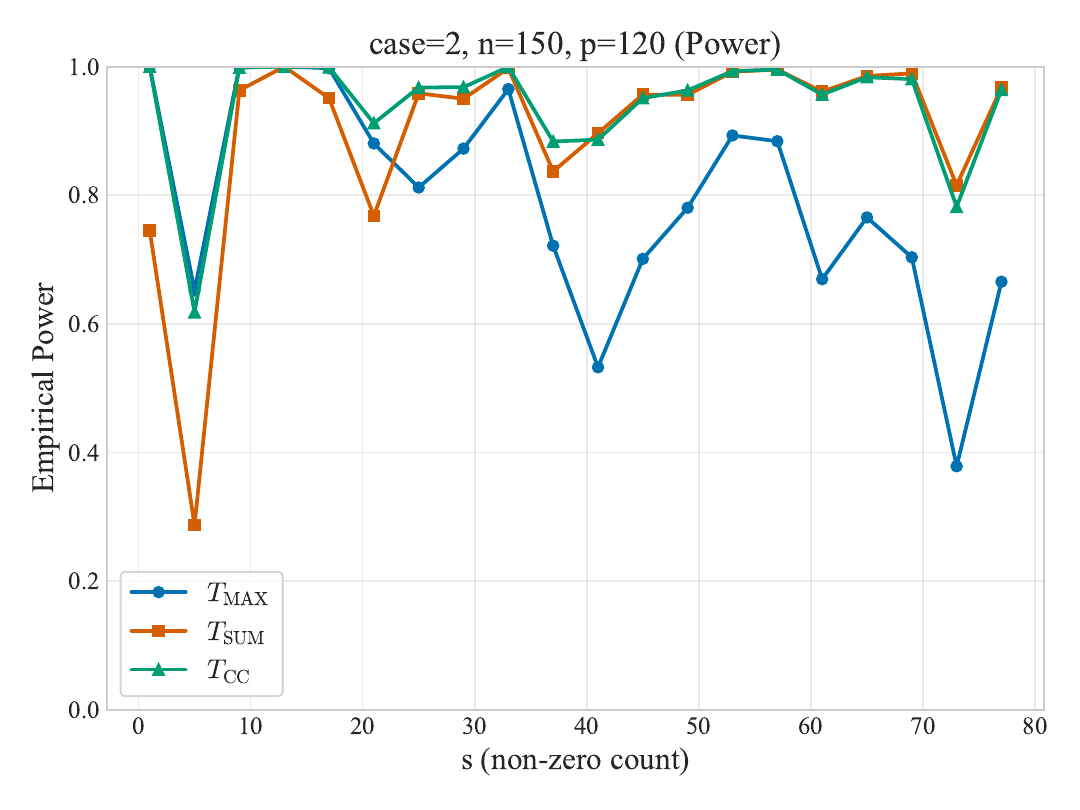}
\end{subfigure}
\hfill
\begin{subfigure}{0.23\textwidth}
    \centering
    \includegraphics[width=\linewidth]{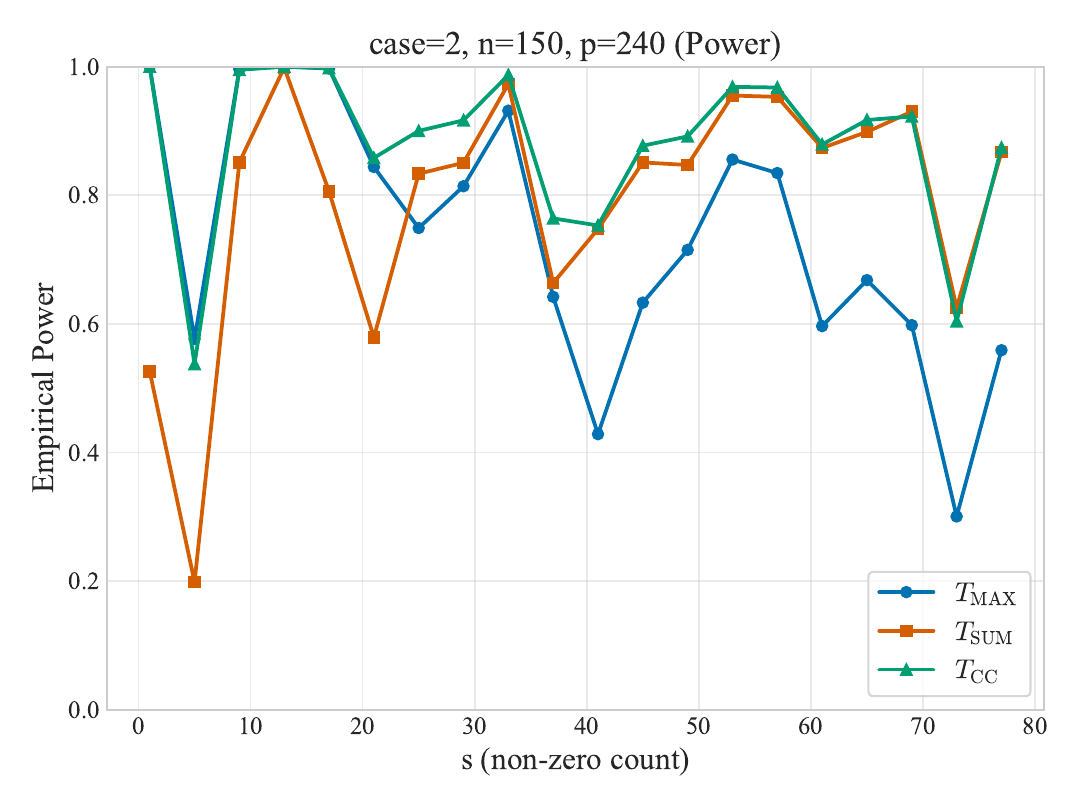}
\end{subfigure}

\vspace{0.3cm}
\begin{subfigure}{0.23\textwidth}
    \centering
    \includegraphics[width=\linewidth]{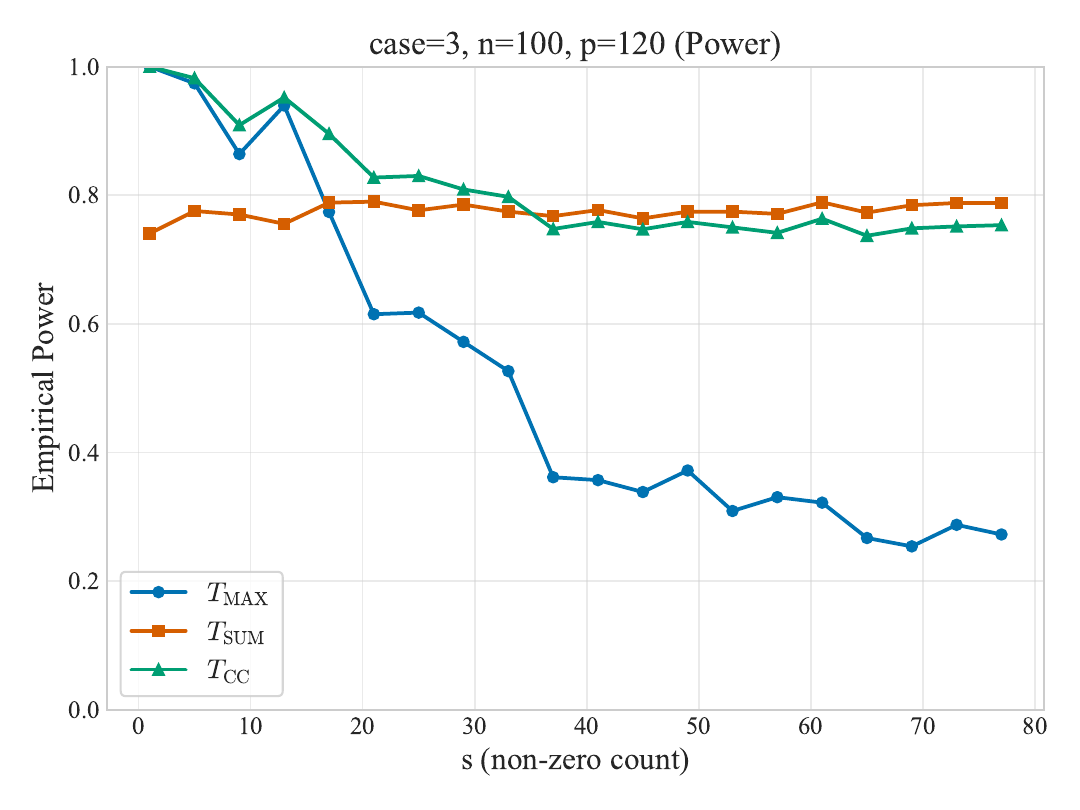}
\end{subfigure}
\hfill
\begin{subfigure}{0.23\textwidth}
    \centering
    \includegraphics[width=\linewidth]{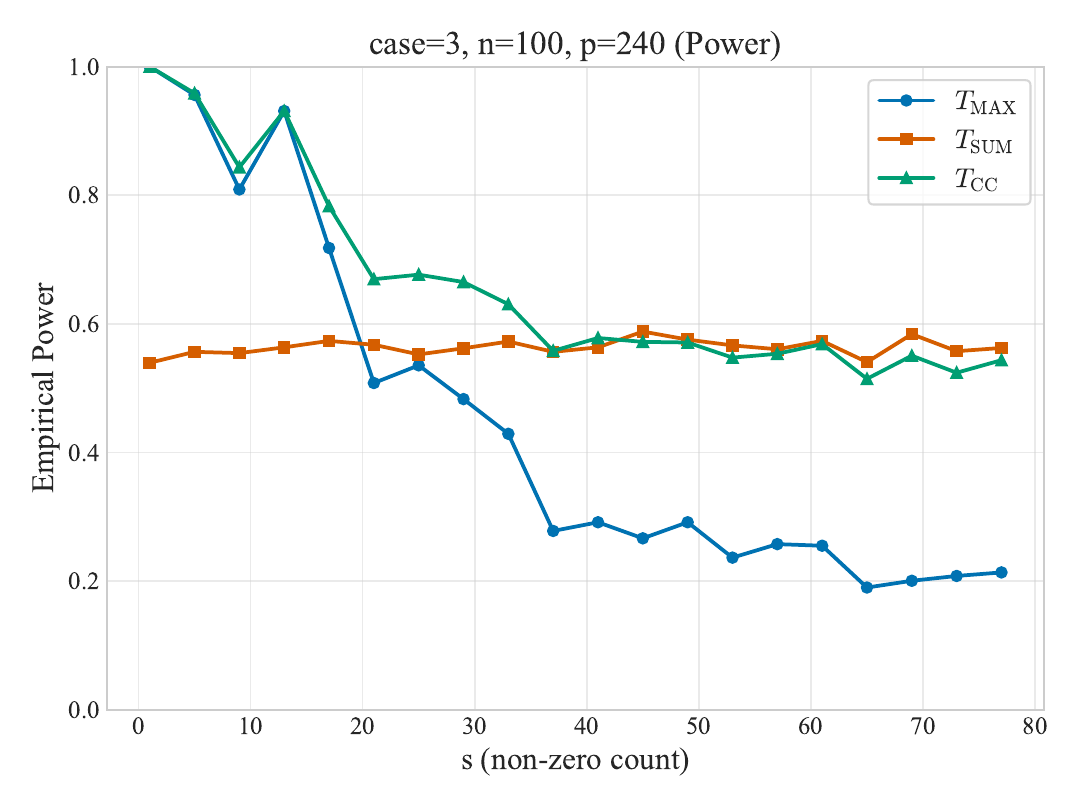}
\end{subfigure}
\hfill
\begin{subfigure}{0.23\textwidth}
    \centering
    \includegraphics[width=\linewidth]{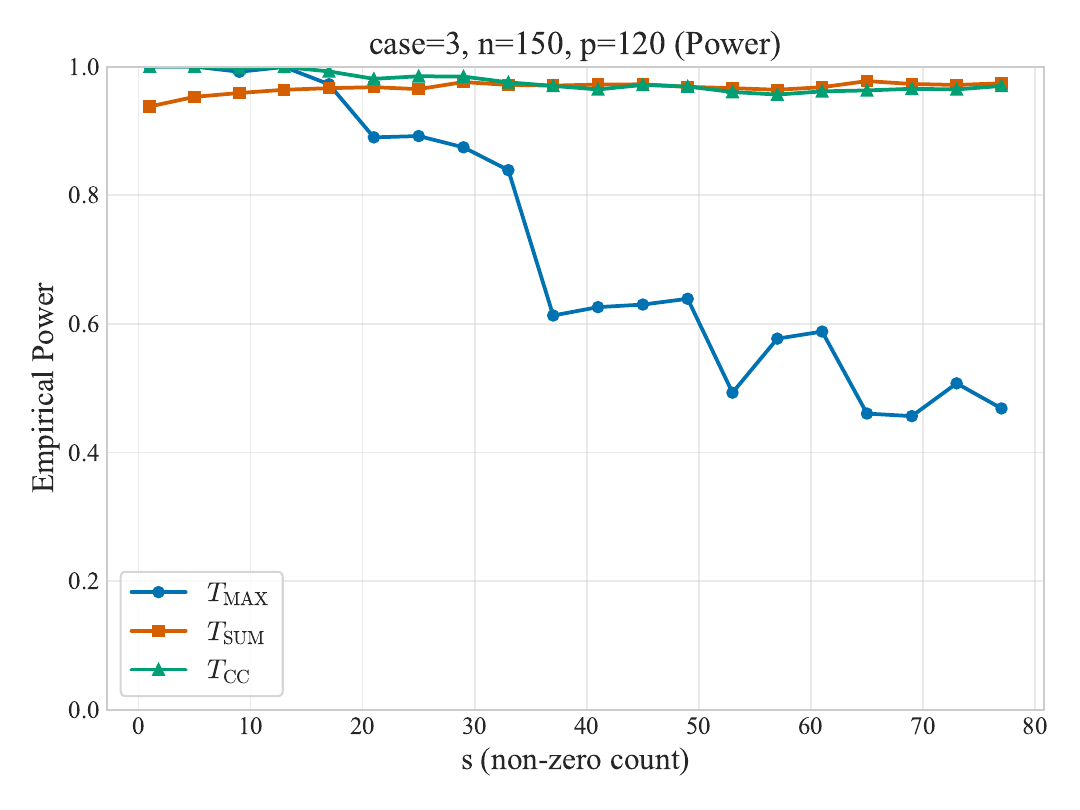}
\end{subfigure}
\hfill
\begin{subfigure}{0.23\textwidth}
    \centering
    \includegraphics[width=\linewidth]{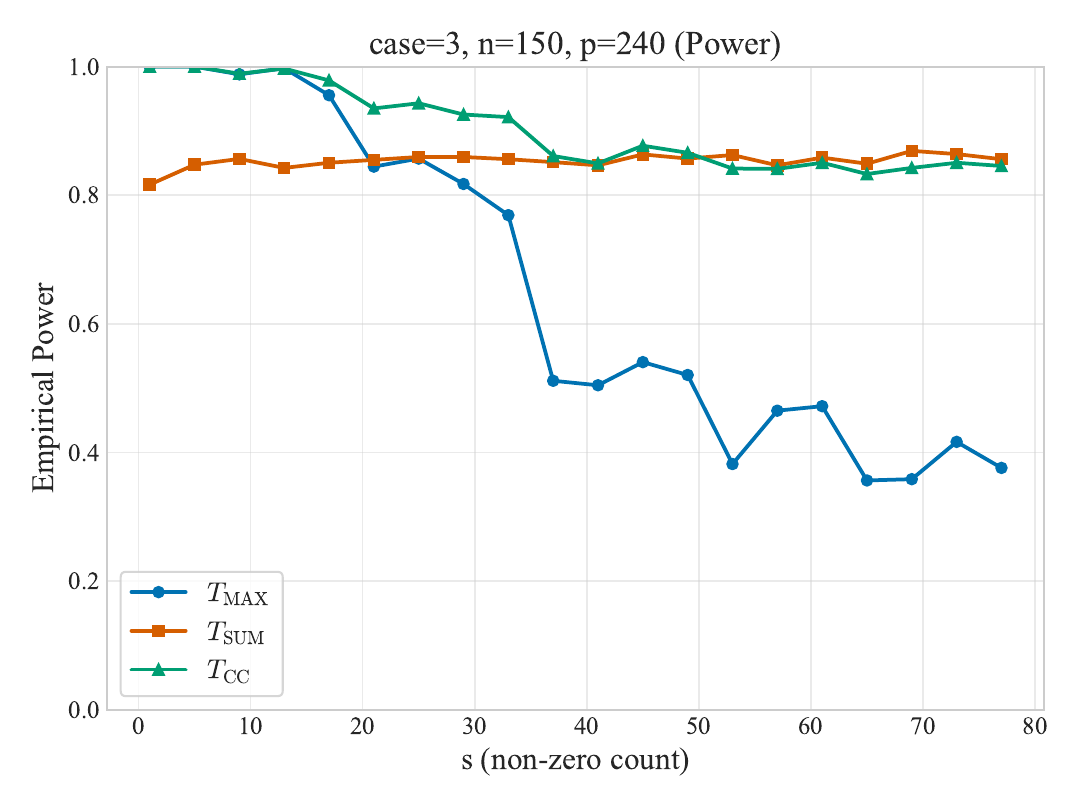}
\end{subfigure}

\caption{Empirical power as a function of $s$ for Cases~1--3 across varying $(n,p)$ 
settings under Normal distribution ($\tau = 0.75$; 2000 replications)..}
\label{fig:power75}
\end{figure}

\begin{figure}[htbp]
\centering
\begin{subfigure}{0.23\textwidth}
    \centering
    \includegraphics[width=\linewidth]{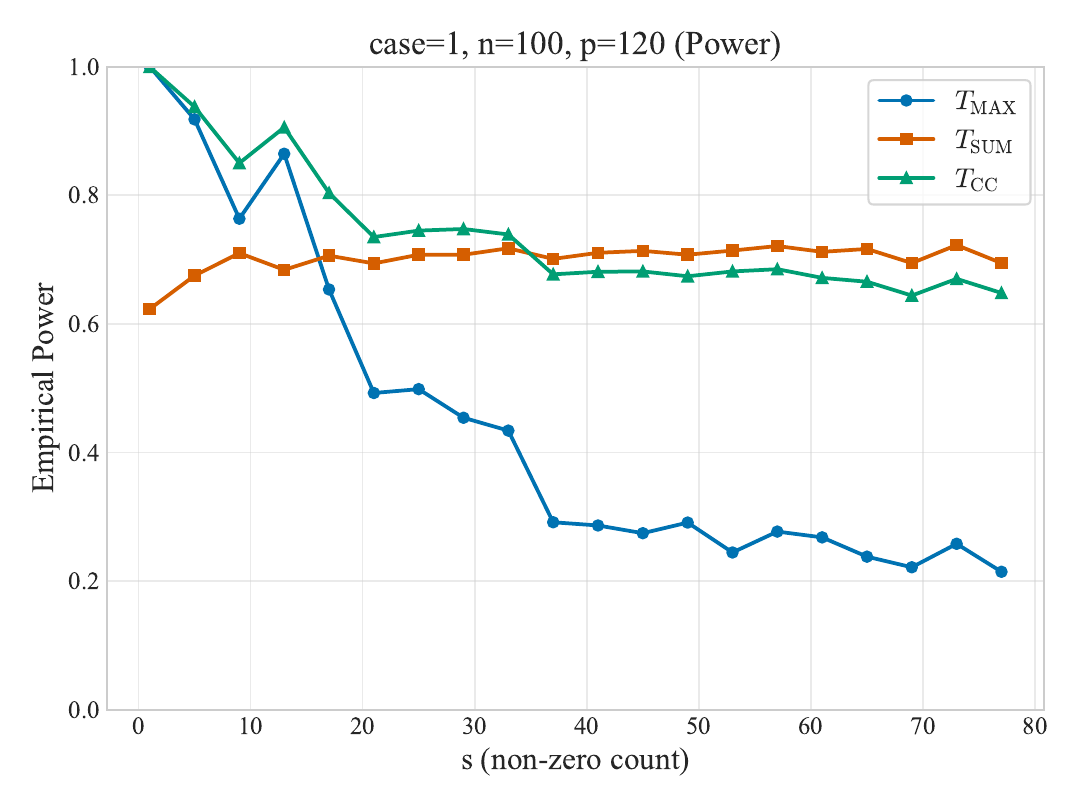}
\end{subfigure}
\hfill
\begin{subfigure}{0.23\textwidth}
    \centering
    \includegraphics[width=\linewidth]{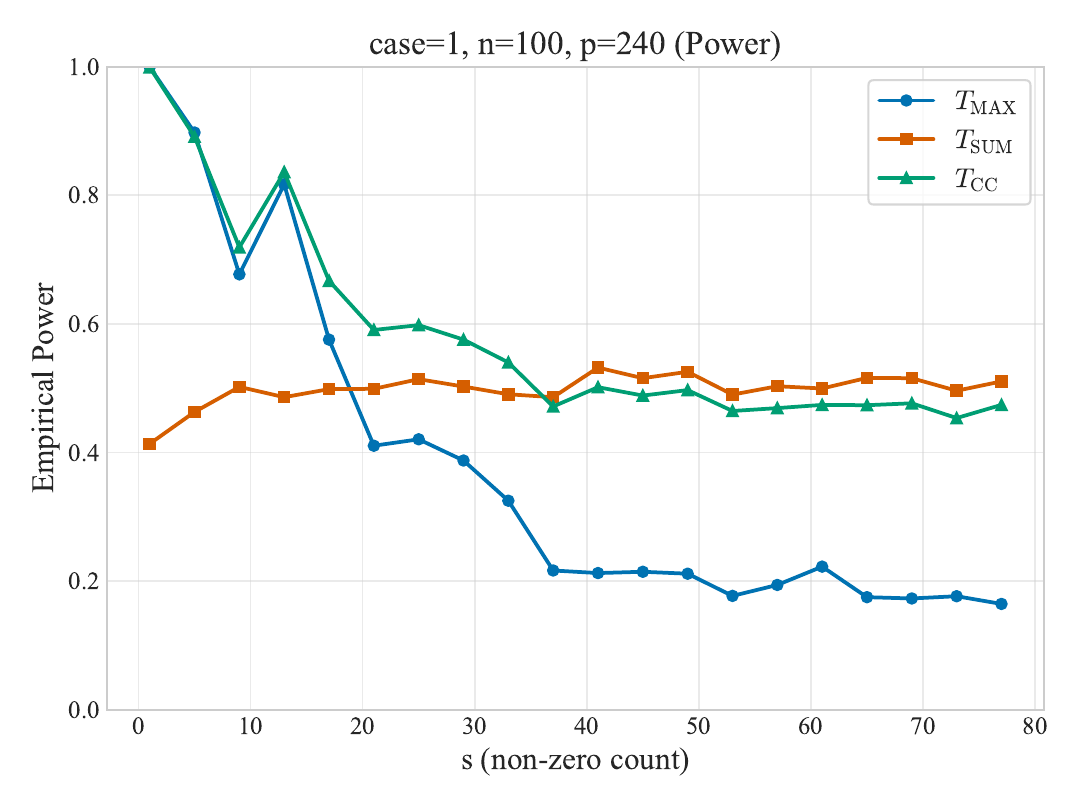}
\end{subfigure}
\hfill
\begin{subfigure}{0.23\textwidth}
    \centering
    \includegraphics[width=\linewidth]{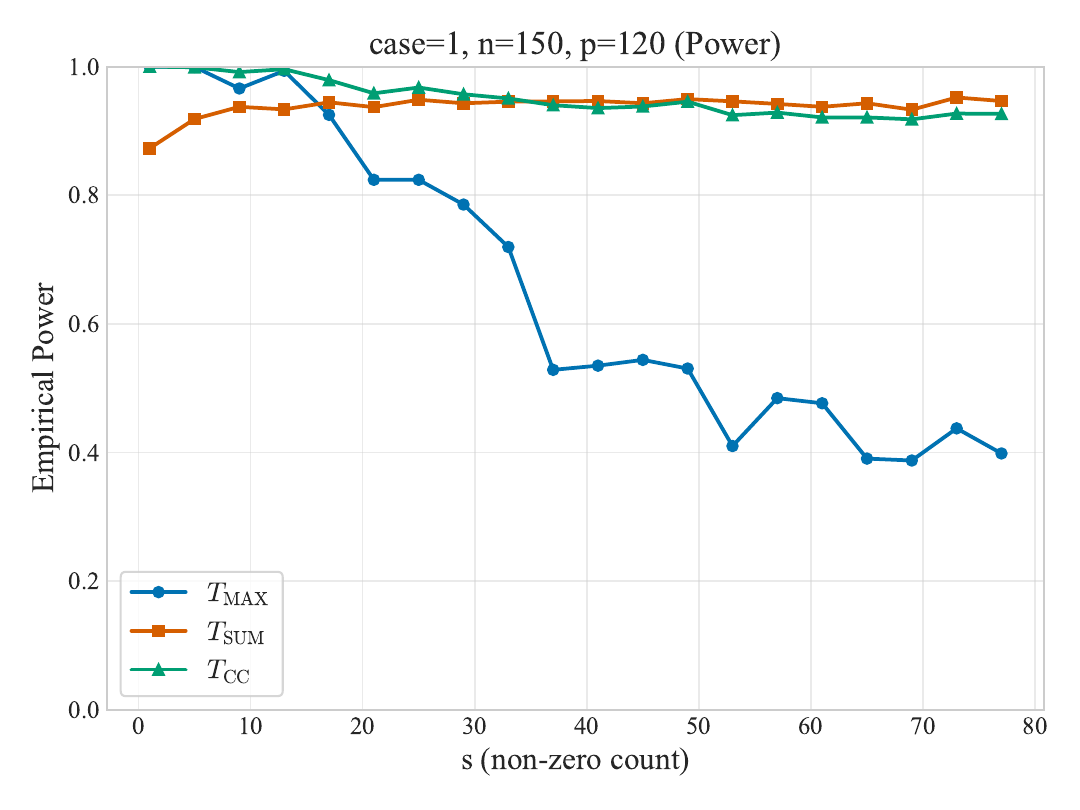}
\end{subfigure}
\hfill
\begin{subfigure}{0.23\textwidth}
    \centering
    \includegraphics[width=\linewidth]{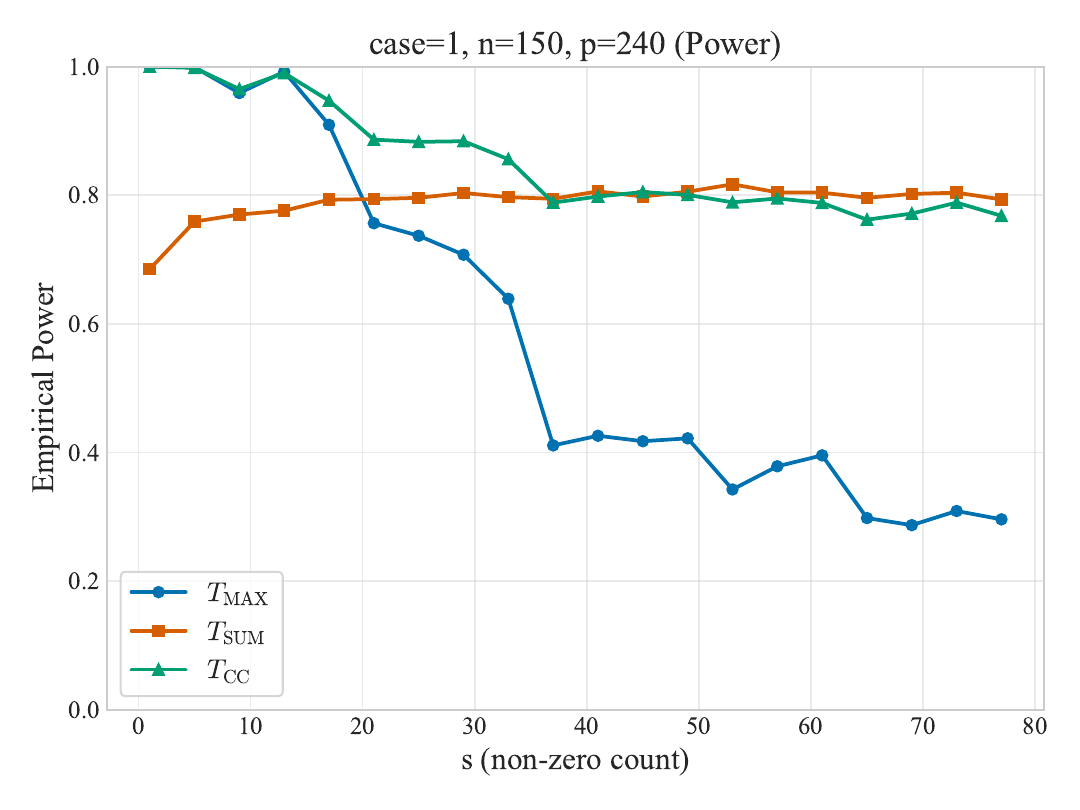}
\end{subfigure}

\vspace{0.3cm}
\begin{subfigure}{0.23\textwidth}
    \centering
    \includegraphics[width=\linewidth]{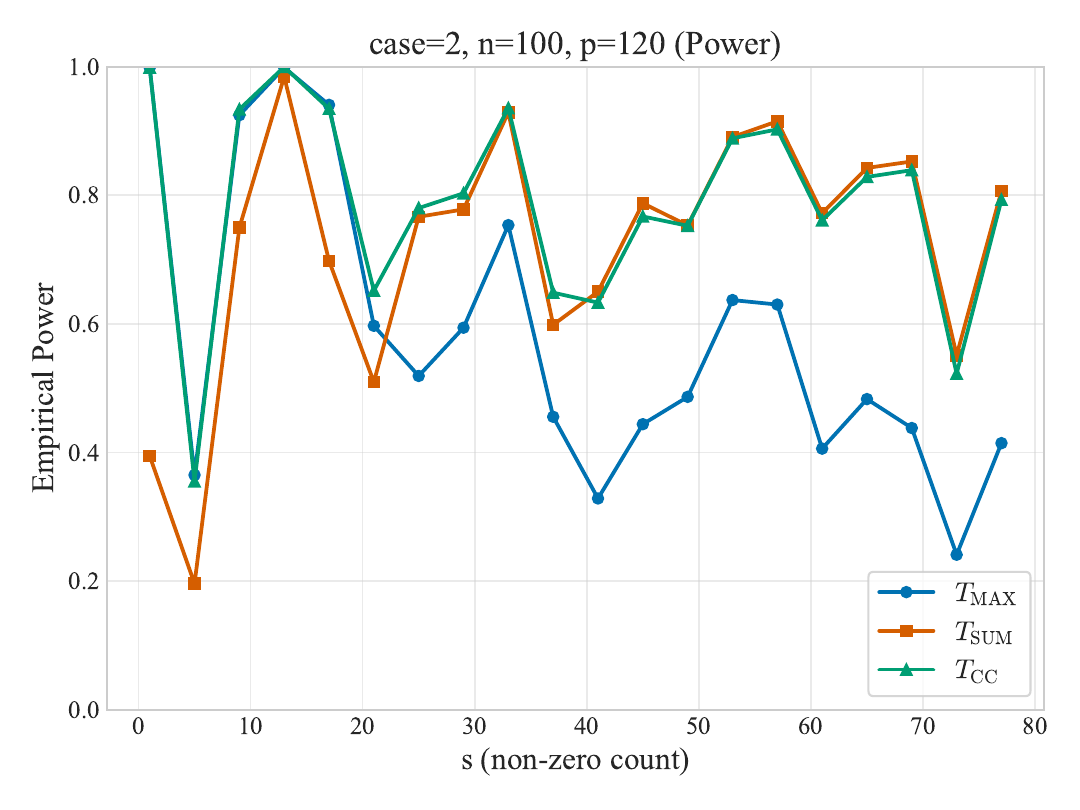}
\end{subfigure}
\hfill
\begin{subfigure}{0.23\textwidth}
    \centering
    \includegraphics[width=\linewidth]{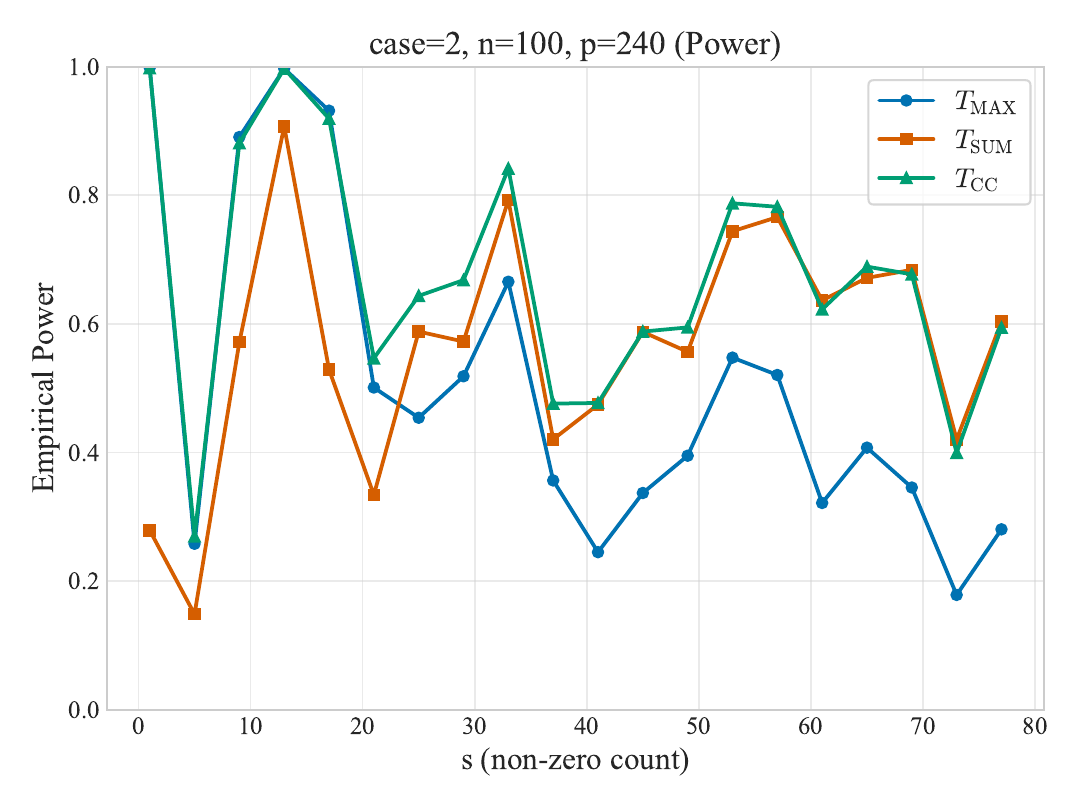}
\end{subfigure}
\hfill
\begin{subfigure}{0.23\textwidth}
    \centering
    \includegraphics[width=\linewidth]{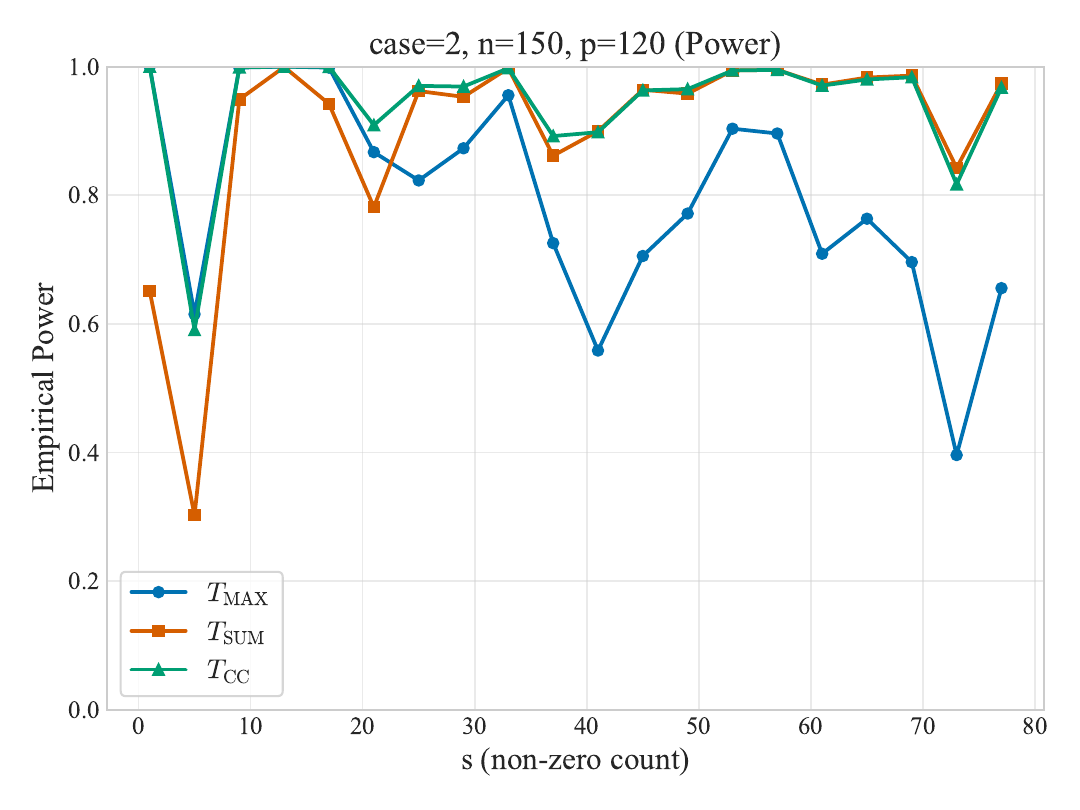}
\end{subfigure}
\hfill
\begin{subfigure}{0.23\textwidth}
    \centering
    \includegraphics[width=\linewidth]{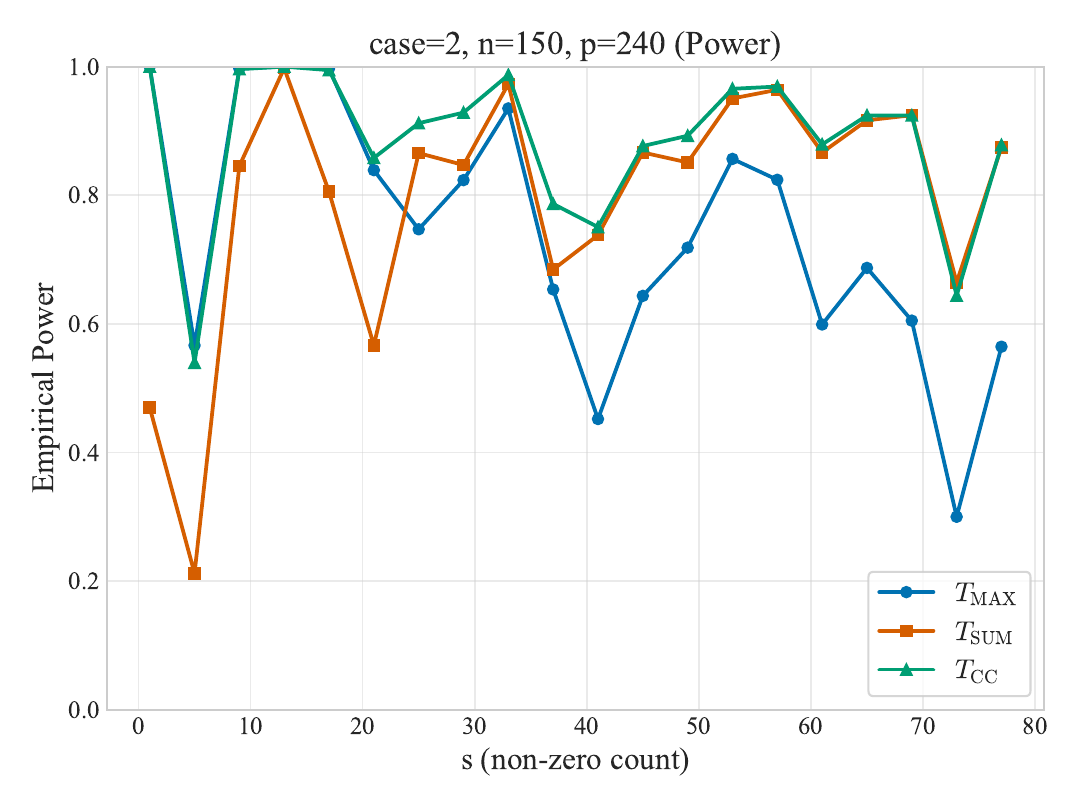}
\end{subfigure}

\vspace{0.3cm}
\begin{subfigure}{0.23\textwidth}
    \centering
    \includegraphics[width=\linewidth]{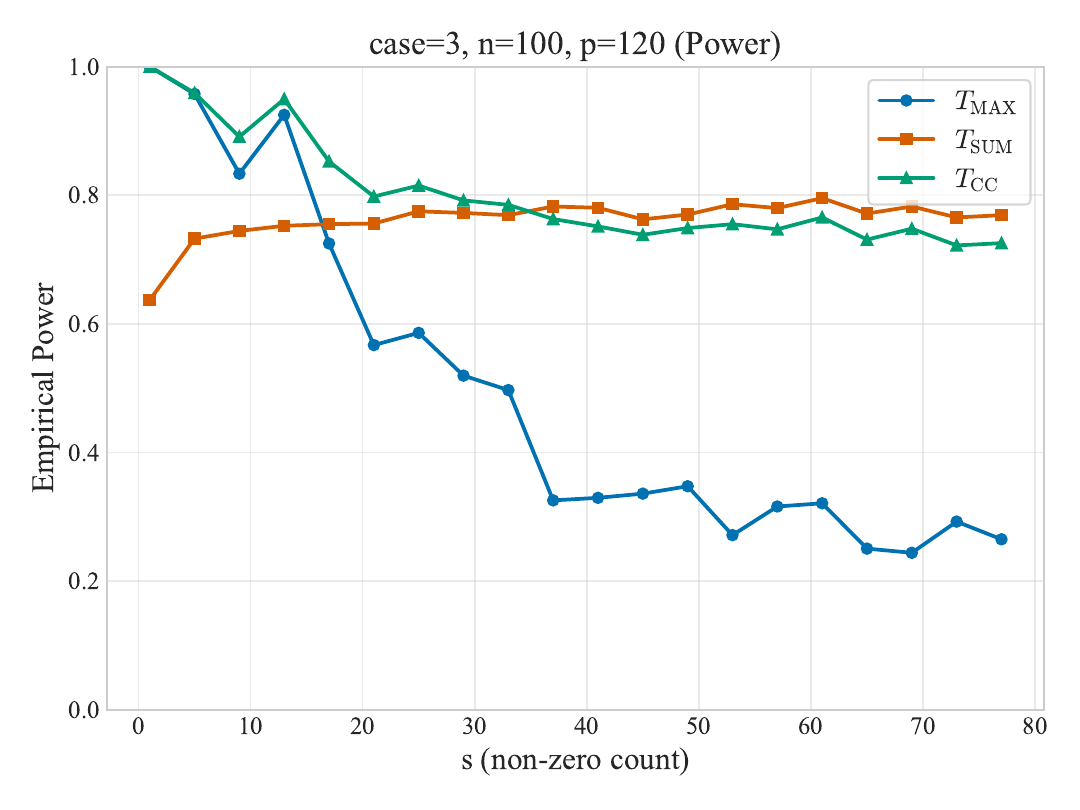}
\end{subfigure}
\hfill
\begin{subfigure}{0.23\textwidth}
    \centering
    \includegraphics[width=\linewidth]{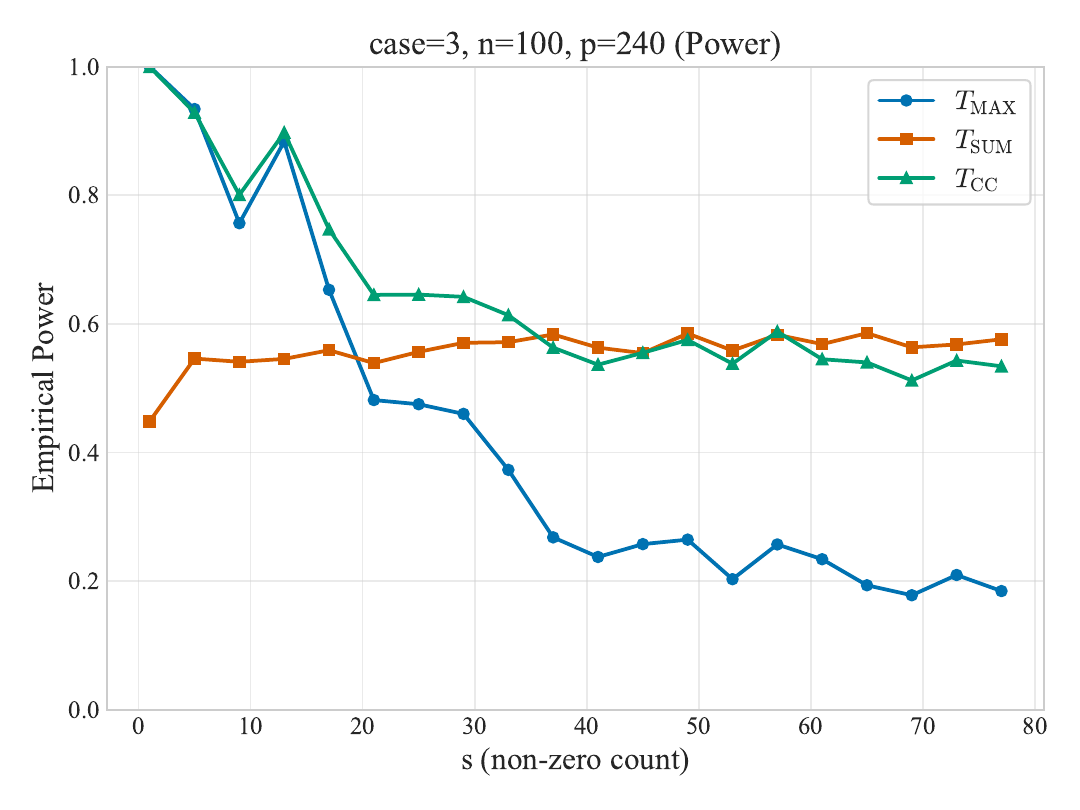}
\end{subfigure}
\hfill
\begin{subfigure}{0.23\textwidth}
    \centering
    \includegraphics[width=\linewidth]{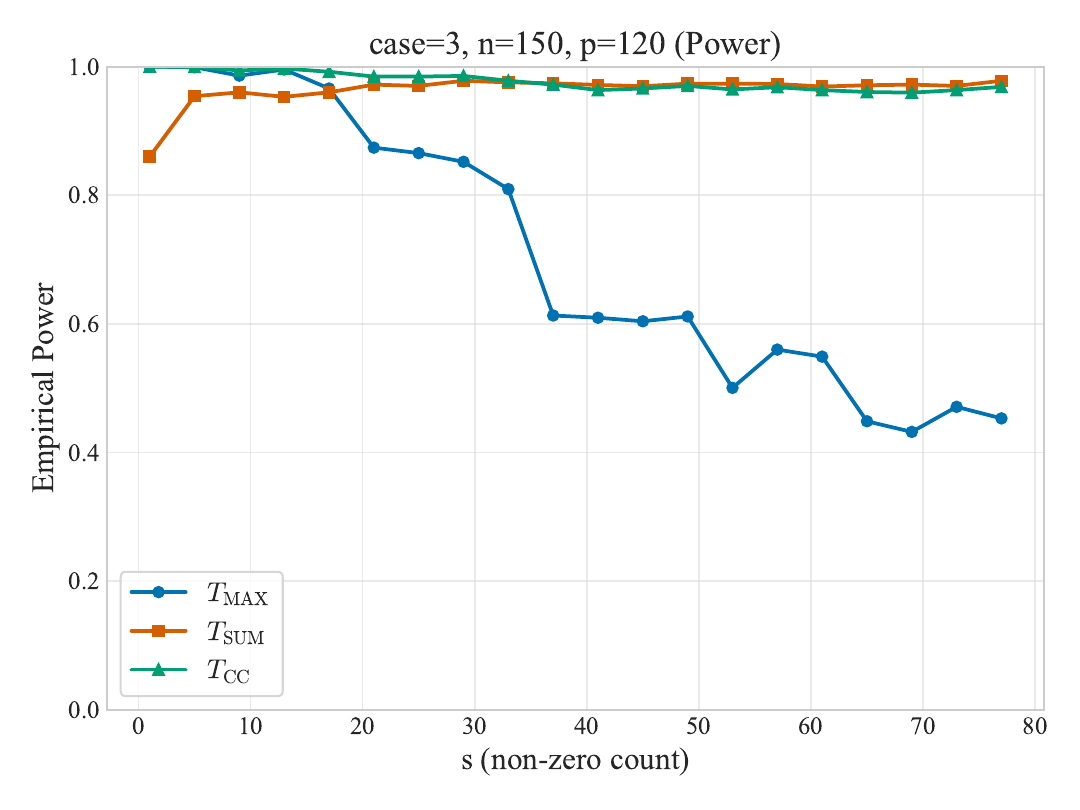}
\end{subfigure}
\hfill
\begin{subfigure}{0.23\textwidth}
    \centering
    \includegraphics[width=\linewidth]{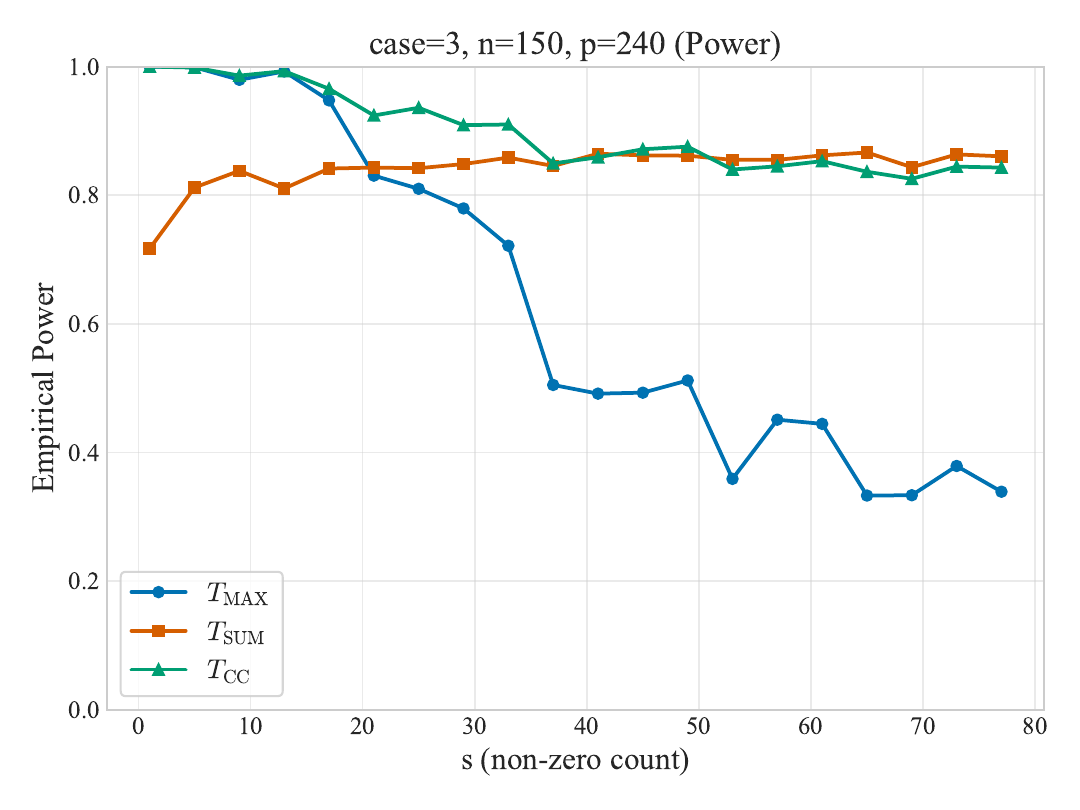}
\end{subfigure}

\caption{Empirical power as a function of $s$ for Cases~1--3 across varying $(n,p)$ 
settings under Laplace distribution ($\tau = 0.75$; 2000 replications).}
\label{fig:power75_laplace}
\end{figure}

\begin{figure}[htbp]
\centering
\begin{subfigure}{0.23\textwidth}
    \centering
    \includegraphics[width=\linewidth]{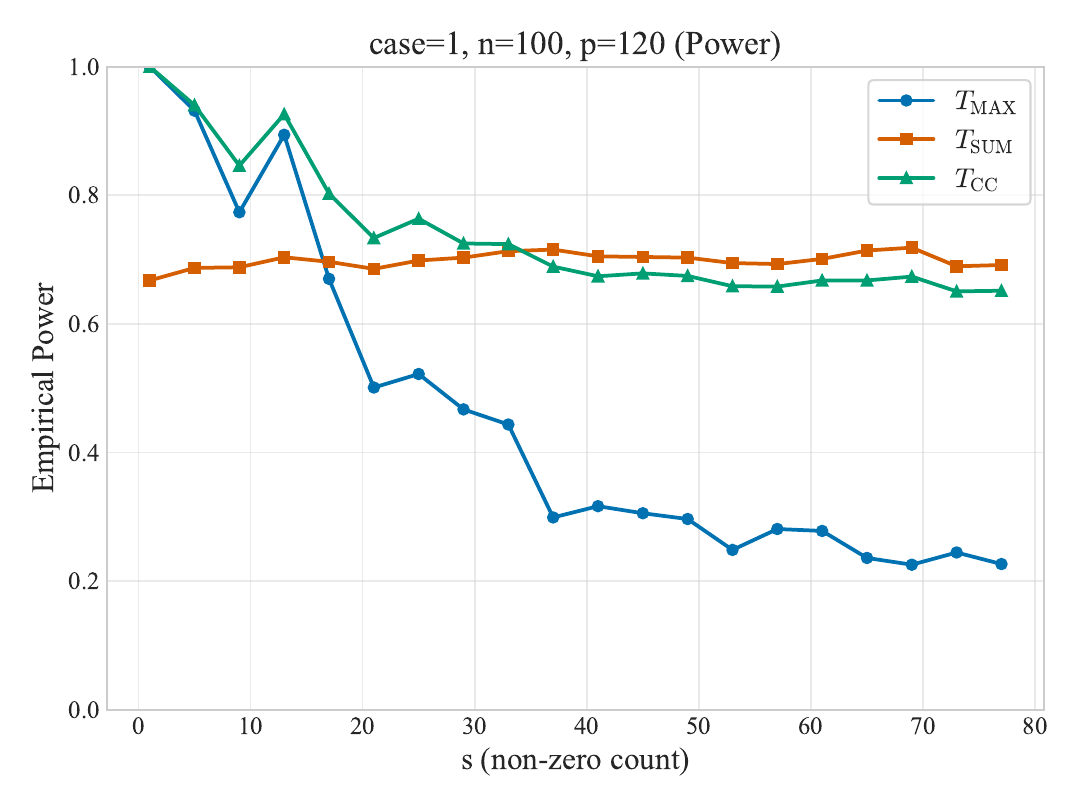}
\end{subfigure}
\hfill
\begin{subfigure}{0.23\textwidth}
    \centering
    \includegraphics[width=\linewidth]{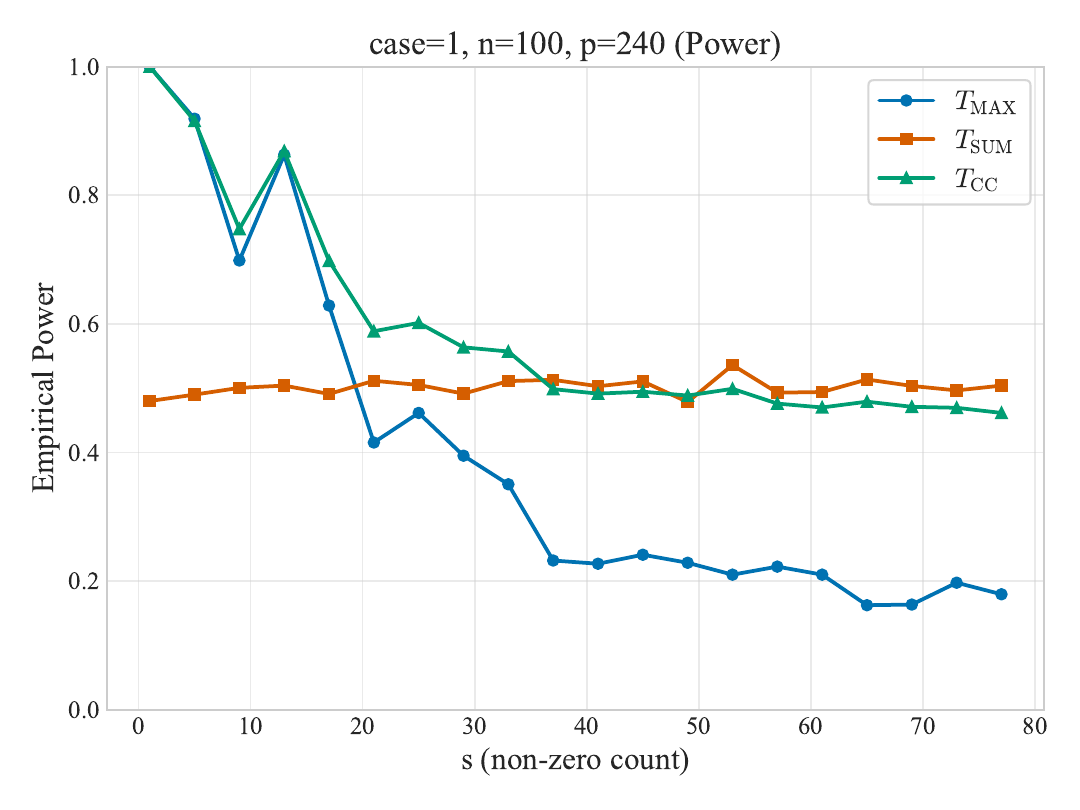}
\end{subfigure}
\hfill
\begin{subfigure}{0.23\textwidth}
    \centering
    \includegraphics[width=\linewidth]{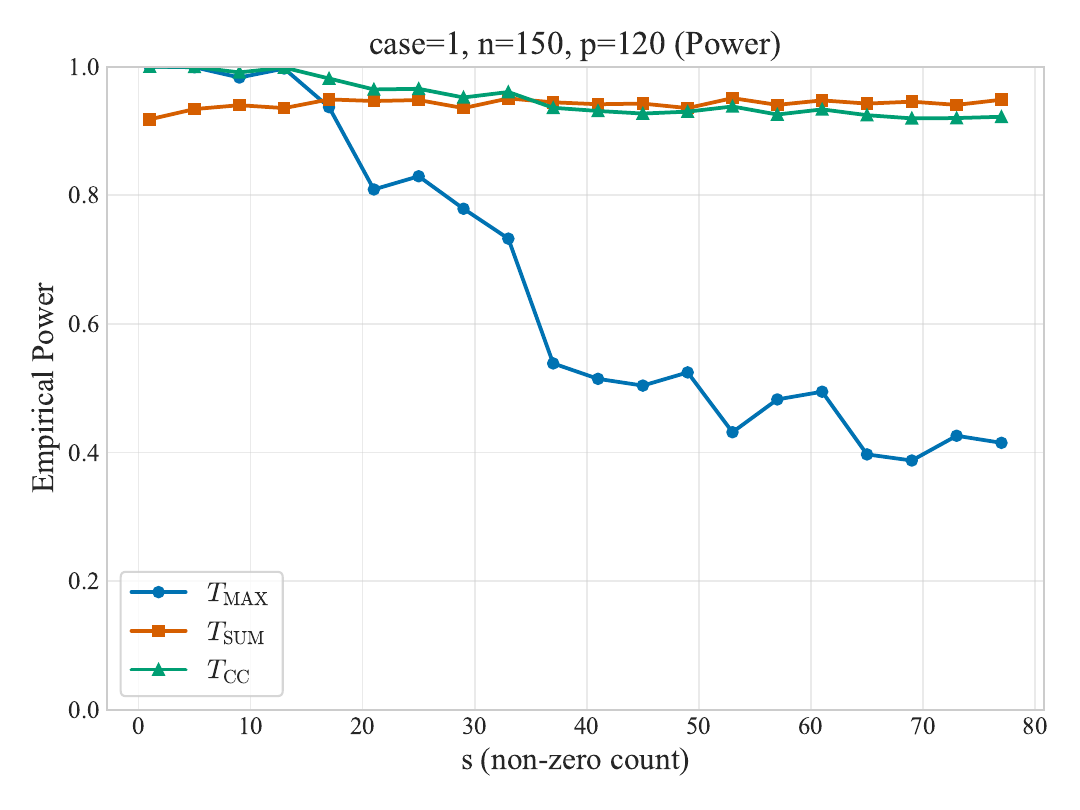}
\end{subfigure}
\hfill
\begin{subfigure}{0.23\textwidth}
    \centering
    \includegraphics[width=\linewidth]{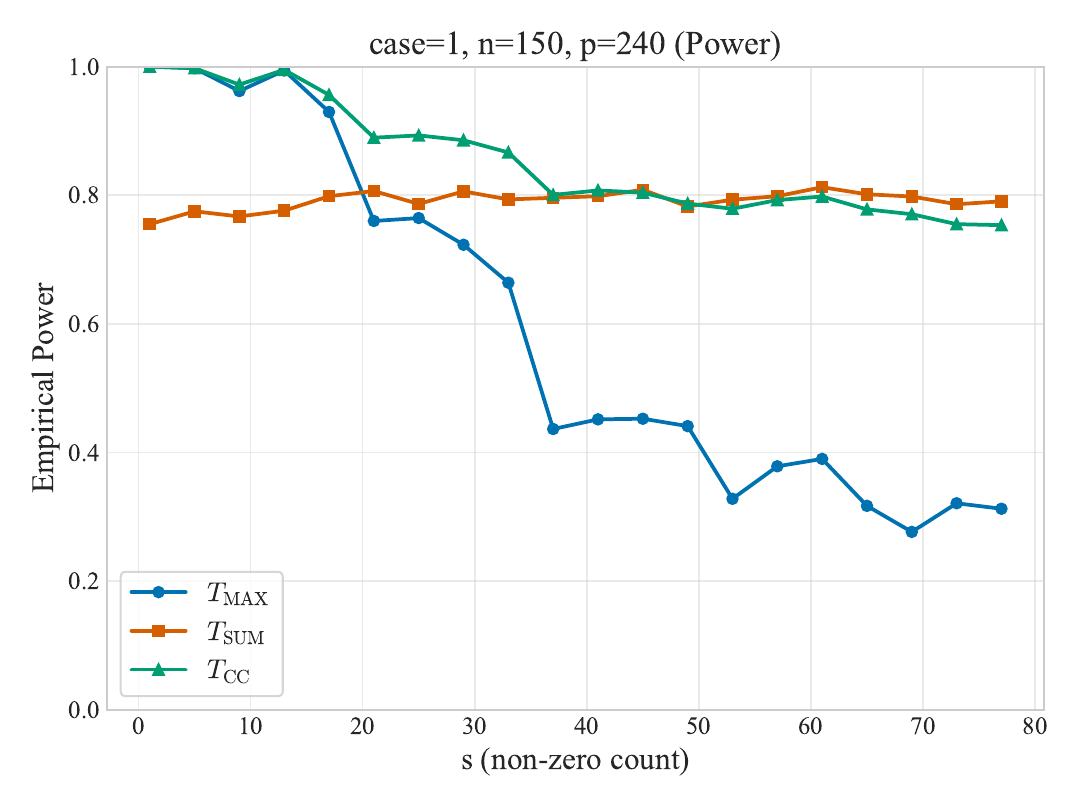}
\end{subfigure}

\vspace{0.3cm}
\begin{subfigure}{0.23\textwidth}
    \centering
    \includegraphics[width=\linewidth]{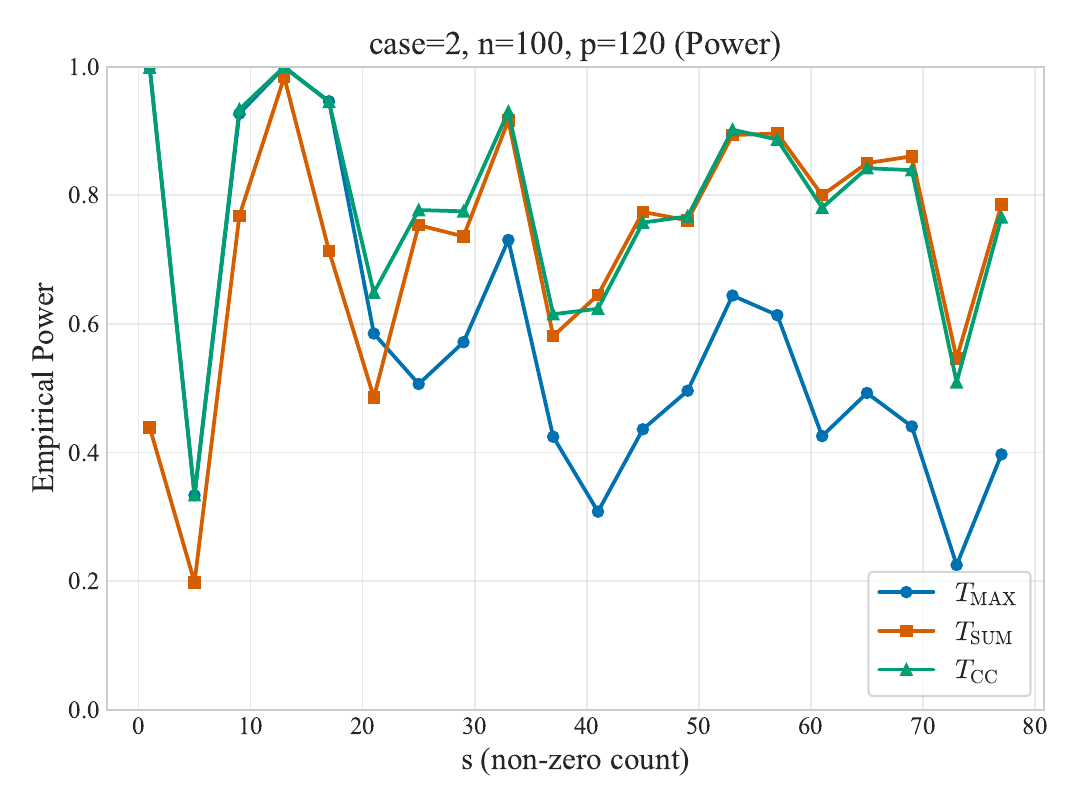}
\end{subfigure}
\hfill
\begin{subfigure}{0.23\textwidth}
    \centering
    \includegraphics[width=\linewidth]{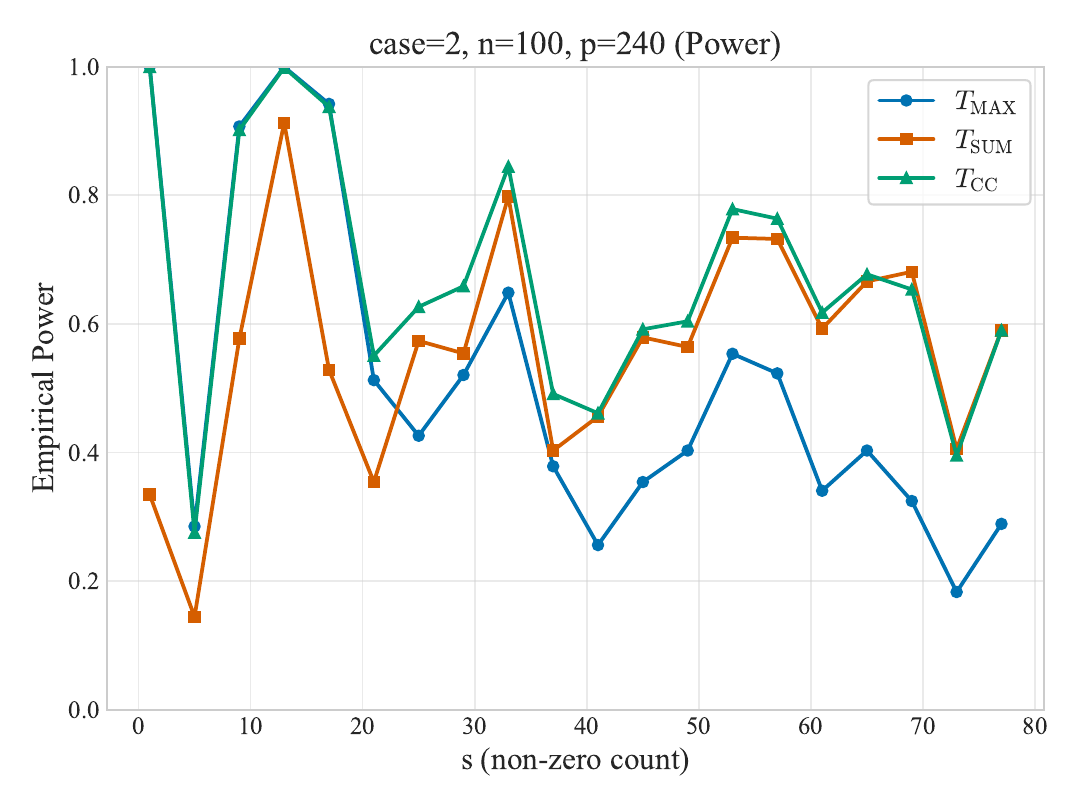}
\end{subfigure}
\hfill
\begin{subfigure}{0.23\textwidth}
    \centering
    \includegraphics[width=\linewidth]{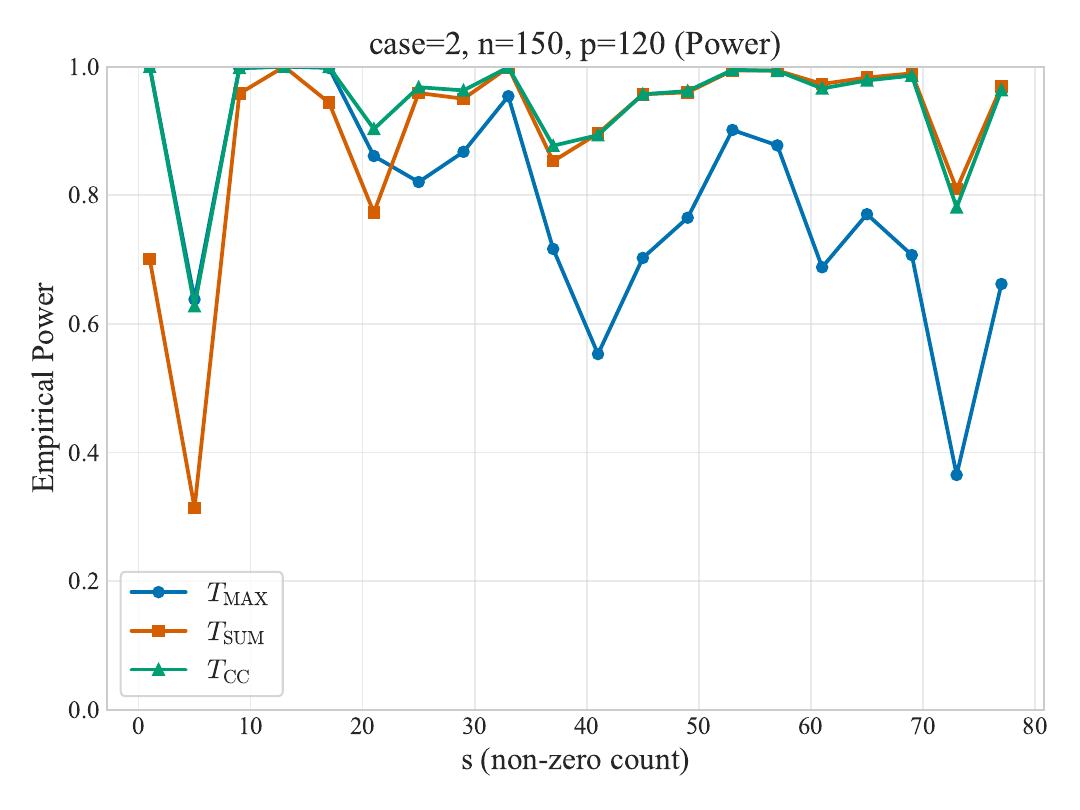}
\end{subfigure}
\hfill
\begin{subfigure}{0.23\textwidth}
    \centering
    \includegraphics[width=\linewidth]{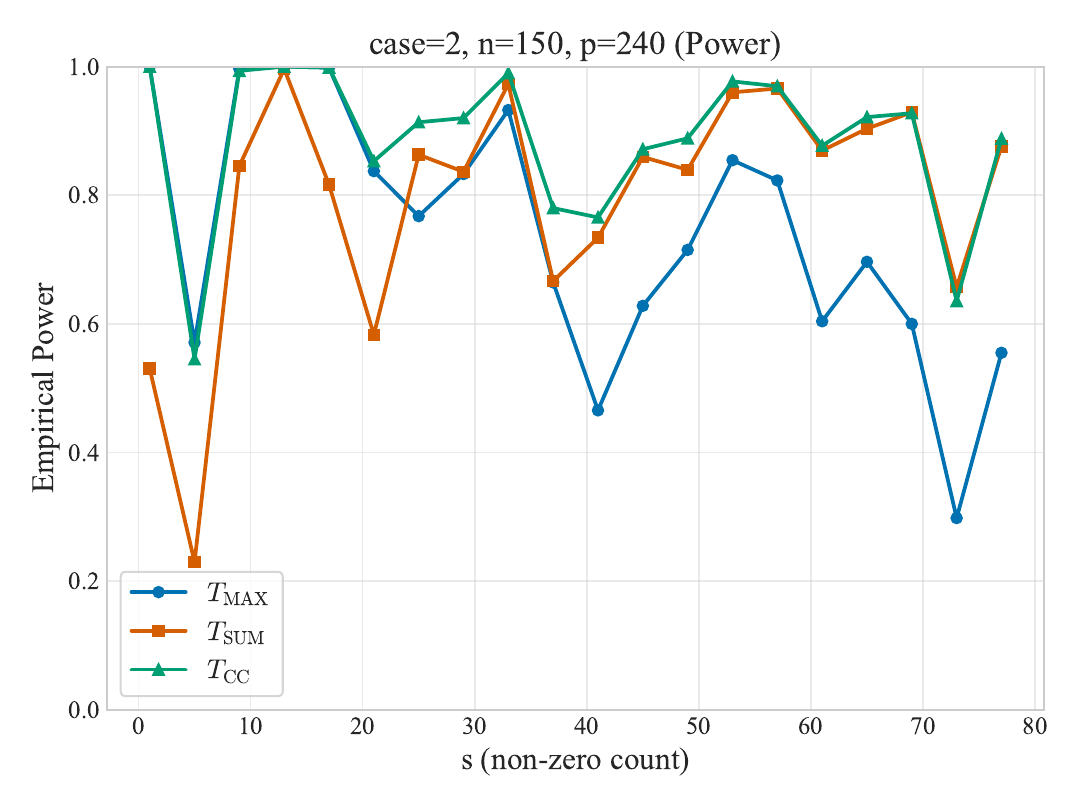}
\end{subfigure}

\vspace{0.3cm}
\begin{subfigure}{0.23\textwidth}
    \centering
    \includegraphics[width=\linewidth]{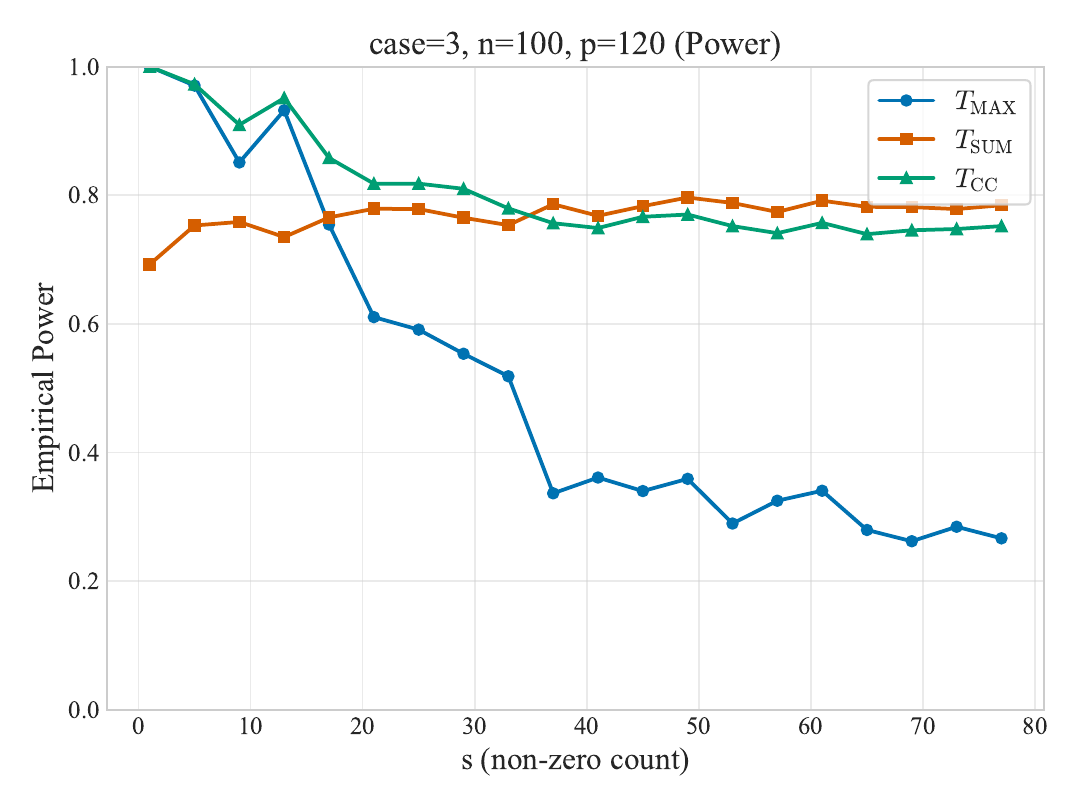}
\end{subfigure}
\hfill
\begin{subfigure}{0.23\textwidth}
    \centering
    \includegraphics[width=\linewidth]{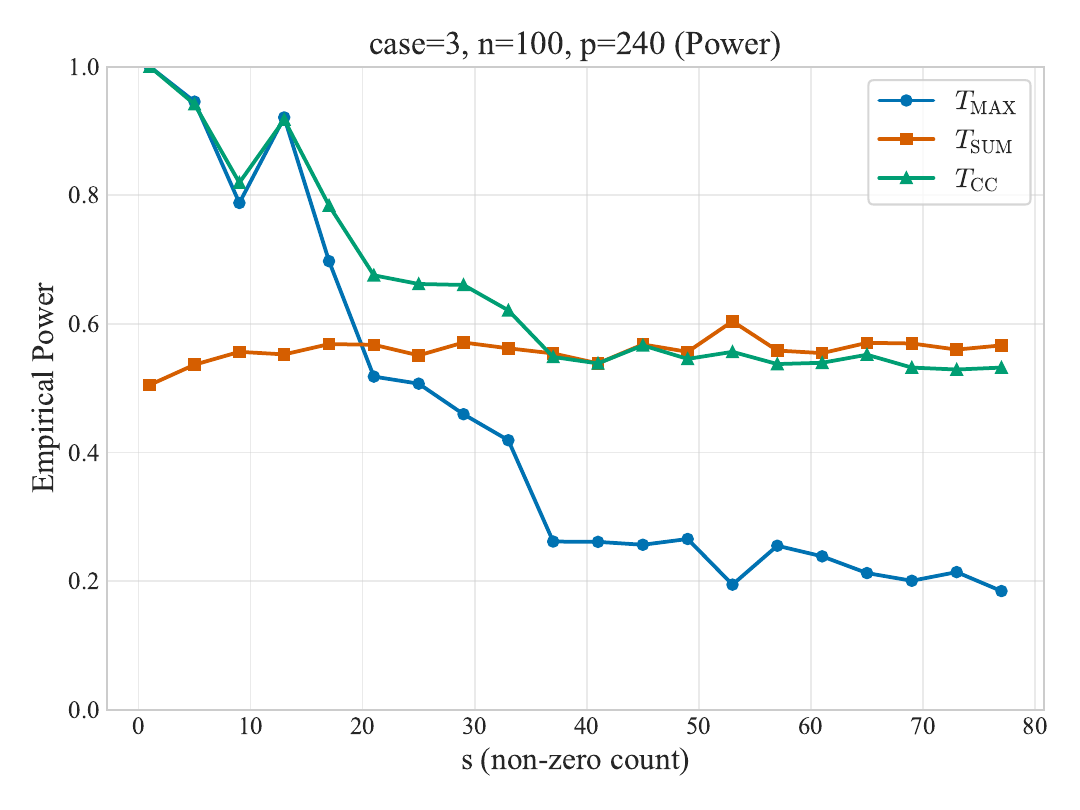}
\end{subfigure}
\hfill
\begin{subfigure}{0.23\textwidth}
    \centering
    \includegraphics[width=\linewidth]{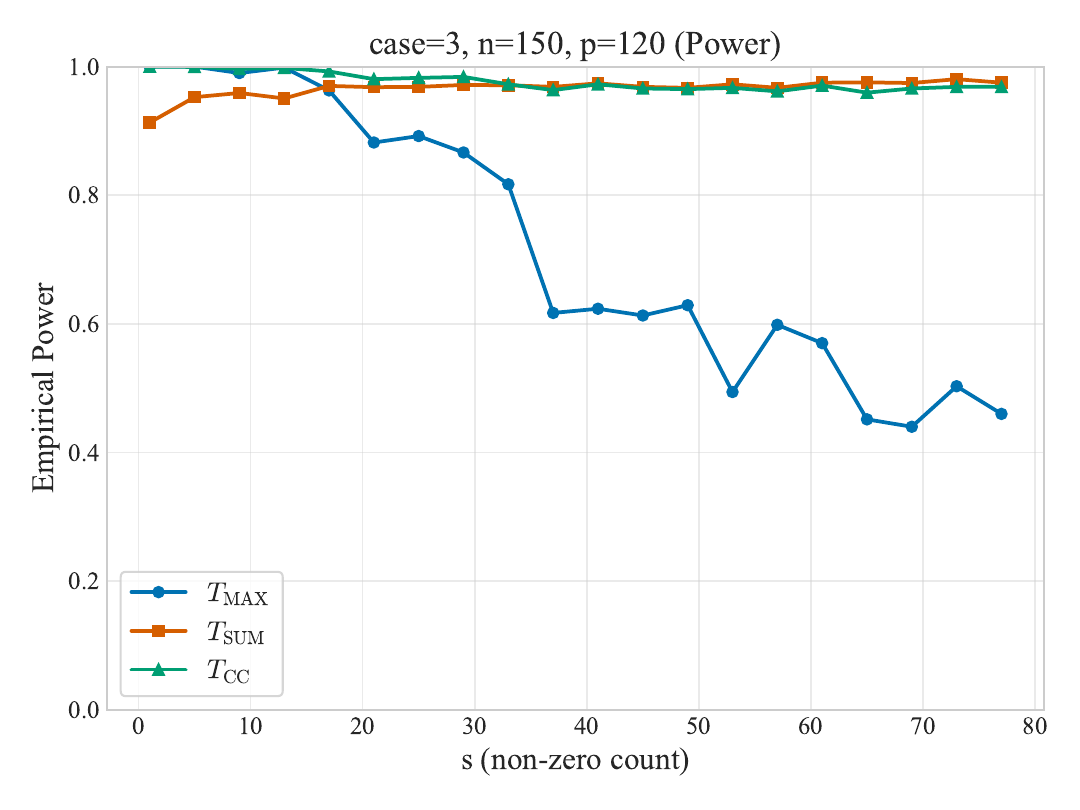}
\end{subfigure}
\hfill
\begin{subfigure}{0.23\textwidth}
    \centering
    \includegraphics[width=\linewidth]{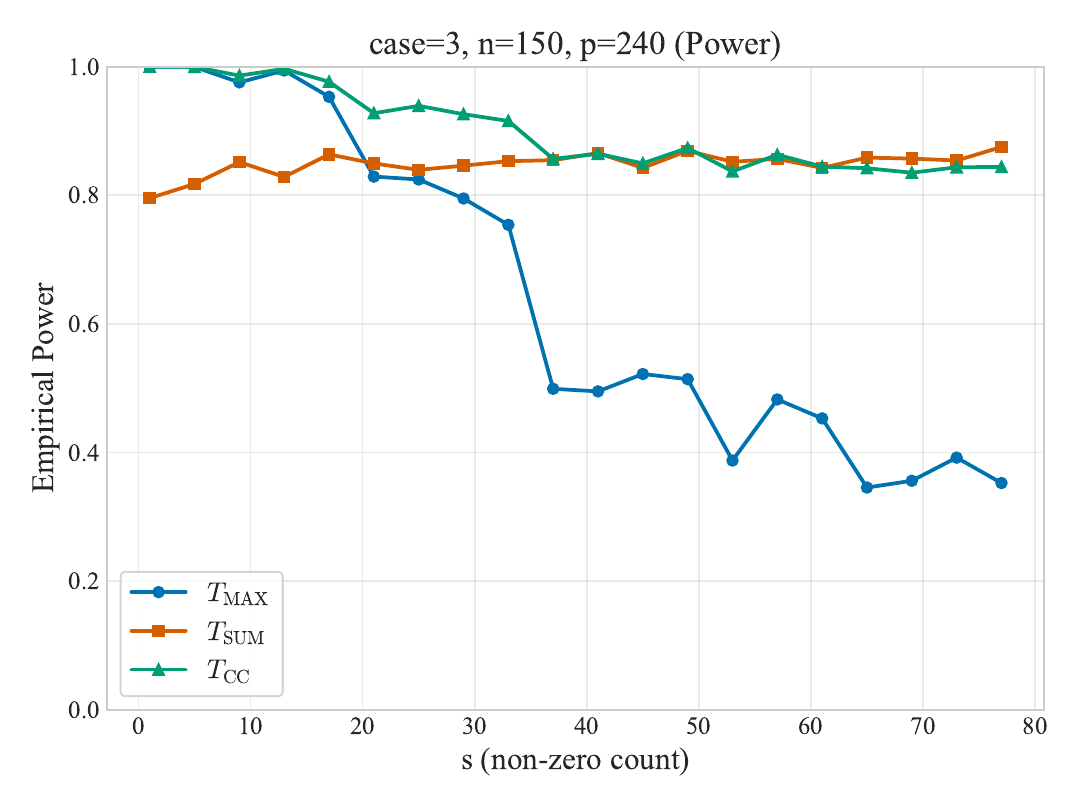}
\end{subfigure}

\caption{Empirical power as a function of $s$ for Cases~1--3 across varying $(n,p)$ 
settings under Logistic distribution ($\tau = 0.75$; 2000 replications).}
\label{fig:power75_logistic}
\end{figure}

\begin{figure}[htbp]
\centering
\begin{subfigure}{0.23\textwidth}
    \centering
    \includegraphics[width=\linewidth]{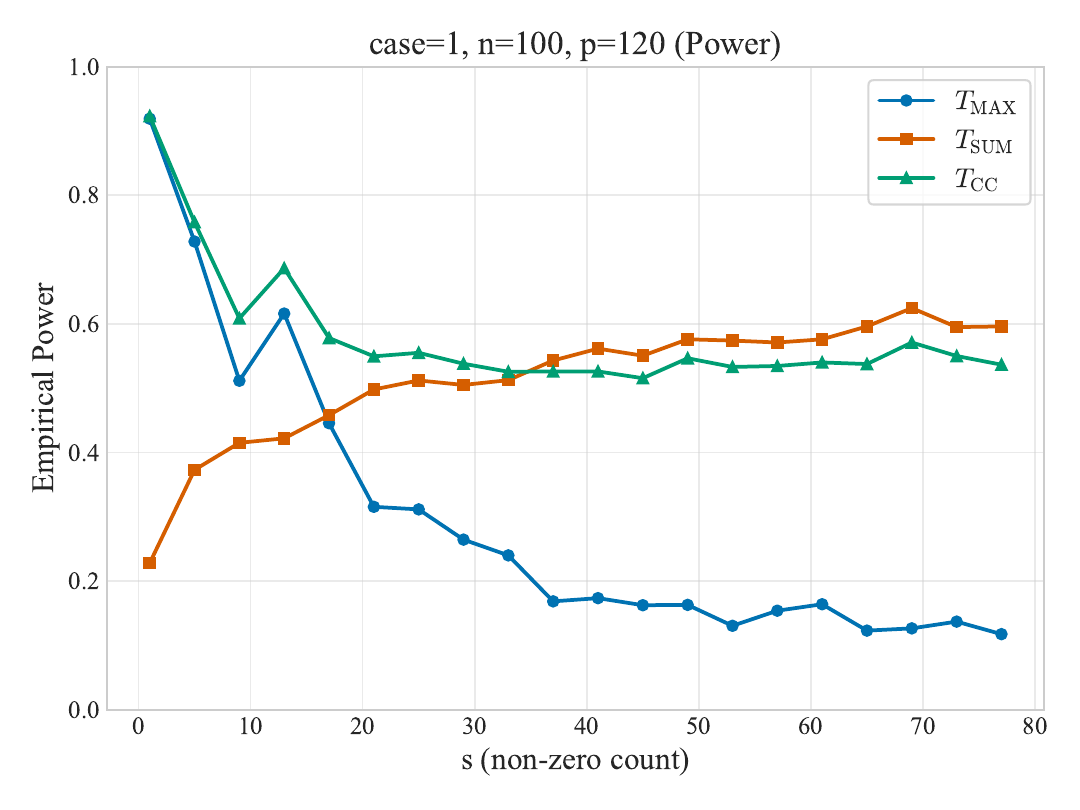}
\end{subfigure}
\hfill
\begin{subfigure}{0.23\textwidth}
    \centering
    \includegraphics[width=\linewidth]{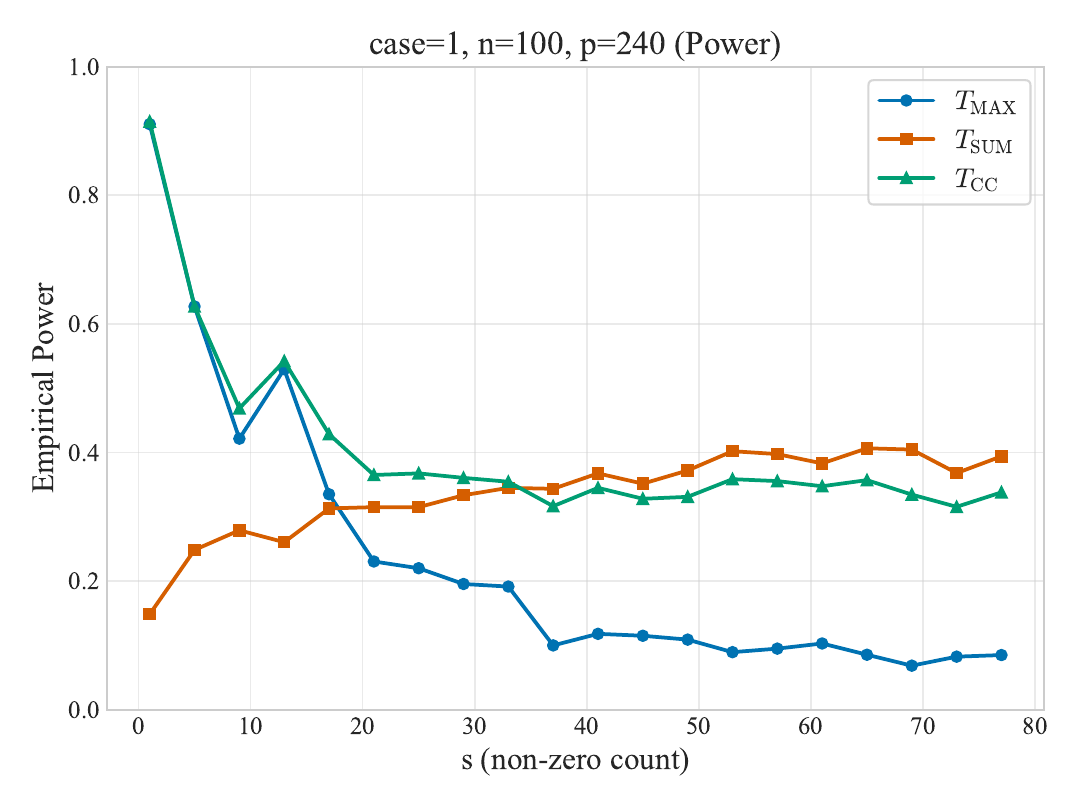}
\end{subfigure}
\hfill
\begin{subfigure}{0.23\textwidth}
    \centering
    \includegraphics[width=\linewidth]{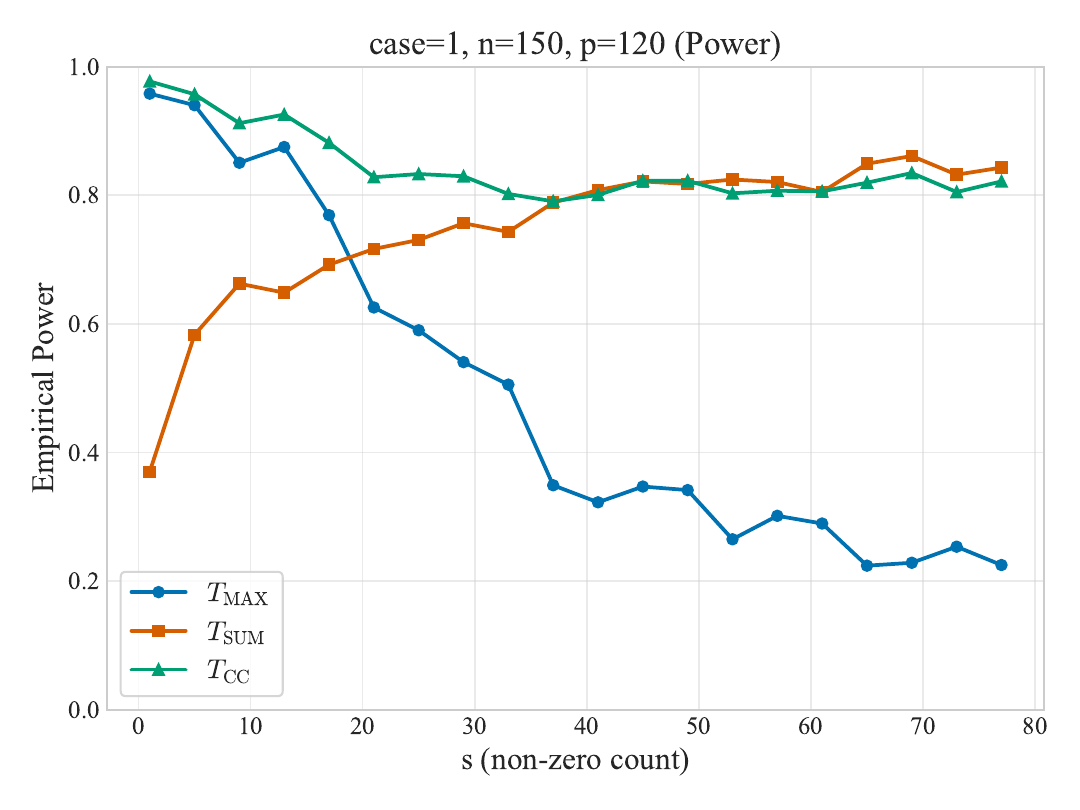}
\end{subfigure}
\hfill
\begin{subfigure}{0.23\textwidth}
    \centering
    \includegraphics[width=\linewidth]{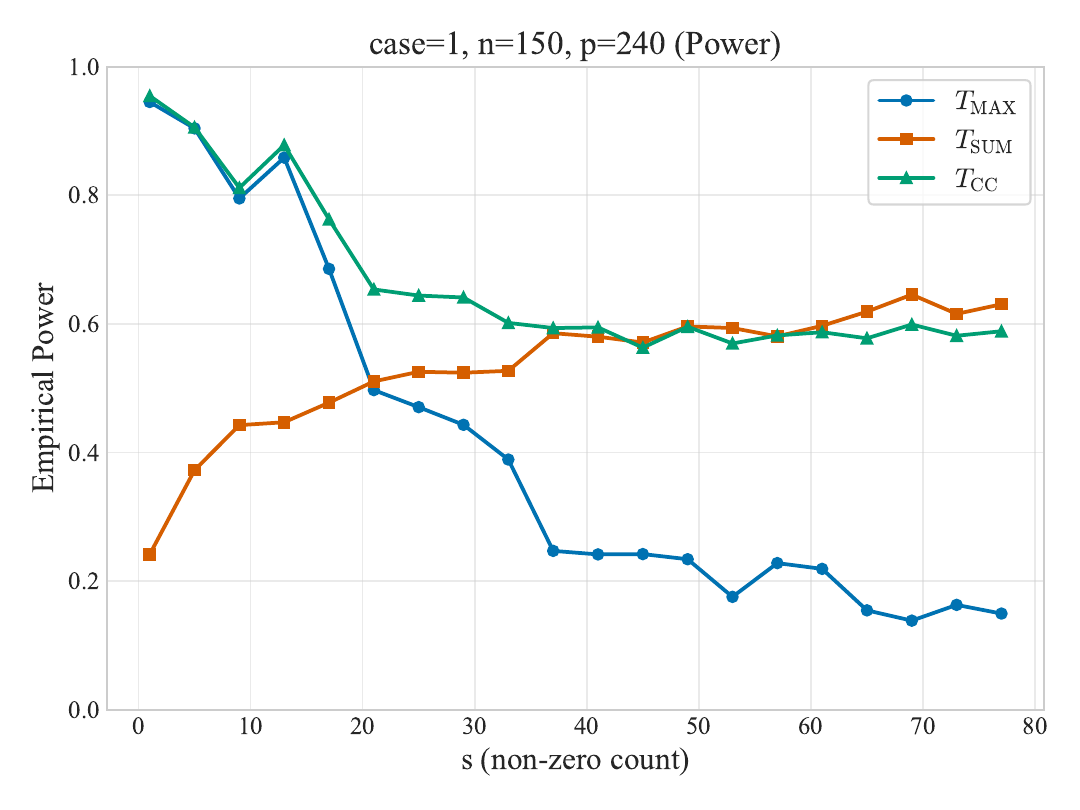}
\end{subfigure}

\vspace{0.3cm}
\begin{subfigure}{0.23\textwidth}
    \centering
    \includegraphics[width=\linewidth]{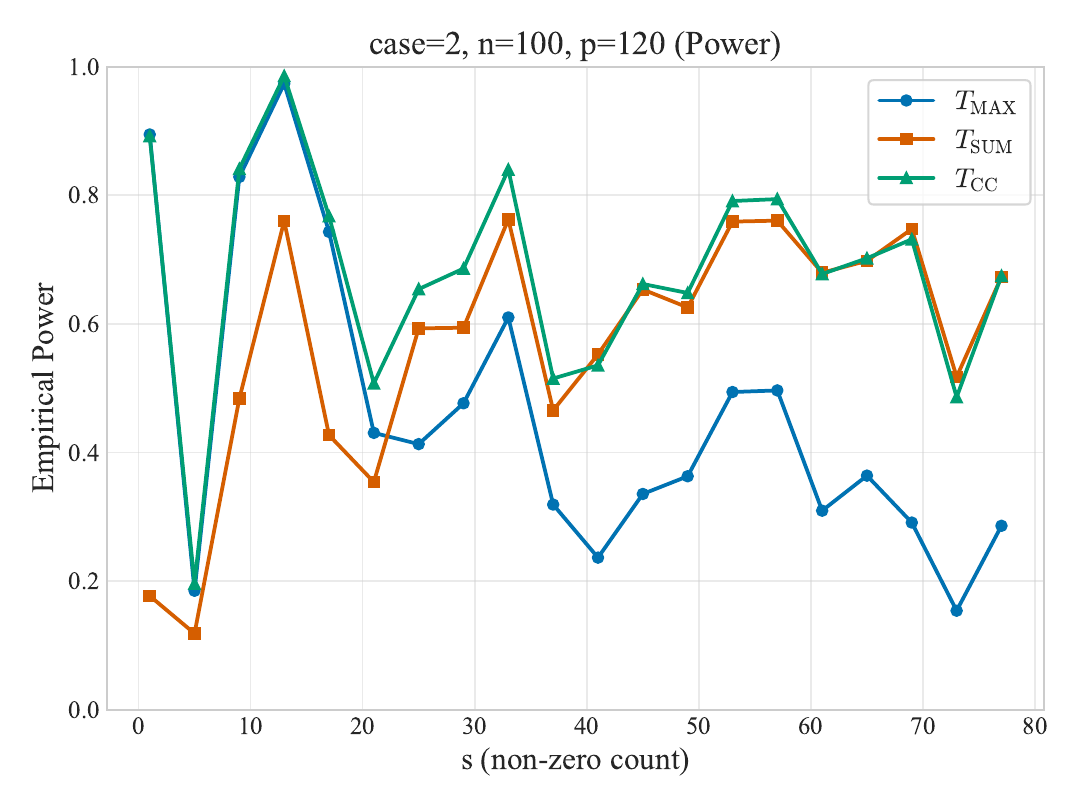}
\end{subfigure}
\hfill
\begin{subfigure}{0.23\textwidth}
    \centering
    \includegraphics[width=\linewidth]{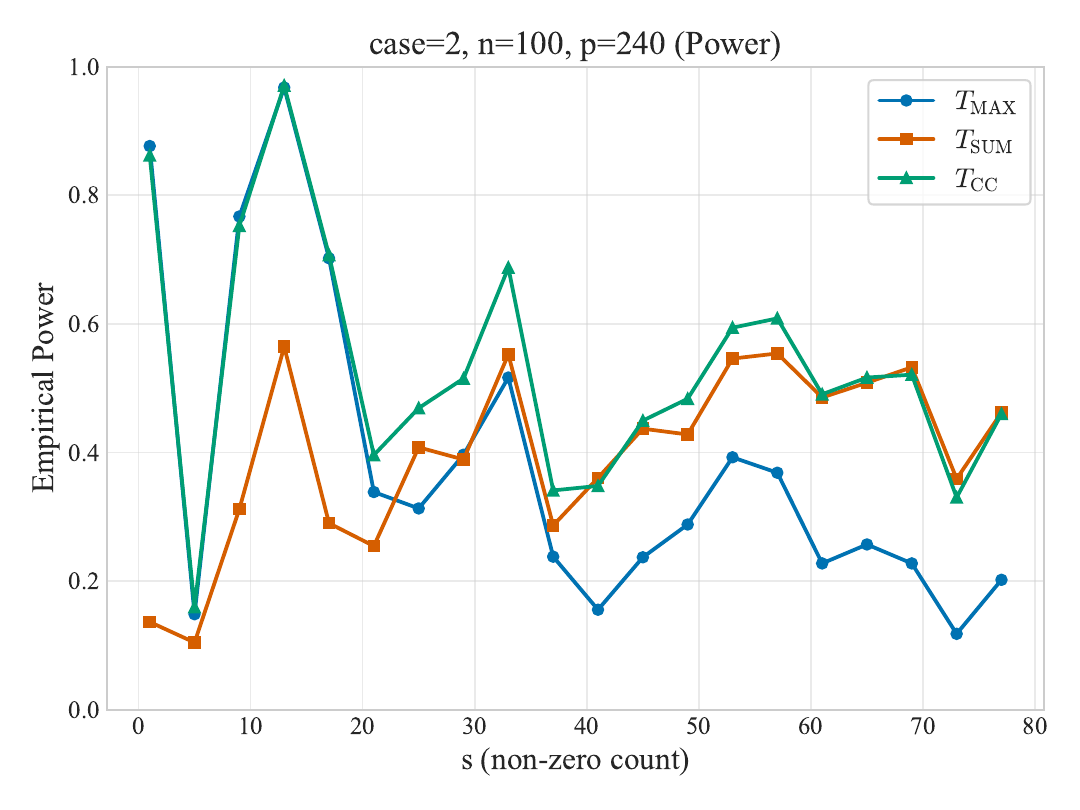}
\end{subfigure}
\hfill
\begin{subfigure}{0.23\textwidth}
    \centering
    \includegraphics[width=\linewidth]{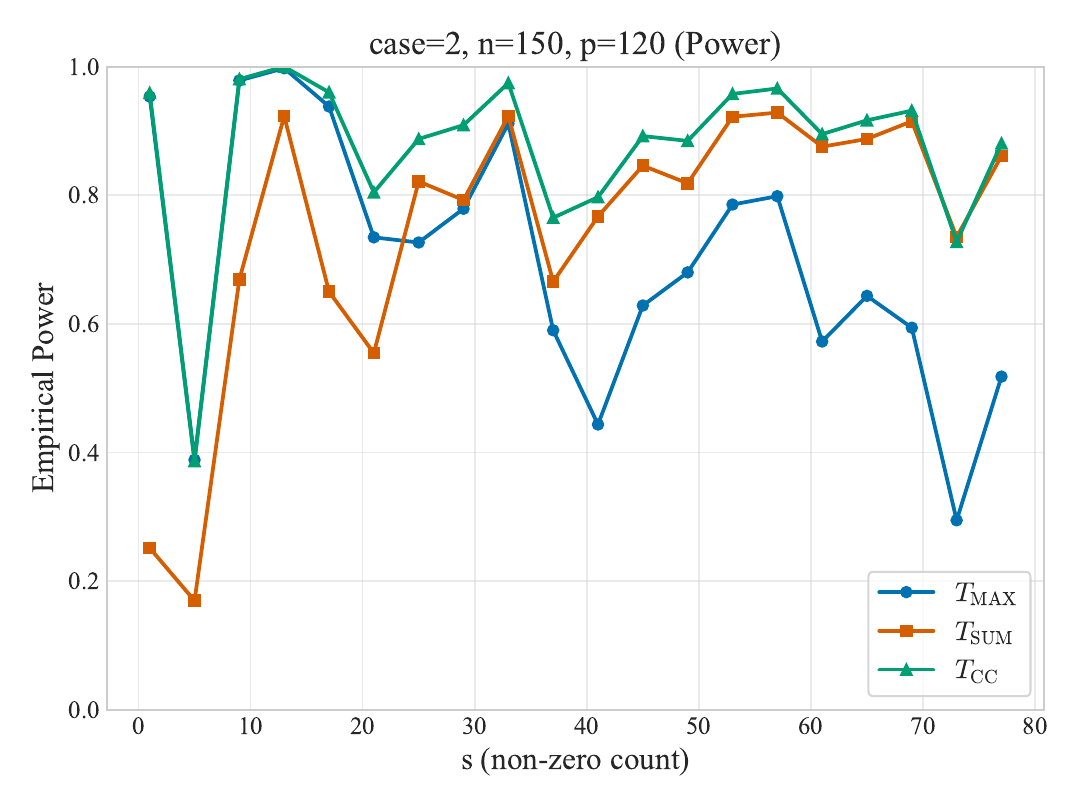}
\end{subfigure}
\hfill
\begin{subfigure}{0.23\textwidth}
    \centering
    \includegraphics[width=\linewidth]{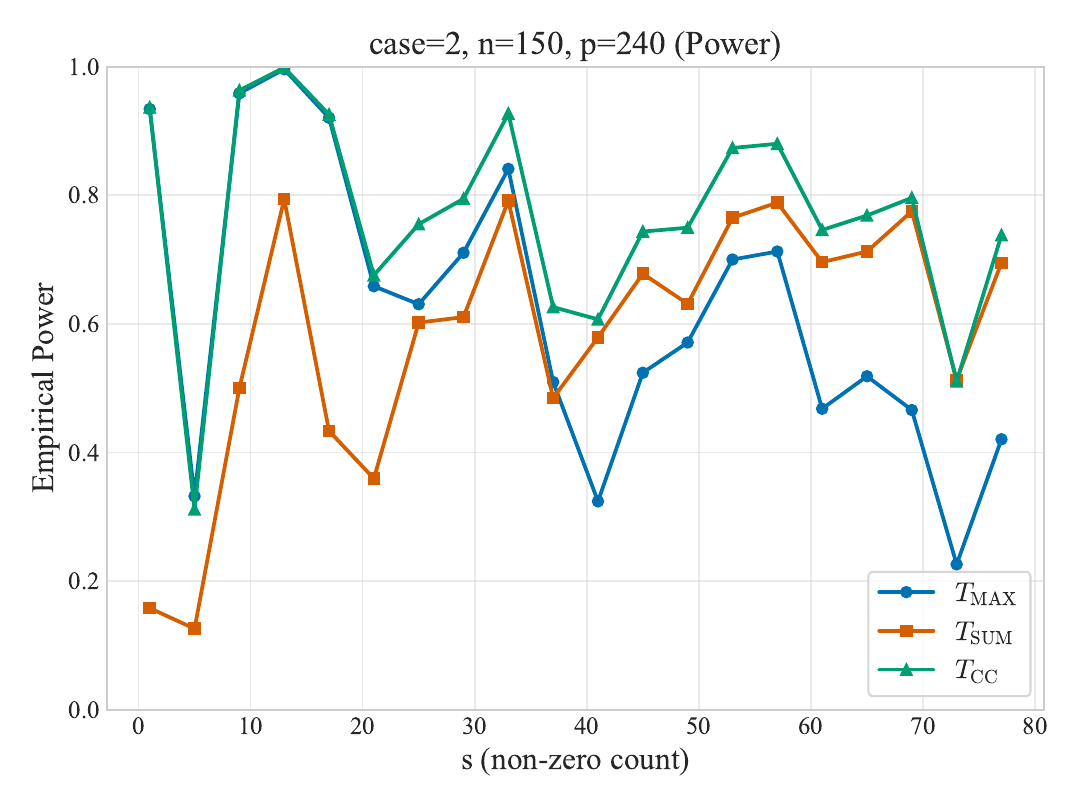}
\end{subfigure}

\vspace{0.3cm}
\begin{subfigure}{0.23\textwidth}
    \centering
    \includegraphics[width=\linewidth]{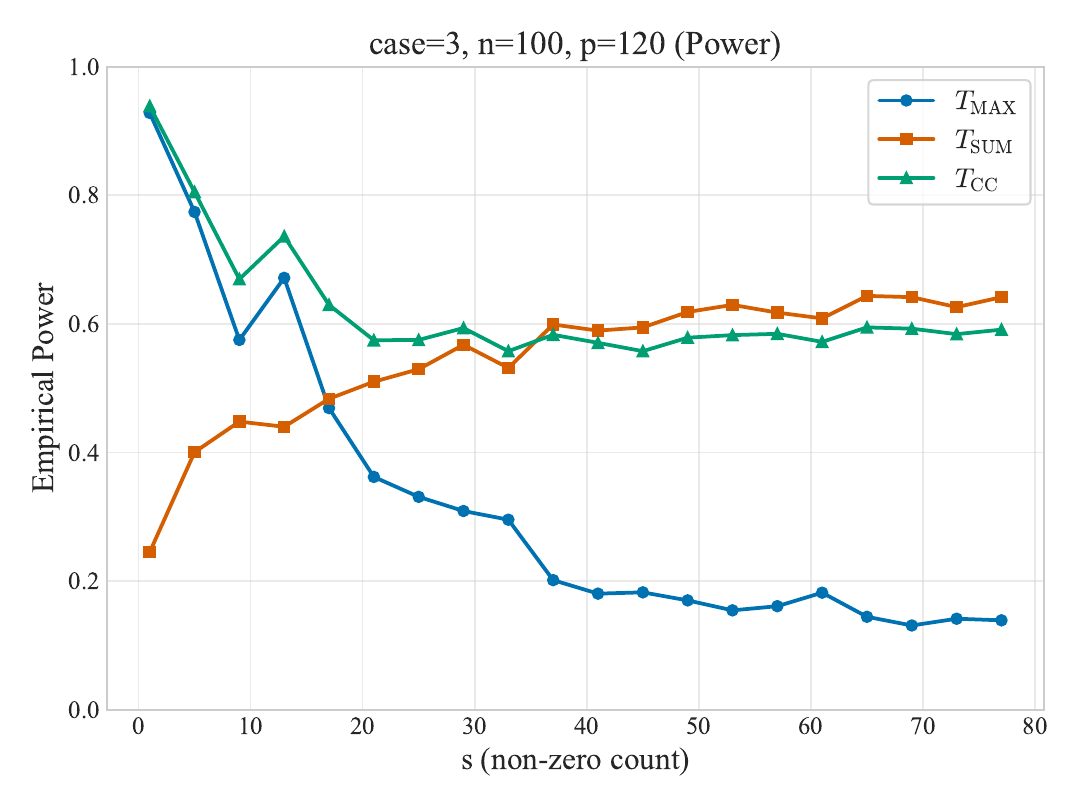}
\end{subfigure}
\hfill
\begin{subfigure}{0.23\textwidth}
    \centering
    \includegraphics[width=\linewidth]{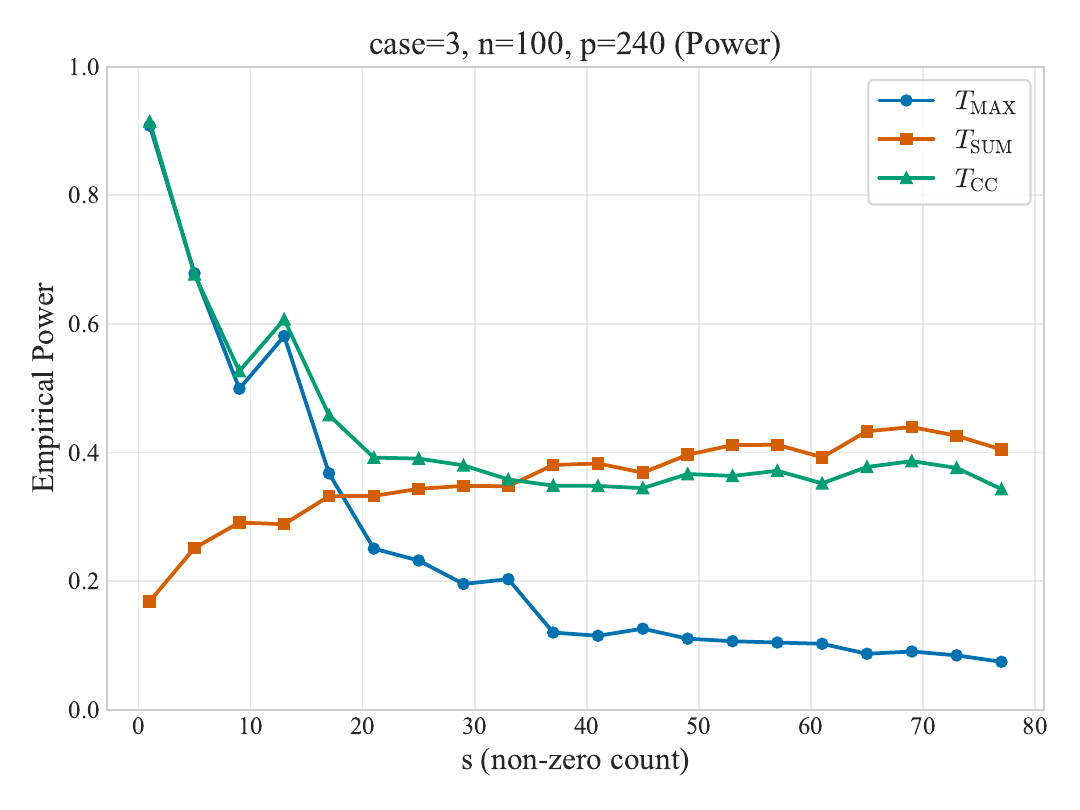}
\end{subfigure}
\hfill
\begin{subfigure}{0.23\textwidth}
    \centering
    \includegraphics[width=\linewidth]{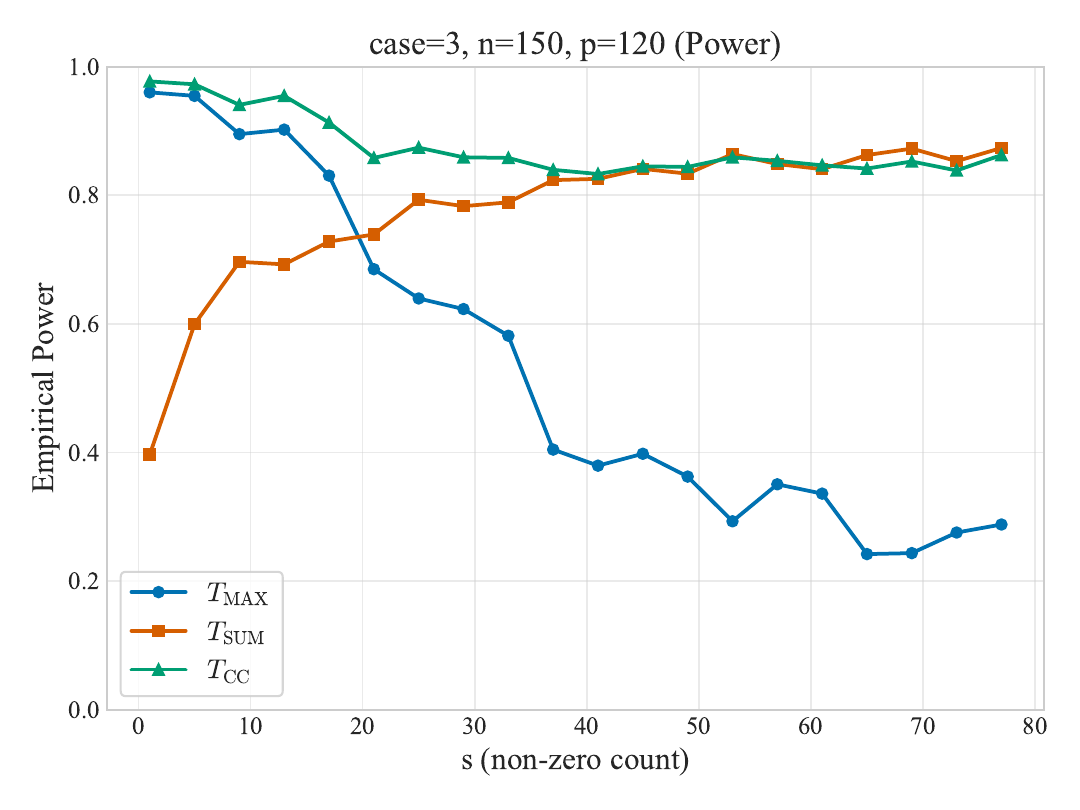}
\end{subfigure}
\hfill
\begin{subfigure}{0.23\textwidth}
    \centering
    \includegraphics[width=\linewidth]{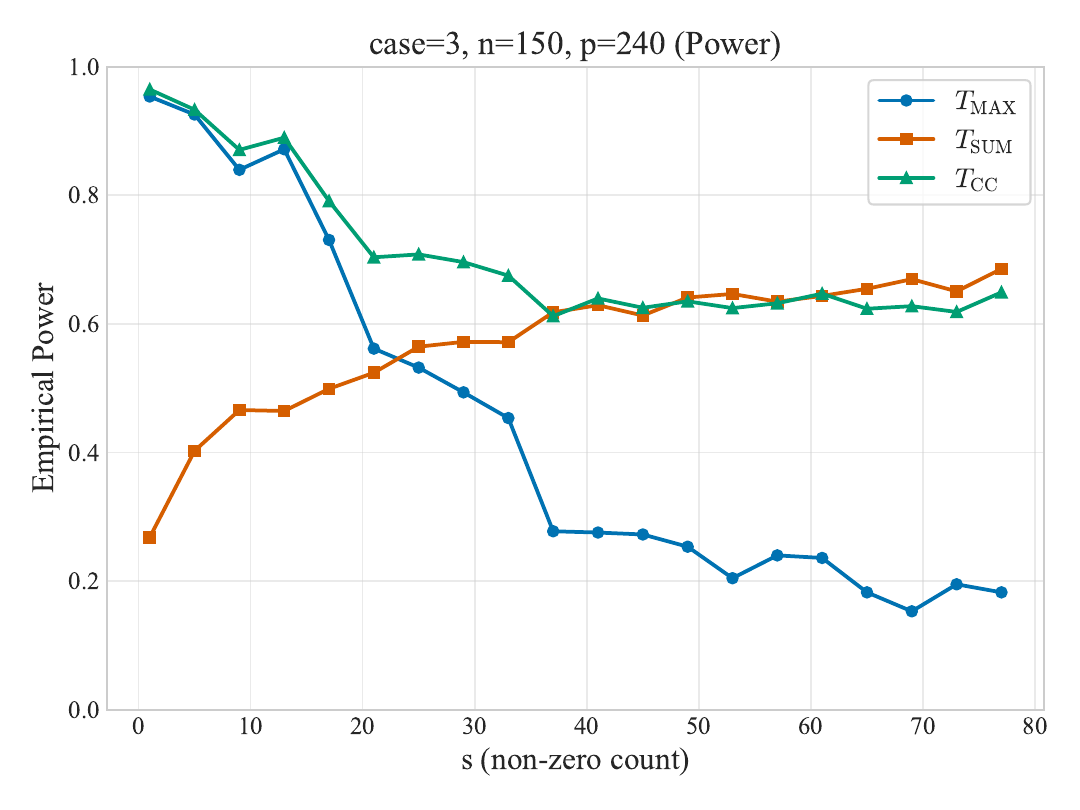}
\end{subfigure}

\caption{Empirical power as a function of $s$ for Cases~1--3 across varying $(n,p)$ 
settings under $t_2$ distribution ($\tau = 0.75$; 2000 replications).}
\label{fig:power75_t2}
\end{figure}


\end{document}